\newtheorem{proposition}{Proposition}
\newtheorem{corollary}{Corollary}
\newtheorem{lemma}{Lemma}
\newcommand\BlackBox{}
\newcommand\numberthis{\addtocounter{equation}{1}\tag{\theequation}}
\let\originalleft\left
\let\originalright\right
\renewcommand{\left}{\mathopen{}\mathclose\bgroup\originalleft}
\renewcommand{\right}{\aftergroup\egroup\originalright}
\newcommand*{\+}[1]{#1}  
\DeclareMathOperator{\N}{N}
\DeclareMathOperator{\E}{E}
\DeclareMathOperator{\SD}{SD}
\DeclareMathOperator{\Var}{Var}
\DeclareMathOperator{\Cov}{Cov}
\DeclareMathOperator{\skewness}{skewness}
\DeclareMathOperator{\Poisson}{Poisson}
\DeclareMathOperator{\normal}{normal}
\DeclareMathOperator{\Dirichlet}{Dirichlet}
\newcommand*{\p}{p}
\newcommand*{\diff}{\mathop{}\kern -2pt\mathrm{d}}
\newcommand*{\eye}{\mathrm{I}}
\newcommand*{\ones}{\mathbbm{1}}  
\newcommand*{\bcdot}{\boldsymbol{\cdot}}
\newcommand*{\x}{\ensuremath{x}}
\newcommand*{\y}{\ensuremath{y}}
\newcommand*{\Short}{{L}}
\newcommand*{\Ma}{{\ensuremath{\mathrm{M}_a}}}
\newcommand*{\Mb}{{\ensuremath{\mathrm{M}_b}}}
\newcommand*{\Mk}{{\ensuremath{\mathrm{M}_k}}}
\newcommand*{\Msa}{{\ensuremath{a}}}
\newcommand*{\Msb}{{\ensuremath{b}}}
\newcommand*{\Msk}{{\ensuremath{k}}}
\newcommand*{\Msd}{{\ensuremath{a-b}}}
\newcommand*{\Mfa}{{\ensuremath{\mathrm{M_A}}}}
\newcommand*{\Mfb}{{\ensuremath{\mathrm{M_B}}}}
\newcommand*{\Mfsa}{{\ensuremath{\mathrm{A}}}}
\newcommand*{\Mfsb}{{\ensuremath{\mathrm{B}}}}
\newcommand*{\Mfsd}{{\ensuremath{\mathrm{A-B}}}}
\renewcommand{\mid}{|}
\newcommand*{\yobs}{\ensuremath{y}}
\newcommand*{\eelpd}[1]{{\text{e-elpd}\bigr(#1\bigl)}}
\newcommand*{\elpd}[2]{{\mathrm{elpd}\bigr(#1 \mid #2\bigl)}}
\newcommand*{\elpdi}[3]{{\mathrm{elpd}_{#3}\bigr(#1 \mid #2\bigl)}}
\newcommand*{\elpdHat}[2]{{\widehat{\mathrm{elpd}}_\mathrm{\scriptscriptstyle LOO}\bigr(#1 \mid #2\bigl)}}
\newcommand*{\elpdHatErr}[2]{{\mathrm{err}_\mathrm{\scriptscriptstyle LOO}\bigr(#1 \mid #2\bigl)}}
\newcommand*{\elpdHatErrEE}[3]{{\mathrm{err}_\mathrm{\scriptscriptstyle LOO}^{#3}\bigr(#1 \mid #2\bigl)}}
\newcommand*{\elpdHati}[3]{{\widehat{\mathrm{elpd}}_{\mathrm{\scriptscriptstyle LOO},\, #3}\bigr(#1 \mid #2\bigl)}}
\newcommand*{\seHat}[2]{{\widehat{\mathrm{SE}}_\mathrm{\scriptscriptstyle LOO}\bigr(#1 \mid #2\bigl)}}
\newcommand*{\Md}{{\ensuremath{\mathrm{M}_a,\mathrm{M}_b}}}
\newcommand*{\Mfd}{{\ensuremath{\mathrm{M_A,M_B}}}}
\newcommand*{\elpdC}[2]{{\mathrm{elpd}\bigr(#1 \mid #2\bigl)}}
\newcommand*{\elpdHatC}[2]{{\widehat{\mathrm{elpd}}_\mathrm{\scriptscriptstyle LOO}\bigr(#1 \mid #2\bigl)}}
\newcommand*{\elpdHatUnkC}[2]{{\mathrm{unc}_\mathrm{\scriptscriptstyle LOO}\bigr(#1 \mid #2\bigl)}}
\newcommand*{\elpdHatUnkHatC}[2]{{\widehat{\mathrm{unc}}_\mathrm{\scriptscriptstyle LOO}\bigr(#1 \mid #2\bigl)}}
\newcommand*{\elpdHatErrEEC}[3]{{\mathrm{err}_\mathrm{\scriptscriptstyle LOO}^{#3}\bigr(#1 \mid #2\bigl)}}
\newcommand*{\elpdHatErrC}[2]{{\mathrm{err}_\mathrm{\scriptscriptstyle LOO}\bigr(#1 \mid #2\bigl)}}
\newcommand*{\elpdHatErrHatC}[2]{{\widehat{\mathrm{err}}_\mathrm{\scriptscriptstyle LOO}\bigr(#1 \mid #2\bigl)}}
\newcommand*{\elpdHatiC}[3]{{\widehat{\mathrm{elpd}}_{\mathrm{\scriptscriptstyle LOO},\, #3}\bigr(#1 \mid #2\bigl)}}
\newcommand*{\seHatC}[2]{{\widehat{\mathrm{SE}}_\mathrm{\scriptscriptstyle LOO}\bigr(#1 \mid #2\bigl)}}
\newcommand*{\elpdPlain}{{\ensuremath{\mathrm{elpd}}}}
\newcommand*{\eelpdPlain}{{\ensuremath{\text{e-elpd}}}}
\newcommand*{\elpdHatPlain}{{\ensuremath{\widehat{\mathrm{elpd}}_\mathrm{\scriptscriptstyle LOO}}}}
\newcommand*{\elpdHatErrPlain}{{\mathrm{err}_\mathrm{\scriptscriptstyle LOO}}}
\newcommand*{\seHatPlain}{{\ensuremath{\widehat{\mathrm{SE}}_\mathrm{\scriptscriptstyle LOO}}}}
\newcommand{\para}[1]{\paragraph{#1}}
\newlength\figurecontrolwidth
\title{Uncertainty in {Bayesian} Leave-One-Out Cross-Validation Based Model Comparison}
\author[1]{Tuomas Sivula}
\affil[1]{Aalto University, Finland}
\author[2]{Måns Magnusson\thanks{Most of the work was done while at Aalto University.}}
\affil[2]{Uppsala University, Sweden}
\author[1]{Asael Alonzo Matamoros}
\author[1]{Aki Vehtari}
\begin{document}

\maketitle

\begin{abstract}
{It is useful to estimate the expected predictive performance of models planned to be used for prediction.  We focus on leave-one-out cross-validation (LOO-CV), which has become a popular method for estimating predictive performance of Bayesian models. Given two models, we are interested in comparing the predictive performances and associated uncertainty, which can also be used to compute the probability of one model having better predictive performance than the other model. We study the properties of the Bayesian LOO-CV estimator and the related uncertainty quantification for the predictive performance difference,
and analyse when a normal approximation of this uncertainty is well calibrated and whether taking into account higher moments could improve the approximation.}
We provide new results of the properties both theoretically in the linear regression case and empirically for {hierarchical linear, latent linear, and spline models} and discuss the challenges.
We show that problematic cases include: comparing models with similar predictions, misspecified models, and small data. In these cases, there is a weak connection between the distributions of the LOO-CV estimator and its error. We show that
that the problematic skewness of the error distribution {for the difference}, which occurs when the models make similar predictions, does 
not fade away when the data size grows to infinity in certain situations. Based on the results, we also provide some practical recommendations for the users of Bayesian LOO-CV for {comparing predictive performance of models}.

\end{abstract}

\begin{keywords}
  Bayesian computation, model comparison, leave-one-out cross-validation, uncertainty, asymptotics
\end{keywords}

\section{Introduction}
\label{sec_intro}
We are often interested in the predictive performance of Bayesian models for new, unseen data. Given two models, we are then also interested in comparing the predictive performances and the probability that one has better predictive performance than the other model.
We cannot directly compute the predictive performance for unseen data. We can estimate it using, for example, cross-validation \citep{Geisser:1975,Geisser+Eddy:1979,Gelfand+Dey+Chang:1992,Bernardo+Smith:1994,Gelfand:1996,Vehtari+Ojanen:2012} and then, in the model comparison, take into account the uncertainty related to the difference of the predictive performance estimates for the different models \citep{Vehtari+Lampinen:2002,Vehtari+Ojanen:2012}.

Leave-one-out cross-validation (LOO-CV) has become a popular approach for estimating Bayesian predictive performance; For example, \texttt{loo} R package \citep{vehtari2022loopkg}, which implements a fast LOO-CV computation \citep{Vehtari+Gelman+Gabry:2017_practical,Vehtari+etal:2024:PSIS}, has been downloaded more than 4 million times from RStudio CRAN mirror alone. The \texttt{loo} package uses a normal approximation to quantify the uncertainty in the predictive performances and the difference in the predictive performance of two models.  The uncertainty in the predictive performance estimates is due to approximating the unknown future data distribution with a finite number of re-used observations.
To draw rigorous conclusions about the difference in predictive performance, we need to assess the accuracy of the estimated uncertainty, a problem that recently also has attracted attention in the frequentist setting \citep[e.g. see][]{austern2020asymptotics,bayle2020cross,bates2023cross}. How well is the uncertainty quantification calibrated when repeatedly applied to a new, comparable problem? Can there be some settings in which the uncertainty is, in general, poorly quantified? Are there some general characteristics that make it hard to estimate the uncertainty? This paper carefully analyses these properties and provides some practical guidance for modellers in the Bayesian setting. 

\subsection{Our Contributions}
\label{sec_contributions}

We provide new theoretical and empirical results for the uncertainty quantification in Bayesian LOO-CV model predictive performance comparison, and illustrate the challenges of quantifying it. We focus on analysing \textit{the difference in the predictive performance} of the  LOO-CV estimator of the \emph{expected log pointwise predictive density} (elpd) in two {linear} model comparisons. %
What matters is which model has better predictive performance, how much better it is, and what the associated uncertainty is. With our focus on predictive performance, we do not need to assume that one of the models is the true model, and thus, we do not consider the probability of selecting the true model. 
We focus on the finite sample size behaviour but also investigate asymptotic properties. We discuss how predictive performance comparison can be used for model selection.

We formulate the underlying uncertainty %
and present the two ways of analysing it: the normal approximation and the Bayesian bootstrap \citep[i.e. Dirichlet process approximation;][]{Rubin:1981a,Lo:1987a}. We analyse the properties of the error distribution and the approximations of that distribution in typical normal linear regression problem settings over possible data sets.
Based on this analysis, we identify when these uncertainty estimates can perform poorly: the models make similar predictions (Scenario~1), the models are misspecified with outliers in the data (Scenario~2), or the number of observations is small (Scenario~3).
The consequences of these problematic cases are:
\begin{enumerate}[topsep=0pt,partopsep=2pt,itemsep=2pt,parsep=2pt]
    \item When the models make similar predictions (Scenario~1), there is not much difference in the predictive performance and we can use either model for prediction.
    \item Model misspecification in model comparison (Scenario~2) should be avoided by proper model checking and expansion before using LOO-CV, and thus this should not happen for any final comparisons.
    \item LOO-CV can not reliably detect small differences in the predictive performance if the number of observations is small (Scenario~3).
\end{enumerate}

We have derived analytical results for normal linear regression with random covariates and demonstrate experimentally the same behaviour with a fixed covariate, hierarchical linear, (Poisson) generalised linear, and spline models (these types of models probably cover more than 90\% models used in applied Bayesian modelling). The underlying reasons and consequences are the same for Bayesian $K$-fold-CV, and we demonstrate similar behaviour in experimental results.
In a non-Bayesian context, \citet{arlot_celisse_2010_cv_survey} provide several results for different cross-validation approaches, where most of the discussed results are similar to the results presented here. To the best of our knowledge, these are the first results in a Bayesian domain, including pre-asymptotic behaviour.

\section{Problem Setting}
\label{sec_problem_setting}

For each positive integers $n > 0$, $\y=(\y_1, \y_2, \dots, \y_n)$ is generated from $\p_\text{true}(\y\mid\x)$, representing the true data generating process for $\y$ conditional on covariates $\x=(\x_1, \x_2, \dots, \x_n)$. Here $\y_i$ are assumed exchangeable conditionally on the covariates \citep[see, e.g.,][Section 5.2]{bda_book}.
For evaluating models \Mk{}$\in \{\Ma{},\Mb{} \}$, we consider the \emph{expected log pointwise predictive density}~\citep{Vehtari+Ojanen:2012},
a measure of predictive accuracy for another data set $\tilde{\y}=(\tilde{\y}_1, \tilde{\y}_2, \dots, \tilde{\y}_n)$, independent of $\y$, and generated from the same true data generating process $\p_\text{true}(\y|\x)$ as: 
\begin{align}\label{eq_elpd}
 \elpd{\Mk}{\y} &= \sum_{i=1}^n \elpdi{\Mk}{\y}{i} = \sum_{i=1}^n \E_{\tilde{\y}_i,\tilde{\x}_i} \left[ \log \p_\Mk(\tilde{\y}_i \mid \tilde{\x}_i, \y) \right] \nonumber\\ & = \sum_{i=1}^n \int \p_\mathrm{true}(\tilde{\y}_i\mid\tilde{\x}_i)\p_\mathrm{true}(\tilde{\x}_i) \log \p_\Mk(\tilde{\y}_i \mid \tilde{\x}_i, \y) \diff \tilde{\y}_i\tilde{\x}_i\,,
\end{align}
where $\log \p_\Mk(\tilde{\y}_i \mid \y)$ is the logarithm of the posterior predictive density for the model \Mk{} fitted for data set $\y$. If $\tilde{\x}$ are fixed,~\eqref{eq_elpd} simplifies to $\sum_{i=1}^n \int \p_\mathrm{true}(\tilde{\y}_i\mid\tilde{\x}_i) \log \p_\Mk(\tilde{\y}_i \mid \tilde{\x}_i, \y) \diff \tilde{\y}_i$.
Although we do not consider data shifts, covariate shifts can be included in the model for future data.
A more detailed discussion of the covariate setup is presented by  \citet{Vehtari+Lampinen:2002} and  \citet{Vehtari+Ojanen:2012}. We omit the covariates for brevity most of the time in the notation.
Here, the observations are considered pointwise to maintain comparability with the given data set~\citep[p.~168,][]{bda_book}.
A summary of notation used in the paper is presented in Table~\ref{tab_list_of_notation}. We can use different score functions, but for simplicity, we use the strictly proper and local log  score~\citep{Gneiting+Raftery:2007,Vehtari+Ojanen:2012} throughout the paper.

For evaluating model \Mk{} in the context of a specific data generating process in general, the respective measure of predictive performance is the expectation of $\elpd{\Mk}{\y}$ over all possible data sets $\y$ we might have observed:
\begin{align}
    \label{eq_eelpd}
    \text{expected } \elpd{\Mk}{\y} = \E_{\y}\left[ \elpd{\Mk}{\y} \right] \,.
\end{align}
The $\elpd{\Mk}{\y}$ in Equation~\eqref{eq_elpd}, conditioned on $\y$, can be considered as an estimate for the measure in Equation~\eqref{eq_eelpd}. Our focus is in \eqref{eq_elpd}, which useful in the application-oriented model-building workflow \citep{Gelman+etal:2020:workflow} when evaluating models conditional on the observed data. Measure~\eqref{eq_eelpd} is of interest in algorithm-oriented experiments when analysing the performance of models in the context of a problem set in general~\citep[e.g.][]{Dietterich:1998,bengio_Grandvalet_2004}. 
We further discuss these measures' differences and their uncertainties in Appendix~\ref{app_sec_theory_elpd_vs_eelpd}.

\begingroup
 \setlength{\tabcolsep}{0pt}
 \begin{table}[t]
   \small
   \begin{center}
 \begin{tabularx}{5.25in}{@{}p{0.84in}p{4.41in}@{}}
    $n$ &
    number of observations in a data set
 \\
    $\y$ &
    data set of $n$ observations from $\p_\text{true}(\y)$
 \\
    $\tilde{\y}$ &
    another independent analogous data set of $n$ observations from $\p_\text{true}(\y)$
 \\
    \Mk{} &
    model variable indicating model $k$
 \\
    $\p_\text{true}(\y)$ &
    distribution representing the true data generating process for $\y$ and $\tilde{\y}$
 \\
    $\p_\Mk(\tilde{\y}_i \mid \yobs)$ &
    posterior predictive distribution with model $\Mk$
 \\
    $\elpdPlain$ &
    expected log pointwise predictive density score, see Eq. \eqref{eq_elpd} and \eqref{eq_elpd_diff_exact}
 \\
    $\elpdHatPlain$ &
    LOO-CV approximation to $\elpdPlain$, see Eq. \eqref{eq_elpd_loo} and \eqref{eq_elpd_diff}
 \\ 
    $\elpdHatErrPlain$ &
    LOO-CV approximation error for $\elpd{\cdots}{\yobs}$, see Eq. \eqref{eq_elpdhat_err}
 \\ \addlinespace[.5mm]
    $p(\elpdHatErrPlain)$ &
    the true distribution of uncertainty in $\elpdHatErrPlain$
 \\
 $\hat{p}(\elpdHatErrPlain)$ &
    approximate distribution $\hat{p}(\elpdHatErrPlain) \approx p(\elpdHatErrPlain)$
 \\
    $\seHatPlain$ &
    estimator for the standard deviation of $\elpdHat{\cdots}{\y}$
 \end{tabularx}
   \end{center}
 \caption{Notation used.}\label{tab_list_of_notation}
 \end{table}
\endgroup

\subsection{Bayesian Cross-Validation}
As the true data generating process $\p_\text{true}(\y)$ is usually unknown,~\eqref{eq_elpd} needs to be approximated~\citep{Bernardo+Smith:1994,Vehtari+Ojanen:2012}.
If we had independent test data $\tilde{\y}=(\tilde{\y}_1, \tilde{\y}_2, \dots, \tilde{\y}_n) \sim \p_\mathrm{true}(\tilde{y})$
  , that is, observations from the same data generating process as $y$,
we could estimate~\eqref{eq_elpd} as 
\begin{equation}\label{eq_elpd_test}
 \widehat{\mathrm{elpd}}_\mathrm{test}(\Mk \mid {\yobs}) = \sum_{i=1}^n \log \p_\Mk\left(\tilde{y}_i \mid \yobs\right).
\end{equation}

When independent test data are not available, which is often the case in practice,
a popular strategy is cross-validation, in which a finite number of observations are re-used as a proxy for the unobserved independent data \citep{Geisser:1975}. The data is divided into parts, which are used as out-of-sample validation sets for the model fitted using the remaining observations. 
In leave-one-out cross-validation (LOO-CV), each observation is one validation set, and we approximate $\elpd{\Mk}{\yobs}$ as
\begin{align}\label{eq_elpd_loo}
 \elpdHat{\Mk}{\yobs} & = \sum_{i=1}^n \elpdHati{\Mk}{\yobs}{i}
                        = \sum_{i=1}^n \log \p_\Mk\left(\yobs_i \mid \yobs_{-i}\right)\,, \\
\intertext{where}
 \elpdHati{\Mk}{\yobs}{i} & = \log \p_\Mk\big(\yobs_i \big| \yobs_{-i}\big)  = \log \int \p_\Mk\big(\yobs_i \big| \theta_k \big)\p_\Mk \big(\theta_k \big| \yobs_{-i}\big) \diff \theta_k
\end{align}
is the LOO predictive log density for the $i$th observation $\yobs_i$ with model \Mk{} and parameters $\theta_k$, given the data except the $i$th observation, denoted as $\yobs_{-i}$.
The observations $\yobs_i$ are assumed to be exchangeable (conditionally on covariates).
The bias of~\eqref{eq_elpd_loo} tends to decrease when $n$ grows \citep{Watanabe:2010d}.
The stability of the learning algorithm affects the variance of CV estimators \citep[][Section 5.2.1]{arlot_celisse_2010_cv_survey}.
As the log score is smooth and integration over the posterior smooths out sharp changes, Bayesian LOO-CV tends to have lower variance than Bayesian $K$-fold-CV \citep{Vehtari+Gelman+Gabry:2017_practical}.
The naive approach would fit the model separately for each fold $\p_\Mk\big(\yobs_i \big| \yobs_{-i}\big)$. In practice, we use more efficient methods such as Pareto smoothed importance sampling \citep{Vehtari+etal:2024:PSIS}, implicitly adaptive importance sampling \citep{Paananen+etal:2021:implicit} and sub-sampling \citep{Magnusson+etal:2019,Magnusson+etal:2020} to estimate $\elpd{\Mk}{\yobs}$ more efficiently.

\para{Predictive performance comparison}
For comparing two models, \Ma{} and \Mb{}, given the same data $\yobs$, we estimate the difference in their expected predictive performance,
\begin{align*}
 \elpd{\Md}{\yobs} &= \elpd{\Ma}{\yobs} - \elpd{\Mb}{\yobs}
 \numberthis\label{eq_elpd_diff_exact}
 \intertext{as}
 \elpdHat{\Md}{\yobs} &= \elpdHat{\Ma}{\yobs} - \elpdHat{\Mb}{\yobs}
 \\
 &= \sum^n_{i=1} \left( \log \p_\Ma\big(\yobs_i \big| \yobs_{-i}\big) - \log \p_\Mb\big(\yobs_i \big| \yobs_{-i}\big) \right)
 \\
 &= \sum^n_{i=1} \elpdHati{\Md}{\yobs}{i}
 \,.
  \numberthis
  \label{eq_elpd_diff}
\end{align*}

\subsection{Uncertainty in Cross-Validation Estimators}
\label{sec_intro_uncertainty}
The true distribution $\p_\mathrm{true}(\tilde{\y},\tilde{\x})$ needed to compute~\eqref{eq_elpd} and \eqref{eq_elpd_diff_exact} is unknown, but we can model it and find the posterior distribution for~\eqref{eq_elpd} and \eqref{eq_elpd_diff_exact}.
We use a minimal assumption model, that is, a flat Dirichlet process prior \citep{Lo:1987a}, to model the unknown $\p_\mathrm{true}(\tilde{\y},\tilde{\x})$. Conditioning on $\y_i,\x_i$ and using the cross-validation terms, we obtain the posterior for~\eqref{eq_elpd_diff_exact} (and similarly for~\eqref{eq_elpd}) as
\begin{align}
  p(\elpd{\Md}{\y}) \approx n \sum_{i=1}^n w_i \,  \elpdHati{\Md}{\yobs}{i}  ,
  \label{eq:dirichlet_posterior}
\end{align}
where $w \sim \Dirichlet_n(1, \ldots, 1)$. If $\tilde{\x}$ is fixed and assuming the pointwise scores  $\elpdHati{}{}{i}$ are exchangeable, the flat Dirichlet process prior is used for the scores, and the posterior is the same. The Dirichlet process posterior approaches the true distribution as $n \rightarrow \infty$ \citep{Lo:1987a}.
There are two practical ways to approximate the Dirichlet process posterior in practice: the normal distribution and the Bayesian bootstrap.

\para{Normal Approximation}
The mean and variance of the Dirichlet process posterior are available analytically. Assuming $\Var[\elpdHati{\Mk}{\yobs}{i}]$ is finite and using the result by \citet{Lo:1987a}, the posterior $p(\elpd{\Mk}{\y})$ can be approximated with the following normal distribution
\begin{align}\label{eq_normalapprox_d}
    \hat{p}(\elpd{\Md}{\y}) &= \N\left(\elpdHat{\Md}{\yobs}, \, \seHat{\Md}{\yobs} \right)\,, 
\end{align}
where $\N(\mu,\sigma)$ is the normal distribution, $\elpdHat{\Md}{\yobs}$ is the sample mean \eqref{eq_elpd_diff}; and $\seHat{\Md}{\yobs}$ is the sample standard error defined as
\begin{align}\label{eq_naive_estimator_main}
\seHat{\Md}{\yobs}
    = \sqrt{\frac{n}{n-1}\sum_{i=1}^n \biggl(
        \elpdHati{\Md}{\yobs}{i}
        - \frac{1}{n}\sum_{j=1}^n \elpdHati{\Md}{\yobs}{j}
    \biggr)^2}.
\end{align}
Similar normal approximation has been used, but without the above Bayesian posterior justification, for example, for cross-validation performance of a single model by \citet{Breiman+Friedman+Olshen+Stone:1984}, and for performance difference by \citet{Dietterich:1998} in a non-Bayesian algorithm-oriented experiments context, and by \citet{Vehtari+Lampinen:2002} in a Bayesian context for given data.
We provide the conditions when the normal approximation~\eqref{eq_normalapprox_d} is well calibrated. Assuming the normal approximation is well-calibrated,
it can be used to further estimate $\hat{p}\left(\elpd{\Md}{\y} >0\right)$, the probability that model $\Ma{}$ has better predictive performance than model $\Mb{}$.

\para{Bayesian Bootstrap Approximation}
The posterior~\eqref{eq:dirichlet_posterior} can also be approximated using Monte Carlo to draw from the Dirichlet distribution. This approach is also known as the Bayesian bootstrap \citep{Rubin:1981a}. \citet{Vehtari+Lampinen:2002} proposed to use the Bayesian bootstrap for the performance difference in a Bayesian context. \citet{Weng:1989:second_order_property_BB} shows that the Bayesian bootstrap produces a more accurate posterior approximation than normal approximation or bootstrap with multinomial weights. Although the focus in this paper is on the normal approximation, we demonstrate that the Bayesian bootstrap does not perform better than the normal approximation, and we discuss why.

\para{Assessing the Approximation Accuracy}
In the two-model comparison, we define the error in the LOO-CV estimate as
\begin{align}\label{eq_elpdhat_err}
 \elpdHatErr{\Md}{\y} = \elpdHat{\Md}{\y} - \elpd{\Md}{\y}.
\end{align}
We compare the approximated error distribution $\hat{p}(\elpdHatErr{\Md}{\y})$ to the actual known values $\elpdHatErr{\Md}{\y}$ and the true distribution $p(\elpdHatErr{\Md}{\y})$ analytically in a simple case and with simulation in other cases.
We use the probability integral transform (PIT) method~\citep[see e.g.][]{Gneiting+Balabdaoui+Raftery:2007,Sailynoja+etal:2022:graphical} to analyse how well $\hat{p}(\elpdHatErr{\Md}{\y})$ is calibrated with respect to $p(\elpdHatErr{\Md}{\y})$. When many data sets $y$ are simulated from known $\p_\text{true}(\y)$, PIT values from a perfectly calibrated $\hat{p}(\elpdHatErr{\Md}{\y})$ will be uniformly distributed. If $\hat{p}(\elpdHatErr{\Md}{\y})$ is well calibrated, then the probabilities of one model having better predictive performance than the other will also be well calibrated.
For some specific simple data generating processes and models, we can analytically derive the moments of 
$p(\elpdHatErr{\Md}{\y})$, and compare them to the moments of the approximated uncertainty $\hat{p}(\elpdHatErr{\Md}{\y})$ to get insights into why the calibration can be far from perfect in some scenarios.
We also consider the distribution of $\elpdHat{\Md}{\yobs}$ as a statistic over possible data sets $\yobs$, and call it the sampling distribution. This follows the standard definition used in frequentist statistics.

\subsection{Problems in Quantifying the Uncertainty}
\label{sec_intro_problems_in_uncertainty}

In this section, we review the main previously known challenges related to quantifying uncertainty in LOO-CV, specifically for predictive performance differences.

\para{No Unbiased Estimator for the Variance}
\citet{bengio_Grandvalet_2004} show that there is no generally unbiased estimator for the variance of $\elpdHat{\Mk}{\y}$ nor $\elpdHat{\Md}{\y}$. As each observation is part of $n-1$ ``training'' sets, the contributing terms in $\elpdHat{\cdot}{\yobs}$ are not independent. The naive variance estimator used to compute $\seHat{\Md}{\y})$ in~\eqref{eq_naive_estimator_main} is biased \citep[see e.g.][]{Sivula+etal:2022:unbiased}. Even though it is possible to derive unbiased estimators for certain models \citep{Sivula+etal:2022:unbiased}, an exact unbiased estimator is not required if the bias is negligible. Based on experimental results, the variance of $\elpdHat{\Mk}{\y}$ can be greatly underestimated when $n$ is small, if the model is misspecified, or if there are outliers in the data \citep{bengio_Grandvalet_2004,varoquaux2017166,varoquaux201868}.
We show that under-estimation of the variance also holds for $\elpdHat{\Md}{\y}$, even more when the models have similar predictions.

\para{Potentially High Skewness}
The distribution  $p(\elpdHat{\Md}{\y})$ can be highly skewed, which would affect the usefulness  of the normal approximation. We show that estimating the skewness of $\elpdHat{\Md}{\y}$ from the contributing terms $\elpdHati{\Md}{\yobs}{i}$ is a challenging task. To capture higher moments, \citet{Vehtari+Lampinen:2002} proposed to use BB  \citep{Rubin:1981a}, which in theory should be more accurate \citep{Weng:1989:second_order_property_BB}, but it also has problems with heavy-tailed distributions, as the approximation is essentially truncated at the extreme observed values~\citep[as already noted by][]{Rubin:1981a}.
Furthermore, as we show in this paper, the mismatch between distributions $p(\elpdHatErr{\Md}{\y})$ and the distribution of the contributing terms $\elpdHati{\Md}{\yobs}{i}$, means that we are not able to obtain useful information about the higher moments.
In our experiments, there was no practical benefit to using BB instead of normal approximation.

\para{Mismatch Between Contributing Terms and Error Distributions}
We construct the approximated distribution $\hat{p}(\elpd{\Md}{\y})$ using information available in the terms $\elpdHati{\Md}{\yobs}{i}$, but we show in this paper that 
the connection between the true distribution $p(\elpdHatErr{\Md}{\y})$ and the distribution of the terms $\elpdHati{\Md}{\yobs}{i}$
can be weak.
This is because, in addition to $\elpdHati{\Md}{\yobs}{i}$, the distribution $p(\elpdHatErr{\Md}{\y})$ is affected by the dependent term $\elpd{\Md}{\y}$, as seen in~\eqref{eq_elpdhat_err}. 
We show that even if the true distribution of the contributing terms $\elpdHati{\Md}{\yobs}{i}$ is known, it may not help in producing a good approximation for $p(\elpdHatErr{\Md}{\y})$.

\para{Asymptotic Uncertainty in the Difference}
\citet{shao1993linear} shows that for the nested least squares linear models, asymptotically, all the models that include the true model will have the same predictive squared error and the standard deviation goes down at the same speed as the differences. In this case, even if the predictive performance difference approaches 0, there remains uncertainty about which model has the best predictive performance (\citeauthor{shao1993linear} also discusses model selection inconsistency, which is not relevant for this paper, as we focus on the predictive performance and do not assume that a true model exists).
 We provide finite case and asymptotic results in the Bayesian context with log score and analyse higher moments of uncertainty for the predictive performance difference. Our results show that in models with asymptotically the same performance, the magnitude of the uncertainty goes down at the same speed as the difference.

\para{Effect of Model Misspecification}
Finally, model misspecification and outliers in the data affect the results in complex ways. \citet{bengio_Grandvalet_2004} demonstrate that given a well-specified model without outliers in the data, the correlation between measures for individual observations may subside as $n$ grows. They also demonstrate that if the model is misspecified and there are outliers in the data, the correlation may significantly affect the total variance even with large $n$. 
We show that outliers affect the constants in the moment terms, and thus, larger $n$ is required to achieve good calibration.

\begin{figure}[tb!]
 \centering
 \includegraphics[width=0.89\figurecontrolwidth]{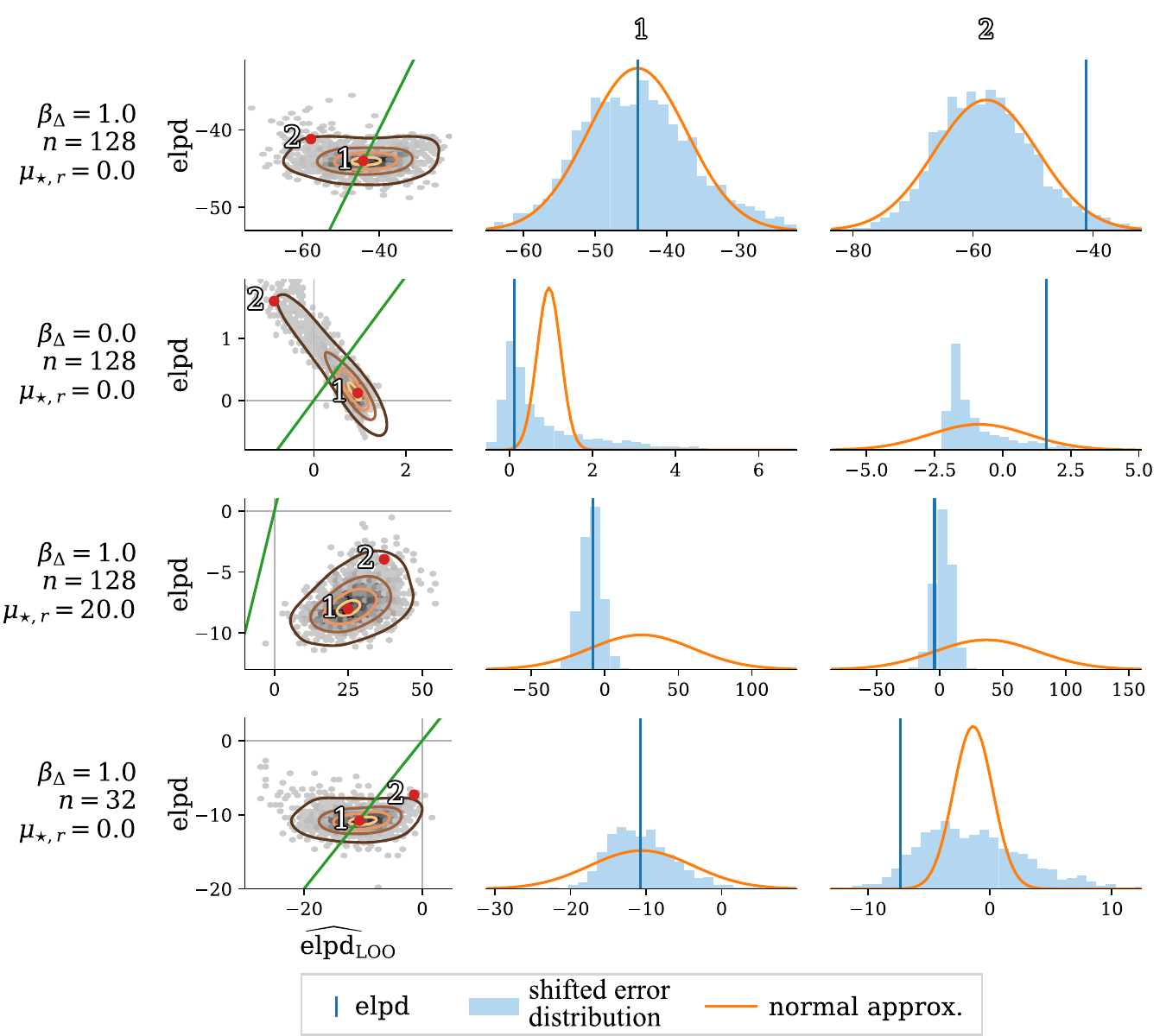}
 \caption{%
  Demonstration of the uncertainty quantification in a simulated normal linear regression. Two realisations in each setting are illustrated in more detail: (1) near the mode and (2) at the tail area of the distribution of the predictive performance and its estimate. Parameter $\beta_\Delta$ controls the difference in the predictive performance of the models, $n$ is the size of the data set, and $\mu_{\star,\mathrm{r}}$ is the magnitude of an outlier observation. The experiments are described in Section~\ref{sec_experiments}. In the first column, the green diagonal line indicates where $\elpd{\Md}{\y}=\elpdHat{\Md}{\y}$ and the brown-yellow lines illustrate density isocontours estimated with the Gaussian kernel method with bandwidth 0.5.
  In the second and third columns, the yellow line shows the normal approximation to the uncertainty, and the blue histogram illustrates the corresponding target, the error distribution located at $\elpdHat{\Md}{\yobs}$. See more explanations in the main text.
 }
 \label{fig_demonstration}
 \vspace{-6pt}
\end{figure}

\para{Demonstration of the Uncertainty Quantification}
Figure~\ref{fig_demonstration} demonstrates normal approximation~\eqref{eq_normalapprox_d} in different simulated linear regression cases. We later demonstrate that similar behaviour also occurs in other settings. The selected example realisations represent the behaviour near the mode and at the tail area of the distribution of the predictive performance difference and its estimate. The normal approximation is good when the difference is relatively big, but in other cases, it can be inaccurate. 
The BB approximation was similar to the normal approximation in all the experimented cases, and the results are not shown in the figure.
\begin{enumerate}[topsep=0pt,partopsep=2pt,itemsep=2pt,parsep=2pt]
 \item In the first case, the normal approximation $\hat{p}(\elpd{\Md}{\y})$ is close to the error distribution $\p(\elpdHatErr{\Md}{\y})$, and correctly indicates that the model $\Mb$ has better predictive performance.
 \item In the second case, the models have similar predictive performance (Scenario~1), and the distribution $\p(\elpd{\Md}{\y})$ is skewed. In the case near the mode, the uncertainty is underestimated, and the normal approximation $\hat{p}(\elpd{\Md}{\y})$ incorrectly indicates that the model $\Ma$ has slightly better predictive performance. In the case of the tail area, the uncertainty is overestimated, which is not harmful as it emphasises the uncertainty of the sign of the performance difference. 
 \item In the third case, there is an outlier observation in the data set (Scenario~2) and the estimator $\elpdHat{\Md}{\y}$ is biased. Poor calibration is inevitable with any symmetric approximate distribution. The variance in the uncertainty is overestimated in both cases. However, precise variance estimation would narrow the estimated uncertainty, making it to have worse calibration.
 \item In the last case, the number of observations is small (Scenario~3). The case near the mode illustrates an undesirable overestimation of the uncertainty. The model $\Mb$ has better predictive performance, and the difference is estimated correctly, but the overestimated uncertainty indicates that the sign of the difference is not certain. In the tail, the uncertainty is underestimated, suggesting that the models might have similar predictive performance. In reality, the model $\Mb$ is better.
\end{enumerate}

While inaccurately representing $\p(\elpdHatErr{\Md}{\y})$ in some cases, the obtained approximated $\hat{p}(\elpd{\Md}{\y})$ can be useful in practice if the problematic cases are considered carefully, as discussed in Section~\ref{sec_contributions} and summarised in Section~\ref{sec_conclusions}. More detailed experiments are presented in Section~\ref{sec_experiments}.

\section{Theoretical Analysis using Bayesian Linear Regression} \label{sec_analytic_case}
To study the uncertainty related to the approximation error, we examine it given a normal linear regression model as the known data generating process. Let $\p_\text{true}(\y)$ be 
\begin{gather*}
  \+y = \+X \+\beta + \+\varepsilon,\\
  \+\varepsilon \sim \operatorname{N}\left(\+\mu_\star, \; \+\Sigma_\star\right), \numberthis \label{eq_analytic_data_gen_process}
\end{gather*}
where $\+y \in \mathbb{R}^n$ and $\+X \in \mathbb{R}^{n \times d}$ are the dependent variable and design matrix respectively, $\+\beta \in \mathbb{R}^d$ a vector of the unknown covariate effect parameters, $\varepsilon \in \mathbb{R}^n$ is the vector of errors normally distributed and denoted as residual noise, with underlying parameters $\mu_\star \in \mathbb{R}^n$, and $\+\Sigma_\star \in \mathbb{R}^{n \times n}$ a positive definite matrix, and hence there exist a unique matrix $\+\Sigma_\star^{1/2}$ such that $\+\Sigma_\star^{1/2} \+\Sigma_\star^{1/2} = \+\Sigma_\star$. Let the vector $\+\sigma_\star \in \mathbb{R}^n$ contain the square roots of the diagonal of $\+\Sigma_\star$. The process can be modified to generate outliers by controlling the magnitude of the respective values in $\mu_{\star}$. Under this model, we can analytically study the effect of uncertainty in different situations.

\subsection{Models}
We compare two normal linear regression models \Ma{} and \Mb{}, with subsets of covariates $d_\Ma$ and $d_\Mb$, respectively.
We assume
$d_\Ma \neq d_\Mb$. 
Otherwise, $\elpd{\Md}{\y}$ and $\elpdHat{\Md}{\y}$ would be trivially 0. We write the models $\Mk \in \{\Ma, \Mb\}$ as
\begin{align}\label{eq_analytic_case_models}
 \+y | \widehat{\+\beta}_{d_\Mk}, \+X_{[\bcdot, {d_\Msk}]}, \tau & \sim
 \operatorname{N}\left(X_{[\bcdot,{d_\Msk}]} \widehat{\+\beta}_{d_\Mk}, \; \tau^2\eye\right)\,,
\end{align}
where $\widehat{\+\beta}_{d_\Msk} \in \mathbb{R}^{|d_\Msk|}$ is the respective estimated unknown model parameter. In both models, the noise variance $\tau^2$ is fixed, and a non-informative uniform prior on $\widehat{\+\beta}_{d_\Msk}$ is applied. The resulting posterior and posterior predictive distributions are normal (Appendix~\ref{app_sec_norm_lin_reg_case_study}). Neither model needs to have the same structure as the data generating process.

\subsection{Controlling the Similarity of the Predictive Performances}

Let $\+\beta_\Delta$ denote the coefficients of the data generating model for the non-shared covariates, that is, the covariates included in one model but not the other. If  $\+\beta_\Delta = 0$, both models are similar in the sense that they both include the same model with most non-zero effects, but the noise in the non-effective covariates affects the resulting predictive performance. Situations in which the models are close in predictive performance often arise in practice, for example, in variable selection. As discussed in Section~\ref{sec_problem_setting}, analysing the uncertainty in the model comparison can be problematic in these situations (Scenario~1).

\subsection{Properties for Finite Data}
By applying the specified model setting, data generating process, and score function in $\elpdHat{\Mfd}{\y}$ and $\elpd{\Mfd}{\y}$, we can derive a simplified form for these and for the approximation error $\elpdHatErr{\Mfd}{\y}$. Based on Lemmas~\ref{lem:elpd_chisq} and~\ref{lem_analytic_moments}, we draw some conclusions about their properties and behaviour with finite $n$. The asymptotic behaviour is inspected later in Section~\ref{sec_analytic_case_ninf}. Further details and results are in Appendix~\ref{app_sec_norm_lin_reg_case_study}.

\begin{lemma}\label{lem:elpd_chisq}
 Let the data generating process be as defined in~\eqref{eq_analytic_data_gen_process} and models \Ma{} and \Mb{} be as defined in~\eqref{eq_analytic_case_models}. Given the design matrix $\+X$, the approximation error $\elpdHatErr{\Md}{\y}$ has the following quadratic form:
 \begin{align}
  \elpdHatErr{\Md}{\y}=\+\varepsilon^\transp \+A \+\varepsilon + \+b^\transp \+\varepsilon + c\,,
 \end{align}
 where $\varepsilon$ is the residual noise defined in~\eqref{eq_analytic_data_gen_process}, and for given values of $\+A \in \mathbb{R}^{n \times n}$, $\+b \in \mathbb{R}^{n}$, and $c \in \mathbb{R}$. Similarly, $\elpdHat{\Md}{\y}$ and $\elpd{\Md}{\y}$ have analogous quadratic forms with different values for $\+A, \+b$, and $c$.
\end{lemma}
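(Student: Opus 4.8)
\emph{Proof idea.} The plan is to unfold each of $\elpd{\Md}{\y}$, $\elpdHat{\Md}{\y}$, and their difference into explicit functions of $y$ and then substitute $y = \+X\+\beta + \+\varepsilon$. The structural fact that makes this work is that, because the noise variance $\tau^2$ is \emph{fixed} (not estimated) and the prior on the regression coefficients is flat, every posterior predictive density appearing below is \emph{normal}, with a variance depending on the design matrix $\+X$ alone and a mean that is a \emph{linear} function of the conditioning data; were $\tau^2$ estimated, we would instead face logarithms of Student-$t$ densities and the argument would break.

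First I would record the building blocks for a single model $\Mk$ with covariate set $d_\Mk$. Writing $\+X_k := \+X_{[\bcdot, d_\Mk]}$ and letting $\+x_{i,k}$ be its $i$th row, the flat-prior posterior for the coefficients given $(\+X_k, y)$ is Gaussian with mean $(\+X_k^\transp \+X_k)^{-1}\+X_k^\transp y$ and covariance $\tau^2(\+X_k^\transp \+X_k)^{-1}$; hence the full-data posterior predictive for the $i$th row is $\operatorname{N}\!\bigl(\+h_{i,k}^\transp y,\ \tau^2 v_{i,k}\bigr)$ with $\+h_{i,k}^\transp = \+x_{i,k}^\transp(\+X_k^\transp \+X_k)^{-1}\+X_k^\transp$ and $v_{i,k} = 1 + \+x_{i,k}^\transp(\+X_k^\transp \+X_k)^{-1}\+x_{i,k}$. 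Deleting observation $i$ gives the leave-one-out predictive $\operatorname{N}\!\bigl(\+g_{i,k}^\transp y,\ \tau^2 w_{i,k}\bigr)$, where $\+g_{i,k}$ has zero $i$th component and $w_{i,k}$ is again a function of $\+X$ only. The key point is that $v_{i,k}$ and $w_{i,k}$ do not depend on $y$; only the linear forms $\+h_{i,k}^\transp y$ and $\+g_{i,k}^\transp y$ carry the data dependence.

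Next I would assemble the two targets. For the LOO estimator, $\log \p_\Mk(y_i \mid y_{-i}) = -\tfrac12\log(2\pi\tau^2 w_{i,k}) - \tfrac{1}{2\tau^2 w_{i,k}}\bigl(y_i - \+g_{i,k}^\transp y\bigr)^2$; writing $y_i - \+g_{i,k}^\transp y = (\+e_i - \+g_{i,k})^\transp y$ with $\+e_i$ the $i$th standard basis vector of $\mathbb{R}^n$, substituting $y = \+X\+\beta + \+\varepsilon$, and summing over $i$ and over the two models with the appropriate signs, each squared residual expands into a quadratic-plus-linear-plus-constant form in $\+\varepsilon$, so $\elpdHat{\Md}{\y}$ has the stated quadratic form. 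For $\elpd{\Md}{\y}$, I would use that under $\p_\text{true}$ the replicate $\tilde y_i$ is $\operatorname{N}\!\bigl((\+X\+\beta)_i + \mu_{\star,i},\ \sigma_{\star,i}^2\bigr)$ together with the elementary identity $\E_{Z\sim\operatorname{N}(m,s^2)}\bigl[\log\operatorname{N}(Z;\mu,\rho^2)\bigr] = -\tfrac12\log(2\pi\rho^2) - \tfrac{s^2 + (m-\mu)^2}{2\rho^2}$, applied with $\mu = \+h_{i,k}^\transp y$ and $\rho^2 = \tau^2 v_{i,k}$. Since $m - \+h_{i,k}^\transp y$ is once more affine in $\+\varepsilon$ and the $\sigma_{\star,i}^2$ and $\log$-variance pieces are constants, the same expansion gives $\elpd{\Md}{\y}$ a quadratic form. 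The error $\elpdHatErr{\Md}{\y}$ is then the difference of two quadratic forms in $\+\varepsilon$, hence itself of the claimed shape (with $\+A$ taken symmetric without loss of generality), and reading off $\+A$, $\+b$, $c$ from the expansion proves the analogous statements for $\elpdHat{\Md}{\y}$ and $\elpd{\Md}{\y}$ individually; these coefficients depend on $\+X$ and the fixed quantities $\+\beta,\+\mu_\star,\+\Sigma_\star,\tau^2$ but not on $\+\varepsilon$.

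The argument is essentially bookkeeping; the only steps requiring care are verifying that the leave-one-out predictive variance scale $w_{i,k}$ is genuinely independent of $y$ (which is exactly where the fixed-$\tau^2$ assumption enters) and keeping the two models' covariate sets and the per-term weights $1/(2\tau^2 w_{i,k})$ and $1/(2\tau^2 v_{i,k})$ straight when collecting everything into the single matrix $\+A$ and vector $\+b$. I would relegate the explicit closed forms for $\+A$, $\+b$, and $c$ to the appendix, since Lemma~\ref{lem:elpd_chisq} asserts only their existence.
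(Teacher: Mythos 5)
Your proposal is correct and follows essentially the same route as the paper's own derivation in Appendices~\ref{app_sec_analytic_elpd}--\ref{app_sec_analytic_error}: both exploit that with fixed $\tau^2$ and a flat prior the full-data and leave-one-out posterior predictives are normal with $\+X$-only variances and means linear in $\y$, reduce the elpd integrand to the Gaussian cross-entropy $-\tfrac12\log(2\pi\rho^2)-\bigl(s^2+(m-\mu)^2\bigr)/(2\rho^2)$ (which the paper obtains by evaluating the Gaussian integrals explicitly rather than citing the identity), substitute $\y=\+X\+\beta+\+\varepsilon$, and expand the squared affine forms into a quadratic in $\+\varepsilon$. The only difference is presentational — you invoke the cross-entropy identity where the paper derives it from integral tables — so nothing of substance is missing.
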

\begin{proof}
\vspace{-6pt}
 \renewcommand{\BlackBox}{} \renewcommand{\qedsymbol}{}
 See appendices~\ref{app_sec_analytic_elpd}, \ref{app_sec_analytic_loocv}, and~\ref{app_sec_analytic_error}.
 \vspace{-6pt}
\end{proof}
The quadratic factorisation presented in Lemma~\ref{lem:elpd_chisq} allows us to efficiently compute the first moments for the variable of interest $\elpdHatErr{\Md}{\y}$, and therefore analyse properties for finite data in the linear regression case.
\begin{lemma}\label{lem_analytic_moments}
The mean $m_1$, variance $\overline{m}_2$, third central moment $\overline{m}_3$, and skewness $\widetilde{m}_3$ of the variable of interest $Z = \elpdHatErr{\Md}{\y}$ presented in Lemma~\ref{lem:elpd_chisq} for a given covariate matrix $\+X$ are
\begin{align*}
  m_1 &= \E\left[Z\right]\\
  &= \operatorname{tr}\left(\+\Sigma_\star^{1/2}\+A\+\Sigma_\star^{1/2}\right)
  + c
  + \+b^\transp \+\mu_\star
  + \+\mu_\star^\transp \+A \+\mu_\star
  \numberthis
  \label{eq_analytic_1_moment}
  \\
  \overline{m}_2 &= \Var\left[Z\right]
  \\
  &= 2 \operatorname{tr}\left(\left(\+\Sigma_\star^{1/2}\+A\+\Sigma_\star^{1/2}\right)^2\right)
    + \+b^\transp \+\Sigma_\star \+b + 4 \+b^\transp \+\Sigma_\star \+A \+\mu_\star + 4 \+\mu_\star^\transp \+A \+\Sigma_\star \+A \+\mu_\star
  \numberthis
  \label{eq_analytic_2_moment}
  \\
  \overline{m}_3 &= \E\Big[\left(Z-\E\left[Z\right]\right)^3\Big]
  \\
  &= 8 \operatorname{tr}\left(\left(\+\Sigma_\star^{1/2}\+A\+\Sigma_\star^{1/2}\right)^3\right)
    + 6 \+b^\transp \+\Sigma_\star \+A \+\Sigma_\star \+b + 24 \+b^\transp \+\Sigma_\star \+A \+\Sigma_\star \+A \+\mu_\star + 24 \+\mu_\star^\transp \+A \+\Sigma_\star \+A \+\Sigma_\star \+A \+\mu_\star
  \numberthis
  \label{eq_analytic_3_moment}
  \\ 
  \widetilde{m}_3 &= \overline{m}_3 / (\overline{m}_2)^{3/2}
  \numberthis
  \label{eq_analytic_3_moment_skew}\,.
\end{align*}
\end{lemma}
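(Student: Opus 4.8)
The plan is to treat $Z = \elpdHatErr{\Md}{\y}$ via its quadratic form $Z = \+\varepsilon^\transp \+A \+\varepsilon + \+b^\transp \+\varepsilon + c$ from Lemma~\ref{lem:elpd_chisq}, and to reduce everything to moments of a standard normal vector. Since $\+\varepsilon \sim \operatorname{N}(\+\mu_\star, \+\Sigma_\star)$, I would write $\+\varepsilon = \+\mu_\star + \+\Sigma_\star^{1/2}\+z$ with $\+z \sim \operatorname{N}(\+0,\eye)$. Substituting this into the quadratic form and expanding gives $Z = \+z^\transp \widetilde{\+A}\+z + \widetilde{\+b}^\transp \+z + \widetilde{c}$ where $\widetilde{\+A} = \+\Sigma_\star^{1/2}\+A\+\Sigma_\star^{1/2}$, $\widetilde{\+b} = \+\Sigma_\star^{1/2}(\+b + 2\+A\+\mu_\star)$, and $\widetilde{c} = c + \+b^\transp\+\mu_\star + \+\mu_\star^\transp\+A\+\mu_\star$. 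Without loss of generality I may symmetrise $\widetilde{\+A}$ (replacing it by $(\widetilde{\+A}+\widetilde{\+A}^\transp)/2$ leaves the scalar $\+z^\transp\widetilde{\+A}\+z$ unchanged), which is convenient for the trace identities below.

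Next I would diagonalise: since a symmetric $\widetilde{\+A}$ admits $\widetilde{\+A} = \+Q\+\Lambda\+Q^\transp$ with $\+Q$ orthogonal, set $\+u = \+Q^\transp\+z \sim \operatorname{N}(\+0,\eye)$ and $\+g = \+Q^\transp\widetilde{\+b}$, so that $Z = \sum_{i} \lambda_i u_i^2 + \sum_i g_i u_i + \widetilde{c}$ is a sum of independent terms. The mean, variance, and third central moment of each independent summand $\lambda_i u_i^2 + g_i u_i$ are elementary functions of the standard-normal moments $\E[u_i^2]=1$, $\E[u_i^4]=3$, $\E[u_i^6]=15$ (and the odd moments vanish): in particular $\E[\lambda_i u_i^2 + g_i u_i] = \lambda_i$, $\Var[\lambda_i u_i^2 + g_i u_i] = 2\lambda_i^2 + g_i^2$, and the third central moment is $8\lambda_i^3 + 6\lambda_i g_i^2$. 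Summing over $i$ and using independence gives $m_1 = \sum_i \lambda_i + \widetilde{c} = \operatorname{tr}(\widetilde{\+A}) + \widetilde{c}$, $\overline{m}_2 = \sum_i (2\lambda_i^2 + g_i^2) = 2\operatorname{tr}(\widetilde{\+A}^2) + \widetilde{\+b}^\transp\widetilde{\+b}$, and $\overline{m}_3 = \sum_i (8\lambda_i^3 + 6\lambda_i g_i^2) = 8\operatorname{tr}(\widetilde{\+A}^3) + 6\widetilde{\+b}^\transp\widetilde{\+A}\widetilde{\+b}$. Translating back through $\widetilde{\+A} = \+\Sigma_\star^{1/2}\+A\+\Sigma_\star^{1/2}$, $\widetilde{\+b} = \+\Sigma_\star^{1/2}(\+b + 2\+A\+\mu_\star)$, $\widetilde{c} = c + \+b^\transp\+\mu_\star + \+\mu_\star^\transp\+A\+\mu_\star$ and expanding the bilinear/cubic forms in $\widetilde{\+b}$ yields exactly the stated Equations~\eqref{eq_analytic_1_moment}--\eqref{eq_analytic_3_moment}; Equation~\eqref{eq_analytic_3_moment_skew} is then just the definition of skewness.

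I expect the main obstacle to be purely bookkeeping rather than conceptual: carefully tracking the cross terms when expanding $\widetilde{\+b}^\transp\widetilde{\+b}$ and $\widetilde{\+b}^\transp\widetilde{\+A}\widetilde{\+b}$ back in terms of $\+b$, $\+A$, $\+\mu_\star$, and $\+\Sigma_\star$, making sure the coefficients $4$ and $4$ in $\overline{m}_2$ and $6$, $24$, $24$ in $\overline{m}_3$ come out correctly, and keeping the $\operatorname{tr}((\+\Sigma_\star^{1/2}\+A\+\Sigma_\star^{1/2})^k)$ terms in the symmetric form claimed (using cyclicity of the trace and $\+\Sigma_\star^{1/2}\+\Sigma_\star^{1/2}=\+\Sigma_\star$). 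An alternative that avoids the explicit diagonalisation is to use the known cumulant/moment formulas for a non-central quadratic form in Gaussian variables (e.g.\ via the moment generating function $\E[e^{tZ}]$ of $\+z^\transp\widetilde{\+A}\+z + \widetilde{\+b}^\transp\+z$), reading off the first three cumulants; this gives the same result and I would mention it as a shortcut, but the diagonalisation argument is the more self-contained route and is the one I would write out in the appendix.
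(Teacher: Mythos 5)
Your proposal is correct, and the underlying decomposition is the same one the paper uses; the difference is only in how the final formulas are justified. The paper's appendix proof invokes the general cumulant formula for Gaussian quadratic forms (Theorem 3.2b3 of Mathai and Provost), stated in terms of the eigenvalues $\lambda_j$ of $\Sigma_\star^{1/2}A\Sigma_\star^{1/2}$ and the rotated vector $b^\star = Q^\transp(\Sigma_\star^{1/2}b + 2\Sigma_\star^{1/2}A\mu_\star)$ --- exactly your $\Lambda$ and $g$ --- and then rewrites the resulting sums as traces and bilinear forms. You instead derive the three cumulants from scratch: standardise $\varepsilon$, diagonalise, compute the mean, variance, and third central moment of each independent summand $\lambda_i u_i^2 + g_i u_i$ from $\E[u^2]=1$, $\E[u^4]=3$, $\E[u^6]=15$, and add them up (the first three cumulants are additive over independent terms). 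Your per-component values ($\lambda_i$, $2\lambda_i^2+g_i^2$, $8\lambda_i^3+6\lambda_i g_i^2$) and the back-substituted coefficients ($4,4$ in $\overline{m}_2$ and $6,24,24$ in $\overline{m}_3$) all check out, and your remark about symmetrising $\widetilde{A}$ is exactly the point that makes the cross terms collapse to $4\, b^\transp\Sigma_\star A\mu_\star$ rather than $2\, b^\transp\Sigma_\star(A+A^\transp)\mu_\star$; the matrices produced by Lemma~\ref{lem:elpd_chisq} are in fact symmetric, so this is harmless. What your route buys is self-containedness and a small technical simplification: the paper's own ``sum of independent scaled non-central $\chi^2$'' reduction (which it cites as an equivalent derivation) first eliminates the linear term by solving $2\widetilde{A}r=\widetilde{b}$ via a Moore--Penrose inverse, a step that is delicate when $\widetilde{b}$ is not in the range of $\widetilde{A}$; keeping $g_i u_i$ inside each summand, as you do, avoids that issue entirely. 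What the citation-based route buys is brevity and immediate access to all higher-order cumulants.
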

\begin{proof}
\vspace{-6pt}
 See Appendix~\ref{app_sec_analytic_moments}.
 \renewcommand{\BlackBox}{} \renewcommand{\qedsymbol}{} 
\vspace{-6pt}
\end{proof}

\para{No Effect by the Shared Covariates}
The distributions of $\elpdHat{\Md}{\y}$, $\elpd{\Md}{\y}$, and the error do not depend on the commonly shared covariate effects $\+\beta_\text{shared}$. For example, if an intercept is included in both models, the intercept coefficient does not affect the comparison. We summarise this in the following proposition:
\begin{proposition}\label{cor_analytic_not_depend_shared_covariates}
 The distribution of the variables of interest presented in Lemma~\ref{lem:elpd_chisq} do not depend on the commonly shared covariate effects $\+\beta_\text{shared}$.
\end{proposition}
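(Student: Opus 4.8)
The plan is to establish the stronger \emph{pointwise} statement that, viewed as functions of the data $\y$ drawn from \eqref{eq_analytic_data_gen_process}, each of $\elpd{\Md}{\y}$, $\elpdHat{\Md}{\y}$, and $\elpdHatErr{\Md}{\y}$ is unchanged when $\+\beta_\text{shared}$ is shifted by an arbitrary $\+\delta$, i.e.\ when $\y$ is translated by $\+X_\text{shared}\+\delta$, where $\+X_\text{shared}$ collects the columns of $\+X$ indexed by $d_\Ma\cap d_\Mb$; the distributional claim then follows because the law of the noise part of $\y$ does not involve $\+\beta_\text{shared}$. First I would record the elementary facts about the models in \eqref{eq_analytic_case_models}: with a uniform prior and fixed $\tau^2$, each posterior-predictive variance (the in-sample one appearing in $\elpd{\Mk}{\y}$ and the leave-one-out one appearing in $\elpdHat{\Mk}{\y}$) depends on $\+X$ and $\tau$ but not on $\y$ or $\+\beta$, while the predictive means $\+x_{i,d_\Mk}^\transp\widehat{\+\beta}_{d_\Mk}$ and $\+x_{i,d_\Mk}^\transp\widehat{\+\beta}_{d_\Mk}^{(-i)}$ are ordinary-least-squares fits based on $\y$ and on $\y_{-i}$ with design matrices whose column spaces contain $\operatorname{col}(\+X_\text{shared})$, since $d_\Ma\cap d_\Mb\subseteq d_\Mk$ for $k\in\{\Msa,\Msb\}$.

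The key step is that least squares reproduces exactly any added signal from its own column space: translating the response by $\+X_\text{shared}\+\delta$ moves each $\+x_{i,d_\Mk}^\transp\widehat{\+\beta}_{d_\Mk}$ and each $\+x_{i,d_\Mk}^\transp\widehat{\+\beta}_{d_\Mk}^{(-i)}$ by precisely $(\+X_\text{shared}\+\delta)_i$, leaving the predictive variances untouched. I would then feed this in term by term. In $\elpdHat{\Md}{\y}$, every summand $\log\p_{\Msk}(\y_i\mid\y_{-i})$ depends on $\y$ only through the leave-one-out residual $\y_i-\+x_{i,d_\Mk}^\transp\widehat{\+\beta}_{d_\Mk}^{(-i)}$ and a $\+\beta$-free variance; the shift moves $\y_i$ and the fitted mean by the same amount, so each summand, hence $\elpdHat{\Md}{\y}$, is unchanged. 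In $\elpd{\Md}{\y}$, carrying out the expectation over $\tilde{\y}_i$ gives the $i$th summand of $\elpd{\Mk}{\y}$ the form $-\tfrac12\log(2\pi s_{k,i}^2)-\bigl(\+\Sigma_{\star,ii}+\bigl((\+X\+\beta)_i+\+\mu_{\star,i}-\+x_{i,d_\Mk}^\transp\widehat{\+\beta}_{d_\Mk}\bigr)^2\bigr)/(2 s_{k,i}^2)$, and the shift moves the true mean $(\+X\+\beta)_i+\+\mu_{\star,i}$ and the fitted mean by the same $(\+X_\text{shared}\+\delta)_i$, so each summand, hence $\elpd{\Md}{\y}$, and therefore $\elpdHatErr{\Md}{\y}=\elpdHat{\Md}{\y}-\elpd{\Md}{\y}$, is unchanged.

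To finish, write $\y=\+X_\text{shared}\+\beta_\text{shared}+\+w+\+\varepsilon$, where $\+w$ is the contribution of all covariates other than the shared ones and $\+\varepsilon\sim\operatorname{N}(\+\mu_\star,\+\Sigma_\star)$; the pointwise invariance shows that each variable of interest equals $\Phi(\+w+\+\varepsilon)$ for a map $\Phi$ independent of $\+\beta_\text{shared}$ applied to a random vector whose distribution is also independent of $\+\beta_\text{shared}$, so the laws of $\elpd{\Md}{\y}$, $\elpdHat{\Md}{\y}$, and $\elpdHatErr{\Md}{\y}$ do not depend on $\+\beta_\text{shared}$. Equivalently, one could read this directly off the closed forms derived in Appendices~\ref{app_sec_analytic_elpd}, \ref{app_sec_analytic_loocv}, and~\ref{app_sec_analytic_error}, checking that the $\+A$, $\+b$, and $c$ of Lemma~\ref{lem:elpd_chisq} involve $\+\beta$ only through $\+\beta_\Delta$. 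I expect the only nontrivial point to be the bookkeeping for the leave-one-out fit — that $\+x_{i,d_\Mk}^\transp\widehat{\+\beta}_{d_\Mk}^{(-i)}$, and not merely the in-sample fit, picks up exactly $(\+X_\text{shared}\+\delta)_i$ under the shift — which holds because $\widehat{\+\beta}_{d_\Mk}^{(-i)}$ is a least-squares fit on $\y_{-i}$ with a design whose column space contains the columns of $\+X_\text{shared}$ with row $i$ deleted, and $\+x_{i,d_\Mk}^\transp$ is the $i$th row of $\+X_{[\bcdot,d_\Mk]}$.
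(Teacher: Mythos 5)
Your proof is correct, but it takes a genuinely different route from the paper's. The paper's proof is a pointer to appendices~\ref{app_sec_analytic_elpd_d}, \ref{app_sec_analytic_loocv_d}, and~\ref{app_sec_analytic_error}, where the explicit quadratic forms $\+\varepsilon^\transp \+A \+\varepsilon + \+b^\transp \+\varepsilon + c$ are derived in full and one then reads off that the parameters involve $\+\beta$ only through $\+\beta_{-\Msk}$ for $\Mk\in\{\Mfa,\Mfb\}$, i.e.\ only through effects of covariates excluded from at least one model, so that $\+\beta_\text{shared}$ never appears; since the law of $\+\varepsilon$ is also free of $\+\beta$, the distributional claim follows. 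You instead isolate the single algebraic fact that drives all of those cancellations --- a least-squares fit, whether full-data or leave-one-out, reproduces exactly any response component lying in its design's column space, while the posterior predictive variances do not depend on $\y$ --- and promote it to a pointwise translation-invariance of $\elpd{\Md}{\y}$, $\elpdHat{\Md}{\y}$, and $\elpdHatErr{\Md}{\y}$ under $\y \mapsto \y + \+X_\text{shared}\+\delta$. The two arguments rest on the same identity (in the paper it appears as $\left(\+P_{\Msk}-\eye\right)\+X\+\beta = \left(\+P_{\Msk}-\eye\right)\+X_{[\bcdot,-\Msk]}\+\beta_{-\Msk}$ for the elpd and as the analogous simplification of the leave-one-out residual $y_i - \widetilde{\mu}_{\Msk\,i}$), but yours is self-contained, avoids the closed-form bookkeeping entirely, and makes the mechanism visible, whereas the paper's version comes essentially for free once the quadratic forms have been derived for the moment computations that the rest of Section~\ref{sec_analytic_case} needs anyway. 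One small imprecision in your closing remark: the closed forms involve $\+\beta$ through $\+\beta_{-\Mfsa}$ and $\+\beta_{-\Mfsb}$, which contain not only $\+\beta_\Delta$ but also the effects of any covariates omitted from \emph{both} models; your main argument handles this correctly by folding those contributions into $\+w$, so nothing breaks, but the parenthetical ``only through $\+\beta_\Delta$'' is slightly too strong in the general setting of Appendix~\ref{app_sec_norm_lin_reg_case_study}.
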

\begin{proof}
\vspace{-6pt}
\renewcommand{\BlackBox}{} \renewcommand{\qedsymbol}{}
 See appendices~\ref{app_sec_analytic_elpd_d}, \ref{app_sec_analytic_loocv_d}, and~\ref{app_sec_analytic_error}.
\vspace{-6pt}
\end{proof}

\para{Non-Shared Covariates}
The skewness of the distribution of the error $\elpdHatErr{\Md}{\y}$ will asymptotically converge to 0 when the models \Ma{} and \Mb{} become more dissimilar (the magnitude of the effects of the non-shared covariates $\+\beta_\Delta$ grows). The larger the difference, the better a normal distribution approximates the uncertainty. If the models capture the true data generating process comprehensively (no outliers in the data), and all covariates are included in at least one of the models, then the skewness of the error has its extremes when the models are, more or less, identical in predictive performance (around $\+\beta_\Delta=0$). We summarise this result in the following proposition.
\begin{proposition}\label{pro_analytic_beta_to_inf}
 Consider skewness $\widetilde{m}_3$ for variable $\elpdHatErr{\Md}{\y}$.
 Let $\+\beta_\Delta = \beta_\mathrm{r} \+\beta_\text{rate} + \+\beta_\text{base}$,
 where $\beta_\mathrm{r} \in \mathbb{R}$, $\+\beta_\text{rate} \in \mathbb{R}^{k} \setminus \{0\}$, $\+\beta_\text{base} \in \mathbb{R}^{k}$, and $k$ is the number of non-shared covariates.
 Now,
 \begin{align} \lim_{\beta_\mathrm{r} \rightarrow \pm \infty} \widetilde{m}_3 = 0\,.
 \end{align}
 Furthermore, if $\mu_\star = 0$, $\+\beta_\text{base} = 0$, and ${d_\Msa}\cup{d_\Msb} = \mathbb{U}$, $\widetilde{m}_3$ as a function of $\beta_\mathrm{r}$ is a continuous even function with extremes at $\beta_\mathrm{r} = 0$ and situational at $\beta_\mathrm{r} = \pm r$, where the definition of the latter extreme and the condition for their existence are given in Appendix~\ref{app_sec_analytic_effect_of_model_difference}.
\end{proposition}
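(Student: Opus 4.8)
The plan is to combine the explicit quadratic form of Lemma~\ref{lem:elpd_chisq} with the moment identities of Lemma~\ref{lem_analytic_moments}, tracking only how the triple $(\+A,\+b,c)$ depends on $\+\beta_\Delta$. Since $\+\beta$ enters the data-generating mechanism only through the mean of $\+y=\+X\+\beta+\+\varepsilon$, the quadratic coefficient $\+A$ (built from the two models' hat and leave-one-out operators) does not depend on $\+\beta_\Delta$; by Proposition~\ref{cor_analytic_not_depend_shared_covariates} the shared covariates drop out entirely, so the whole $\+\beta$-dependence runs through $\+\beta_\Delta$. Expanding $\elpdHat{\Md}{\y}$ and $\elpd{\Md}{\y}$ in $\+\varepsilon$ shows that $\+b$ is affine in $\+\beta_\Delta$ and $c$ is quadratic in $\+\beta_\Delta$. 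With $\+\beta_\Delta=\beta_\mathrm{r}\+\beta_\text{rate}+\+\beta_\text{base}$ this gives $\+b=\beta_\mathrm{r}\+b_1+\+b_0$, where $\+b_1$ is the image of $\+\beta_\text{rate}$ under the fixed linear part of that map, and $c$ is a quadratic polynomial in $\beta_\mathrm{r}$.

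For the limit, substitute $\+b=\beta_\mathrm{r}\+b_1+\+b_0$ into~\eqref{eq_analytic_2_moment} and~\eqref{eq_analytic_3_moment}; note that $c$ appears in neither. Both $\overline m_2$ and $\overline m_3$ are then polynomials of degree at most two in $\beta_\mathrm{r}$, with leading coefficients $\+b_1^\transp\+\Sigma_\star\+b_1$ and $6\,\+b_1^\transp\+\Sigma_\star\+A\+\Sigma_\star\+b_1$ respectively. Provided $\+b_1\neq 0$ (see the obstacle below), positive-definiteness of $\+\Sigma_\star$ makes the leading coefficient of $\overline m_2$ strictly positive, so $\overline m_2^{3/2}$ grows like $|\beta_\mathrm{r}|^{3}$ while $\overline m_3$ grows at most like $\beta_\mathrm{r}^{2}$; hence $\widetilde m_3=\overline m_3/\overline m_2^{3/2}=O(1/|\beta_\mathrm{r}|)\to 0$ as $\beta_\mathrm{r}\to\pm\infty$.

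For the second part, impose $\+\mu_\star=0$, $\+\beta_\text{base}=0$, and $d_\Msa\cup d_\Msb=\mathbb{U}$, which force the constant part of the map to vanish, so $\+b=\beta_\mathrm{r}\+b_1$. Every term in~\eqref{eq_analytic_2_moment}--\eqref{eq_analytic_3_moment} that is linear (odd) in $\+b$ carries a factor $\+\mu_\star=0$, leaving $\overline m_2=C_2+D_2\beta_\mathrm{r}^{2}$ and $\overline m_3=C_3+D_3\beta_\mathrm{r}^{2}$ with $C_2=2\operatorname{tr}((\+\Sigma_\star^{1/2}\+A\+\Sigma_\star^{1/2})^{2})$, $D_2=\+b_1^\transp\+\Sigma_\star\+b_1$, $C_3=8\operatorname{tr}((\+\Sigma_\star^{1/2}\+A\+\Sigma_\star^{1/2})^{3})$, $D_3=6\,\+b_1^\transp\+\Sigma_\star\+A\+\Sigma_\star\+b_1$. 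Thus $\widetilde m_3(\beta_\mathrm{r})=(C_3+D_3\beta_\mathrm{r}^{2})/(C_2+D_2\beta_\mathrm{r}^{2})^{3/2}$ is even in $\beta_\mathrm{r}$ and, as long as $\overline m_2>0$, continuous. Writing $u=\beta_\mathrm{r}^{2}$ and differentiating, the sign of $\tfrac{d}{du}\widetilde m_3$ is governed by $D_3C_2-\tfrac{3}{2}D_2C_3-\tfrac{1}{2}D_2D_3\,u$, which is affine in $u$; since this expression keeps a constant sign on some $[0,\delta)$, $\widetilde m_3$ is monotone in $|\beta_\mathrm{r}|$ near the origin, so by evenness $\beta_\mathrm{r}=0$ is an extremum, and the unique remaining root $u^\star=(2D_3C_2-3D_2C_3)/(D_2D_3)$ contributes the further extrema $\beta_\mathrm{r}=\pm r=\pm\sqrt{u^\star}$ precisely when $u^\star>0$ (the situational case), matching the formula and existence condition recorded in Appendix~\ref{app_sec_analytic_effect_of_model_difference}.

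The main obstacle is to show that $\+b_1\neq 0$ for every admissible rate direction $\+\beta_\text{rate}\in\mathbb{R}^{k}\setminus\{0\}$, i.e.\ that the linear part of $\+\beta_\Delta\mapsto\+b$ is injective: if it had a kernel containing $\+\beta_\text{rate}$, then $\overline m_2$ and $\overline m_3$ would both be constant in $\beta_\mathrm{r}$ and the limit claim would fail. This comes down to a rank property of the design restricted to the non-shared covariates, composed with the residual and leave-one-out operators from the appendix derivation of Lemma~\ref{lem:elpd_chisq}; verifying it, together with the mild non-degeneracy keeping $\overline m_2>0$, is where the standing assumptions $d_\Msa\neq d_\Msb$ and full column rank of $\+X$ are genuinely used. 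The rest is routine substitution and single-variable calculus.
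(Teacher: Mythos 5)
Your proposal follows essentially the same route as the paper's own proof in Appendix~\ref{app_sec_analytic_effect_of_model_difference}: $\+A_\mathrm{err}$ is constant in $\beta_\mathrm{r}$, $\+b_\mathrm{err}$ is affine and $c_\mathrm{err}$ quadratic, so $\overline{m}_2$ and $\overline{m}_3$ are (at most) quadratic in $\beta_\mathrm{r}$ and the skewness decays like $1/|\beta_\mathrm{r}|$, and your extremum location $u^\star=(2D_3C_2-3D_2C_3)/(D_2D_3)$ is exactly the paper's $2C_{2,0}/C_{2,2}-3C_{3,0}/C_{3,2}$. The non-degeneracy $\+b_1\neq 0$ that you single out as the main obstacle is likewise left unverified in the paper, which simply asserts that the moments are of second degree and that $C_{2,2}=\+q_{\+b_\mathrm{err},1}^\transp\+\Sigma_\star\+q_{\+b_\mathrm{err},1}>0$, so you are, if anything, more explicit than the original about the assumption being used.
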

\begin{proof} 
\vspace{-6pt}
\renewcommand{\BlackBox}{} \renewcommand{\qedsymbol}{}
 See Appendix~\ref{app_sec_analytic_effect_of_model_difference}.
\vspace{-6pt}
\end{proof}
The behaviour of the moments with regard to the non-shared covariates' effects is illustrated graphically in Figures~\ref{fig_analytic_skew_b_n} and~\ref{fig_analytic_zscore_skew_n_b}. Figure~\ref{fig_analytic_skew_b_n} shows that the problematic skewness near $\+\beta_\Delta=0$ occurs particularly with nested models. Similar behaviour can be observed with unconditional design matrix $\+X$ in Figure~\ref{fig_analytic_zscore_skew_n_b_tot} in Appendix~\ref{app_sec_analytic_graph}, and additionally with unconditional model variance $\tau$ in the simulated experiment results in Section~\ref{sec_experiments}. In a non-nested comparison setting, problematic skewness near $\+\beta_\Delta=0$ occurs, particularly when there is a difference in the effects of the included covariates between the models.
\begin{figure}[tb!]
 \centering
 \includegraphics[width=0.9\figurecontrolwidth]{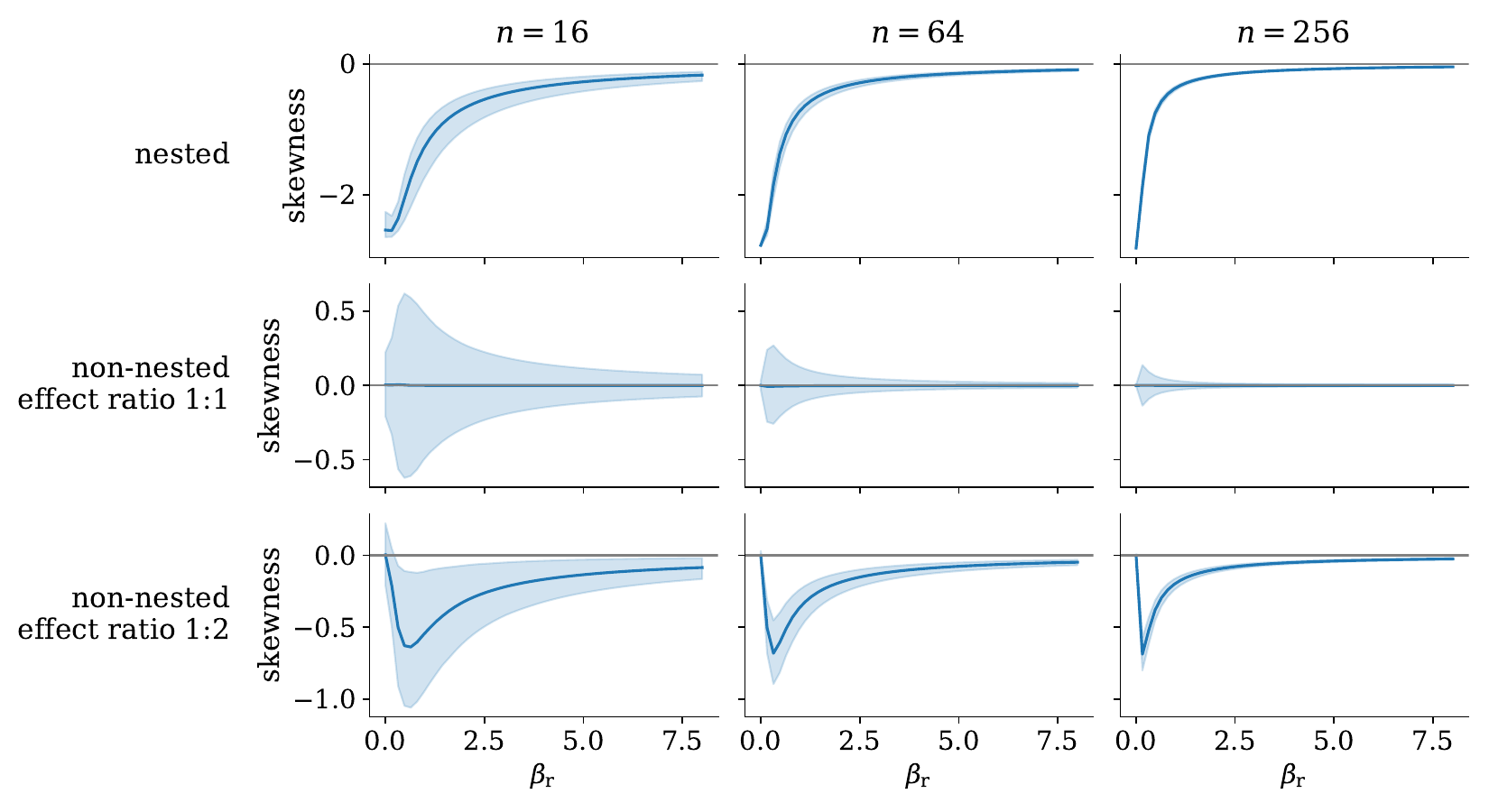}
 \caption{%
 The skewness conditional on the design matrix $\+X$ for the error $\elpdHatErr{\Md}{\y}$ as a function of a scaling factor $\beta_\mathrm{r} \in \mathbb{R}$ for the magnitude of the non-shared effects: $\+\beta_\Delta = \beta_\mathrm{r} \+\beta_\text{rate}$. Models have an intercept and one shared covariate. Top: the model $\Mb$ has one additional covariate. Middle: models $\Ma$ and $\Mb$ each have one additional covariate with equal effects. Bottom: models $\Ma$ and $\Mb$ have one additional covariate with an effect ratio of 1:2. The solid lines correspond to the median, and the shaded area illustrates the 95 \% interval based on 2000 simulated $\+X$s. The problematic skewness occurs, particularly with the nested models (top row) when $\beta_r$ is close to 0 so that the models make similar predictions (Scenario~1). In the non-nested case, the extreme skewness decreases when $n$ grows, more noticeably in the case of equal effects, but in the nested case, the extreme skewness stays high when $n$ grows.
 }\label{fig_analytic_skew_b_n}
\end{figure}
\begin{figure}[tb!] \centering
 \includegraphics[width=0.8\figurecontrolwidth]{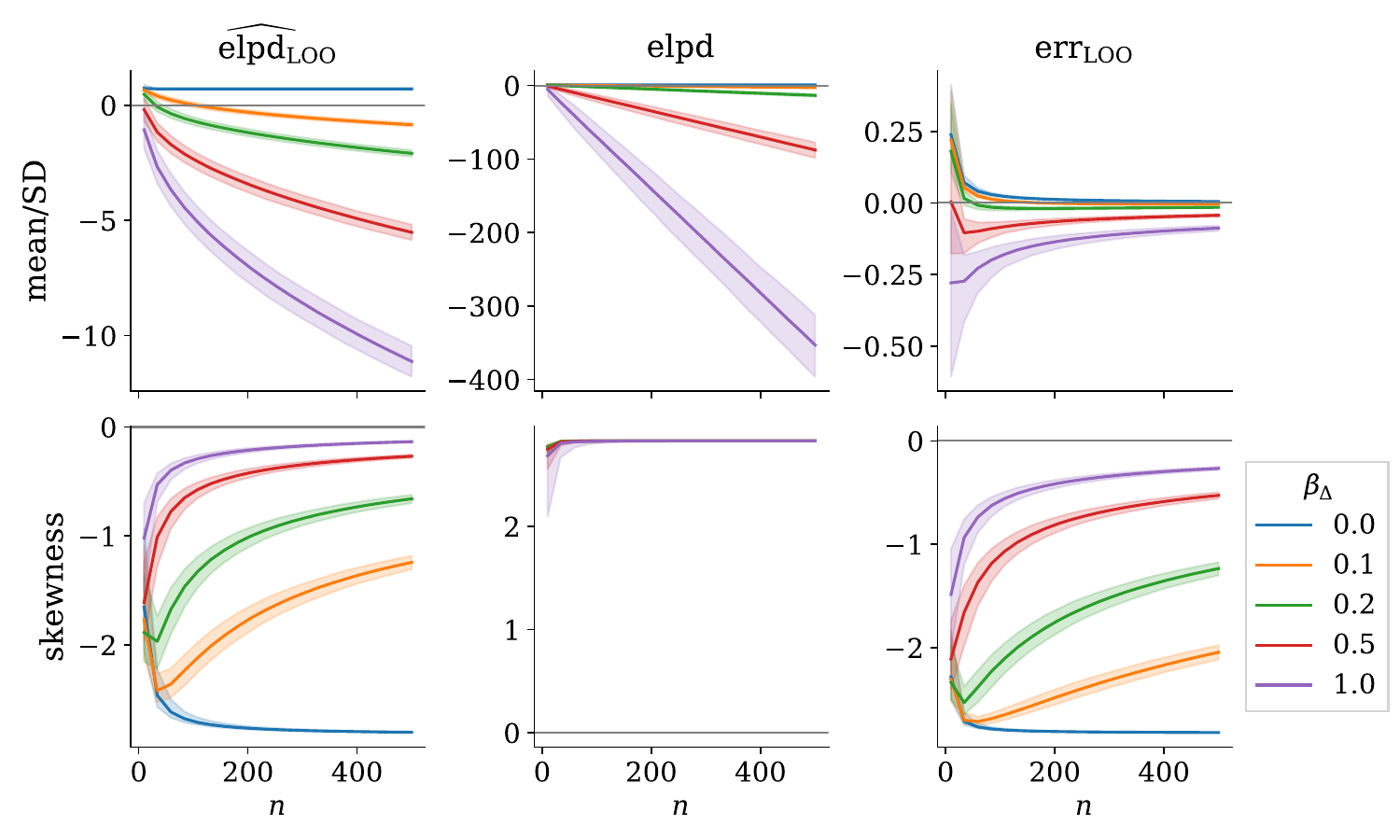}
 \caption{The mean relative to the standard deviation and skewness conditional on the design matrix $\+X$ for $\elpdHat{\Md}{\y}$, $\elpd{\Md}{\y}$, and for the error $\elpdHatErr{\Md}{\y}$ as a function of the data size $n$. The relative mean serves as an indicator of how far away the distribution is from 0. The true model has an intercept and two covariates. One of the covariates with true effect $\beta_\Delta$ is included only in model $\Mb$. The solid lines correspond to the median, and the shaded area illustrates the 95\% interval based on 2000 simulated $\+X$s. The problematic skewness of the error occurs with small $n$ and $\beta_\Delta$. When $\beta_\Delta = 0$, the magnitude of skewness does not decrease when $n$ grows. The relative mean of the error approaches zero when $n$ grows.}
\label{fig_analytic_zscore_skew_n_b}
\end{figure}

\para{Outliers}
Outliers in the data impact the moments of the distribution of the error $\elpdHatErr{\Md}{\y}$ in a fickle way. 
Depending on the data $\+X$, covariate effect vector $\+\beta$, and on the outlier design vector $\mu_\star$, scaling the outliers can affect the bias of the error quadratically, linearly or not at all. The variance is affected quadratically or not at all. If the scaling affects the variance, the skewness asymptotically converges to zero. We summarise these results in the following proposition.
\begin{proposition}\label{prop_analytic_effect_of_mu}
 Consider the mean $m_1$, the variance $\overline{m}_2$, and the third central moment $\overline{m}_3$ for the variable $\elpdHatErr{\Md}{\y}$.
 Let $\+\mu_\star = \mu_{\star,\mathrm{r}} \+\mu_{\star,\text{rate}} + \+\mu_{\star,\text{base}}$,
 where $\mu_{\star,\mathrm{r}} \in \mathbb{R}$, $\+\mu_{\star,\text{rate}} \in \mathbb{R}^{n} \setminus \{0\}$, and $\+\mu_{\star,\text{base}} \in \mathbb{R}^{n}$. Now $m_1$ is a second or first-degree polynomial or constant as a function of $\mu_{\star,\mathrm{r}}$. Furthermore, $\overline{m}_2$ and $\overline{m}_3$ are either both second-degree polynomials or both constants and thus, if not constant, the skewness
 \begin{align}
   \lim_{\mu_{\star,\mathrm{r}} \rightarrow \pm \infty} \widetilde{m}_3 = \lim_{\mu_{\star,r} \rightarrow \pm \infty} \frac{\overline{m}_3}{(\overline{m}_2)^{3/2}} = 0 \,.
 \end{align}
\end{proposition}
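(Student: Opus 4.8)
The plan is to substitute the affine parametrization $\+\mu_\star = \mu_{\star,\mathrm{r}}\+\mu_{\star,\text{rate}} + \+\mu_{\star,\text{base}}$ directly into the moment formulas~\eqref{eq_analytic_1_moment}--\eqref{eq_analytic_3_moment} of Lemma~\ref{lem_analytic_moments}, taken for $Z = \elpdHatErr{\Md}{\y}$ with $\+A$, $\+b$, $c$ the quantities supplied by Lemma~\ref{lem:elpd_chisq} (fixed once $\+X$ is given, and, since $\+A$ enters only through $\+\varepsilon^\transp\+A\+\varepsilon$, taken symmetric), and to read off the polynomial dependence on the scalar $\mu_{\star,\mathrm{r}}$ term by term. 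Writing $\+u = \+\mu_{\star,\text{rate}}$ and $\+w = \+\mu_{\star,\text{base}}$, every summand that involves $\+\mu_\star$ expands into a quadratic polynomial in $\mu_{\star,\mathrm{r}}$ whose coefficients are inner products built from $\+u$, $\+w$, $\+b$, $\+A$, $\+\Sigma_\star$, while the trace terms $\operatorname{tr}\big((\+\Sigma_\star^{1/2}\+A\+\Sigma_\star^{1/2})^k\big)$ and the $\+b$-only terms carry no $\mu_{\star,\mathrm{r}}$ at all. For $m_1$ the coefficient of $\mu_{\star,\mathrm{r}}^2$ is $\+u^\transp\+A\+u$ and the coefficient of $\mu_{\star,\mathrm{r}}$ is $\+b^\transp\+u + 2\+w^\transp\+A\+u$; as $\+A$ need not be sign-definite these can vanish independently of one another, which is exactly the claimed ``second or first degree polynomial or constant'' alternative for $m_1$.

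The heart of the argument is the joint behaviour of $\overline m_2$ and $\overline m_3$. Introduce the single vector $\+z := \+\Sigma_\star^{1/2}\+A\+u$ and set $\+M := \+\Sigma_\star^{1/2}\+A\+\Sigma_\star^{1/2}$. Using $\+\Sigma_\star\+A\+u = \+\Sigma_\star^{1/2}\+z$ and $\+u^\transp\+A\+\Sigma_\star\+A\+u = \lVert\+z\rVert^2$, expanding~\eqref{eq_analytic_2_moment} shows that the $\mu_{\star,\mathrm{r}}^2$-coefficient of $\overline m_2$ is $4\lVert\+z\rVert^2$ and its $\mu_{\star,\mathrm{r}}$-coefficient is $4(\+\Sigma_\star^{1/2}\+b)^\transp\+z + 8(\+\Sigma_\star^{1/2}\+A\+w)^\transp\+z$; expanding~\eqref{eq_analytic_3_moment} the same way shows that the $\mu_{\star,\mathrm{r}}^2$- and $\mu_{\star,\mathrm{r}}$-coefficients of $\overline m_3$ are likewise inner products all carrying an explicit factor $\+z$ (the leading one being $24\,\+z^\transp\+M\+z$). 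The structural fact to extract is therefore that \emph{all} $\mu_{\star,\mathrm{r}}$-dependence of both $\overline m_2$ and $\overline m_3$ passes through the one vector $\+z$. Since $\+\Sigma_\star\succ 0$, so that $\+\Sigma_\star^{1/2}$ is invertible, $\+z = \+0 \iff \+A\+\mu_{\star,\text{rate}} = \+0$; in that case $\overline m_2$ and $\overline m_3$ are \emph{both} constant in $\mu_{\star,\mathrm{r}}$, while otherwise $\lVert\+z\rVert^2 > 0$, so $\overline m_2$ is a genuine degree-two polynomial with strictly positive leading coefficient and $\overline m_3$ a polynomial in $\mu_{\star,\mathrm{r}}$ of degree at most two. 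This is the dichotomy in the statement.

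The skewness limit~\eqref{eq_analytic_3_moment_skew} then follows at once: on the non-constant branch $\overline m_2(\mu_{\star,\mathrm{r}})$ grows like a strictly positive multiple of $\mu_{\star,\mathrm{r}}^2$ while $\overline m_3(\mu_{\star,\mathrm{r}}) = O(\mu_{\star,\mathrm{r}}^2)$, hence $\widetilde m_3 = \overline m_3/(\overline m_2)^{3/2}$ behaves like $O(\mu_{\star,\mathrm{r}}^2)/\Theta(\lvert\mu_{\star,\mathrm{r}}\rvert^3)$ and tends to $0$ as $\mu_{\star,\mathrm{r}}\rightarrow\pm\infty$, which is the displayed limit.

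I expect the main obstacle to be the middle, structural step: checking that \emph{every} $\mu_{\star,\mathrm{r}}$-linear and $\mu_{\star,\mathrm{r}}$-quadratic contribution to both $\overline m_2$ and $\overline m_3$ factors through the common vector $\+z = \+\Sigma_\star^{1/2}\+A\+u$, so that the two moments pass from constant to quadratic together, together with the positive-semidefiniteness identity $\+u^\transp\+A\+\Sigma_\star\+A\+u = \lVert\+\Sigma_\star^{1/2}\+A\+u\rVert^2 \ge 0$ that pins down exactly when the leading coefficient of $\overline m_2$ is nonzero. Everything else is routine expansion of the trace and quadratic expressions of Lemma~\ref{lem_analytic_moments}; beyond symmetry of $\+A$ and positive-definiteness of $\+\Sigma_\star$, no further properties of the specific $\+A$, $\+b$, $c$ are needed for the skewness conclusion.
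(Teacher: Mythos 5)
Your overall strategy is the same as the paper's (substitute the affine parametrization into the moment formulas of Lemma~\ref{lem_analytic_moments}, collect powers of $\mu_{\star,\mathrm{r}}$, and observe that all the $\mu_{\star,\mathrm{r}}$-dependence of $\overline{m}_2$ and $\overline{m}_3$ factors through a single vector whose vanishing switches both moments to constants and whose non-vanishing forces a strictly positive $\mu_{\star,\mathrm{r}}^2$-coefficient in $\overline{m}_2$ via a $\+\Sigma_\star$-quadratic form). However, there is a concrete gap in the execution: you treat the quadratic-form coefficients $\+b$ and $c$ of $\elpdHatErr{\Md}{\y}$ as fixed once $\+X$ is given, but for the error variable they themselves depend on $\+\mu_\star$ --- $\+b_\mathrm{err}$ linearly (through the term $-\frac{1}{\tau^2}\+B_{\Mfsd,2}\+\mu_\star$) and $c_\mathrm{err}$ quadratically --- because the elpd integrates against the true density, whose mean is shifted by $\+\mu_\star$; see equations~\eqref{app_eq_error_param_b_err} and~\eqref{app_eq_error_param_c_err}. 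Consequently your identified coefficients are wrong: the $\mu_{\star,\mathrm{r}}^2$-coefficient of $m_1$ is not $\+u^\transp\+A\+u$ but $\+u^\transp(\+A_{\mathrm{err},1}-\+B_{\Mfsd,2}-\+C_{\Mfsd,3})\+u/\tau^2$, the leading coefficient of $\overline{m}_2$ is not $4\lVert\+\Sigma_\star^{1/2}\+A\+u\rVert^2$ but $\lVert\+\Sigma_\star^{1/2}(2\+A_{\mathrm{err},1}-\+B_{\Mfsd,2})\+u\rVert^2/\tau^4$, and the condition for both second and third moments to be constant is $(2\+A_{\mathrm{err},1}-\+B_{\Mfsd,2})\+\mu_{\star,\text{rate}}=0$, not $\+A\+\mu_{\star,\text{rate}}=0$.

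The good news is that your structural insight survives intact once the correct ``gate'' vector is used: the paper writes each moment as $\+\mu_\star^\transp\+Q\+\mu_\star+\+q^\transp\+\mu_\star+C$ and checks that $\+Q_{\overline{m}_2}$, $\+q_{\overline{m}_2}$, $\+Q_{\overline{m}_3}$, $\+q_{\overline{m}_3}$ all carry the common left/right factor $(2\+A_{\mathrm{err},1}-\+B_{\Mfsd,2})$, so that $\+\Sigma_\star\succ 0$ gives exactly the dichotomy and the $O(\mu_{\star,\mathrm{r}}^2)/\Theta(\lvert\mu_{\star,\mathrm{r}}\rvert^3)\to 0$ limit you describe. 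One further small point in your favour: your careful phrasing that $\overline{m}_3$ is of degree \emph{at most} two (rather than exactly two) is all the limit actually needs, since $\+A_{\mathrm{err},1}$ is indefinite and the $\mu_{\star,\mathrm{r}}^2$-coefficient of $\overline{m}_3$ could in principle vanish even when that of $\overline{m}_2$ does not. To repair the proof, rederive the coefficients starting from the explicit $\+\mu_\star$-dependent forms of $\+b_\mathrm{err}$ and $c_\mathrm{err}$ in Appendix~\ref{app_sec_analytic_error} before applying Lemma~\ref{lem_analytic_moments}.
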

\begin{proof}
\vspace{-6pt}
\renewcommand{\BlackBox}{} \renewcommand{\qedsymbol}{} 
 See Appendix~\ref{app_sec_analytic_effect_of_outliers}.
\vspace{-6pt}
\end{proof}
As demonstrated in Figure~\ref{fig_analytic_zscore_skew_mu_b}, while the skewness decreases, the relative bias increases, and the approximation gets increasingly bad. When $\+\mu_\star \neq 0$, the problematic skewness of the error $\elpdHatErr{\Md}{\y}$ may occur with any level of non-shared covariate effects $\+\beta_\Delta$. This behaviour is shown in Appendix~\ref{app_sec_analytic_three_skew}.

\para{Residual Variance} 
The skewness of the error $\elpdHatErr{\Md}{\y}$ converges to a constant value when the true residual variance grows. 
When the observations are uncorrelated, and they have the same residual variance so that $\Sigma_\star = \sigma_\star^2 \eye$, the skewness converges to a constant, determined by the design matrix $\+X$ when $\sigma_\star^2 \rightarrow \infty$. We summarise this behaviour in the following proposition.
\begin{proposition}\label{prop_analytic_effect_residual_var}
 For the data generating process defined in Equation~\eqref{eq_analytic_data_gen_process}, let $\+\Sigma_\star = \sigma_\star^2 \eye_{n}$, and consider the skewness $\widetilde{m}_3$ for the variable $\elpdHatErr{\Md}{\y}$.
 Then,
 \begin{align}
    \lim_{\sigma_\star \rightarrow \infty} \widetilde{m}_3 = 2^{3/2} \frac{
        \operatorname{tr}\left(\+A_\mathrm{err}^3\right)}{
        \operatorname{tr}\left(\+A_\mathrm{err}^2\right)^{3/2}}\,.
 \end{align}
\end{proposition}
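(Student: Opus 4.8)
The plan is to specialise the moment formulas of Lemma~\ref{lem_analytic_moments} to $\+\Sigma_\star = \sigma_\star^2\eye_n$ and then run a dominant-balance argument in $\sigma_\star$. Writing $Z = \elpdHatErr{\Md}{\y} = \+\varepsilon^\transp\+A_\mathrm{err}\+\varepsilon + \+b_\mathrm{err}^\transp\+\varepsilon + c_\mathrm{err}$ as in Lemma~\ref{lem:elpd_chisq}, and using $\+\Sigma_\star^{1/2} = \sigma_\star\eye_n$ so that $\+\Sigma_\star^{1/2}\+A_\mathrm{err}\+\Sigma_\star^{1/2} = \sigma_\star^2\+A_\mathrm{err}$, Equations~\eqref{eq_analytic_2_moment} and~\eqref{eq_analytic_3_moment} turn into polynomials in $\sigma_\star$,
\begin{align*}
 \overline{m}_2 &= 2\sigma_\star^4\operatorname{tr}\!\left(\+A_\mathrm{err}^2\right) + \sigma_\star^2\left(\+b_\mathrm{err}^\transp\+b_\mathrm{err} + 4\+b_\mathrm{err}^\transp\+A_\mathrm{err}\+\mu_\star + 4\+\mu_\star^\transp\+A_\mathrm{err}^2\+\mu_\star\right),\\
 \overline{m}_3 &= 8\sigma_\star^6\operatorname{tr}\!\left(\+A_\mathrm{err}^3\right) + \sigma_\star^4\left(6\+b_\mathrm{err}^\transp\+A_\mathrm{err}\+b_\mathrm{err} + 24\+b_\mathrm{err}^\transp\+A_\mathrm{err}^2\+\mu_\star + 24\+\mu_\star^\transp\+A_\mathrm{err}^3\+\mu_\star\right),
\end{align*}
so the leading powers are $\sigma_\star^4$ in $\overline{m}_2$ and $\sigma_\star^6$ in $\overline{m}_3$.

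For this reduction to be legitimate I would need to confirm that $\+A_\mathrm{err}$, $\+b_\mathrm{err}$ and $\+\mu_\star$ carry no hidden dependence on $\sigma_\star$, so that the coefficients of the various powers of $\sigma_\star$ above are genuinely constant in $\sigma_\star$. The vector $\+\mu_\star$ is a fixed parameter of the data generating mechanism. For the quadratic-form data, the construction in Appendix~\ref{app_sec_analytic_error} assembles $\+A_\mathrm{err}$ and $\+b_\mathrm{err}$ from $\+X$, the common model variance $\tau^2$, the index sets $d_\Ma,d_\Mb$, and $\+X\+\beta,\+\mu_\star$; the only place $\sigma_\star$ could enter a pointwise $\elpd$ term is through $\E_{\ytilde_i}[\log\p_\Ma(\ytilde_i\mid\y) - \log\p_\Mb(\ytilde_i\mid\y)]$, where the $\sigma_\star^2/\tau^2$ contribution produced by the expectation is identical for both models (they share the fixed $\tau^2$) and therefore cancels in the difference $\elpd{\Md}{\y}$. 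Hence the displays above are polynomials in $\sigma_\star$ with $\sigma_\star$-free coefficients.

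Finally, assuming the nondegenerate case $\+A_\mathrm{err}\neq 0$ — which is exactly what makes the claimed limit well defined, since $\operatorname{tr}(\+A_\mathrm{err}^2) = \sum_i\lambda_i(\+A_\mathrm{err})^2 > 0$ for symmetric $\+A_\mathrm{err}\neq 0$ — I would divide numerator and denominator of $\widetilde{m}_3 = \overline{m}_3/\overline{m}_2^{3/2}$ by $\sigma_\star^6$ and let $\sigma_\star\to\infty$:
\begin{align*}
 \widetilde{m}_3 = \frac{8\operatorname{tr}(\+A_\mathrm{err}^3) + O(\sigma_\star^{-2})}{\bigl(2\operatorname{tr}(\+A_\mathrm{err}^2) + O(\sigma_\star^{-2})\bigr)^{3/2}} \longrightarrow \frac{8\operatorname{tr}(\+A_\mathrm{err}^3)}{\bigl(2\operatorname{tr}(\+A_\mathrm{err}^2)\bigr)^{3/2}} = 2^{3/2}\,\frac{\operatorname{tr}(\+A_\mathrm{err}^3)}{\operatorname{tr}(\+A_\mathrm{err}^2)^{3/2}}\,,
\end{align*}
using $8/2^{3/2}=2^{3/2}$. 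The main obstacle is the middle step: verifying that $\+A_\mathrm{err}$ and $\+b_\mathrm{err}$ are $\sigma_\star$-independent (the cancellation of the common $\sigma_\star^2/\tau^2$ term across the two models) and recording the nondegeneracy caveat $\+A_\mathrm{err}\neq 0$; once those are in place the limit itself is a routine comparison of leading terms.
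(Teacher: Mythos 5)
Your argument is essentially the paper's own proof (Appendix~\ref{app_sec_analytic_effect_of_data_variance}): substitute $\+\Sigma_\star=\sigma_\star^2\eye_n$ into the moment formulas, read off that $\overline{m}_2 = 2\operatorname{tr}(\+A_\mathrm{err}^2)\sigma_\star^4 + O(\sigma_\star^2)$ and $\overline{m}_3 = 8\operatorname{tr}(\+A_\mathrm{err}^3)\sigma_\star^6 + O(\sigma_\star^4)$, and compare leading terms; your added nondegeneracy caveat $\+A_\mathrm{err}\neq 0$ is a sensible remark the paper leaves implicit. One small correction to your middle step: the $\sigma_{\star}$-dependent contribution to the pointwise elpd does \emph{not} cancel between the two models (it enters $c_\mathrm{err}$ through $\+\sigma_\star^\transp\+C_{\Mfsd,3}\+\sigma_\star$ with $\+C_{\Mfsd,3}=-\tfrac12(\+D_\Mfsa-\+D_\Mfsb)\neq 0$), but since it sits entirely in the additive constant of the quadratic form it affects only $m_1$, not $\overline{m}_2$ or $\overline{m}_3$, so your conclusion is unaffected.
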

\begin{proof}
\vspace{-6pt}
\renewcommand{\BlackBox}{} \renewcommand{\qedsymbol}{} 
 See Appendix~\ref{app_sec_analytic_effect_of_data_variance}.
\vspace{-6pt}
\end{proof}
\begin{figure}[tb!]\centering
 \includegraphics[width=0.74\figurecontrolwidth]{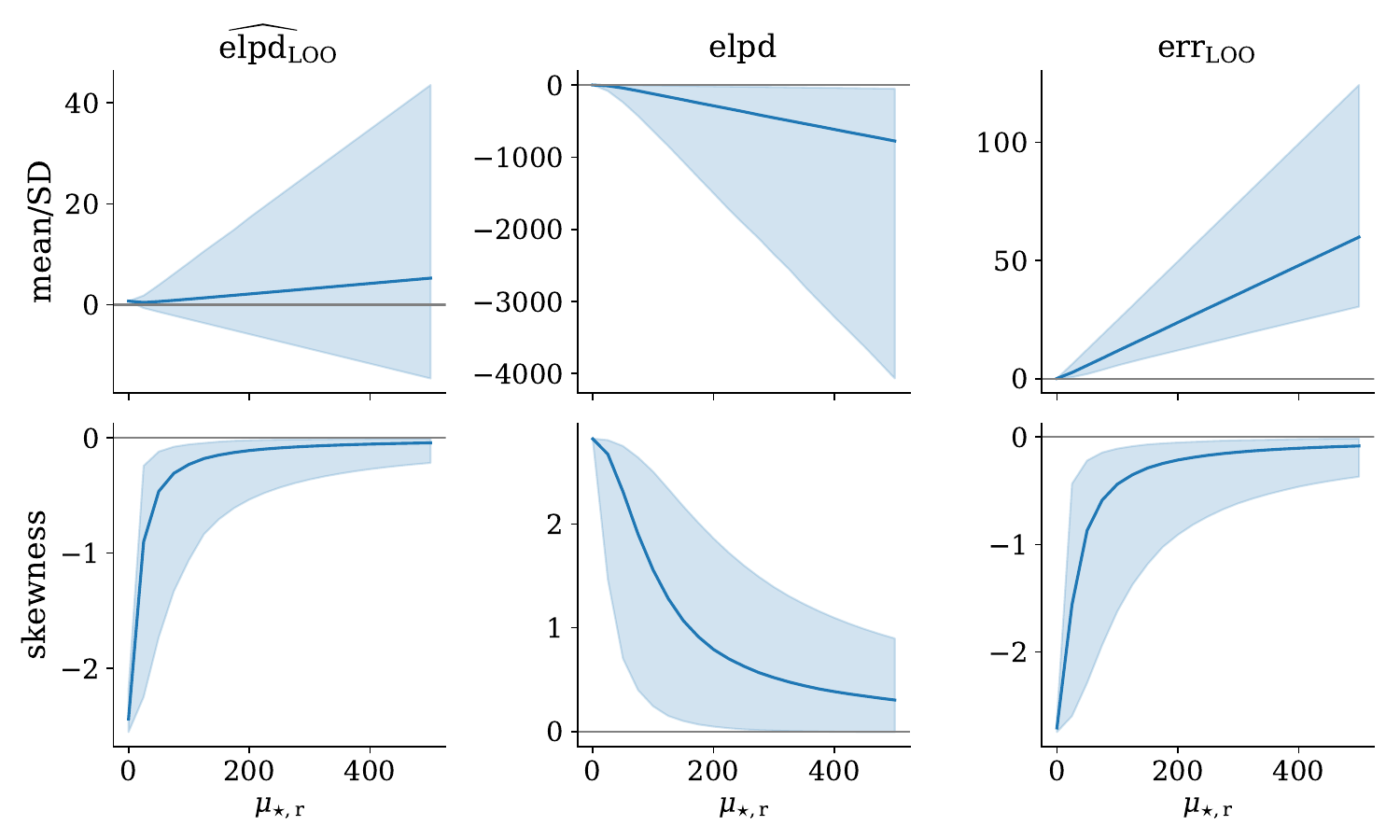}
 \caption{Illustration of the mean relative to the standard deviation and skewness conditional on the design matrix $\+X$ for $\elpdHat{\Md}{\y}$, $\elpd{\Md}{\y}$, and for the error $\elpdHatErr{\Md}{\y}$ as a function of a scaling factor $\mu_{\star,\mathrm{r}}$ for the magnitude of one outlier observation. The data consists of an intercept and two covariates, one of which has no effect and is considered only in the model $\Mb$. The illustrated behaviour is also similar to other levels of effect for the non-shared covariate. The solid lines correspond to the median, and the shaded area illustrates the 95 \% confidence interval based on 2000 independently simulated $\+X$s from the standard normal distribution. The skewness of all the inspected variables approaches zero when $n$ grows. However, at the same time, the bias of the estimator increases, thus making the analysis of the uncertainty hard.}
\label{fig_analytic_zscore_skew_mu_b}
\end{figure}
\subsection{Asymptotic Behaviour as a Function of the Data Size} \label{sec_analytic_case_ninf}
Following the setting defined in~\eqref{eq_analytic_data_gen_process} and~\eqref{eq_analytic_case_models}, by inspecting the moments in an example case, where a null model is compared to a model with one covariate, we can further draw some interesting conclusions about the behaviour of the moments when $n \rightarrow \infty$, namely:
\begin{proposition}\label{prop:elpd_asymptotic} 
 Let the setting be defined as in~\eqref{eq_analytic_data_gen_process} and~\eqref{eq_analytic_case_models}. In addition, let $\beta_\Delta \in \mathbb{R}$ be the true effect of the sole non-shared covariate that controls the similarity of the model performances, $\tau^2$ is the model variance, and $\Sigma_\star = s_\star^2 \eye$ is the true residual variance, then
\begingroup
 \allowdisplaybreaks
 \begin{align}
  \lim_{n\rightarrow\infty} \frac{\E\left[\elpdHat{\Md}{\y}\right]}{\SD\left[\elpdHat{\Md}{\y}\right]}
    &= \begin{dcases}
        \frac{\tau^2}{\sqrt{2} s_\star^2} \,, & \text{when } \beta_\Delta = 0 \,, \\
        - \infty & \text{otherwise,}
    \end{dcases}
  \\
     \lim_{n\rightarrow\infty} \frac{\E\left[\elpd{\Md}{\y}\right]}{\SD\left[\elpd{\Md}{\y}\right]}
    &= \begin{dcases}
        \frac{\tau^2}{\sqrt{2} s_\star^2} \,, & \text{when } \beta_\Delta = 0 \,, \\
        - \infty & \text{otherwise,}
    \end{dcases}
  \\
  \lim_{n\rightarrow\infty} \frac{\E\left[\elpdHatErr{\Md}{\y}\right]}{\SD\left[\elpdHatErr{\Mfd}{\y}\right]}
    &= 0
  \\
    \lim_{n\rightarrow\infty} \skewness\left[\elpdHatErr{\Md}{\y}\right]
    &= \begin{dcases}
        - 2^{3/2}, & \text{when }  \beta_\Delta = 0 \\
        0 & \text{otherwise,}
    \end{dcases}
 \end{align}
\endgroup
\end{proposition}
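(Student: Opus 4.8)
The plan is to reduce everything to the quadratic‑form moment formulas of Lemma~\ref{lem_analytic_moments} and then track the leading‑order behaviour of the relevant traces and bilinear forms as $n\to\infty$. First I would specialise the setting: the null model $\Ma$ has design $\+X_{[\bcdot,d_\Msa]}$ without the covariate $x_\Delta$, model $\Mb$ adds the single column $x_\Delta$ with true effect $\beta_\Delta$, $\+\Sigma_\star = s_\star^2\eye$, and $\mu_\star = 0$. From the appendix formulas for $\elpdHat{\Md}{\y}$, $\elpd{\Md}{\y}$, and the error, I would write out the matrices $\+A_{\widehat{\mathrm{elpd}}}, \+A_{\mathrm{elpd}}, \+A_{\mathrm{err}}$ and the vectors $\+b$ explicitly; because the only non‑shared covariate is one‑dimensional, the hat matrices $\+H_\Msa, \+H_\Msb$ differ by a rank‑one term $\frac{\tilde{x}\tilde{x}^\transp}{\tilde{x}^\transp\tilde{x}}$ where $\tilde{x}$ is $x_\Delta$ projected orthogonally to the shared columns, and LOO leverages $h_{ii}$ enter through the usual $1/(1-h_{ii})$ factors. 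The key structural observation is that $\+b$ depends linearly on $\+X\+\beta$ through $\tilde{x}^\transp(\+X\+\beta) \propto \beta_\Delta \|\tilde x\|^2$, so $\+b = 0$ exactly when $\beta_\Delta = 0$, while the purely quadratic part $\operatorname{tr}((\+\Sigma_\star^{1/2}\+A\+\Sigma_\star^{1/2})^k) = s_\star^{2k}\operatorname{tr}(\+A^k)$ is present regardless.

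Next I would compute the growth rates. When $\beta_\Delta = 0$: $\+b = 0$, $\+\mu_\star = 0$, so the moments collapse to $m_1 = s_\star^2\operatorname{tr}(\+A) + c$, $\overline m_2 = 2 s_\star^4 \operatorname{tr}(\+A^2)$, $\overline m_3 = 8 s_\star^6 \operatorname{tr}(\+A^3)$. Since $\+A$ for each of $\elpdHat{\Md}{\y}$ and $\elpd{\Md}{\y}$ is, up to lower‑order corrections in $h_{ii}$, a fixed multiple of the rank‑one projector $\tilde x\tilde x^\transp/\|\tilde x\|^2$ scaled by $\tau^{-2}$‑type constants, I expect $\operatorname{tr}(\+A)\to\text{const}$, $\operatorname{tr}(\+A^2)\to\text{const}$, $\operatorname{tr}(\+A^3)\to\text{const}$ as $n\to\infty$ (the leverage corrections vanish since $h_{ii}=O(1/n)$), which forces the mean and SD of both $\elpdHat{\Md}{\y}$ and $\elpd{\Md}{\y}$ to converge to constants; a direct ratio of those constants — computed from the explicit form of $\+A$ and $c$, using that the $O(1)$ term in $m_1$ is $\tfrac12\tau^2/s_\star^2$‑type and the variance constant is $2(\tau^2/s_\star^2)^2$‑type — gives the stated limit $\tau^2/(\sqrt2 s_\star^2)$. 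For the error, $\+A_{\mathrm{err}} = \+A_{\widehat{\mathrm{elpd}}} - \+A_{\mathrm{elpd}}$ is the difference of two rank‑one‑type matrices whose leading terms cancel, leaving an $O(1/n)$ quantity; hence $\operatorname{tr}(\+A_{\mathrm{err}}^2)\to 0$ so $\SD[\elpdHatErr{\Md}{\y}]\to 0$, while $\E[\elpdHatErr{\Md}{\y}]\to 0$ even faster (the bias of Bayesian LOO is $o(1/\sqrt n)$ here), giving the ratio $0$. The skewness of the error is the scale‑free ratio $2^{3/2}\operatorname{tr}(\+A_{\mathrm{err}}^3)/\operatorname{tr}(\+A_{\mathrm{err}}^2)^{3/2}$ — note $\+\Sigma_\star^{1/2}$ cancels — and since $\+A_{\mathrm{err}}$ is asymptotically a negative‑semidefinite rank‑one matrix (its nonzero eigenvalue has a definite sign coming from $1/(1-h_{ii})>1$), the ratio of traces of a rank‑one matrix $\lambda vv^\transp$ is $2^{3/2}\lambda^3\|v\|^6 / (\lambda^2\|v\|^4)^{3/2} = 2^{3/2}\operatorname{sgn}(\lambda) = -2^{3/2}$.

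When $\beta_\Delta\neq 0$ the $\+b$‑dependent terms dominate: $\+b^\transp\+\Sigma_\star\+b = s_\star^2\|\+b\|^2$ grows like $\beta_\Delta^2\,n$ because $\+b\propto\beta_\Delta\tilde x$ times an $O(1)$ vector and $\|\tilde x\|^2 = \Theta(n)$, so $\overline m_2 = \Theta(n)$; meanwhile the mean $m_1$ picks up $\+\mu_\star^\transp\+A\+\mu_\star = 0$ but the cross term and the elpd/elpd‑hat expectations give $m_1 = \Theta(n)$ with a negative sign (the larger model predicts strictly better, so $\elpd{\Md}{\y}<0$ on average with magnitude linear in $n$), whence $m_1/\sqrt{\overline m_2} = \Theta(\sqrt n)\to-\infty$. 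For the skewness of the error with $\beta_\Delta\neq0$, $\overline m_3$ is dominated by the $6\+b^\transp\+\Sigma_\star\+A\+\Sigma_\star\+b$ term which is $\Theta(n)$ while $\overline m_2^{3/2} = \Theta(n^{3/2})$, so the ratio is $\Theta(n^{-1/2})\to 0$. The main obstacle I anticipate is bookkeeping the $O(1/n)$ leverage corrections carefully enough in the $\beta_\Delta=0$ case: the leading rank‑one terms in $\+A_{\widehat{\mathrm{elpd}}}$ and $\+A_{\mathrm{elpd}}$ cancel in $\+A_{\mathrm{err}}$, so the limit of the error's skewness is governed entirely by the \emph{next} order term, and one must verify both that this residual matrix is asymptotically rank one and that its single eigenvalue has the negative sign claimed — this is where the specific structure of Bayesian LOO (the $1/(1-h_{ii})$ inflation versus the in‑sample fit) rather than mere trace asymptotics does the work, and it should be handled by the explicit formulas in Appendix~\ref{app_sec_norm_lin_reg_case_study} together with a Laurent expansion in $1/n$ of each $h_{ii}$.
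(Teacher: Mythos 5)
Your overall strategy is the same as the paper's: specialise the quadratic forms of Lemma~\ref{lem:elpd_chisq}, read the moments off Lemma~\ref{lem_analytic_moments}, and take $n\rightarrow\infty$ in the resulting expressions. The paper does this for a fixed design with a $\pm 1$ covariate (Appendix~\ref{app_sec_one_cov}), which turns every moment into an explicit rational function of $n$; you keep a generic design and argue through leverages, which would additionally require regularity conditions such as $\|\tilde x\|^2/n\rightarrow\mathrm{const}$ and $\max_i h_{ii}\rightarrow 0$. Your four limit values are all correct, but one step in the middle is wrong in a way that matters.

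The error is the claim that the leading rank-one terms of the two quadratic forms cancel in $\+A_{\mathrm{err}}$, so that $\operatorname{tr}\left(\+A_{\mathrm{err}}^2\right)\rightarrow 0$ and $\SD\left[\elpdHatErr{\Md}{\y}\right]\rightarrow 0$. They do not cancel; they reinforce. In the paper's one-covariate case the $\+x\+x^\transp$ coefficient of the elpd quadratic form is $+\tfrac{1}{2(n+2)}$ while that of the LOO-CV form is $-\tfrac{1}{2(n-2)}$, so the error matrix carries $-\tfrac{n}{n^2-4}\,\+x\+x^\transp$, a rank-one piece whose nonzero eigenvalue (after conjugation by $\+\Sigma_\star^{1/2}$) tends to $-s_\star^2/\tau^2$, not to $0$. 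Consequently $\Var\left[\elpdHatErr{\Md}{\y}\right]\rightarrow 2 s_\star^4/\tau^4>0$ when $\beta_\Delta=0$ and grows like $n$ otherwise; this non-vanishing error variance is a central point of the paper (it is why Scenario~1 stays problematic at all sample sizes), so it is not a cosmetic slip. The third limit survives for a different reason than you give: when $\beta_\Delta=0$ the mean of the error is $\tfrac{n}{2}\log\tfrac{(n+1)(n-1)}{(n+2)(n-2)}=O(1/n)$ while the SD tends to a positive constant, and when $\beta_\Delta\neq 0$ the mean tends to the constant $-\beta_\Delta^2/\tau^2$ while the SD grows like $\sqrt n$ --- a case split you omit. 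The mistake also undercuts your skewness argument: the identity $2^{3/2}\operatorname{tr}(\+A^3)/\operatorname{tr}(\+A^2)^{3/2}=2^{3/2}\operatorname{sgn}(\lambda)$ for a rank-one matrix only controls the limit if the rank-one eigenvalue dominates the $O(1/n)$ bulk contribution coming from the $\tfrac{1}{2n}\eye$ part of $\+A_{\mathrm{err}}$, and that domination holds precisely because the eigenvalue is $\Theta(1)$ --- the opposite of what you assert. (Also, $\+A_{\mathrm{err}}$ is indefinite rather than negative semidefinite; what is true is that its single $\Theta(1)$ eigenvalue is negative.) Replacing the claimed cancellation with the explicit coefficients repairs the argument, after which your trace bookkeeping matches the paper's computation.
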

\begin{proof}
\vspace{-6pt}
\renewcommand{\BlackBox}{} \renewcommand{\qedsymbol}{}
 See Appendices~\ref{app_sec_analytic_one_Rel_SD},\, \ref{app_sec_analytic_two_Rel_SD},\, \ref{app_sec_analytic_three_Rel_SD}, and \ref{app_sec_analytic_three_skew}.
\vspace{-6pt}
\end{proof}
When $\beta_\Delta=0$, the relative means of both $\elpdPlain$ and $\elpdHatPlain$ converge to the same non-zero value, that is the simpler, more parsimonious model performs better asymptotically. As a comparison, in the non-Bayesian linear regression setting with squared error inspected by \citet{shao1993linear}, both models have asymptotically equal predictive performance with all $\yobs$. Similarly, when $\beta_\Delta=0$, the skewness of the error converges to a non-zero value, which indicates that analysing the uncertainty will be problematic also with big data for models with very similar predictive performance (Scenario~1).

Even though we do not expect an underlying effect of a non-shared covariate to be precisely zero in practice, the analysed moments may still behave similarly even with large data size when the effect size is small enough. When $\beta_\Delta \neq 0$, the relative mean of both $|\elpdPlain|$ and $|\elpdHatPlain|$ grows infinitely, and the skewness of the error converges to zero; the more complex model performs better in general, and the problematic skewness hinders when more data is available. The relative mean of error converges to zero, regardless of $\beta_\Delta$. Hence, the approximation bias decreases with more data in any case. The example case and the behaviour of the moments are presented in more detail in Appendix~\ref{app_sec_one_cov}. 
Nevertheless, the case analysis shows that a simpler model can outperform a more complex one asymptotically. In addition, the skewness of the error can be problematic also with big data.
\section{Simulation Experiments}
\label{sec_experiments}

In this section, we present simulation results testing whether the analytic results for a simplified model empirically generalise for more commonly used models.
Similar to the theoretical analysis in Section~\ref{sec_analytic_case}, we analyse the finite sample properties of the estimator $\elpdHat{\Md}{\y}$, of the $\elpd{\Md}{\y}$, and of the error $\elpdHatErr{\Md}{\y}$ for the similarity of model performances (Scenario~1), model misspecification through the effect of an outlier observation (Scenario~2), and the effect of the sample size $n$ (Scenario~3). We also inspect the calibration of the uncertainty estimates. The source code for the experiments is available at \url{https://github.com/avehtari/loocv_uncertainty}.

First, we consider normal linear regression, but without conditioning on the design matrix $\+X$ and the model variance $\tau^2$.
We compare two nested linear regression models under data simulated from a linear regression model being $\p_\text{true}(\y)$. The data generating process follows the definition in~\eqref{eq_analytic_data_gen_process}, where $d=3$, $\+X_i = [1, X_{[i,2]}, X_{[i,3]}]$, $X_{[i,1]}, X_{[i,2]} \sim \N(0, 1)$ for $i = 1,2,\dots,n$, $\+\beta = [0, 1, \beta_\Delta]$, $\mu_\star = [\mu_{\star,0}, 0, \dots, 0]$, and, $\Sigma_\star = \eye$. The models \Ma{} and \Mb{} follow the definition in~\eqref{eq_analytic_case_models} with the difference that the residual variance $\tau^2$ is now unknown and the prior is noninformative uniform on ($\widehat{\+\beta}_{d_\Msk}$, $\log \tau^2$) (all the posteriors are proper). The model \Ma{} only includes intercept and one covariate, while the model \Mb{} includes one additional covariate.
The similarity of the models is varied by varying $\beta_\Delta$ in data generation.
The data size $n$ varies from $16$ to $1024$. Parameter $\mu_{\star,0}$ is used to scale the mean of one observation so that, when large enough, that observation becomes an outlier and the models become misspecified. Unless otherwise noted, $\mu_{\star,0} = 0$.
We generate 
2000 data sets from $\p_\text{true}(\y)$, and for each trial, 
we obtain pointwise LOO-CV estimates $\elpdHati{\Ma}{\yobs}{i}$ and $\elpdHati{\Mb}{\yobs}{i}$, which are used to form estimates $\elpdHat{\Md}{\yobs}$ and $\seHat{\Md}{\yobs}$ in particular. The respective target values $\elpd{\Ma}{\yobs}$ and $\elpd{\Mb}{\yobs}$ are obtained using an independent test set of 4000 data sets of the same size simulated from the same data generating process.

\para{Behaviour of the Sampling and the Error Distribution}
The moments of the sampling distribution of $\elpdHat{\Md}{\y}$, the distribution of the $\elpd{\Md}{\y}$, and the error distribution $\elpdHatErr{\Md}{\y}$ behave similarly in these simulated experiments and in the theoretical analysis conditional on the design matrix $\+X$ and known model variance $\tau$ (Section~\ref{sec_analytic_case}). In particular, when $\beta_\Delta=0$ and $n$ grow, LOO-CV is slightly more likely to pick the simpler model with a constant difference in the predictive performance, and the magnitude of the skewness does not fade away. With this experiment setting, however, the skewness of $\elpd{\Md}{\y}$ decreases when $\beta_\Delta$ grows, while in the experiments in Section~\ref{sec_analytic_case}, this skewness is similar with all $\beta_\Delta$. Figure~\ref{fig_moments_n_b} in Appendix~\ref{app_sec_additional_experiment_results} illustrates the behaviour of the moments in more detail.

\para{Negative Correlation and Bias}
\label{sec_exp_negative_correlation}
Figure~\ref{fig_joint} illustrates the joint distribution of $\elpdHat{\Md}{\y}$ and $\elpd{\Md}{\y}$ for various non-shared covariate effects $\beta_\Delta$ and data sizes $n$. The estimator and $\elpd{\Md}{\y}$ get negatively correlated when the model performances get more similar (Scenario~1). The effect is more noticeable with larger $n$. Similar to Figure~\ref{fig_joint}, Figure~\ref{fig_joint_out} in Appendix~\ref{app_sec_additional_experiment_results} illustrates the joint distribution of $\elpdHat{\Md}{\y}$ and $\elpd{\Md}{\y}$ when there is an outlier observation in the data set (Scenario~2).
\begin{figure}[tb!] \centering
 \includegraphics[width=0.73\figurecontrolwidth]{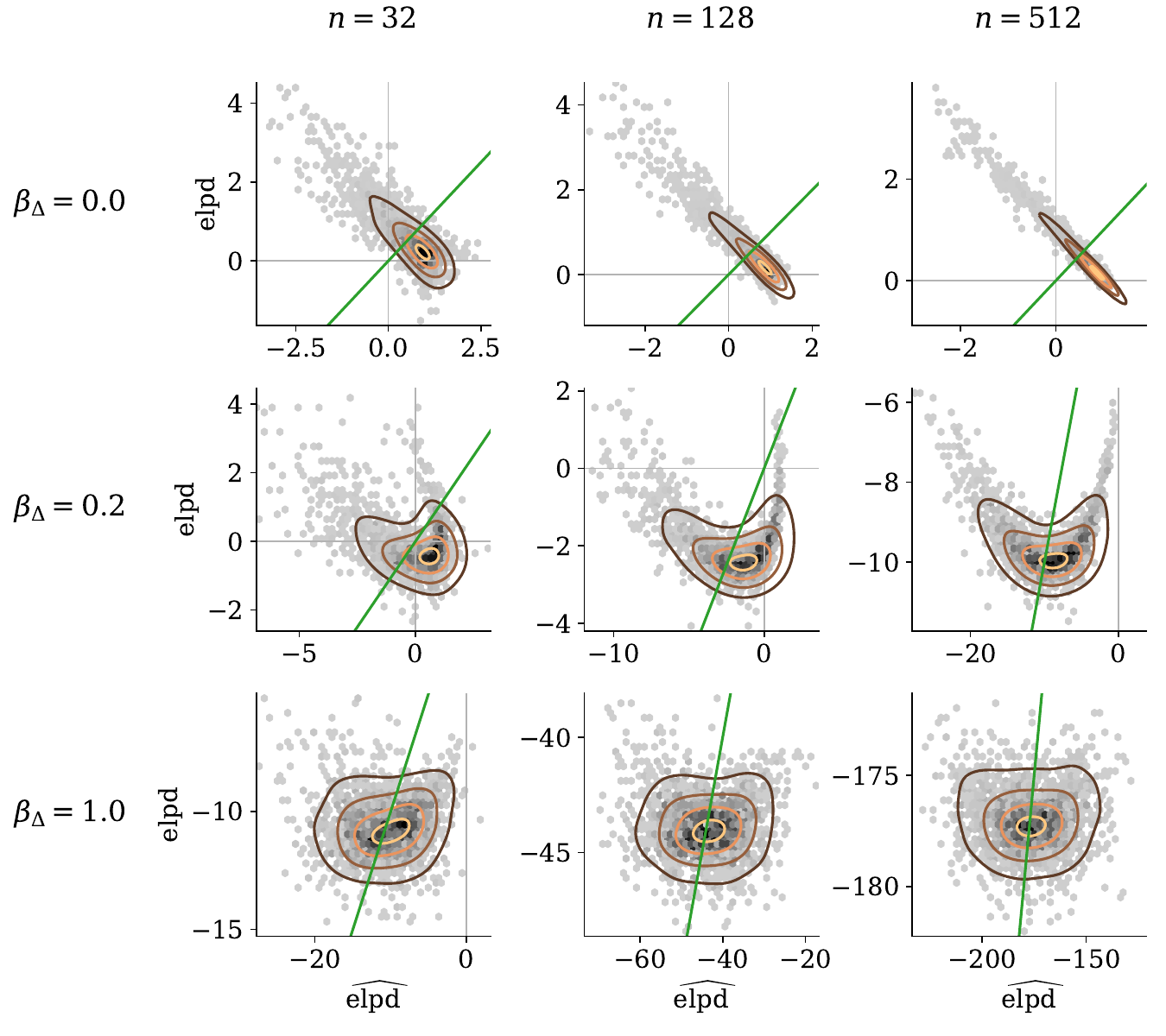}
 \vspace{-5pt}
 \caption{Illustration of the joint distribution of $\elpdHat{\Md}{\y}$ and $\elpd{\Md}{\y}$ for various data sizes $n$ and non-shared covariate effects $\beta_\Delta$. The green diagonal line indicates where $\elpdHat{\Md}{\y}=\elpd{\Md}{\y}$. The gray horizontal and vertical lines indicate $\elpdHat{\Md}{\y}=0$ and $\elpd{\Md}{\y}=0$.  Note that the axes ranges are different in each subplot, and $0$ can be outside the axes range. The problematic negative correlation occurs when $\beta_\Delta = 0$. In addition, while decreasing correlation, the nonlinear dependency in the transition from small to large $\beta_\Delta$ is problematic. In the ideal case, the distribution is centered on the green line, there is no correlation, and the distribution is close to normal.
 }\label{fig_joint}
\end{figure}
Figures~\ref{fig_err_n_b_both} and~\ref{fig_errdirection_n_b_both} in Appendix~\ref{app_sec_additional_experiment_results} show that when there is an outlier present in the data, the relative error's mean usually clearly deviates from zero, and the estimator is biased.

\para{Behaviour of the Uncertainty Estimates}
Due to the mismatch between the sampling and error distribution forms, estimated uncertainties based on the sampling distribution can be poorly calibrated.
The problem of underestimating the variance is illustrated in Figures~\ref{fig_var_ratioi_n_b} and~\ref{fig_var_ratioi_n_b_out} in Appendix~\ref{app_sec_additional_experiment_results}.
Figure~\ref{fig_calib_n_b_both} illustrates the calibration of the estimated uncertainties in different settings.
\begin{figure}[t!]
 \centering
 \includegraphics[width=0.88\figurecontrolwidth]{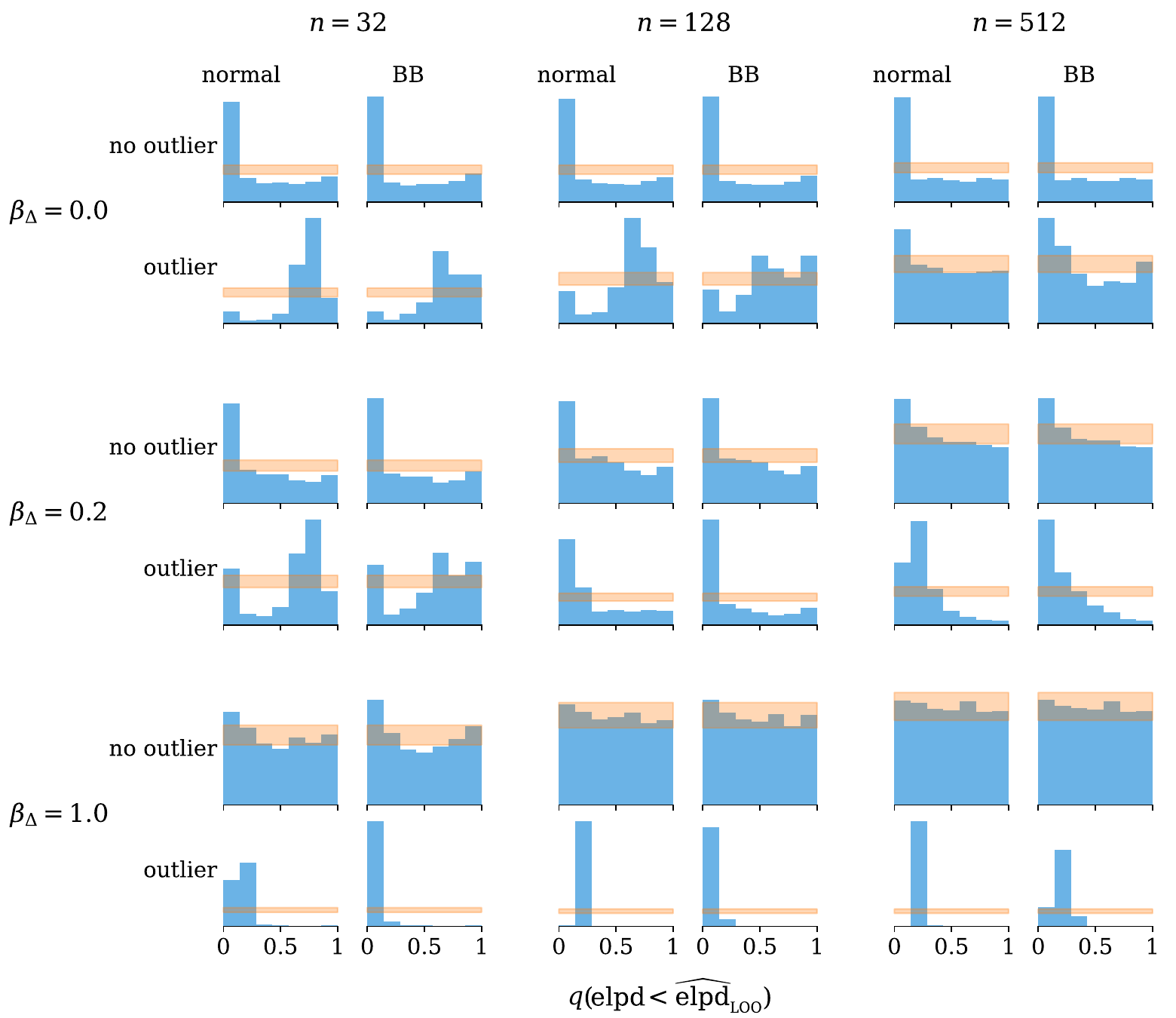}
 \caption{Calibration of the estimated uncertainty $\p(\elpd{\Md}{\y})$ for various data sizes $n$ and non-shared covariate effects $\beta_\Delta$. The histograms illustrate the PIT values $q \Bigl(\elpd{\Md}{\y} < \elpdHat{\Md}{\y}\Bigr)$ over simulated data sets $\y$, which would be uniform in a case of optimal calibration. The yellow shading indicates the range of 99 \% of the variation expected from uniformity. Two uncertainty estimators are presented: normal approximation and BB. The outlier observation has a deviated mean of 20 times the standard deviation of $\y_i$. The calibration is better when $\beta_\Delta$ is large or $n$ is big. The outlier makes the calibration worse, although with large $n$ and small $\beta_\Delta$, the calibration can be better, as outliers inflates the variance.}\label{fig_calib_n_b_both}
\end{figure}
Normal and BB approximations produce similar results. A small sample size (Scenario~3) and similarity in the predictive performance between the models (Scenario~1) can cause problems. Similarly, model misspecification through an outlier observation (Scenario~2) can worsen calibration. On the other hand, in the experiment with $n=512$ and $\beta_\Delta = 0$, the calibration is better with an outlier as the outlier inflates the variance. The skewness of the error has decreased more than the bias has increased. This effect is illustrated in more detail in Figures~\ref{fig_analytic_zscore_skew_mu_b} and \ref{fig_moments_n_b} in Appendix~\ref{app_sec_additional_experiment_results}.

\para{Additional simulations}
Appendix~\ref{subsec_additional_experiments} presents additional simulation results illustrating that the theoretical results generalise beyond the simplest case to models with more covariates, non-Gaussianity, hierarchy, and splines, and cases with fixed covariate values and $K$-fold-CV.

\section{Case studies}

We demonstrate the use of uncertainty quantification of the predictive performance difference with three real-data examples. We assume that the true data generating processes are more complex than the models used. We cover all three scenarios (very similar predictions, model misspecification, and small data) that can affect how well calibrated the normal approximation.

Inference was made using Markov chain Monte Carlo (MCMC) with 4 chains with 1000 warmup and 1000 sampling iterations. Convergence diagnostics \citep{Vehtari-Gelman-Simpson-etal:2021}, using the \texttt{posterior} package, \citep{Burkner-Gabry-Kay-etal:2024} indicated reliable posterior inference. For LOO-CV we used the \texttt{loo} package \citep{vehtari2022loopkg}, which uses fast PSIS-LOO \citep{Vehtari+Gelman+Gabry:2017_practical} for computation. 

\para{Primate milk}
\citet{mcelreath2020statistical} describes the primate milk data: \textit{``A popular hypothesis has it that primates with larger brains produce more energetic milk, so that brains can grow quickly... The question here is to what extent energy content of milk, measured here by kilocalories, is related to the percent of the brain mass that is neocortex... We'll end up needing female body mass as well, to see the masking that hides the relationships among the variables.''} The data include 17 different primate species. The target variable is the energy content of milk (kcal.per.g) and the covariates are the percent of the brain mass that is neocortex (neocortex) and the logarithm of female body mass (log(mass)). The covariates and target are centered and scaled to have unit variance.

We use the following four models,  fitted with the \texttt{rstanarm} package \citep{Goodrich-Gabry-Ali-etal:2024}, and using weakly informative $\normal(0, 1)$ priors for the coefficients and an exponential(1) prior for the residual scale:
\begin{align*}
  \mathrm{M}_1: \quad & \mathrm{kcal.per.g} \sim \normal(\alpha, \sigma) \\
  \mathrm{M}_2: \quad& \mathrm{kcal.per.g} \sim \normal(\alpha + \beta_1\times\mathrm{neocortex}, \sigma) \\
  \mathrm{M}_3: \quad & \mathrm{kcal.per.g} \sim \normal(\alpha + \beta_2\times\mathrm{log(mass)}, \sigma) \\
  \mathrm{M}_4: \quad & \mathrm{kcal.per.g} \sim \normal(\alpha + \beta_1\times\mathrm{neocortex} + \beta_2\times\mathrm{log(mass)}, \sigma).
\end{align*}

We compare models $\mathrm{M}_2,\mathrm{M}_3,\mathrm{M}_4$ to the intercept-only model $\mathrm{M}_1$:

\begin{center}
{\small
 \begin{tabular}{lccc}
   Model & $\elpdHat{{\mathrm{M}_1,\mathrm{M}_k}}{\y}$ & $\seHatPlain$ & $\hat{p}(\elpd{{\mathrm{M}_1,\mathrm{M}_k}}{\y}>0)$ \\
 \midrule
   $\mathrm{M}_1$ &  - &  - &  - \\
   $\mathrm{M}_2$ & -0.6 & 0.6 & 0.16 \\
   $\mathrm{M}_3$ &  0.3 & 1.2 & 0.60 \\
   $\mathrm{M}_4$ &  4.2 & 2.4 & 0.96 
 \end{tabular}\\
}
\end{center}

Based on model checking and the distribution of pointwise $\elpdHati{\Mk}{\yobs}{i}$, the models seem to be reasonably specified and we are fine with respect to Scenario 2 (model misspecification).
Models $\mathrm{M}_2$ and $\mathrm{M}_3$ have very small $\elpdHat{\Md}{\yobs}$ compared to model $\mathrm{M}_1$. The direct use of the normal approximation gives probabilities 0.16 and 0.6 that these models have better predictive performance than model $\mathrm{M}_1$. As $\elpdHat{\Md}{\yobs}$ is small (Scenario 1) and the number of observations is small (Scenario 3), we may assume  $\seHat{\Md}{\yobs}$ to be underestimated and the error distribution to be more skewed than normal. However, since $\elpdHat{\Md}{\yobs}$ is small, we can state that there is no practical or statistical difference in the predictive performance.

The direct use of $\elpdHat{{\mathrm{M}_4,\mathrm{M}_1}}{\yobs}$ and $\seHat{{\mathrm{M}_4,\mathrm{M}_1}}{\yobs}$ would give probability $0.96$ that model $\mathrm{M}_4$ has better predictive than model $\mathrm{M}_1$. This difference (4.2) is big enough that we are fine with respect to Scenario 1, but the number of observations is small (Scenario 3), and on expectation we may assume $\seHat{{\mathrm{M}_4,\mathrm{M}_1}}{\yobs}$ to be underestimated. If we multiply $\seHat{{\mathrm{M}_4,\mathrm{M}_1}}{\yobs}$ by 2 \citep[heuristic based on the limit of equations by][]{bengio_Grandvalet_2004} to make a more conservative estimate, the probability that model $\mathrm{M}_4$ has better predictive performance is bigger than 0.81. Considering we have only 17 observations, this is quite good. Collecting more data is, however, recommended.

As the predictive distribution includes the aleatoric uncertainty (modelled by the data model), there is often more uncertainty in the predictive performance model comparison than in the posterior distribution \citep[see, e.g., ][]{Wang-Gelman:2015}. In simple models, we can also look at the posterior for the quantities of interest. With model $\mathrm{M}_4$, $95\%$ central posterior intervals for $\beta_1$ and $\beta_2$ are $(1.1,3.7)$ and $(-0.12,-0.04)$ respectively, which indicates data have information about the parameters. The covariates neocortex and log(mass) are collinear, which causes correlation in the posterior of the coefficients, which could make the marginal posteriors overlap 0, even if the joint posterior does not, in which case, looking at the predictive performance is useful. In this case, although neocortex and log(mass) are collinear, they don't have useful information alone, and the useful predictive information is along the second principal component of their joint distribution, which explains why the models with only one of the covariates are not better than the intercept-only model.

\para{Sleep study}
\citet{Belenky-Wesensten-Thorne-etal:2003} collected data on the effect of chronic sleep restriction. We use a subset of data in the R package \texttt{lme4} \citep{Bates-Maechler-Bolker-etal:2015}. The data contains average reaction times (in milliseconds) for 18 subjects with sleep restricted to 3 hours per night for 7 consecutive nights (days 0 and 1 were adaptation and training and removed from this analysis).

The compared models are a linear model, a linear model with varying intercept for each subject, and a linear model with varying intercept and slope for each subject. All models use a normal data model. The models were fitted using \texttt{brms} \citep{Burkner:2017}, and the default \texttt{brms} priors; prior for the coefficient for Days is uniform, the prior for the varying intercept is normal with unknown scale having a half-normal prior, and the prior for the varying intercept and slope is bivariate normal with unknown scales having half-normal priors and correlation having LKJ prior \citep{Lewandowski-Kurowicka-Joe:2009}. 

Using \texttt{brms} formula notation, the compared models are
{\small
\begin{align*}
  \mathrm{M}_1: \quad & \mathrm{Reaction} \sim \mathrm{Days} \\
  \mathrm{M}_2: \quad & \mathrm{Reaction} \sim \mathrm{Days} + (1\,\mid\,\mathrm{Subject}) \\
  \mathrm{M}_3: \quad & \mathrm{Reaction} \sim \mathrm{Days} + (\mathrm{Days}\,\mid\, \mathrm{Subject}).
\end{align*}
\vspace{-\baselineskip}
}

Based on the study design, $\mathrm{M}_3$ is the appropriate model for the analysis, but comparing models is useful for assessing how much information the data has about the varying intercepts and slopes.  For a few LOO-folds with high Pareto-$\hat{k}$ diagnostic value \citep[$>0.7$, ][]{Vehtari+etal:2024:PSIS} we re-ran MCMC (with \texttt{reloo=TRUE} in \texttt{brms}).
\begin{center}
{\small
 \begin{tabular}{lccc}
   Model & $\elpdHat{{\mathrm{M}_3,\mathrm{M}_k}}{\y}$ & $\seHatPlain$ & $\hat{p}(\elpd{{\mathrm{M}_3,\mathrm{M}_k}}{\y}>0)$ \\
   \midrule
   $\mathrm{M}_3$ &  - &  - &  - \\
   $\mathrm{M}_2$ & -12.7 &  9.8 & 0.90 \\
   $\mathrm{M}_1$ & -77.8 & 20.9 & 0.9999
 \end{tabular}\\
}
\vspace{-1pt}
\end{center}
Model $\mathrm{M}_3$ is estimated to have better predictive performance, but only with 0.9 probability of having better performance than model $\mathrm{M}_2$. Model-checking reveals that two observations are clear outliers with respect to these models, making the normal approximation likely to be poorly calibrated (Scenario 3).

We also fitted models using a Student's $t$ model to create models $\mathrm{M}_{1t}$, $\mathrm{M}_{2t}$, and $\mathrm{M}_{3t}$. Based on model checking, there is no obvious model misspecification.  We first compare $\mathrm{M}_{3}$ and $\mathrm{M}_{3t}$ to see whether a Student's $t$ model is more appropriate.
\begin{center}
{\small
 \begin{tabular}{lccc}
   Model & $\elpdHat{{\mathrm{M}_{3t},\mathrm{M}_k}}{\y}$ & $\seHatPlain$ & $\hat{p}(\elpd{{\mathrm{M}_{3t},\mathrm{M}_k}}{\y}>0)$ \\
   \midrule
      $\mathrm{M}_{3t}$ &  - &  - &  - \\ 
   $\mathrm{M}_3$ & -41.7 &  13.4 & 0.999 \\
 \end{tabular}
}
\vspace{-1pt}
\end{center}
Although in this comparison $\mathrm{M}_3$ is misspecified, the better specified model $\mathrm{M}_{3t}$ shows much better predictive performance, and as we can expect $\seHatPlain$ to be inflated, the actual probability that $\mathrm{M}_{3t}$ is better than $\mathrm{M}_{3}$ is likely to be bigger than $0.999$. We then compare the three Student's $t$ models:

\begin{center}
{\small
 \begin{tabular}{lccc}
   Model & $\elpdHat{{\mathrm{M}_{3t},\mathrm{M}_k}}{\y}$ & $\seHatPlain$ & $\hat{p}(\elpd{{\mathrm{M}_{3t},\mathrm{M}_k}}{\y}>0)$ \\
  \midrule
   $\mathrm{M}_{3t}$ &  - &  - &  - \\ 
   $\mathrm{M}_{2t}$ & -45.4 &  8.5 & 1.0 \\
   $\mathrm{M}_{1t}$ & -119.1 & 15.9 & 1.0
 \end{tabular}\\
}
\vspace{-1pt}
\end{center}
The probability that model $\mathrm{M}_{3t}$ is better than models $\mathrm{M}_{1t}$ and $\mathrm{M}_{2t}$ is close to 1. The models appear sufficiently well specified, the number of observations is bigger than 100, and the differences are not small, so we can assume that the normal approximation is well calibrated.
In this case, the effect of days with sleep constrained to 3 hours is so big that the main conclusion stays the same with all the models. Still, for example, model $\mathrm{M}_{3t}$ does indicate higher variation between subjects than model $\mathrm{M}_{3}$. As $\mathrm{M}_{3t}$ passes the model checking and has higher predictive performance, we should continue looking at the posterior of model $\mathrm{M}_{3t}$. See Appendix~\ref{subsec_additional_experiments} for additional simulation results for a hierarchical normal model.

\para{Roaches}
\citet[][Chapter 8.3]{Gelman-Hill:2007} describe the roaches data as follows:
\textit{``the treatment and control were applied to 160 and 104 apartments, respectively, and the outcome measurement $y_i$ in each apartment $i$ was the number of roaches caught in a set of traps. Different apartments had traps for different numbers of days''}.
The goal is to estimate the efficacy of a pest management system at reducing the number of roaches.

The target is the number of roaches (y), and the covariates include the square root of the pre-treatment number of roaches (sqrt\_roach1), a treatment indicator variable (treatment), and a variable indicating whether the apartment is in a building restricted to elderly residents (senior). As the number of days for which the roach traps were used is not the same for all apartments, the offset argument includes the logarithm of the number of days the traps were used (log(exposure2)). The latent regression model presented with \texttt{brms} formula notation is: $$\mathrm{y} \sim \mathrm{sqrt\_roach1} + \mathrm{treatment} + \mathrm{senior} + \mathrm{offset(log(exposure2))}.$$
We fit the following models using the \texttt{brms} package. 
{\small
\begin{align*}
  \mathrm{M}_1: \quad & \text{Poisson} \\
  \mathrm{M}_2: \quad & \text{Negative-binomial} \\
  \mathrm{M}_3: \quad & \text{Zero-inflated negative-binomial}
\end{align*}
}
The zero-inflation is modelled using the same latent formula (with its own parameters). All coefficients have $\normal(0,1)$ priors and the negative-binomial shape parameter has the \texttt{brms} default prior, which is inverse-gamma$(.4, .3)$ \citep{Vehtari:2024}.
For the Poisson model we re-ran MCMC for all LOO-folds with high Pareto-$\hat{k}$ diagnostic value (>0.7) (with \texttt{reloo=TRUE} in \texttt{brms}), and for negative-binomial and zero-inflated negative-binomial we used moment matching \citep{Paananen+etal:2021:implicit} for a few LOO-folds with high Pareto-$\hat{k}$ diagnostic value (>0.7) (with \texttt{moment\_match=TRUE} in \texttt{brms}).
\begin{center}
{\small
 \begin{tabular}{lccc}
   Model & $\elpdHat{{\mathrm{M}_{3},\mathrm{M}_k}}{\y}$ & $\seHatPlain$ & $\hat{p}(\elpd{{\mathrm{M}_{3},\mathrm{M}_k}}{\y}>0)$ \\
 \midrule
   $\mathrm{M}_3$ %
   &  - &  - &  - \\ 
   $\mathrm{M}_2$ %
   & -23.0 &  6.9 & 0.9996 \\
   $\mathrm{M}_1$ %
   & -4633.2 & 684.9 & 1.0
 \end{tabular}\\
}
\vspace{-1pt}
\end{center}
The zero-inflated negative-binomial model ($\mathrm{M}_3$) is clearly the best. Based on model checking, the Poisson model ($\mathrm{M}_1$) is underdispersed which indicates Scenario 2, but the difference is so big that we can be certain that the zero-inflated negative-binomial model is better. As the number of observations is larger than 100, and the difference to model $\mathrm{M}_2$ is not small, we may assume the normal approximation is well calibrated.

As we had used an ad-hoc square root transformation of pre-treatment number of roaches, we fitted a model $\mathrm{M}_4$ replacing the latent linear term for the square root of pre-treatment number of roaches with a spline.

\begin{center}
{\small
\begin{tabular}{lccc}
   Model & $\elpdHat{{\mathrm{M}_{4},\mathrm{M}_k}}{\y}$ & $\seHatPlain$ & $\hat{p}(\elpd{{\mathrm{M}_{4},\mathrm{M}_k}}{\y}>0)$ \\
 \midrule
   $\mathrm{M}_4$ %
   &  - &  - &  - \\ 
   $\mathrm{M}_3$ %
   & -2.4 & 3.0 & 0.79
 \end{tabular}\\
}
\vspace{-1pt}
\end{center}
Model $\mathrm{M}_4$ (with spline) seems to be slightly better, but now the difference is so small that the normal approximation is likely to be not perfectly calibrated. As the difference is small, we can proceed with either model. See Appendix~\ref{subsec_additional_experiments} for additional simulation results for Poisson and spline models.

\section{Conclusions}
\label{sec_conclusions}
This paper is the first to thoroughly study the properties of uncertainty quantification in log-score LOO-CV predictive performance difference in the Bayesian setting.
Well-calibrated uncertainty quantification for the predictive performance difference can also be  used for computing the probability that one model has better predictive performance.
We analyse normal and Bayesian bootstrap approximations to quantify the uncertainty and inspect their properties in Bayesian (simple, hierarchical, latent, basis function) linear regression.
We show that problematic settings include models with similar predictions (Scenario~1), bad model misspecification with outliers in the data (Scenario~2), and small data (Scenario~3).

\para{Scenario 1: Models With Similar Predictions}
We show that the problematic skewness of the distribution of the approximation error occurs when models make similar predictions. This skewness does not necessarily disappear as $n$ grows. We show that considering the skewness of the sampling distribution is insufficient to improve the uncertainty estimate, as it has a weak connection to the skewness of the distribution of the estimators' error. We show that, in the problematic settings, both normal and BB approximations to the uncertainty are badly calibrated.

\para{Scenario 1 consequences} Given similar predictions (say $|\elpdHat{\Md}{\y}|<4$; see also \citealp{McLatchie+Vehtari:2024:selection_bias}), we are unlikely to lose in predictive performance, whichever model is selected, and we can use either model for predictions. In the case of nested models, we may prefer the bigger one, as we may get more information
by looking at the posterior of the additional terms.

\para{Scenario 2: Model Misspecification With Outliers}
Cross-validation has been advocated
when the true model is not included in the set of the compared models \citep{Bernardo+Smith:1994,Vehtari+Ojanen:2012}. Our results demonstrate that in the case of bad misspecification, there can be significant bias in the estimated predictive performance difference, and the estimated uncertainty can be miscalibrated.

\para{Scenario 2 consequences} Model checking, and possible refinement, should be considered before using cross-validation for comparing predictive performances. Scenario 2 should not arise when following the Bayesian workflow best practices \citep{Gelman+etal:2020:workflow}.

\para{Scenario 3: Small Data}
Small data (say $n<100$) makes estimating the uncertainty of the predictive performance difference less reliable and additional caution is needed. Obtaining more data is the best way to improve the reliability and reduce uncertainty.

\section{Discussion}
Here, we discuss connections to related methods and useful directions for future research.

\para{Other predictive methods}
\citet{Vehtari+Ojanen:2012} extensively review methods for assessing the predictive performance of Bayesian models. Cross-validation and widely applicable information criterion \citep[WAIC; ][]{Watanabe:2010d} are the only methods targeting the expected predictive performance in the sense of~\eqref{eq_elpd}. LOO-CV and WAIC use different computational approximations but are asymptotically equivalent \citep{Watanabe:2010d}, and thus we expect the uncertainty quantification results to hold for WAIC, too, as long as the computational approximation does not fail \citep[see][]{Vehtari+Gelman+Gabry:2017_practical}.

\para{Other scoring rules}
We may assume that other smooth, strictly proper scoring rules \citep{Gneiting+Raftery:2007} would behave similarly, but further research is justified.

\para{Other models}
We have focused on (simple, hierarchical, latent, basis function) linear models. We assume the results are similar to other models, including singular models, but we leave this for future research.  The potential approach to extend our results is to use singular learning theory by \citet{Watanabe:2009}, who used it to show the asymptotic equivalence of LOO-CV and WAIC and that the asymptotic behaviour of WAIC does not depend on what the functional form of the model is
  \citep{Watanabe:2009,Watanabe:2010d,Watanabe:2010a,Watanabe:2010b,Watanabe:2010c}.

\para{Leave-one-group-out cross-validation}
We used LOO-CV for a hierarchical model, which is a valid option when the focus is on analysing the data model or in predictions for new individuals in the existing groups. 
Alternatively, {leave-one-group-out} cross-validation can be used to simulate predictions for new groups \citep[see, e.g.,][]{Vehtari+Lampinen:2002,Merkle+Furr+Rabe_Hesketh:2019}.
If the joint log score is used to assess the performance of joint predictions for all observations in one group, we get only one log score per group. We assume that the number of groups is the decisive factor for the behaviour.

\para{Leave-future-out cross-validation}
In the case of time series, if the goal is to assess the predictive performance for the future (and not to time points between observations), we can use {leave-future-out} cross-validation \citep[see, e.g.][]{Burkner+Gabry+Vehtari:2020}. In this case, the pointwise log score values are not exchangeable, as the amount of data used to fit the posterior is different for each prediction, and the dependency structure between folds is different. For long timeseries, this is likely to have a minor effect.

\para{Comparison of multiple models}
When comparing a few models with LOO-CV, \citet{vehtari2022loopkg} recommend making pairwise comparisons to the model with the best predictive performance (approach used in \texttt{loo} R package since 2015). This approach reduces the number of comparisons to be one less than the number of models and provides a natural ordering for the comparisons. If the best model is clearly better than others based on the difference and the associated uncertainty, there is no need to examine the differences and uncertainties for the rest.

\para{Model selection}
Given well-calibrated uncertainty quantification of the
predictive performance difference, it is possible to compute
well-calibrated probability that one model has better predictive
performance than another model, as we have shown in this paper. We do
not suggest any fixed probability threshold for making model
selection, as the appropriate threshold depends on the context.
\citet{McLatchie+Vehtari:2024:selection_bias} show 1) what happens if
the model with the highest estimated predictive performance is
selected, 2) how by taking into account the uncertainty it is possible
to estimate the model selection induced bias and amount of overfitting
in the selection process, 3) if the model selection criterion is
changed to allow not to select a model, this bias and overfitting can
be avoided, and 4) in the case of two models the bias and overfitting
are negligible.
\citet{Liu+etal:2025:loose_model_selection} demonstrate the benefits
of using the uncertainty quantification by selecting the simplest
hierarchical model among those that are not significantly worse than
the model with the best predictive performance.
\citet{Riha+etal:2024:multiverse} propose to make a multiverse
analysis with all the models that have similar predictive performance
as the best model based on $\elpdHat{\Md}{\y}$ and
$\seHat{\Md}{\yobs}$, to better understand the effect of differences
in models.

\para{Model averaging} If one of the models is not clearly the best, and the aim is the best prediction (no need for model selection), model averaging can be used. \citet{Yao2018stacking} compare model weights using 1) LOO-CV differences, 2) LOO-CV differences plus related uncertainty handled with BB (normal approximation gets complicated with many models), and 3) LOO-CV based Bayesian stacking. \citet{Yao2018stacking} show that taking into account the LOO-CV uncertainty improves the model averaging with LOO weights and performs better than model selection with plain LOO-CV and marginal likelihood, but Bayesian stacking performs even better.
LOO-CV weights have an issue in that similar models get similar weights and dilute the weights of other models, making the interpretation of the weights more difficult. On the other hand, Bayesian stacking weights are optimised for predictive model averaging, and the interpretation of weights is also non-trivial \citep[see discussion and examples in][]{Yao+etal:2021:hierstacking}.

\para{Model selection and many similar models}
As the normal approximation for the predictive performance difference uncertainty of similar models can be miscalibrated,
\citet{McLatchie+Vehtari:2024:selection_bias} propose in the case of many models to examine the distribution of the performance estimates for all the models and use order statistics for estimating and correcting potential selection induced bias. They discuss the connection between $\elpdHat{\Md}{\y}$, $\seHat{\Md}{\yobs}$ and the LOO weights for model selection. They compare their proposed approach to the use of $\elpdHat{\Md}{\y}$ and $\seHat{\Md}{\yobs}$, which has similar performance and also avoids overfitting in model selection, even though $\seHat{\Md}{\yobs}$ is likely to be underestimated in the case of similar models.

\para{Projection predictive model selection}
Further stability in variable selection can be obtained using the projection predictive method \citep{Piironen+Vehtari:2016,piironen2020projective,McLatchie+etal:2025:projpred_workflow}, as it has lower variance than LOO-CV. One key part of the projection predictive method is the use of $\elpdHat{\Md}{\y}$ and $\seHat{\Md}{\yobs}$ to select the smallest projected model along the search path, which has similar predictive performance as the full reference model. The projective predictive method has been shown to outperform many other model selection methods \citep{Piironen+Vehtari:2016,piironen2020projective,McLatchie+etal:2025:projpred_workflow}.

\para{Additional practical advice and case studies}
Online CV-FAQ (\url{https://users.aalto.fi/~ave/CV-FAQ.html}) contains more practical advice and links to many case studies illustrating the use of predictive performance comparison in all three scenarios.



\subsection*{Acknowledgements}

We thank Noa Kallioinen, Daniel Simpson and anonymous reviewers for helpful comments and feedback.
We acknowledge the computational resources provided by the Aalto Science-IT project. This work was supported by the Academy of Finland grants (298742 and 313122) and Academy of Finland Flagship programme: Finnish Center for Artificial Intelligence FCAI.

\vskip 0.2in
\bibliography{references.bib}

\clearpage

\appendices




%
\section{Difference Between Estimating elpd and e-elpd}\label{app_sec_theory_elpd_vs_eelpd}
As discussed in the beginning of Section~\ref{sec_problem_setting}, depending on if the context of the model predictive performance comparison is in evaluating the models for the given data set or for the data generating mechanism in general, the measure of interest is either
\begin{align}
    \label{app_eq_elpd_2}
    \elpd{\Mk}{\yobs} &= \sum_{i=1}^n \int \p_\text{true}(\y_i) \log \p_\Mk(\y_i \mid \yobs) \diff \y_i\,,
\intertext{or its expectation over possible data sets}
    \label{app_eq_eelpd_2}
    \eelpd{\Mk} &= \E_{\y}\left[ \elpdC{\Mk}{\y} \right]
\end{align}
respectively. The uncertainty related to the $\elpdHatPlain$ estimator is different depending on if it is used to estimate $\elpd{\Md}{\yobs}$ or $\eelpd{\Md}$.
While otherwise focusing on analysing the nature of the uncertainty in the application-oriented context of $\elpd{\Md}{\yobs}$ measure, in this appendix, we formulate the uncertainties related to both measures and discuss their differences in more detail. The following analysis of the uncertainty generalises also for estimating $\eelpd{\Mk}$ or $\elpd{\Mk}{\yobs}$ for one model and other $K$-fold CV estimators.
\subsection{Estimating e-elpd}
When using $\elpdHat{\Md}{\yobs}$ to estimate $\eelpd{\Md}$, $\elpdHatC{\Md}{\y}$ is an estimator considering $(\yobs)$ as a random sample of the stochastic variable $\y$.
Any observed data set $\yobs$ can be used to estimate the same quantity $\eelpd{\Md}$. The uncertainty about the $\eelpd{\Md}$ given an estimate can be assessed by considering the error over possible data sets,
\begin{align}
 \elpdHatErrEEC{\Md}{\y}{\eelpdPlain} = \elpdHatC{\Md}{\y} - \eelpd{\Md}\,,
\end{align}
which corresponds to the estimator's sampling distribution $\elpdHatC{\Md}{\y}$ shifted by a constant.
\subsection{Estimating elpd}
When using $\elpdHat{\Md}{\yobs}$ to approximate $\elpd{\Md}{\yobs}$, however, $\yobs$ is given also in the approximated quantity. Each observed data set $\yobs$ can be used to approximate different quantities $\elpd{\Md}{\yobs}$. Here, the error is formulated as
\begin{align}\label{app_eq_loo_err_2}
 \elpdHatErrEEC{\Md}{\y}{\elpdPlain} = \elpdHatC{\Md}{\y} - \elpdC{\Md}{\y} \,.
\end{align}
Even though reflecting a different problem for each realisation of the data set, the associated uncertainty about one problem can be assessed by analysing 
the approximation error over possible data sets in a similar fashion as when estimating $\eelpd{\Md}$. However, here the variability of $\elpdHatErrC{\Md}{\y}$ depends both on $\elpdHatC{\Md}{\y}$ and $\elpdC{\Md}{\y}$.
\subsection{Error Distributions}
Assuming the observations $\y_i$, $i=1,2,\dots,n$ are independent, the expectation of the error distributions for both measures $\elpdPlain$ and $\eelpdPlain$ are the same, that is
\begin{align}
 \E\left[\elpdHatErrEEC{\Md}{\y}{\elpdPlain}\right] &= \E\left[\elpdHatErrEEC{\Md}{\y}{\eelpdPlain}\right]\,,
\end{align}
but they differ in variability. In particular, as demonstrated for example in Figure~\ref{fig_demonstration}, the correlation of $\elpdHatC{\Md}{\y}$ and $\elpdC{\Md}{\y}$ is generally small or negative and thus the variance,
\begin{align}
 \Var\left(\elpdHatErrEEC{\Md}{\y}{\elpdPlain}\right) &= \Var\left(\elpdHatC{\Md}{\y}\right) + \Var\left(\elpdC{\Md}{\y}\right)
 \nonumber \\
 &\quad- 2 \Cov\left(\elpdHatC{\Md}{\y} , \elpdC{\Md}{\y} \right)\,,
\end{align} 
is usually greater than
\begin{align} 
 \Var\left(\elpdHatErrEEC{\Md}{\y}{\eelpdPlain}\right)&= \Var\left(\elpdHatC{\Md}{\y}\right)\,.
\end{align}
Because of the differences in the error distributions, it is significant to consider the uncertainties separately for both measures $\elpdPlain$ and $\eelpdPlain$.
\subsection{Sampling Distributions}
When estimating $\eelpd{\Md}$, $\elpdHatC{\Md}{\y}$ is a random variable corresponding to the estimator's sampling distribution for the specific problem. However, when approximating $\elpd{\Md}{\yobs}$, $\elpdHatC{\Md}{\y}$ and $\elpdHatErrEE{\Md}{\y}{\elpdPlain}$ are stochastic variables reflecting the frequency properties of the approximation when applied for different problems. Nevertheless, we refer to $\elpdHatC{\Md}{\y}$ as an estimator and $\elpdHatC{\Md}{\y}$ as a sampling distribution also in the latter context. Note, however, that other assessments of the uncertainty of the estimator \elpdHatPlain{} for $\elpd{\Md}{\yobs}$ can be made. The related formulation of the target uncertainty about $\elpd{\Md}{\yobs}$ is discussed in more detail in Appendix~\ref{app_sec_theory_alternative}.

%
\section{Alternative Formulations of the Uncertainty}\label{app_sec_theory_alternative}
%
In Appendix~\ref{app_sec_theory_elpd_vs_eelpd}, we analyse and motivate the method applied in the paper and mention that other approaches can be made for assessing the uncertainty about $\elpd{\Md}{\yobs}$. This appendix discusses some of these and further motivates the applied method. Instead of analysing the error stochastically over possible data sets, it is also possible, for example, to find bounds or apply Bayesian inference to the error. 
As briefly discussed in Section~\ref{sec_intro_uncertainty}, also other formulations of the target uncertainty
\begin{align}
    \elpdHatUnkC{\Md}{\yobs} = \elpdHat{\Md}{\yobs} - \elpdHatErrC{\Md}{\y},
\end{align}
may satisfy the desired equality
\begin{gather}
    q \Bigr(\elpdHatUnkC{\Md}{\y}\Bigl) = \p\Bigr(\elpdC{\Md}{\y}\Bigl)
    \label{app_eq_calibration_approx_2}\,.
\end{gather}
For example, while not sensible as a target for the estimated uncertainty, assigning the Dirac delta function located at $\elpd{\Md}{\yobs}$ as a probability distribution for $\elpdHatUnkC{\Md}{\yobs}$ trivially satisfies Equation~\eqref{app_eq_calibration_approx_2}.
However, other approaches might also provide a feasible uncertainty estimator target.
In particular, these alternative formulations could be developed for specific problem settings.
\subsection{LOO-CV Estimate with Independent Test Data}
One possible general interpretation of the uncertainty could arise by considering $\elpdHatPlain$ as one possible realised estimation from the following estimator. Let
\begin{align} \label{app_eq_elpdhat_2yobs}
 \elpdHat{\Mk}{\tilde{y}^\mathrm{obs}, \yobs} &= \sum_{i=1}^n \log \p_\Msk\left(\tilde{y}^\mathrm{obs}_i \mid \yobs_{-i}\right)\,.
\end{align}
In this estimator, the data set $\tilde{y}^\mathrm{obs}$ is considered a random sample for estimating $\p_\text{true}(\y)$ and $\yobs$ is a given data set indicating the problem at hand in the $\elpd{\Mk}{\yobs}$, i.e.\ the training and test data sets are separated. Now $\elpdHat{\Mk}{\yobs} = \elpdHat{\Mk}{\yobs, \yobs}$ is one application of this estimator, where the same data set is re-used for both arguments. The uncertainty of the estimator $\elpdHat{\Md}{\tilde{y}^\mathrm{obs}, \yobs}$ can be formulated in the following way:
\begin{align}
 \elpdHatUnkC{\Md}{\tilde{y}^\mathrm{obs}, \yobs} &= \elpdHat{\Md}{\tilde{y}^\mathrm{obs}, \yobs} - \elpdHatErrEEC{\Md}{\y, \yobs}{\prime} \,,
 \intertext{where}
 \elpdHatErrEEC{\Md}{\y, \yobs}{\prime} &= \elpdHatC{\Md}{\y, \yobs} - \elpd{\Md}{\yobs}\,.
\end{align}
Similar to estimating $\eelpdPlain$, here the variability of the error is not affected by $\elpdC{\Md}{\y}$, unlike in the formulation
\begin{align}
 \elpdHatErrEEC{\Md}{\y}{\elpdPlain} = \elpdHatC{\Md}{\y} - \elpdC{\Md}{\y}\,.
\end{align}
\begin{figure}[tb!]
  \centering
  \includegraphics[width=0.9\figurecontrolwidth]{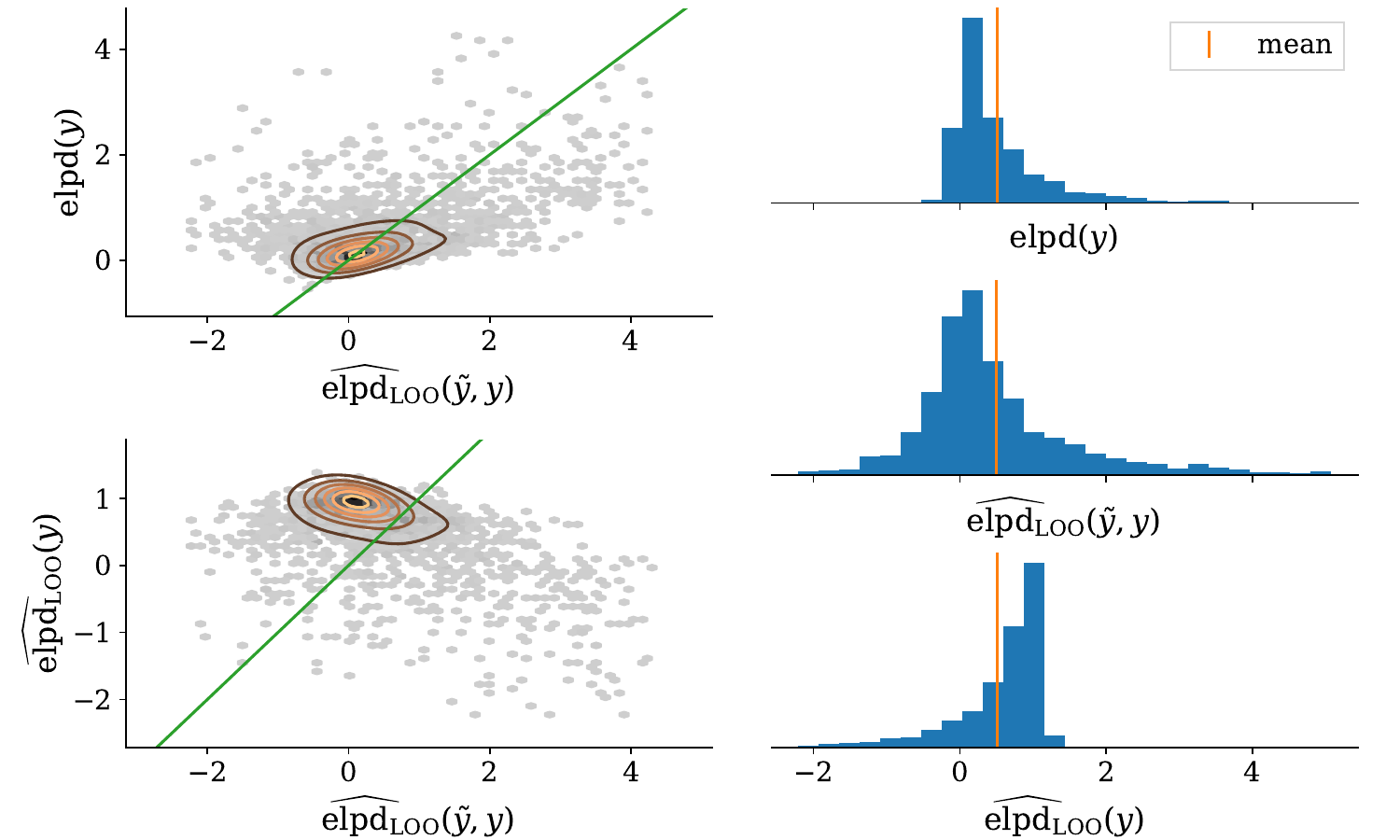}
  \caption{
   Comparison of $\elpdC{\Md}{\y}$ and the sampling distributions of $\elpdHatC{\Md}{\y}$ and $\elpdHatC{\Md}{\tilde{\y}, \y}$ for a selected problem setting, where $n=128$, $\beta_\Delta=0$, $r_\star=0$. In the joint distribution plots on the left column, kernel density estimation is shown with orange lines, and the green diagonal lines correspond to $y=x$. It can be seen from the figure that the sampling distributions of $\elpdHatC{\Md}{\y}$ and $\elpdHatC{\Md}{\tilde{\y}, \y}$ have different shapes. For brevity, model labels are omitted in the notation in the figure. 
  }\label{app_fig_loo_indep_pred}
\end{figure}
Even though being connected, using $\elpdHat{\Md}{\yobs, \yobs}$ as a proxy for the uncertainty in analysing the behaviour of the LOO-CV estimate would produce inaccurate results. As experimentally demonstrated in Figure~\ref{app_fig_loo_indep_pred}, the data sets' connection affects the estimator's related uncertainty. The behaviour of $\elpdHat{\Md}{\tilde{y}^\mathrm{obs}, \yobs}$ over possible data sets does not necessarily match with the behaviour of $\elpdHat{\Md}{\yobs}$. It can be seen from the figure that in the illustrated setting, the means of the distributions are close, but the variance and skewness do not match. Additionally, the figure compares the sampling distributions against the distribution of $\elpdC{\Md}{\y}$. It can be seen that $\elpdHatC{\Md}{\tilde{\y}, \y}$ has a distribution somewhat between $\elpdHatC{\Md}{\y}$ and $\elpdC{\Md}{\y}$. Indeed, although not feasible in practice, it is expected that $\elpdHatC{\Md}{\tilde{\y}, \y}$ would be a better estimator for $\elpdC{\Md}{\y}$.
%
%
\section{Analysing the Uncertainty Estimates}\label{app_sec_theory_analysis_of_the_estimates}
The uncertainty of a LOO-CV estimate is usually estimated using normal distribution or Bayesian bootstrap. In this appendix, we discuss these estimators in more detail.
\subsection{Normal Model for the Uncertainty}
As discussed in Section~\ref{sec_intro_uncertainty} in Equation~\eqref{eq_normalapprox_d}, a common approach for estimating the uncertainty in a LOO-CV estimate is to approximate it  with a normal distribution as
\begin{align}
    \elpdHatUnkHatC{\Md}{\yobs} &= \elpdHat{\Md}{\yobs} - \elpdHatErrHatC{\Md}{\yobs}
    \label{app_eq_elpdhat_err}\,,
\end{align}
where
\begin{align} 
    \elpdHatErrHatC{\Md}{\yobs} &\sim \N\left(0, \, \seHat{\Md}{\yobs} \right)
\end{align}
is an approximation to the distribution of the true error over the possible data sets, and $\seHat{\Md}{\yobs}$ is a naive estimator of the standard error of $\elpdHatC{\Md}{\y}$ defined by \citet{vehtari2022loopkg} as 
\begin{align*}
 \left(\seHat{\Md}{\yobs}\right)^2
    = \frac{n}{n-1}\sum_{i=1}^n \Bigg(\elpdHati{\Md}{\yobs}{i} - \frac{1}{n}\sum_{j=1}^n \elpdHati{\Md}{\yobs}{j} \Bigg)^2 \,.
    \numberthis \label{app_eq_naive_estimator}
\end{align*}
This estimator is motivated by the incorrect assumption that the terms $\elpdHatiC{\Md}{\y}{i}$ are independent. In reality, since each observation is a part of $n-1$ training sets, the variance $\Var\left(\elpdHatC{\Md}{\y}\right)$ depends on both the variance of each $\elpdHatiC{\Md}{\y}{i}$ and on the dependency between the different folds. 

In the following propositions~\ref{app_prop_loo_var} and~\ref{app_th_naive_var_expectation} and in the Corollary~\ref{app_th_naive_var_bias}, we present the associated bias with the naive variance estimator in the context of model comparison.
\begin{proposition}\label{app_prop_loo_var}
 Let $\Short_{\Msk,i} = \elpdHatiC{\Mk}{\y}{i}$ and $\Short_{\Msd,i} = \elpdHatiC{\Md}{\y}{i}$ and
 \begin{align*}
  \Var\left(\Short_{\Msd,i}\right) &= \sigma^2_\Msd
  &
  \Cov\left(\Short_{\Msd,i}, \Short_{\Msd,j} \right) &= \gamma_{\Msd}
  \\
  \Var\left(\Short_{\Msk,i}\right) &= \sigma^2_\Msk
  &
  \Cov\left(\Short_{\Msk,i}, \Short_{\Msk,j} \right) &= \gamma_{\Msk}
  \\
  \Cov\left(\Short_{\Msa,i}, \Short_{\Msb,i} \right) &= \rho_{\Msa\Msb} 
  &
  \Cov\left(\Short_{\Msa,i}, \Short_{\Msb,j} \right) &= \gamma_{\Msa\Msb}\,,
  \numberthis
 \end{align*}
 where $i \neq j$ and $\Mk \in \{\Ma,\Mb\}$. Now
 \begin{align*}
  \Var\left(\elpdHatC{\Md}{\y}\right)
  &= n \sigma^2_\Msd + n(n-1)\gamma_\Msd
  \\
  &= n \left(\sigma^2_\Msa + \sigma^2_\Msb - 2\rho_{\Msa\Msb} \right) + n(n-1)\left(\gamma_\Msa + \gamma_\Msb - 2 \gamma_{\Msa\Msb}\right)\,.
  \numberthis
 \end{align*}
\end{proposition}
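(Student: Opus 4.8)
The plan is to expand $\elpdHatC{\Md}{\y}$ as the sum $\sum_{i=1}^n \Short_{\Msd,i}$ of its pointwise contributions and then apply the standard formula for the variance of a sum of correlated random variables. First I would write
\begin{align*}
 \Var\left(\elpdHatC{\Md}{\y}\right) &= \Var\left(\sum_{i=1}^n \Short_{\Msd,i}\right) = \sum_{i=1}^n \Var\left(\Short_{\Msd,i}\right) + \sum_{i \neq j} \Cov\left(\Short_{\Msd,i}, \Short_{\Msd,j}\right),
\end{align*}
and observe that there are exactly $n$ diagonal terms, each equal to $\sigma^2_\Msd$ by assumption, and exactly $n(n-1)$ ordered off-diagonal pairs $(i,j)$ with $i \neq j$, each contributing $\gamma_\Msd$. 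This immediately yields the first expression $n\sigma^2_\Msd + n(n-1)\gamma_\Msd$.

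For the second expression, I would substitute $\Short_{\Msd,i} = \Short_{\Msa,i} - \Short_{\Msb,i}$ and re-express the two building blocks $\sigma^2_\Msd$ and $\gamma_\Msd$ using bilinearity of covariance. For the diagonal part,
\begin{align*}
 \sigma^2_\Msd = \Var\left(\Short_{\Msa,i} - \Short_{\Msb,i}\right) = \sigma^2_\Msa + \sigma^2_\Msb - 2\rho_{\Msa\Msb},
\end{align*}
and for the off-diagonal part, for $i \neq j$,
\begin{align*}
 \gamma_\Msd = \Cov\left(\Short_{\Msa,i} - \Short_{\Msb,i},\, \Short_{\Msa,j} - \Short_{\Msb,j}\right) = \gamma_\Msa + \gamma_\Msb - 2\gamma_{\Msa\Msb}.
\end{align*}
Substituting these two identities into the first expression gives the claimed second form.

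The computation is essentially routine, and I expect no real obstacle; the only point that needs a moment's care is that the cross term in the off-diagonal expansion involves both $\Cov(\Short_{\Msa,i}, \Short_{\Msb,j})$ and $\Cov(\Short_{\Msb,i}, \Short_{\Msa,j})$ for $i \neq j$, and one must note that by symmetry of the covariance and since the stated assumption $\Cov(\Short_{\Msa,i}, \Short_{\Msb,j}) = \gamma_{\Msa\Msb}$ holds for every pair of distinct indices, both of these equal $\gamma_{\Msa\Msb}$, so they combine to $-2\gamma_{\Msa\Msb}$ rather than to something asymmetric. Beyond this index bookkeeping the proof is just counting diagonal versus off-diagonal entries and applying bilinearity.
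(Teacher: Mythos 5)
Your proposal is correct and follows essentially the same argument as the paper: expand the variance of the sum into diagonal and off-diagonal covariance terms, count $n$ and $n(n-1)$ of each, and then use bilinearity of covariance to pass from the $\Msd$-quantities to the model-specific ones (the paper simply carries out both steps in a single double sum rather than substituting afterwards). Your remark that the two cross terms $\Cov(\Short_{\Msa,i},\Short_{\Msb,j})$ and $\Cov(\Short_{\Msb,i},\Short_{\Msa,j})$ both equal $\gamma_{\Msa\Msb}$ is exactly the implicit symmetry the paper also relies on.
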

\begin{proof}
 We have
 \begin{align*}
 &\Var\left(\elpdHatC{\Md}{\y}\right)
 \\
 &\qquad= \sum_{i=1}^n\sum_{j=1}^n \Cov\left( \Short_{\Msa,i} - \Short_{\Msb,i}, \Short_{\Msa,j} - \Short_{\Msb,j} \right)
 \\
 &\qquad= \sum_{i=1}^n\sum_{j=1}^n \Big(
    \Cov(\Short_{\Msa,i}, \Short_{\Msa,j}) + \Cov(\Short_{\Msb,i}, \Short_{\Msb,j})
    \\
    & \qquad\qquad\qquad -\Cov(\Short_{\Msa,i}, \Short_{\Msb,j}) - \Cov(\Short_{\Msb,i}, \Short_{\Msa,j})
 \Big)
 \\
 &\qquad= \sum_{i=1}^n \Big(
    \Var(\Short_{\Msa,i}) + \Short_{\Msb,i} - 2 \Cov(\Short_{\Msa,i},\Short_{\Msb,i})
 \Big)
 \\ & \qquad\qquad \qquad
 + \sum_{i=1}^n\sum_{j\neq i} \Big(
    \Cov(\Short_{\Msa,i}, \Short_{\Msa,j}) + \Cov(\Short_{\Msb,i}, \Short_{\Msb,j})
    -2\Cov(\Short_{\Msa,i}, \Short_{\Msb,j})
 \Big)
 \\
 &\qquad= n \left(\sigma^2_\Msa + \sigma^2_\Msb - 2\rho_{\Msa\Msb} \right)
    + n(n-1)\left(\gamma_\Msa + \gamma_\Msb - 2 \gamma_{\Msa\Msb}\right)\,.
 \numberthis
 \end{align*}
\end{proof}
\begin{proposition}\label{app_th_naive_var_expectation}
 Following the definitions in Proposition~\ref{app_prop_loo_var},
 the expectation of the variance estimator \seHatPlain{} in Equation~\eqref{app_eq_naive_estimator} is
 \begin{align*}
    \E\left[\seHatC{\Md}{\y}^2\right]
    &= n \sigma^2_\Msd - n \gamma_\Msd
    \\
    &= n \left(\sigma^2_\Msa + \sigma^2_\Msb - 2\rho_{\Msa\Msb} \right) - n\left(\gamma_\Msa + \gamma_\Msb - 2 \gamma_{\Msa\Msb}\right).
    \numberthis
 \end{align*}
\end{proposition}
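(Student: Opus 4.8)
The plan is to expand the definition of the naive estimator in Equation~\eqref{app_eq_naive_estimator}, take expectations term by term, and assemble the result using the covariance structure fixed in Proposition~\ref{app_prop_loo_var}. Write $\Short_{\Msd,i} = \elpdHatiC{\Md}{\y}{i}$ and $\bar{\Short}_\Msd = \frac{1}{n}\sum_{j=1}^n \Short_{\Msd,j}$, so that
\begin{align*}
 \left(\seHatC{\Md}{\y}\right)^2 = \frac{n}{n-1}\sum_{i=1}^n \left(\Short_{\Msd,i} - \bar{\Short}_\Msd\right)^2
 = \frac{n}{n-1}\left(\sum_{i=1}^n \Short_{\Msd,i}^2 - n\,\bar{\Short}_\Msd^2\right).
\end{align*}
First I would take expectations of the two pieces. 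Let $\mu_{\Msd,i} = \E[\Short_{\Msd,i}]$. For the first piece, $\E[\Short_{\Msd,i}^2] = \sigma^2_\Msd + \mu_{\Msd,i}^2$, so $\E[\sum_i \Short_{\Msd,i}^2] = n\sigma^2_\Msd + \sum_i \mu_{\Msd,i}^2$. For the second piece, $\E[n\bar{\Short}_\Msd^2] = \frac{1}{n}\E[(\sum_i \Short_{\Msd,i})^2] = \frac{1}{n}\bigl(\Var(\sum_i \Short_{\Msd,i}) + (\sum_i \mu_{\Msd,i})^2\bigr)$, and by Proposition~\ref{app_prop_loo_var} the variance term is $n\sigma^2_\Msd + n(n-1)\gamma_\Msd$.

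Next I would note that the usual mean-centering argument makes the quadratic-mean terms cancel: a short computation shows $\sum_i \mu_{\Msd,i}^2 - \frac{1}{n}(\sum_i \mu_{\Msd,i})^2 \ge 0$, and in the relevant setting of this proposition the pointwise means $\mu_{\Msd,i}$ are taken to be equal across $i$ (the elpd difference terms are exchangeable under the assumptions carried over from Proposition~\ref{app_prop_loo_var}), so this nonnegative quantity vanishes. Under that exchangeability, combining the pieces gives
\begin{align*}
 \E\left[\left(\seHatC{\Md}{\y}\right)^2\right]
 &= \frac{n}{n-1}\left(n\sigma^2_\Msd - \tfrac{1}{n}\bigl(n\sigma^2_\Msd + n(n-1)\gamma_\Msd\bigr)\right)
 \\
 &= \frac{n}{n-1}\left(n\sigma^2_\Msd - \sigma^2_\Msd - (n-1)\gamma_\Msd\right)
 = \frac{n}{n-1}\bigl((n-1)\sigma^2_\Msd - (n-1)\gamma_\Msd\bigr)
 = n\sigma^2_\Msd - n\gamma_\Msd\,.
\end{align*}
Finally I would substitute the decomposition $\sigma^2_\Msd = \sigma^2_\Msa + \sigma^2_\Msb - 2\rho_{\Msa\Msb}$ and $\gamma_\Msd = \gamma_\Msa + \gamma_\Msb - 2\gamma_{\Msa\Msb}$, already recorded in Proposition~\ref{app_prop_loo_var}, to obtain the second displayed form in the statement.

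The only delicate point — the "main obstacle" such as it is — is the treatment of the pointwise means: the clean cancellation $\E[\sum_i \Short_{\Msd,i}^2] - \E[n\bar{\Short}_\Msd^2]$ reducing exactly to $(n-1)(\sigma^2_\Msd - \gamma_\Msd)$ relies on the $\mu_{\Msd,i}$ being constant in $i$. This is consistent with the exchangeable covariance structure assumed in Proposition~\ref{app_prop_loo_var}, so I would state that assumption explicitly and then the rest is bookkeeping with the already-established variance formula.
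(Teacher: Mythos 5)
Your proof is correct and follows essentially the same route as the paper's: expand the sample-variance formula, take expectations term by term using the exchangeable second-moment structure, and observe that the mean terms cancel. The only differences are cosmetic --- you use the identity $\sum_i(\Short_{\Msd,i}-\bar{\Short}_{\Msd})^2 = \sum_i \Short_{\Msd,i}^2 - n\bar{\Short}_{\Msd}^2$ together with the variance-of-the-sum formula already established in Proposition~\ref{app_prop_loo_var}, whereas the paper expands the square from scratch; and you make explicit the equal-pointwise-means (exchangeability) assumption needed for the mean terms to vanish, which the paper uses implicitly when it replaces $\E[\Short_{\Msd,i}]\E[\Short_{\Msd,j}]$ by $\E[\Short_{\Msd,i}]^2$.
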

\begin{proof}
 We have
 \begingroup
 \allowdisplaybreaks
 \begin{align}
    \E\left[\Short_{\Msd,i}^2\right] &= \E\left[\Short_{\Msd,i}\right]^2 + \Var\left(\Short_{\Msd,i}\right) \\
    \E\left[\Short_{\Msd,i} \Short_{\Msd,j}\right] &= \E\left[\Short_{\Msd,i}\right]\E\left[\Short_{\Msd,j}\right] + \Cov\left(\Short_{\Msd,i}, \Short_{\Msd,j}\right), \quad i \neq j.
 \end{align}
 Now
 \begin{align*}
    &\E\left[ \left(\seHat{\Md}{\yobs}\right)^2 \right]
    \\
    &\qquad= \E\left[ \frac{n}{n-1}\sum_{i=1}^n \left(\Short_{\Msd,i} - \frac{1}{n}\sum_{j=1}^n \Short_{\Msd,j} \right)^2 \right]
    \\
    &\qquad= \frac{n}{n-1}\sum_{i=1}^n \E\left[ \Short_{\Msd,i}^2 - \frac{2}{n} \Short_{\Msd,i} \sum_{j=1}^n \Short_{\Msd,j} + \left( \frac{1}{n}\sum_{j=1}^n \Short_{\Msd,j} \right)^2 \right]
    \\
    &\qquad= \frac{n}{n-1}\sum_{i=1}^n \left[
        \E\left[\Short_{\Msd,i}^2\right]
        - \frac{2}{n} \left( \E\left[\Short_{\Msd,i}^2\right] + \sum_{j \neq i } \E\left[\Short_{\Msd,i} \Short_{\Msd,j} \right] \right) \right.\\
        &\qquad \left. \hspace{25mm} + \frac{1}{n^2} \left( \sum_{j=1}^n \E\left[\Short_{\Msd,j}^2\right] + \sum_{j=1}^n \sum_{p \neq j } \E\left[\Short_{\Msd,j} \Short_{\Msd,p}\right] \right)
    \right]
    \\
    &\qquad=\frac{n}{n-1}\sum_{i=1}^n \Bigg[
    \E\left[\Short_{\Msd,i}\right]^2 +\Var\left(\Short_{\Msd,i}\right)
    \\ &\qquad  \hspace{25mm}
    - \frac{2}{n} \Big( \E\left[\Short_{\Msd,i}\right]^2 +\Var\left(\Short_{\Msd,i}\right) + (n-1)(\E\left[\Short_{\Msd,i}\right]^2
    \\ &\qquad  \hspace{40mm}
    + \Cov\left(\Short_{\Msd,i}, \Short_{\Msd,j}\right))  \Big)
    \\ &\qquad  \hspace{25mm}
    + \frac{1}{n^2} \Big( n (\E\left[\Short_{\Msd,i}\right]^2 +\Var\left(\Short_{\Msd,i}\right)) + n(n-1)(\E\left[\Short_{\Msd,i}\right]^2
    \\ &\qquad  \hspace{40mm}
    + \Cov\left(\Short_{\Msd,i}, \Short_{\Msd,j}\right)) \Big)
    \Bigg]\\
    &\qquad=\frac{n}{n-1}\sum_{i=1}^n \left[
    \left(1 - \frac{2n}{n} + \frac{n^2}{n^2}\right)\E\left[\Short_{\Msd,i}\right]^2
    + \left(1 - \frac{2}{n} + \frac{n}{n^2}\right)\Var\left(\Short_{\Msd,i}\right)  \right.
    \\
    &\qquad \left. \hspace{25mm} + \left(- \frac{2(n-1)}{n} + \frac{n(n-1)}{n^2}\right)\Cov\left(\Short_{\Msd,i}, \Short_{\Msd,j}\right)
    \right]
    \\
    &\qquad=\frac{n}{n-1}\sum_{i=1}^n \left[ \frac{n-1}{n}\Var\left(\Short_{\Msd,i}\right) - \frac{n-1}{n}\Cov\left(\Short_{\Msd,i}, \Short_{\Msd,j}\right) \right]
    \\
    &\qquad= n\Var\left(\Short_{\Msd,i}\right) - n\Cov\left(\Short_{\Msd,i}, \Short_{\Msd,j}\right)
    \\
    &\qquad= n \sigma^2_\Msd - n \gamma_\Msd\,,
 \numberthis
 \end{align*}
 and furthermore
 \begin{align*}
    \E\left[ \left(\seHat{\Md}{\yobs}\right)^2 \right]
    &= n\Var\left(\Short_{\Msa,i}-\Short_{\Msb,i}\right)
    \\
    &\quad- n\Cov\left(\Short_{\Msa,i}-\Short_{\Msb,i}, \Short_{\Msa,j}-\Short_{\Msb,j}\right)
    \\
    &= n \left(\Var(\Short_{\Msa,i}) + \Var(\Short_{\Msb,i}) - 2 \Cov(\Short_{\Msa,i}, \Short_{\Msb,i}) \right)
    \\
    &\quad - n\left(\Cov\left(\Short_{\Msa,i}, \Short_{\Msa,j} \right) + \Cov\left(\Short_{\Msb,i}, \Short_{\Msb,j} \right) - 2 \Cov\left(\Short_{\Msa,i}, \Short_{\Msb,j} \right)\right)
    \\
    &= n \left(\sigma^2_\Msa + \sigma^2_\Msb - 2\rho_{\Msa\Msb} \right) \\
 &\quad- n\left(\gamma_\Msa + \gamma_\Msb - 2 \gamma_{\Msa\Msb}\right)
 \,.
 \numberthis
 \end{align*}
 \endgroup 
\end{proof}
\begin{corollary}\label{app_th_naive_var_bias}
 Following the definitions in Proposition~\ref{app_prop_loo_var}, the estimator $\seHatC{\Md}{\y}^2$ defined in Equation~\eqref{app_eq_naive_estimator} for the variance $\Var\left(\elpdHatC{\Md}{\y}\right)$ has a bias of
 \begin{align}
   \E\left[\seHatC{\Md}{\y}^2\right] - \Var\left(\elpdHatC{\Md}{\y}\right) = -n^2 \gamma_\Msd = -n^2\left(\gamma_\Msa + \gamma_\Msb - 2 \gamma_{\Msa\Msb}\right)\,.
 \end{align}
\end{corollary}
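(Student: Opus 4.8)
The plan is to read the bias off directly from the two results already established, namely Proposition~\ref{app_prop_loo_var} and Proposition~\ref{app_th_naive_var_expectation}; no new probabilistic argument is needed, since both are expressed in the same notation, with $\sigma^2_\Msd$ and $\gamma_\Msd$ denoting respectively the pointwise variance and the cross-fold covariance of the differenced terms $\Short_{\Msd,i} = \elpdHatiC{\Md}{\y}{i}$.

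First I would recall the two expressions. Proposition~\ref{app_prop_loo_var} gives $\Var\left(\elpdHatC{\Md}{\y}\right) = n\sigma^2_\Msd + n(n-1)\gamma_\Msd$, while Proposition~\ref{app_th_naive_var_expectation} gives $\E\left[\seHatC{\Md}{\y}^2\right] = n\sigma^2_\Msd - n\gamma_\Msd$. Subtracting, the $n\sigma^2_\Msd$ terms cancel and the two $\gamma_\Msd$ contributions combine:
\begin{align*}
 \E\left[\seHatC{\Md}{\y}^2\right] - \Var\left(\elpdHatC{\Md}{\y}\right)
 &= \left(n\sigma^2_\Msd - n\gamma_\Msd\right) - \left(n\sigma^2_\Msd + n(n-1)\gamma_\Msd\right)\\
 &= -n\gamma_\Msd - n(n-1)\gamma_\Msd = -n^2\gamma_\Msd\,.
\end{align*}
To finish, I would substitute the decomposition $\gamma_\Msd = \gamma_\Msa + \gamma_\Msb - 2\gamma_{\Msa\Msb}$, which follows, exactly as in the proof of Proposition~\ref{app_prop_loo_var}, from bilinearity of the covariance applied to $\Short_{\Msd,i} = \Short_{\Msa,i} - \Short_{\Msb,i}$, to obtain the second stated form $-n^2\left(\gamma_\Msa + \gamma_\Msb - 2\gamma_{\Msa\Msb}\right)$.

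There is essentially no substantive obstacle; the only thing to watch is the arithmetic with the powers of $n$, since it is precisely the collapse $-n - n(n-1) = -n^2$ (rather than, say, $-n(2n-1)$) that yields the clean quadratic scaling of the bias. In the write-up it is worth stressing the qualitative reading: whenever $\gamma_\Msd \ge 0$, i.e.\ the differenced pointwise terms are on average positively correlated across folds, the bias is non-positive, so the naive estimator $\seHatC{\Md}{\y}^2$ systematically underestimates $\Var\left(\elpdHatC{\Md}{\y}\right)$, and it does so by an amount of order $n^2$, matching the growth of the covariance term $n(n-1)\gamma_\Msd$ that the naive estimator ignores.
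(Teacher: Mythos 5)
Your proof is correct and matches the paper's own argument, which likewise obtains the bias by directly subtracting the variance in Proposition~\ref{app_prop_loo_var} from the expectation in Proposition~\ref{app_th_naive_var_expectation}; the arithmetic $-n\gamma_\Msd - n(n-1)\gamma_\Msd = -n^2\gamma_\Msd$ and the bilinearity decomposition of $\gamma_\Msd$ are both right.
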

\begin{proof}
 The $\elpdC{\Msd}{\y}$, i.e.\ the true variance $\Var\left(\elpdHatC{\Md}{\y}\right)$, is given in Proposition~\ref{app_prop_loo_var}. The expectation of the estimator $\seHatC{\Md}{\y}^2$ is given in Proposition~\ref{app_th_naive_var_expectation}. The resulting bias follows directly from these propositions.
\end{proof}
\subsection{Dirichlet Model for the Uncertainty}
%
As discussed in Section~\ref{sec_intro_uncertainty}, an alternative way to address the uncertainty is to use a Bayesian bootstrap procedure~\citep{Rubin:1981a,Vehtari+Lampinen:2002} to model $\p(\elpdHatUnkC{\Md}{\yobs})$. 
Compared to the normal approximation, while representing skewness, this method also has problems with higher moments and heavy-tailed distributions~\citep{Rubin:1981a}.
\subsection{Not Considering All the Terms in the Error}
As discussed in Section~\ref{sec_intro_problems_in_uncertainty}, in addition to possibly inaccurately approximating the variability in $\elpdHatC{\Md}{\y}$, the presented ways of estimating the uncertainty can be poor representations of the uncertainty about $\elpdHatUnkC{\Md}{\yobs}$ because they are based on estimating the sampling distribution, which can have only a weak connection to the error distribution.
As seen from the formulation of the error $\elpdHatErrC{\Md}{\y}$ presented in Equation~\eqref{app_eq_elpdhat_err}, an estimator based on the sampling distribution does not consider the effect of the term $\elpdC{\Md}{\y}$.
As demonstrated in figures~\ref{fig_err_vs_sampdist} and~\ref{fig_err_vs_sampdist_out} in Appendix~\ref{app_sec_additional_experiment_results}, while in well-behaved problem settings the variability of the sampling distribution $\elpdHatC{\Md}{\y}$ can match with the variability of the error $\elpdHatErrC{\Md}{\y}$, in problematic situations they do not match. As a comparison, when estimating $\eelpdPlain{}$ instead of $\elpdPlain{}$, the variance of the sampling distribution corresponds to the variance of the error distribution, as discussed in Appendix~\ref{app_sec_theory_elpd_vs_eelpd}, and estimating the sampling distribution is sufficient in estimating the uncertainty of the LOO-CV estimate.
%
%
\section{Normal Linear Regression Case Study}
\label{app_sec_norm_lin_reg_case_study}
%
In this appendix, we derive the analytic form for the approximation error 
\begin{align*}    
\elpdHatErrC{\Md}{\y} = \elpdHatC{\Md}{\y} - \elpdC{\Ma,\Mb}{\y}
\end{align*}
in a normal linear regression model comparison setting under known data generating mechanism. In addition, we derive the analytic forms for $\elpdC{\cdots}{\y}$ and $\elpdHatC{\cdots}{\y}$ for the individual models and the difference.

Consider the following data generation mechanism defined in Section~\ref{sec_analytic_case},
we compare two nested normal linear regression models \Mfa{} and \Mfb{}, both considering a subset of covariates. Let $X_{[\bcdot,\Msk]}$ and $\+\beta_{\Msk}$ denote the explanatory variable matrix and respective effect vector including only the covariates considered by model $\Mk \in \{\Mfa, \Mfb\}$. Correspondingly, let $X_{[\bcdot,-\Msk]}$ and $\+\beta_{-\Msk}$ denote the explanatory variable matrix and respective effect vector, including only the covariates not considered by model $\Mk$. If a model includes all the covariates, we define that $X_{[\bcdot,-\Msk]}$ is a column vector of length $n$ of zeroes and $\+\beta_{-\Msk}=0$. We assume that at least one covariate is included in one model but not in the other, so there is some difference in the models. Otherwise, $\elpdC{\Mfd}{\y}$ and $\elpdHatC{\Mfd}{\y}$ would be trivially always zero. The noise variance $\tau^2$ is fixed in both models, and $\widehat{\+\beta}_\Msk$ is the sole estimated unknown model parameter.
We apply uniform prior distribution for both models. Hence, we have the following forms for the likelihood, posterior distribution, and posterior predictive distribution for model $\Mk$
\citep[see e.g.][pp.\ 355--357]{bda_book}:
\begin{align}
    \+y | \widehat{\+\beta}_\Msk, \+X_{[\bcdot,\Msk]}, \tau
    & \sim
    \operatorname{N}\left(X_{[\bcdot,\Msk]} \widehat{\+\beta}_\Msk, \; \tau^2\eye\right),
    \label{app_eq_lin_reg_likelihood}
\\
    \widehat{\+\beta}_\Msk | \+y, \+X_{[\bcdot,\Msk]}, \tau
    & \sim
    \operatorname{N}\left((X_{[\bcdot,\Msk]}^\transp X_{[\bcdot,\Msk]})^{-1}X_{[\bcdot,\Msk]}^\transp \+y, \; (X_{[\bcdot,\Msk]}^\transp X_{[\bcdot,\Msk]})^{-1} \tau^2 \right),
\\
    \widetilde{y} | \+y, \+X_{[\bcdot,\Msk]}, \widetilde{\+x}, \tau
    & \sim
    \operatorname{N}\left(\widetilde{\+x} (X_{[\bcdot,\Msk]}^\transp X_{[\bcdot,\Msk]})^{-1}X_{[\bcdot,\Msk]}^\transp \+y, \; \left(1 + \widetilde{\+x} (X_{[\bcdot,\Msk]}^\transp X_{[\bcdot,\Msk]})^{-1} \widetilde{\+x}^\transp \right) \tau^2 \right),
    \label{app_eq_lin_reg_full_pred_dist}
\end{align}
where $\widetilde{y}, \widetilde{\+x}$ is a test observation with a scalar response variable and conformable explanatory variable row vector, respectively.

\subsection{Elpd} \label{app_sec_analytic_elpd}
In this section we find the analytic form for $\elpdC{\Mk}{\y}$ for model $\Mk \in \{\Mfa, \Mfb\}$.
We have
\begin{align}
    \elpdC{\Mk}{\y} &= \sum_{i=1}^n \int_{-\infty}^\infty \p_\text{true}(\tilde{y}_i) \log \p_\Msk(\tilde{y}_i|\+y) \diff \tilde{y}_i\,,
    \\
    \p_\text{true}(\tilde{y}_i) &= \N\left(\tilde{y}_i \middle| \widetilde{\mu}_{i}, \widetilde{\sigma}_{i}\right),
    \\
    \p_\Msk(\tilde{y}_i|\+y) &= \N\left(\tilde{y}_i \middle| \mu_{\Msk,\,i}, \sigma_{\Msk,\,i}\right),
\end{align}
where
\begin{align}
    \widetilde{\mu}_{i} &= \mu_{\star,\,i} + \+X_{[i,\bcdot]} \+\beta
    \\
    \widetilde{\sigma}_{i}^2 &= \sigma_{\star,\,i}^2
\end{align}
and, according to Equation~\eqref{app_eq_lin_reg_full_pred_dist},
\begin{align}
    \mu_{\Msk,\,i} &= \+X_{[i,\Msk]} \left(\+X_{[\bcdot,\Msk]}^\transp \+X_{[\bcdot,\Msk]}\right)^{-1}\+X_{[\bcdot,\Msk]}^\transp \+y
    \\
    \sigma_{\Msk,\,i}^2 &= \left(1 + \+X_{[i,\Msk]} \left(\+X_{[\bcdot,\Msk]}^\transp \+X_{[\bcdot,\Msk]}\right)^{-1} \+X_{[i,\Msk]}^\transp \right) \tau^2
\end{align}
for $i=1,2,\dots,n$.
The distributions can be formulated as
\begin{align*}
\p_\text{true}(\tilde{y}_i)
&= (2 \pi \widetilde{\sigma}_{i}^2)^{-1/2} \exp\left( -\frac{1}{2}\left( \frac{\tilde{y}_i - \widetilde{\mu}_{i}}{\widetilde{\sigma}_{i}} \right)^2 \right)
\\
    &= c \, \exp\left( -a \tilde{y}_i^2 + b \tilde{y}_i \right)\,,
    \numberthis
\end{align*}
where
\begin{align}
a &= \frac{1}{2 \widetilde{\sigma}_{i}^2} > 0\,,
&
b &= \frac{\widetilde{\mu}_{i}}{\widetilde{\sigma}_{i}^2}\,,
&
c &= \exp\left(-\frac{\widetilde{\mu}_{i}^2}{2\widetilde{\sigma}_{i}^2}-\frac{1}{2}\log\left(2 \pi \widetilde{\sigma}_{i}^2\right)\right)\,,
\end{align}
and
\begin{align*}
\log \p_\Msk(\tilde{y}_i|\+y)
&= -\frac{1}{2}\left( \frac{\tilde{y}_i - \mu_{\Msk,\,i}}{\sigma_{\Msk,\,i}} \right)^2 - \frac{1}{2}\log\left(2 \pi \sigma_{\Msk,\,i}^2 \right)
\\
    &= - p \tilde{y}_i^2 + q \tilde{y}_i + r\,,
    \numberthis
\end{align*}
where
\begin{align}
p &= \frac{1}{2 \sigma_{\Msk,\,i}^2} > 0\,,
&
q &= \frac{\mu_{\Msk,\,i}}{\sigma_{\Msk,\,i}^2}\,,
&
r &= -\frac{\mu_{\Msk,\,i}^2}{2 \sigma_{\Msk,\,i}^2} - \frac{1}{2}\log\left(2 \pi \sigma_{\Msk,\,i}^2 \right)\,.
\end{align}
Now
\begin{align*}
&
\int_{-\infty}^\infty \p_\text{true}(\tilde{y}_i) \log \p_\Msk(\tilde{y}_i|\+y) \diff \tilde{y}_i
\\
&\qquad
= -c p \int_{-\infty}^\infty
    \tilde{y}_i^2 \exp\left( -a \tilde{y}_i^2 + b \tilde{y}_i \right)
    \diff \tilde{y}_i
\\
&\qquad\quad
+ c q \int_{-\infty}^\infty
    \tilde{y}_i \exp\left( -a \tilde{y}_i^2 + b \tilde{y}_i \right)
    \diff \tilde{y}_i
\\
&\qquad\quad
+ c r \int_{-\infty}^\infty
    \exp\left( -a \tilde{y}_i^2 + b \tilde{y}_i \right)
    \diff \tilde{y}_i\,.
    \numberthis
\end{align*}
These integrals are
\begin{align}
\int_{-\infty}^\infty
    \tilde{y}_i^2 \exp\left( -a \tilde{y}_i^2 + b \tilde{y}_i \right)
    \diff \tilde{y}_i
    &= \frac{\sqrt{\pi}}{2a^{3/2}}\left( \frac{b^2}{2a} +1 \right) \exp\left(\frac{b^2}{4a}\right)
\intertext{\citep[][p. 360, Section~3.462, Eq~22.8]{table_of_integrals},}
\int_{-\infty}^\infty
    \tilde{y}_i \exp\left( -a \tilde{y}_i^2 + b \tilde{y}_i \right)
    \diff \tilde{y}_i
    &= \frac{\sqrt{\pi}b}{2a^{3/2}} \exp\left(\frac{b^2}{4a}\right)
\intertext{\citep[][p. 360, Section~3.462, Eq~22.8]{table_of_integrals}, and}
\int_{-\infty}^\infty
    \exp\left( -a \tilde{y}_i^2 + b \tilde{y}_i \right)
    \diff \tilde{y}_i
    &= \frac{\sqrt{\pi}}{a^{1/2}} \exp\left(\frac{b^2}{4a}\right)
\end{align}
\citep[][p. 333, Section~3.323, Eq~2.10]{table_of_integrals}.
Now we can simplify
\begin{align*}
&\int_{-\infty}^\infty \p_\text{true}(\tilde{y}_i) \log \p_\Msk(\tilde{y}_i|\+y) \diff \tilde{y}_i
\\
&\qquad
= \sqrt{\pi} \exp\left(\frac{b^2}{4a} + \log c \right) \left(
    -\frac{pb^2}{4a^{5/2}} - \frac{p}{2a^{3/2}} + \frac{qb}{2a^{3/2}} + \frac{r}{a^{1/2}}
\right)
\\
&\qquad
= \sqrt{\pi} \left(2 \pi \widetilde{\sigma}_{i}^2\right)^{-1/2} \left(
    -\sqrt{2}p\widetilde{\mu}_{i}^2\widetilde{\sigma}_{i} - \sqrt{2} p \widetilde{\sigma}_{i}^3 + \sqrt{2}q\widetilde{\mu}_{i}\widetilde{\sigma}_{i} + \sqrt{2} r \widetilde{\sigma}_{i}
\right)
\\
&\qquad
= - p\widetilde{\mu}_{i}^2 - p \widetilde{\sigma}_{i}^2 + q \widetilde{\mu}_{i} + r
\\
&\qquad
= \frac{-\widetilde{\mu}_{i}^2 - \widetilde{\sigma}_{i}^2 + 2 \widetilde{\mu}_{i}\mu_{\Msk,\,i} - \mu_{\Msk,\,i}^2}{2\sigma_{\Msk,\,i}^2}
- \frac{1}{2}\log\left(2 \pi \sigma_{\Msk,\,i}^2 \right)
\\
&\qquad
= - \frac{
    \left(\mu_{\Msk,\,i} - \widetilde{\mu}_{i}\right)^2 + \widetilde{\sigma}_{i}^2
    }{
    2 \sigma_{\Msk,\,i}^2
    }
    - \frac{1}{2} \log\left(2 \pi \sigma_{\Msk,\,i}^2 \right)\,.
    \numberthis
\end{align*}

Let $\+P_{\Msk}$ be the following orthogonal projection matrix for model \Mk{}:
\begin{align*}
    \+P_{\Msk} &= \+X_{[\bcdot,\Msk]} \left(\+X_{[\bcdot,\Msk]}^\transp \+X_{[\bcdot,\Msk]}\right)^{-1}\+X_{[\bcdot,\Msk]}^\transp
    \numberthis
\intertext{so that}
    \mu_{\Msk,\,i} &= \+P_{\Msk [i,\bcdot]} \+y
    \\ &= \+P_{\Msk [i,\bcdot]} \left( \+X \+\beta + \+\varepsilon \right)
    \\ &= \+P_{\Msk [i,\bcdot]} \+X \+\beta + \+P_{\Msk [i,\bcdot]} \+\varepsilon
    \numberthis
    \\
    \sigma_{\Msk,\,i}^2 &= \big(1 + \+P_{\Msk [i,i]} \big) \tau^2 \,.
    \numberthis
\end{align*}
Now we can write
\begin{align*}
\left(\mu_{\Msk,\,i} - \widetilde{\mu}_{i}\right)^2
&= \left(
        \+P_{\Msk [i,\bcdot]} \+\varepsilon + \+P_{\Msk [i,\bcdot]} \+X \+\beta - \+X_{[i,\bcdot]} \+\beta - \mu_{\star,\,i}
\right)^2
\\
&= \+\varepsilon^\transp \+P_{\Msk [i,\bcdot]}^\transp \+P_{\Msk [i,\bcdot]} \+\varepsilon
    \\&\quad
    + 2 \left(
        \+P_{\Msk [i,\bcdot]} \+X \+\beta - \+X_{[i,\bcdot]} \+\beta - \mu_{\star,\,i}
    \right) \+P_{\Msk [i,\bcdot]} \+\varepsilon
    \\&\quad
    + \left(
        \+P_{\Msk [i,\bcdot]} \+X \+\beta - \+X_{[i,\bcdot]} \+\beta - \mu_{\star,\,i}
    \right)^2
    \,.
    \numberthis
\end{align*}
The integral simplifies to
\begin{gather}
\int_{-\infty}^\infty \p_\text{true}(\tilde{y}_i) \log \p_\Msk(\tilde{y}_i|\+y) \diff \tilde{y}_i
=
\+\varepsilon^\transp \+A_{\Msk,i} \+\varepsilon + \+b_{\Msk,i}^\transp \+\varepsilon + c_{\Msk,i}\,,
\end{gather}
where
\begin{align}
\+A_{\Msk,i} &= - \frac{1}{2 \big(1 + \+P_{\Msk [i,i]} \big) \tau^2}
    \+P_{\Msk [i,\bcdot]}^\transp \+P_{\Msk [i,\bcdot]}
\\
\+b_{\Msk,i} &= - \frac{1}{\big(1 + \+P_{\Msk [i,i]} \big) \tau^2}
    \+P_{\Msk [i,\bcdot]}^\transp \left(
        \+P_{\Msk [i,\bcdot]} \+X \+\beta - \+X_{[i,\bcdot]} \+\beta - \mu_{\star,\,i}
    \right)
\\
c_{\Msk,i} &= - \frac{1}{2 \big(1 + \+P_{\Msk [i,i]} \big) \tau^2}
    \left(
        \left(
            \+P_{\Msk [i,\bcdot]} \+X \+\beta - \+X_{[i,\bcdot]} \+\beta - \mu_{\star,\,i}
        \right)^2
        + \sigma_{\star,\,i}^2
    \right)
    \nonumber\\&\qquad\quad
    - \frac{1}{2} \log\left(2 \pi \big(1 + \+P_{\Msk [i,i]} \big) \tau^2 \right)
    \,.
\end{align}
Let diagonal matrix
\begin{align}
\+D_{\Msk} &= \left(\left(\+P_{\Msk} \odot \eye\right) + \eye\right)^{-1}\,,
\intertext{where $\odot$ is the Hadamard (or element-wise) product, so that}
\left[\+D_{\Msk}\right]_{[i,i]} &= \left(\+P_{\Msk [i,i]} + 1\right)^{-1}
\nonumber
\\
&= \left(
        \+X_{\Msk [i,\bcdot]}
        (\+X_{\Msk}^\transp\+X_{\Msk})^{-1}
        \+X_{\Msk [i,\bcdot]}^\transp
        + 1 \right)^{-1}
\end{align}
for $i=1,2,\dots,n$.
Now $\elpdC{\Mk}{\y}$ can be written as
\begin{gather}
\elpdC{\Mk}{\y} = \sum_{i=1}^n \int_{-\infty}^\infty \p_\text{true}(\tilde{y}_i) \log \p_\Msk(\tilde{y}_i|\+y) \diff \tilde{y}_i
=
\+\varepsilon^\transp \+A_{\Msk} \+\varepsilon + \+b_{\Msk}^\transp \+\varepsilon + c_{\Msk}\,
\end{gather}
where
\begingroup
\allowdisplaybreaks
\begin{align*}
\+A_{\Msk} &= \sum_{i=1}^n \+A_{\Msk,i} \\
    &= - \frac{1}{2 \tau^2} \+P_{\Msk} \+D_{\Msk} \+P_{\Msk}
    \,,
    \numberthis
\\
\+b_{\Msk} &= \sum_{i=1}^n \+b_{\Msk,i}
\\
    &= - \frac{1}{\tau^2} \+P_{\Msk} \+D_{\Msk}\left( \+P_{\Msk} \+X \+\beta - \+X \+\beta - \+\mu_{\star} \right)
\\
    &= - \frac{1}{\tau^2} \Big(
        \+P_{\Msk} \+D_{\Msk} \left(\+P_{\Msk} - \eye \right) \+X \+\beta
        - \+P_{\Msk} \+D_{\Msk} \+\mu_{\star}
    \Big)
    \,,
    \numberthis
\\
c_{\Msk} &= \sum_{i=1}^n c_{\Msk,i}
\\
    &= - \frac{1}{2 \tau^2} \left(
    \Big( \left(\+P_{\Msk} - \eye \right) \+X \+\beta - \+\mu_{\star} \Big)^\transp
        \+D_{\Msk}
        \Big( \left(\+P_{\Msk} - \eye \right) \+X \+\beta - \+\mu_{\star} \Big)
    + \+\sigma_\star^\transp \+D_{\Msk} \+\sigma_\star
\right)
\\
 &\quad- \frac{n}{2} \log\left(2 \pi \tau^2 \right)  +\frac{1}{2} \log \prod_{i=1}^n \+D_{\Msk [i,i]}
\\
    &= - \frac{1}{2 \tau^2} \Bigg(
        \+\beta^\transp \+X^\transp \left(\+P_{\Msk} - \eye \right)^\transp \+D_{\Msk} \left(\+P_{\Msk} - \eye \right) \+X \+\beta
        \\&\qquad\qquad\quad
        -2 \+\beta^\transp \+X^\transp \left(\+P_{\Msk} - \eye \right)^\transp \+D_{\Msk} \+\mu_{\star}
        \\&\qquad\qquad\quad
        + \+\mu_{\star}^\transp \+D_{\Msk} \+\mu_{\star}
        + \+\sigma_\star^\transp \+D_{\Msk} \+\sigma_\star
    \Bigg)
\\
 &\quad- \frac{n}{2} \log\left(2 \pi \tau^2 \right)  +\frac{1}{2} \log \prod_{i=1}^n \+D_{\Msk [i,i]}
\,.
\numberthis
\end{align*}
\endgroup
Furthermore, we have
\begin{align*}
\left(\+P_{\Msk} - \eye \right) \+X \+\beta
&= \left(\+P_{\Msk} - \eye \right) \left( \+X_{[\bcdot,\Msk]} \+\beta_{\Msk} + \+X_{[\bcdot,-\Msk]} \+\beta_{-\Msk} \right)
\\
&= \+P_{\Msk} \+X_{[\bcdot,\Msk]} \+\beta_{\Msk}
    - \+X_{[\bcdot,\Msk]} \+\beta_{\Msk}
    + \left(\+P_{\Msk} - \eye \right) \+X_{[\bcdot,-\Msk]} \+\beta_{-\Msk}
\\
&= \+X_{[\bcdot,\Msk]} (\+X_{[\bcdot,\Msk]}^\transp \+X_{[\bcdot,\Msk]})^{-1}\+X_{[\bcdot,\Msk]}^\transp
    \+X_{[\bcdot,\Msk]} \+\beta_{\Msk}
    - \+X_{[\bcdot,\Msk]} \+\beta_{\Msk}
    + \left(\+P_{\Msk} - \eye \right) \+X_{[\bcdot,-\Msk]} \+\beta_{-\Msk}
\\
&= \+X_{[\bcdot,\Msk]} \+\beta_{\Msk}
    - \+X_{[\bcdot,\Msk]} \+\beta_{\Msk}
    + \left(\+P_{\Msk} - \eye \right) \+X_{[\bcdot,-\Msk]} \+\beta_{-\Msk}
\\
&= \left(\+P_{\Msk} - \eye \right) \+X_{[\bcdot,-\Msk]} \+\beta_{-\Msk}
\,.
\numberthis
\end{align*}
Now we can formulate $\elpdC{\Mk}{\y}$ and further $\elpdC{\Mfd}{\y}$ in the following sections.

\subsubsection{Elpd for One Model}
\label{app_sec_analytic_elpd_m}

In this section, we formulate $\elpdC{\Mk}{\y}$ for model $\Mk \in \{\Mfa, \Mfb\}$ in the problem setting defined in Appendix~\ref{app_sec_norm_lin_reg_case_study}.
Let $\+P_{\Msk}$, a function of $\+X_{[\bcdot,\Msk]}$, be the following orthogonal projection matrix:
\begin{align}
    \+P_{\Msk} &= \+X_{[\bcdot,\Msk]} \left(\+X_{[\bcdot,\Msk]}^\transp \+X_{[\bcdot,\Msk]}\right)^{-1}\+X_{[\bcdot,\Msk]}^\transp
\,.
\end{align}
Let diagonal matrix $\+D_{\Msk}$, a function of $\+X_{[\bcdot,\Msk]}$, be
\begin{align}
\+D_{\Msk} &= \left(\left(\+P_{\Msk} \odot \eye\right) + \eye\right)^{-1}\,,
\intertext{where $\odot$ is the Hadamard (or element-wise) product, so that}
\left[\+D_{\Msk}\right]_{[i,i]} &= \left(\+P_{\Msk [i,i]} + 1\right)^{-1}
= \left(
        \+X_{\Msk [i,\bcdot]}
        (\+X_{\Msk}^\transp\+X_{\Msk})^{-1}
        \+X_{\Msk [i,\bcdot]}^\transp
        + 1 \right)^{-1}
\end{align}
for $i=1,2,\dots,n$.
Let
\begin{align}
\label{app_eq_yhat_def}
\hat{\+y}_{-\Msk} = \+X_{[\bcdot,-\Msk]} \+\beta_{-\Msk}
\,.
\end{align}
Following the derivations in Appendix~\ref{app_sec_analytic_elpd}, we get the following quadratic form for $\elpdC{\Mk}{\y}$:
\begin{gather}
\label{app_eq_analytic_elpd_m}
\elpdC{\Mk}{\y}
= \+\varepsilon^\transp \+A_{\Msk} \+\varepsilon + \+b_{\Msk}^\transp \+\varepsilon + c_{\Msk}
\,,
\end{gather}
where
\begin{align}
\+A_{\Msk} &= \frac{1}{\tau^2} \+A_{\Msk,1}
\,,
\\
\+b_{\Msk} &= \frac{1}{\tau^2} \left(
    \+B_{\Msk,1} \hat{\+y}_{-\Msk}
    + \+B_{\Msk,2} \+\mu_{\star}
    \right)
\,,
\\
c_{\Msk} &= \frac{1}{\tau^2} \left(
        \hat{\+y}_{-\Msk}^\transp \+C_{\Msk,1} \hat{\+y}_{-\Msk}
        + \hat{\+y}_{-\Msk}^\transp \+C_{\Msk,2} \+\mu_{\star}
        + \+\mu_{\star}^\transp \+C_{\Msk,3} \+\mu_{\star}
        + \+\sigma_\star^\transp \+C_{\Msk,3} \+\sigma_\star
    \right)
    + c_{\Msk,4}
\,,
\end{align}
where each matrix $\+A_{\Msk,\bcdot}$\,, $\+B_{\Msk,\bcdot}$\,, and $\+C_{\Msk,\bcdot}$ and scalar $c_{\Msk,4}$ are functions of $\+X_{[\bcdot,\Msk]}$:
\begin{align}
\+A_{\Msk,1} &= - \frac{1}{2} \+P_{\Msk} \+D_{\Msk} \+P_{\Msk}
\,,
\\
\+B_{\Msk,1} &= - \+P_{\Msk} \+D_{\Msk} \left(\+P_{\Msk} - \eye \right)
\,,
\\
\+B_{\Msk,2} &= \+P_{\Msk} \+D_{\Msk}
\,,
\\
\+C_{\Msk,1} &= - \frac{1}{2} \left(\+P_{\Msk} - \eye \right) \+D_{\Msk} \left(\+P_{\Msk} - \eye \right)
\,,
\\
\+C_{\Msk,2} &= \left(\+P_{\Msk} - \eye \right) \+D_{\Msk}
\,,
\\
\+C_{\Msk,3} &= - \frac{1}{2} \+D_{\Msk}
\,,
\\
c_{\Msk,4} &= \frac{1}{2} \log \prod_{i=1}^n \+D_{\Msk [i,i]} - \frac{n}{2} \log\left(2 \pi \tau^2 \right)
\,.
\end{align}

\subsubsection{Elpd for the Difference}
\label{app_sec_analytic_elpd_d}

In this section, we formulate $\elpdC{\Mfd}{\y}$ in the problem setting defined in Appendix~\ref{app_sec_norm_lin_reg_case_study}.
Following the derivations in Appendix~\ref{app_sec_analytic_elpd_m} by applying Equation~\eqref{app_eq_analytic_elpd_m} for models \Mfa{} and \Mfb{}, we get the following quadratic form for the difference:
\begin{align}
\label{app_eq_analytic_elpd_d}
\elpdC{\Mfd}{\y}
= \+\varepsilon^\transp \+A_\Mfsd \+\varepsilon + \+b_\Mfsd^\transp \+\varepsilon + c_\Mfsd
\,,
\end{align}
where
\begin{align*}
\+A_\Mfsd
&= \frac{1}{\tau^2} \+A_{\Mfsd,1}
\,,
\numberthis
\\
\+b_\Mfsd
&= \frac{1}{\tau^2} \left(
    \+B_{\Mfsa,1} \hat{\+y}_{-\Mfsa}
    - \+B_{\Mfsb,1} \hat{\+y}_{-\Mfsb}
    + \+B_{\Mfsd,2} \+\mu_{\star}
\right)
\,,
\numberthis
\\
c_{\Mfsd} &= \frac{1}{\tau^2} \Bigg(
     \hat{\+y}_{-\Mfsa}^\transp \+C_{\Mfsa,1}  \hat{\+y}_{-\Mfsa}
    -  \hat{\+y}_{-\Mfsb}^\transp \+C_{\Mfsb,1}  \hat{\+y}_{-\Mfsb}
    \\&\qquad\quad
    +  \hat{\+y}_{-\Mfsa}^\transp \+C_{\Mfsa,2} \+\mu_{\star}
    -  \hat{\+y}_{-\Mfsb}^\transp \+C_{\Mfsb,2} \+\mu_{\star}
    \\&\qquad\quad
    + \+\mu_{\star}^\transp \+C_{\Mfsd,3} \+\mu_{\star}
    + \+\sigma_\star^\transp \+C_{\Mfsd,3} \+\sigma_\star
    \Bigg)
    + c_{\Mfsd,4}
\,.
\numberthis
\end{align*}
where matrices $\+A_{\Mfsd,1}$\,, $\+B_{\Mfsd,2}$\,, and $\+C_{\Mfsd,3}$ and scalar $c_{\Mfsd,4}$ are functions of $\+X$:
\begin{align}
\+A_{\Mfsd,1} 
    &= - \frac{1}{2} \left( \+P_{\Mfsa} \+D_{\Mfsa} \+P_{\Mfsa} - \+P_{\Mfsb} \+D_{\Mfsb} \+P_{\Mfsb} \right)
\,,
\\
\+B_{\Mfsd,2} 
    &= \+P_{\Mfsa} \+D_{\Mfsa} - \+P_{\Mfsb} \+D_{\Mfsb}
\,,
\\
\+C_{\Mfsd,3} 
    &= - \frac{1}{2} \left( \+D_{\Mfsa} - \+D_{\Mfsb}\right)
\,,
\\
c_{\Mfsd,4} 
    &= \frac{1}{2} \log \left( \prod_{i=1}^n \frac{\+D_{\Mfsa,[i,i]}}{\+D_{\Mfsb,[i,i]}} \right)
\,,
\end{align}
and matrices $\+B_{\Msk,1}$, $\+C_{\Msk,1}$, and $\+C_{\Msk,2}$, functions of $\+X_{[\bcdot,\Msk]}$, for $\Mk \in \{\Mfa, \Mfb\}$ are defined in Appendix~\ref{app_sec_analytic_elpd_m}:
\begin{align}
\+B_{\Msk,1} &= - \+P_{\Msk} \+D_{\Msk} \left(\+P_{\Msk} - \eye \right)
\,,
\\
\+C_{\Msk,1} &= - \frac{1}{2} \left(\+P_{\Msk} - \eye \right) \+D_{\Msk} \left(\+P_{\Msk} - \eye \right)
\,,
\\
\+C_{\Msk,2} &= \left(\+P_{\Msk} - \eye \right) \+D_{\Msk}
\,.
\end{align}
It can be seen that all these parameters do not depend on the shared covariate effects, that it is the effects $\beta_i$ that are included in both $\+\beta_\Mfsa$ and $\+\beta_\Mfsb$.


\subsection{LOO-CV Estimate}
\label{app_sec_analytic_loocv}

In this section, we present the analytic form for $\elpdHatC{\Mk}{\y}$ for model $\Mk \in \{\Mfa, \Mfb\}$.
Restating from the problem statement in the beginning of Appendix~\ref{app_sec_norm_lin_reg_case_study}, the likelihood for model $\Mk$ is formalised as
\begin{gather}
    \+y \Big| \widehat{\+\beta}_\Msk, \+X_{[\bcdot,\Msk]}, \tau^2 \sim \operatorname{N}\left(\+X_{[\bcdot,\Msk]} \widehat{\+\beta}_\Msk, \tau^2\eye\right).
\end{gather}
Analogous to the posterior predictive distribution for the full data as presented in Equation~\eqref{app_eq_lin_reg_full_pred_dist}, with uniform prior distribution, the LOO-CV posterior predictive distribution for observation $i$ follows a normal distribution
\begin{gather}
    y_i \Big| \+y_{-i}, \+X_{[-i,\Msk]}, \+X_{[i,\Msk]}, \tau^2 \sim \operatorname{N}\left(
        \widetilde{\mu}_{\Msk\,i}, \widetilde{\sigma}_{\Msk\,i}\right)^2,
\end{gather}
where
\begin{align}
    \widetilde{\mu}_{\Msk\,i}
    &= \+X_{[i,\Msk]}
        \left(\+X_{[-i,\Msk]}^\transp \+X_{[-i,\Msk]}\right)^{-1}
        \+X_{[-i,\Msk]}^\transp \+y_{-i}
\,,
\\
    \widetilde{\sigma}_{\Msk\,i}^2
    &= \left(1 + \+X_{[i,\Msk]}
        \left(\+X_{[-i,\Msk]}^\transp \+X_{[-i,\Msk]}\right)^{-1}
        \+X_{[i,\Msk]}^\transp \right) \tau^2
\,.
\end{align}
We have
\begin{gather}
    \+y_{-i} = \+X_{[-i,\bcdot]} \+\beta + \+\varepsilon_{-i} = \+X_{[-i,\Msk]}\+\beta_{\Msk} + \+X_{[-i,-\Msk]}\+\beta_{-\Msk} + \+\varepsilon_{-i}.
\end{gather}
Let vector
\begin{gather}
    \+v(\Mk,i) =
        \+X_{[\bcdot,\Msk]}
        (\+X_{[-i,\Msk]}^\transp\+X_{[-i,\Msk]})^{-1}
        \+X_{[i,\Msk]}^\transp.
\end{gather}
The predictive distribution parameters can be formulated as
\begin{align*}
    \widetilde{\mu}_{\Msk\,i} &= \+v(\Mk,i)_{-i}^\transp \+y_{-i}
    \\
        &= \+v(\Mk,i)_{-i}^\transp \+\varepsilon_{-i}
        + \+v(\Mk,i)_{-i}^\transp \+X_{[-i,\Msk]}\+\beta_{\Msk}
        + \+v(\Mk,i)_{-i}^\transp \+X_{[-i,-\Msk]} \+\beta_{-\Msk}
    \\
        &= \+v(\Mk,i)_{-i}^\transp \+\varepsilon_{-i}
        + \+X_{[i,\Msk]} (\+X_{[-i,\Msk]}^\transp\+X_{[-i,\Msk]})^{-1} \+X_{[-i,\Msk]}^\transp \+X_{[-i,\Msk]}\+\beta_{\Msk}
        + \+v(\Mk,i)_{-i}^\transp \+X_{[-i,-\Msk]} \+\beta_{-\Msk}
    \\
        &= \+v(\Mk,i)_{-i}^\transp \+\varepsilon_{-i}
        + \+X_{[i,\Msk]} \+\beta_{\Msk}
        + \+v(\Mk,i)_{-i}^\transp \+X_{[-i,-\Msk]} \+\beta_{-\Msk}
    \numberthis
\intertext{and}
    \widetilde{\sigma}_{\Msk\,i}^2 &= \left( v(\Mk,i)_{i} + 1\right)\tau^2.
    \numberthis
\end{align*}
Let vector $\+w(\Mk,i)$ denote $\+v(\Mk,i)$ where the $i$th element is replaced with $-1$:
\begin{gather}
w(\Mk,i)_j = \begin{dcases} -1, & \text{if } j=i \\ v(\Mk,i)_{j} & \text{if } j \neq i. \end{dcases}
\end{gather}
Now
\begin{align}
\+w(\Mk,i)^\transp \+\varepsilon &= \+v(\Mk,i)_{-i}^\transp \+\varepsilon_{-i} - \varepsilon_i
\,,
\\
\+w(\Mk,i)^\transp \+X_{[\bcdot,-\Msk]} &= \+v(\Mk,i)_{-i}^\transp \+X_{[-i,-\Msk]} - X_{[i,-\Msk]}.
\end{align}
The LOO-CV term for observation $i$ is
\begin{align}
    \elpdHatiC{\Mk}{\y}{i} &= \log \p(y_i |  \+y_{-i}, \+X_{[-i,\Msk]}, \+X_{[i,\Msk]}, \tau^2)
    \nonumber
    \\
    &= -\frac{1}{2 \widetilde{\sigma}_{\Msk\,i}^2} (y_i - \widetilde{\mu}_{\Msk\,i})^2
        - \frac{1}{2} \log(2\pi \widetilde{\sigma}_{\Msk\,i}^2).
\end{align}
As
\begin{align*}
    y_i - \widetilde{\mu}_{\Msk\,i}
        &= \+X_{[i,\bcdot]}\+\beta + \varepsilon_i
            - \+v(\Mk,i)_{-i}^\transp \+\varepsilon_{-i}
            - \+X_{[i,\Msk]} \+\beta_{\Msk}
            - \+v(\Mk,i)_{-i}^\transp \+X_{[-i,-\Msk]} \+\beta_{-\Msk}
    \\ \qquad
        &= \+X_{[i,\Msk]}\+\beta_{\Msk} + \+X_{[i,-\Msk]} \+\beta_{-\Msk}
            + \varepsilon_i
            - \+v(\Mk,i)_{-i}^\transp \+\varepsilon_{-i}
        \\&\quad
            - \+X_{[i,\Msk]} \+\beta_{\Msk}
            - \+v(\Mk,i)_{-i}^\transp \+X_{[-i,-\Msk]} \+\beta_{-\Msk}
    \\ \qquad
        &= - \left( \+v(\Mk,i)_{-i}^\transp \+\varepsilon_{-i} - \varepsilon_i \right)
            - \left( \+v(\Mk,i)_{-i}^\transp \+X_{[-i,-\Msk]} \+\beta_{-\Msk} - \+X_{[i,-\Msk]} \+\beta_{-\Msk} \right)
    \\ \qquad
        &= - \left( \+w(\Mk,i)^\transp \+\varepsilon
            + \+w(\Mk,i)^\transp \+X_{[\bcdot,-\Msk]} \+\beta_{-\Msk} \right),
    \numberthis
\end{align*}
we get
\begin{align*}
    \left( y_i - \widetilde{\mu}_{\Msk\,i} \right)^2
    &=   \+\varepsilon^\transp \+w(\Mk,i)
            \+w(\Mk,i)^\transp \+\varepsilon
    \\&\quad
        + 2 \+\beta_{-\Msk}^\transp \+X_{[\bcdot,-\Msk]}^\transp \+w(\Mk,i)
                \+w(\Mk,i)^\transp \+\varepsilon
    \\&\quad
        + \+\beta_{-\Msk}^\transp \+X_{[\bcdot,-\Msk]}^\transp \+w(\Mk,i)
                \+w(\Mk,i)^\transp \+X_{[\bcdot,-\Msk]} \+\beta_{-\Msk}
    \numberthis
\end{align*}
and
\begin{gather}
    \elpdHatiC{\Mk}{\y}{i}
    = \+\varepsilon^\transp \widetilde{\+A}_{\Msk\,i} \+\varepsilon
        + \widetilde{\+b}_{\Msk\,i}^\transp \+\varepsilon
        + \widetilde{c}_{\Msk\,i},
\end{gather}
where
\begin{align}
    \widetilde{\+A}_{\Msk\,i} &= -\frac{1}{2 \left(v(\Mk,i)_{i} +1\right)\tau^2}
        \+w(\Mk,i) \+w(\Mk,i)^\transp
    \,,
    \\
    \widetilde{\+b}_{\Msk\,i} &= -\frac{1}{\left(v(\Mk,i)_{i} +1\right)\tau^2}
        \+w(\Mk,i) \+w(\Mk,i)^\transp \+X_{[\bcdot,-\Msk]} \+\beta_{-\Msk}
    \,,
    \\
    \widetilde{c}_{\Msk\,i} &= -\frac{1}{2 \left(v(\Mk,i)_{i} +1\right)\tau^2}
            \+\beta_{-\Msk}^\transp \+X_{[\bcdot,-\Msk]}^\transp \+w(\Mk,i)
            \+w(\Mk,i)^\transp \+X_{[\bcdot,-\Msk]} \+\beta_{-\Msk}
    \nonumber\\&\quad
        - \frac{1}{2} \log\left(2\pi \left(v(\Mk,i)_{i}+1\right)\tau^2\right).
\end{align}
From this, by summing over all $i=1,2,\dots,n$, we get the LOO-CV approximation for model \Mk{}.
We present $\elpdHatC{\Mk}{\y}$ and further $\elpdHatC{\Mfd}{\y}$ in the following sections.
\subsubsection{LOO-CV Estimate for One Model}
\label{app_sec_analytic_loocv_m}
In this section, we formulate $\elpdHatC{\Mk}{\y}$ for model $\Mk \in \{\Mfa, \Mfb\}$ in the problem setting defined in Appendix~\ref{app_sec_norm_lin_reg_case_study}.
Let matrix $\widetilde{\+P}_{\Msk}$, a function of $\+X_{[\bcdot,\Msk]}$, have the following elements:
\begin{align}
\left[\widetilde{\+P}_{\Msk}\right]_{[i,j]} &=
\begin{dcases}
        -1
    , & \text{when } i = j,
    \\
        \+X_{[j,\Msk]}
        (\+X_{[-i,\Msk]}^\transp\+X_{[-i,\Msk]})^{-1}
        \+X_{[i,\Msk]}^\transp
    , & \text{when } i \neq j
    \,,
\end{dcases}
\intertext{and let diagonal matrix $\widetilde{\+D}_{\Msk}$, a function of $\+X_{[\bcdot,\Msk]}$, have the following elements:}
\left[\widetilde{\+D}_{\Msk}\right]_{[i,i]} &=
    \left(
        \+X_{[i,\Msk]}
        (\+X_{[-i,\Msk]}^\transp\+X_{[-i,\Msk]})^{-1}
        \+X_{[i,\Msk]}^\transp
        + 1
    \right)^{-1},
\end{align}
where $i,j = 1,2,\dots,n$.
Let
\begin{align}
\hat{\+y}_{-\Msk} &= \+X_{[\bcdot,-\Msk]} \+\beta_{-\Msk}
\,.
\end{align}
Following the derivations in Appendix~\ref{app_sec_analytic_loocv}, we obtain the following quadratic form for $\elpdHatC{\Mk}{\y}$:
\begin{gather}
    \label{app_eq_analytic_elpdhat_m}
    \elpdHatC{\Mk}{\y} =
        \+\varepsilon^\transp \widetilde{\+A}_{\Msk} \+\varepsilon
            + \widetilde{\+b}_{\Msk}^\transp \+\varepsilon
            + \widetilde{c}_{\Msk},
\end{gather}
where
\begin{align}
\widetilde{\+A}_{\Msk} 
    &= \frac{1}{\tau^2} \widetilde{\+A}_{\Msk,1}
    \,,
    \\
\widetilde{\+b}_{\Msk} 
    &= \frac{1}{\tau^2} \widetilde{\+B}_{\Msk,1} \hat{\+y}_{-\Msk}
    \,,
    \\
\widetilde{c}_{\Msk} 
    &= \frac{1}{\tau^2}
        \hat{\+y}_{-\Msk}^\transp \widetilde{\+C}_{\Msk,1} \hat{\+y}_{-\Msk}
        + \widetilde{c}_{\Msk,4}
    \,,
\end{align}
where matrices $\widetilde{\+A}_{\Msk,1}$, $\widetilde{\+B}_{\Msk,1}$, and $\widetilde{\+C}_{\Msk,1}$ and scalar $\widetilde{c}_{\Msk,4}$ are functions of $\+X_{[\bcdot,\Msk]}$:
\begin{align}
\widetilde{\+A}_{\Msk,1} &= -\frac{1}{2} \widetilde{\+P}_\Msk^\transp \widetilde{\+D}_{\Msk} \widetilde{\+P}_\Msk
\,,
\\
\widetilde{\+B}_{\Msk,1} &= - \widetilde{\+P}_\Msk^\transp \widetilde{\+D}_{\Msk} \widetilde{\+P}_\Msk
\,,
\\
\widetilde{\+C}_{\Msk,1} &= -\frac{1}{2} \widetilde{\+P}_\Msk^\transp \widetilde{\+D}_{\Msk} \widetilde{\+P}_\Msk
\,,
\\
\widetilde{c}_{\Msk,4} &= \frac{1}{2} \log\left( \prod_{i=1}^n  \widetilde{\+D}_{\Msk[i,i]}  \right) - \frac{n}{2} \log\left(2\pi\tau^2 \right)
\,.
\end{align}

%
\subsubsection{LOO-CV Estimate for the Difference}
\label{app_sec_analytic_loocv_d}
In this section, we formulate $\elpdHatC{\Mfd}{\y}$ in the problem setting defined in Appendix~\ref{app_sec_norm_lin_reg_case_study}.
Following the derivations in Appendix~\ref{app_sec_analytic_loocv_m} by applying Equation~\eqref{app_eq_analytic_elpdhat_m} for models \Mfa{} and \Mfb{}, we get the following quadratic form for the difference:
\begin{gather}
    \label{app_eq_analytic_elpdhat_d}
    \elpdHatC{\Mfd}{\y}
    = \+\varepsilon^\transp \widetilde{\+A}_{\Mfsd} \+\varepsilon + \widetilde{\+b}_{\Mfsd}^\transp \+\varepsilon + \widetilde{c}_{\Mfsd},
\end{gather}
where
\begin{align}
\widetilde{\+A}_{\Mfsd}
    &= \frac{1}{\tau^2} \widetilde{\+A}_{\Mfsd,1}
    \,,
\\
\widetilde{\+b}_{\Mfsd}
    &= \frac{1}{\tau^2} \left( \widetilde{\+B}_{\Mfsa,1} \hat{\+y}_{-\Mfsa} - \widetilde{\+B}_{\Mfsb,1} \hat{\+y}_{-\Mfsb} \right)
    \,,
\\
\widetilde{c}_{\Mfsd}
    &= \frac{1}{\tau^2}  \left(
        \hat{\+y}_{-\Mfsa}^\transp \widetilde{\+C}_{\Mfsa,1} \hat{\+y}_{-\Mfsa}
        - \hat{\+y}_{-\Mfsb}^\transp \widetilde{\+C}_{\Mfsb,1} \hat{\+y}_{-\Mfsb}
     \right)
    + \widetilde{c}_{\Mfsd,4}
\,,
\end{align}
where matrix $\widetilde{\+A}_{\Mfsd,1}$ and scalar $\widetilde{c}_{\Mfsd,4}$ are functions of $\+X$:
\begin{align}
\widetilde{\+A}_{\Mfsd,1} 
    &= -\frac{1}{2} \left( \widetilde{\+P}_{\Mfsa}^\transp \widetilde{\+D}_{\Mfsa} \widetilde{\+P}_{\Mfsa} - \widetilde{\+P}_{\Mfsb}^\transp \widetilde{\+D}_{\Mfsb} \widetilde{\+P}_{\Mfsb} \right)
    \,,
\\
\widetilde{c}_{\Mfsd,4} 
    &= \frac{1}{2} \log\left(\prod_{i=1}^n \frac{\widetilde{\+D}_{\Mfsa[i,i]}}{\widetilde{\+D}_{\Mfsb[i,i]}} \right)
\,,
\end{align}
and matrices $\widetilde{\+B}_{\Msk,1}$ and $\widetilde{\+C}_{\Msk,1}$, functions of $\+X_{[\bcdot,\Msk]}$, for $\Mk \in \{\Mfa, \Mfb\}$ are defined in Appendix~\ref{app_sec_analytic_loocv_m}:
\begin{align}
 \widetilde{\+B}_{\Msk,1} &= - \widetilde{\+P}_\Msk^\transp \widetilde{\+D}_{\Msk} \widetilde{\+P}_\Msk
 \,,
 \\
 \widetilde{\+C}_{\Msk,1} &= -\frac{1}{2} \widetilde{\+P}_\Msk^\transp \widetilde{\+D}_{\Msk} \widetilde{\+P}_\Msk
 \,.
\end{align}
It can be seen that all these parameters do not depend on the shared covariate effects, that it is the effects $\beta_i$ that are included in both $\+\beta_\Mfsa$ and $\+\beta_\Mfsb$.
\subsubsection{Additional Properties for the Parameters of the LOO-CV Estimate}
\label{app_sec_analytic_loocv_additional_properties}
In this section, we present some additional properties for the matrix parameters $\widetilde{\+P}_{\Msk}$ and $\widetilde{\+D}_{\Msk}$ for $\Mk \in \{\Mfa, \Mfb\}$ defined in Appendix~\ref{app_sec_analytic_loocv_m} and for $\widetilde{\+A}_{\Mfsd}$ defined in Appendix~\ref{app_sec_analytic_loocv_d}.
Trivially, product $\widetilde{\+P}_\Msk^\transp \widetilde{\+D}_{\Msk} \widetilde{\+P}_\Msk$ is symmetric.
Being a sum of two such matrices, it is clear that matrix $\widetilde{\+A}_{\Mfsd}$ is also symmetric.
Element $(i,j)$, $i,j=1,2,\dots,n$, of the product $\widetilde{\+P}_\Msk^\transp \widetilde{\+D}_{\Msk} \widetilde{\+P}_\Msk$ can be written as
\begin{gather}
 \label{app_eq_PtT_Pt_m}
 \left[\widetilde{\+P}_\Msk^\transp \widetilde{\+D}_{\Msk} \widetilde{\+P}_\Msk\right]_{[i,j]} =
    \begin{dcases}
    \sum_{p\neq\{i\}} \frac{v(\Mk,p)_i^2}{v(\Mk,p)_p +1}
        + \frac{1}{v(\Mk,i)_i + 1},
        &\text{when } i=j,
    \\
        \sum_{p\neq\{i,j\}}
            \frac{v(\Mk,p)_i v(\Mk,p)_j}{v(\Mk,p)_p +1}
        - \frac{v(\Mk,i)_j}{v(\Mk,i)_i +1}
        - \frac{v(\Mk,j)_i}{v(\Mk,j)_j +1},
        &\text{when } i \neq j,
    \end{dcases}
\end{gather}
where $v(\Mk,a)_b$ follows the definition in Appendix~\ref{app_sec_analytic_loocv}.
Sum of squares of each row in $\widetilde{\+D}_{\Msk}^{1/2}\widetilde{\+P}_{\Msk}$ sum up to 1:
\begin{align*}
\sum_{i=1}^n \left[\widetilde{\+D}_{\Msk}^{1/2} \widetilde{\+P}_{\Msk} \right]_{[i,j]}^2
&=
    \frac{
        \+X_{[j,\Msk]}
        \left(\+X_{[-i,\Msk]}^\transp\+X_{[-i,\Msk]}\right)^{-1}
            \+X_{[i,\Msk]}^\transp
        \left(
            \+X_{[j,\Msk]}
            \left(\+X_{[-i,\Msk]}^\transp\+X_{[-i,\Msk]}\right)^{-1}
            \+X_{[i,\Msk]}^\transp
        \right)^\transp
        +1
    }{
        \+X_{[i,\Msk]}
        \left(\+X_{[-i,\Msk]}^\transp\+X_{[-i,\Msk]}\right)^{-1}
        \+X_{[i,\Msk]}^\transp
        + 1
    }
\\
&=
    \frac{
        \+X_{[j,\Msk]}
        \left(\+X_{[-i,\Msk]}^\transp\+X_{[-i,\Msk]}\right)^{-1}
        \left(
            \+X_{[i,\Msk]}^\transp
            \+X_{[i,\Msk]}
        \right)
        \left(\+X_{[-i,\Msk]}^\transp\+X_{[-i,\Msk]}\right)^{-1}
        \+X_{[j,\Msk]}^\transp
        +1
    }{
        \+X_{[i,\Msk]}
        \left(\+X_{[-i,\Msk]}^\transp\+X_{[-i,\Msk]}\right)^{-1}
        \+X_{[i,\Msk]}^\transp
        + 1
    }
\\
&= 1
\,.
\numberthis
\end{align*}
As sum of squares of each row in $\widetilde{\+D}_{\Mfsa}^{1/2} \widetilde{\+P}_{\Mfsa}$ and $\widetilde{\+D}_{\Mfsb}^{1/2} \widetilde{\+P}_{\Mfsb}$ sum up to 1,
trace of $\widetilde{\+A}_{\Mfsd}$ equals to 0:
\begin{align*}
    \operatorname{tr}\left(\widetilde{\+A}_{\Mfsd}\right)
    &= -\frac{1}{2\tau^2} \left( \sum_{i=1}^n \sum_{j=1}^n \left[\widetilde{\+D}_{\Mfsa}^{1/2} \widetilde{\+P}_{\Mfsa}\right]_{[i,j]}^2 -\sum_{i=1}^n \sum_{j=1}^n \left[\widetilde{\+D}_{\Mfsb}^{1/2} \widetilde{\+P}_{\Mfsb}\right]_{[i,j]}^2 \right) \\
    &= -\frac{1}{2\tau^2} \left( n - n \right). \\
    &= 0
    \label{app_trace_a_zero}
    \numberthis
\end{align*}
From this, it can be concluded that the sum of eigenvalues of $\widetilde{\+A}_{\Mfsd}$ is zero and $\widetilde{\+A}_{\Mfsd}$ is indefinite matrix or zero matrix.


\subsection{LOO-CV Error}
\label{app_sec_analytic_error}

In this section, we formulate the error $\elpdHatErrC{\Mfd}{\y} = \elpdHatC{\Mfd}{\y} - \elpdC{\Mfd}{\y}$ in the problem setting defined in Appendix~\ref{app_sec_norm_lin_reg_case_study}.
Following the derivations in Appendix~\ref{app_sec_analytic_elpd_d} and~\ref{app_sec_analytic_loocv_d} by applying Equation~\eqref{app_eq_analytic_elpd_d} and~\eqref{app_eq_analytic_elpdhat_d}, we get the following quadratic form for the error:
\begin{align}
    \label{app_eq_analytic_elpdhat_error_d}
    \elpdHatErrC{\Mfd}{\y}
    &= \+\varepsilon^\transp \+A_\mathrm{err} \+\varepsilon + \+b_\mathrm{err}^\transp \+\varepsilon + c_\mathrm{err}
    \,,
\end{align}
where
\begin{align*}
    \+A_\mathrm{err}
    &=
    \frac{1}{\tau^2} \+A_{\mathrm{err},1}
\,,
\numberthis
\\
    \+b_\mathrm{err}
    &= \frac{1}{\tau^2} \Big(
        \+B_{\mathrm{err},\Mfsa,1} \hat{\+y}_{-\Mfsa}
        - \+B_{\mathrm{err},\Mfsb,1} \hat{\+y}_{-\Mfsb}
        - \+B_{\Mfsd,2} \+\mu_\star
   \Big)
\,,
\numberthis \label{app_eq_error_param_b_err}
\\
    c_\mathrm{err}
    &=
    \frac{1}{\tau^2} \Bigg(
        \hat{\+y}_{-\Mfsa}^\transp \+C_{\mathrm{err},\Mfsa,1} \hat{\+y}_{-\Mfsa}
        - \hat{\+y}_{-\Mfsb}^\transp \+C_{\mathrm{err},\Mfsb,1} \hat{\+y}_{-\Mfsb}
        \\&\qquad\quad
        -  \hat{\+y}_{-\Mfsa}^\transp \+C_{\Mfsa,2} \+\mu_{\star}
        +  \hat{\+y}_{-\Mfsb}^\transp \+C_{\Mfsb,2} \+\mu_{\star}
        \\&\qquad\quad
        - \+\mu_{\star}^\transp \+C_{\Mfsd,3} \+\mu_{\star}
        - \+\sigma_\star^\transp \+C_{\Mfsd,3} \+\sigma_\star
    \Bigg)
     + c_{\mathrm{err},4}
\,,
\numberthis \label{app_eq_error_param_c_err}
\end{align*}
where matrix $\+A_{\mathrm{err},1}$ and matrices $ \+B_{\mathrm{err},\Mk,1}$ and $\+C_{\mathrm{err},\Mk,1}$ for $\Mk \in \{\Mfa, \Mfb\}$ and scalar $c_{\mathrm{err},4}$ are functions of $\+X$:
\begin{align}
    \+A_{\mathrm{err},1}
    &= \frac{1}{2} \left(
        \+P_{\Mfsa} \+D_{\Mfsa} \+P_{\Mfsa}
        - \widetilde{\+P}_{\Mfsa}^\transp \widetilde{\+D}_{\Mfsa} \widetilde{\+P}_{\Mfsa}
        - \+P_{\Mfsb} \+D_{\Mfsb} \+P_{\Mfsb}
        + \widetilde{\+P}_{\Mfsb}^\transp \widetilde{\+D}_{\Mfsb} \widetilde{\+P}_{\Mfsb}
    \right)
\,,
\\
    \+B_{\mathrm{err},\Msk,1}
    &=
        \+P_{\Msk} \+D_{\Msk} \left(\+P_{\Msk} - \eye \right)
        - \widetilde{\+P}_\Msk^\transp \widetilde{\+D}_{\Msk} \widetilde{\+P}_\Msk
\,,
\\
    \+C_{\mathrm{err},\Msk,1}
    &= \frac{1}{2}
        \left(
            \left(\+P_{\Msk} - \eye \right) \+D_{\Msk} \left(\+P_{\Msk} - \eye \right)
            - \widetilde{\+P}_\Msk^\transp \widetilde{\+D}_{\Msk} \widetilde{\+P}_\Msk
        \right)
\,,
\\
    c_{\mathrm{err},4}
    &= \frac{1}{2} \log \left( \prod_{i=1}^n
        \frac{
            \+D_{\Mfsb,[i,i]} \widetilde{\+D}_{\Mfsa[i,i]}
        }{
            \+D_{\Mfsa,[i,i]} \widetilde{\+D}_{\Mfsb[i,i]}
        }
    \right)
\,,
\end{align}
and matrix $\+C_{\Msk,2}$ for $\Mk \in \{\Mfa, \Mfb\}$ and matrices $\+B_{\Mfsd,2}$ and $\+C_{\Mfsd,3}$, functions of $\+X_{[\bcdot,\Msk]}$, are defined in appendices~\ref{app_sec_analytic_elpd_m} and~\ref{app_sec_analytic_elpd_d} respectively:
\begin{align}
\+C_{\Msk,2} &= \left(\+P_{\Msk} - \eye \right) \+D_{\Msk}
\,,
\\
\+B_{\Mfsd,2} &= \+P_{\Mfsa} \+D_{\Mfsa} - \+P_{\Mfsb} \+D_{\Mfsb}
\,,
\\
\+C_{\Mfsd,3} &= - \frac{1}{2} \left( \+D_{\Mfsa} - \+D_{\Mfsb}\right)
\,.
\end{align}
It can be seen that all these parameters do not depend on the shared covariate effects, that it is the effects $\beta_i$ that are included in both $\+\beta_\Mfsa$ and $\+\beta_\Mfsb$.

\subsection{Reparametrisation as a Sum of Independent Variables} 
\label{app_sec_analytic_as_sum_of_chi2}

By adapting Jacobi's theorem, variables $\elpdC{\Mk}{\y}$, $\elpdC{\Mfd}{\y}$, $\elpdHatC{\Mk}{\y}$, $\elpdHatC{\Md}{\y}$, and $\elpdC{\Mfd}{\y} - \elpdHatC{\Md}{\y}$ for $\Mk \in \{ \Mfsa, \Mfsb\}$, which are all of a quadratic form on $\varepsilon$, can also be expressed as a sum of independent scaled non-central $\chi^2$ distributed random variables with degree one plus a constant.
Let $Z$ denote the variable at hand.
First we write the variable using normalised $\widetilde{\+\varepsilon}=\+\Sigma_\star^{-1/2}(\+\varepsilon - \+\mu_\star)$:
\begin{align}
    Z
    &= \+\varepsilon^\transp \+A \+\varepsilon + \+b^\transp \+\varepsilon + c
    \nonumber
    \\
    &= \widetilde{\+\varepsilon}^\transp \widetilde{\+A} \widetilde{\+\varepsilon} + \widetilde{\+b}^\transp \widetilde{\+\varepsilon} + \widetilde{c}\,,
\end{align}
where
\begin{align}
    \widetilde{\+A} &= \+\Sigma_\star^{1/2} \+A \+\Sigma_\star^{1/2}
    \\
    \widetilde{\+b} &= \+\Sigma_\star^{1/2} \+b + 2  \+\Sigma_\star^{1/2} \+A \+\mu_\star
    \\
    \widetilde{c} &= c + \+b^\transp \+\mu_\star + \+\mu_\star^\transp \+A \+\mu_\star
\,.
\end{align}
Eliminate the linear term $\widetilde{\+b}^\transp \+\varepsilon$ using transformed variable $\+z = \widetilde{\+\varepsilon} + \+r \sim \mathrm{N}\left(\+r ,\eye\right)$, where $\+r$ is any vector satisfying the linear system $2\widetilde{\+A}\+r = \widetilde{\+b}$:
\begin{align*}
    Z
    &= \widetilde{\+\varepsilon}^\transp \widetilde{\+A} \widetilde{\+\varepsilon} + \widetilde{\+b}^\transp \widetilde{\+\varepsilon} + \widetilde{c}
    \\
    &= \left(\+z-\+r\right)^\transp \widetilde{\+A} \left(\+z-\+r\right)
        +\widetilde{\+b}^\transp \left(\+z-\+r\right) + \widetilde{c}
    \\
    &= \+z^\transp\widetilde{\+A}\+z
        - 2 \+r^\transp\widetilde{\+A}\+z
        + \+r^\transp\widetilde{\+A}\+r
        + \widetilde{\+b}^\transp\+z
        - \widetilde{\+b}^\transp\+r
        + \widetilde{c}
    \\
    &= \+z^\transp\widetilde{\+A}\+z
        + (\widetilde{\+b} - 2 \widetilde{\+A}\+r)^\transp\+z
        + \+r^\transp\widetilde{\+A}\+r
        - 2\+r^\transp\widetilde{\+A}\+r
        + \widetilde{c}
    \\
    &= \+z^\transp\widetilde{\+A}\+z
        - \+r^\transp \widetilde{\+A} \widetilde{\+A}^+ \widetilde{\+A} \+r
        + \widetilde{c}
    \\
    &= \+z^\transp\widetilde{\+A}\+z
        - \frac{1}{4}\widetilde{\+b}^\transp \widetilde{\+A}^+ \widetilde{\+b}
        + \widetilde{c}
    \\
    &= \+z^\transp\widetilde{\+A}\+z + d,
    \numberthis
\end{align*}
where $d = \widetilde{c} - \frac{1}{4}\widetilde{\+b}^\transp \widetilde{\+A}^+ \widetilde{\+b}$ and
$\widetilde{\+A}^+$ is the Moore–Penrose inverse of $\widetilde{\+A}$ for which $\widetilde{\+A} \widetilde{\+A}^+ \widetilde{\+A} = \widetilde{\+A}$ in particular.
Let $\widetilde{\+A} = \+Q \+\Lambda \+Q^\transp$ be the spectral decomposition of matrix $\widetilde{\+A}$, where $\+Q$ is an orthogonal matrix and $\+\Lambda$ is a diagonal matrix containing the eigenvalues $\lambda_i, i=1,2,\dots,n$ of matrix $\widetilde{\+A}$. Consider the term $\+z^\transp\widetilde{\+A}\+z$. This can be reformatted to
\begin{gather}
    \+z^\transp\widetilde{\+A}\+z
    = \+z^\transp\+Q \+\Lambda \+Q^\transp\+z
    = \left(\+Q^\transp\+z\right)^\transp \+\Lambda \left(\+Q^\transp\+z\right).
\end{gather}
Let $\+g = \+Q^\transp \+z \sim \mathrm{N}\left(\+\mu_{g}, \+\Sigma_{g}\right)$, where
\begin{align}
    \+\mu_{g} &= \+Q^\transp \E[\+z] =  \+Q^\transp \+r,
\intertext{and}
\+\Sigma_{g} &= \+Q^\transp \Var[\+z] \+Q = \+Q^\transp \+Q = \eye \,.
\end{align}
Now the term $\+z^\transp\widetilde{\+A}\+z$ can be written as a sum of independent scaled non-central $\chi^2$ distributed random variables with degree one:
\begin{gather}
    \+z^\transp\widetilde{\+A}\+z
    = \+g^\transp \+\Lambda \+g
    = \sum_{i \in L_{\neq 0}}^n \lambda_i g_i^2,
\end{gather}
where $L_{\neq 0}$ is the set of indices for which the corresponding eigenvalue $\lambda_i$ is not zero, i.e.\ $L_{\neq 0} = \{i=1,2,\dots,n : \lambda_i \neq 0\}$.
Here, the distribution of each term $g_i, i \in L_{\neq 0}$ can be formulated unambiguously without $\+r$. We have
\begin{gather}
    2\widetilde{\+A}\+r = 2\+Q \+\Lambda \+Q^\transp \+r = \widetilde{\+b}
    \\
    \+\Lambda \+Q^\transp\+r = \frac{1}{2} \+Q^\transp \widetilde{\+b}.
\end{gather}
Now, for $i \in L_{\neq 0}$,
\begin{gather}
    \mu_{g,\,i} = \left[\+Q^\transp \+r\right]_i  = \frac{1}{2 \lambda_i} \left[\+Q^\transp \widetilde{\+b}\right]_i\,.
\end{gather}

\subsection{Moments of the Variables}
\label{app_sec_analytic_moments}

In this section, we present some moments of interest for the given variables of quadratic form on $\varepsilon$.
Let $Z$ denote such a variable:
\begin{align}
Z = \+\varepsilon^\transp \+A \+\varepsilon + \+b^\transp \+\varepsilon + c
\,.
\end{align}

A general form for the moments is presented in Theorem 3.2b3 by~\citet[][p. 54]{mathai_provost_quadratic}. Based on this general form, we formulate the mean, variance, and skewness.
The resulting moments can also be derived by considering the variables as a sum of independent scaled non-central $\chi^2$ distributed random variables as presented in Appendix~\ref{app_sec_analytic_as_sum_of_chi2}.

Let $\+\Sigma_\star^{1/2}\+A\+\Sigma_\star^{1/2} = \+Q \+\Lambda \+Q^\transp$ be the spectral decomposition of matrix $\+\Sigma_\star^{1/2}\+A\+\Sigma_\star^{1/2}$, where $\+Q$ is an orthogonal matrix and $\+\Lambda$ is a diagonal matrix containing the eigenvalues $\lambda_i, i=1,2,\dots,n$ of matrix $\+\Sigma_\star^{1/2}\+A\+\Sigma_\star^{1/2}$.
In particular, for this decomposition it holds that $\left(\+\Sigma_\star^{1/2}\+A\+\Sigma_\star^{1/2}\right)^k = \+Q \+\Lambda^k \+Q^\transp$.
Following the notation in the theorem, we have
\begin{gather}
g_\star^{(k)} = \begin{cases}
    \frac{1}{2}k!\sum_{j=1}^n (2 \lambda_j)^{k+1} + \frac{(k+1)!}{2}\sum_{j=1}^n \+b^{\star\,2}_j (2 \lambda_j)^{k-1}
    & \text{when } k \geq 1,
    \\
\frac{1}{2}\sum_{j=1}^n (2\lambda_j) + c + \+b^\transp \+\mu_\star + \+\mu_\star^\transp \+A \+\mu_\star
    & \text{when } k = 0,
\end{cases}
\end{gather}
where
\begin{gather}
\+b^\star = \+Q^\transp (\+\Sigma_\star^{1/2}\+b + 2 \+\Sigma_\star^{1/2} \+A \+\mu_\star).
\end{gather}
The moments of interest are
\begingroup
\allowdisplaybreaks
\begin{align*}
m_1 &= \E\left[Z\right]
= g_\star^0
\\
&= \sum_{j=1}^n \lambda_j + c + \+b^\transp \+\mu_\star + \+\mu_\star^\transp \+A \+\mu_\star
\\
&= \operatorname{tr}\left(\+\Sigma_\star^{1/2}\+A\+\Sigma_\star^{1/2}\right)
+ c
+ \+b^\transp \+\mu_\star
+ \+\mu_\star^\transp \+A \+\mu_\star
\label{app_eq_analytic_1_moment}
\numberthis
\\
\overline{m}_2 &= \Var\left[Z\right]
= g_\star^1
\\
&= 2 \sum_{j=1}^n \lambda_j^2 + \sum_{j=1}^n b^{\star\,2}_j
\\
&= 2 \operatorname{tr}\left(\left(\+\Sigma_\star^{1/2}\+A\+\Sigma_\star^{1/2}\right)^2\right) + \+b^{\star\,\transp} \+b^\star
\\
&= 2 \operatorname{tr}\left(\left(\+\Sigma_\star^{1/2}\+A\+\Sigma_\star^{1/2}\right)^2\right)
    + (\+\Sigma_\star^{1/2}\+b + 2 \+\Sigma_\star^{1/2} \+A \+\mu_\star)^\transp \+Q \+Q^\transp(\+\Sigma_\star^{1/2}\+b + 2 \+\Sigma_\star^{1/2} \+A \+\mu_\star)
\\
&= 2 \operatorname{tr}\left(\left(\+\Sigma_\star^{1/2}\+A\+\Sigma_\star^{1/2}\right)^2\right)
    + \+b^\transp \+\Sigma_\star \+b + 4 \+b^\transp \+\Sigma_\star \+A \+\mu_\star + 4 \+\mu_\star^\transp \+A \+\Sigma_\star \+A \+\mu_\star
\label{app_eq_analytic_2_moment}
\numberthis
\\
\overline{m}_3 &= \E\Big[\left(Z-\E\left[Z\right]\right)^3\Big]
= g_\star^1
\\
&= 8 \sum_{j=1}^n \lambda_j^3 + 6 \sum_{j=1}^n \+b^{\star\,2}_j \lambda_j
\\
&= 8 \operatorname{tr}\left(\left(\+\Sigma_\star^{1/2}\+A\+\Sigma_\star^{1/2}\right)^3\right)
    + 6 \+b^{\star\,\transp} \+\Lambda \+b^\star
\\
&= 8 \operatorname{tr}\left(\left(\+\Sigma_\star^{1/2}\+A\+\Sigma_\star^{1/2}\right)^3\right)
    + 6 (\+\Sigma_\star^{1/2}\+b + 2 \+\Sigma_\star^{1/2} \+A \+\mu_\star)^\transp \underbrace{\+Q \+\Lambda \+Q^\transp}_{=\+\Sigma_\star^{1/2}\+A\+\Sigma_\star^{1/2}}(\+\Sigma_\star^{1/2}\+b + 2 \+\Sigma_\star^{1/2} \+A \+\mu_\star)
\\
&= 8 \operatorname{tr}\left(\left(\+\Sigma_\star^{1/2}\+A\+\Sigma_\star^{1/2}\right)^3\right)
    + 6 \+b^\transp \+\Sigma_\star \+A \+\Sigma_\star \+b + 24 \+b^\transp \+\Sigma_\star \+A \+\Sigma_\star \+A \+\mu_\star + 24 \+\mu_\star^\transp \+A \+\Sigma_\star \+A \+\Sigma_\star \+A \+\mu_\star
\label{app_eq_analytic_3_moment}
\numberthis
\\
\widetilde{m}_3 &= \E\Big[\left(Z-\E\left[Z\right]\right)^3\Big] \Big/ \Big(\Var\left[Z\right]\Big)^{3/2}
= \overline{m}_3 \Big/ (\overline{m}_2)^{3/2}
\label{app_eq_analytic_3_moment_skew}
\,.
\numberthis
\end{align*}
\endgroup


\subsubsection{Effect of the Model Variance}
\label{app_sec_analytic_effect_of_model_variance}

We consider the effect of the model variance parameter $\tau$ to the moments defined in Appendix~\ref{app_sec_analytic_moments} for the error $\elpdHatErrC{\Mfd}{\y}$.
From the equations~\eqref{app_eq_analytic_1_moment}--\eqref{app_eq_analytic_3_moment} it can be directly seen that
\begin{align}
m_1 &= C_1 \tau^{-2} + C_2
\\
\overline{m}_2 &= C_3 \tau^{-4}
\label{app_eq_m_2_as_func_of_tau}
\\
\overline{m}_3 &= C_4 \tau^{-6}
\label{app_eq_m_3_as_func_of_tau}
\,,
\end{align}
where each $C_i$ denotes a different constant.
Furthermore, it follows from equations~\eqref{app_eq_m_2_as_func_of_tau} and~\eqref{app_eq_m_3_as_func_of_tau} that the skewness $\widetilde{m}_3 = \overline{m}_3 \Big/ (\overline{m}_2)^{3/2}$ does not depend on $\tau$.


\subsubsection{Effect of the Non-Shared Covariates' Effects}
\label{app_sec_analytic_effect_of_model_difference}

We further consider the moments defined in Appendix~\ref{app_sec_analytic_moments} for the error $\elpdHatErrC{\Mfd}{\y}$ when the difference of the models' performances grows via the difference in the effects of the non-shared covariates.
Let $\+\beta_\Delta$ denote the vector of effects of the non-shared covariates that are included either in model $\Mfa$ or $\Mfb$ but not in both of them, let $\+\beta_{-\Mfsa-\Mfsb}$ denote the vector of effects missing in both models and let $\+\beta_{\Msa-\Msb}$ for $(\Ma, \Mb) \in \{(\Mfa, \Mfb), (\Mfb, \Mfa)\}$ denote the vector of effects included in model $\Ma$ but not in $\Mb$.
Furthermore, let $\+X_{[\bcdot,\Delta]}$, $\+X_{[\bcdot,-\Mfsa-\Mfsb]}$, and $\+X_{[\bcdot,\Msa-\Msb]}$ denote the respective data.
In the following, we analyse the moments when the difference of the models is increased by increasing the magnitude in $\+\beta_\Delta$.
Consider a scaling of this vector $\+\beta_\Delta = \beta_r \+\beta_\text{rate} + \+\beta_\text{base}$, where $\beta_r$ is a scalar scaling factor and $\+\beta_\text{rate} \neq 0,  \+\beta_\text{base}$ are some effect growing rate vector and base effect vector respectively.
In the following, we consider the moments of interest as a function of $\beta_r$.

The matrix $\+A_\mathrm{err}$ does not depend on $\+\beta$ and is thus constant with respect to $\beta_r$.
The vector $\hat{\+y}_{-\Mfsa}$, involved in the formulation of the moments, can be expressed as
\begin{align*}
    \hat{\+y}_{-\Msa} &= \+X_{[\bcdot,-\Msa]} \+\beta_{-\Msa}
    \\
    &=\+X_{[\bcdot,\Msb-\Msa]} \+\beta_{\Msb-\Msa} + \+X_{[\bcdot,-\Msa-\Msb]}\+\beta_{-\Msa-\Msb}
    \\
    &\eqqcolon \hat{\+y}_{\Msb-\Msa} + \hat{\+y}_{-\Msa-\Msb}
     \numberthis
\end{align*}
 for $(\Ma, \Mb) \in \{(\Mfa, \Mfb), (\Mfb, \Mfa)\}$.
By utilising this, vector $\+b_\mathrm{err}$ defined in Equation~\eqref{app_eq_error_param_b_err} can be expressed as
\begin{align*}
\+b_\mathrm{err}
    &= \frac{1}{\tau^2} \Big(
        \+B_{\mathrm{err},\Mfsa,1} \hat{\+y}_{-\Mfsa}
        - \+B_{\mathrm{err},\Mfsb,1} \hat{\+y}_{-\Mfsb}
        - \+B_{\Mfsd,2} \+\mu_\star
   \Big)
    \\
    &= \frac{1}{\tau^2} \Bigl(
         \+B_{\mathrm{err},\Mfsa,1}  \hat{\+y}_{\Mfsb-\Mfsa} - \+B_{\mathrm{err},\Mfsb,1}  \hat{\+y}_{\Mfsa-\Mfsb}
        + \left(  \+B_{\mathrm{err},\Mfsa,1} - \+B_{\mathrm{err},\Mfsb,1} \right) \hat{\+y}_{-\Mfsa-\Mfsb}
        - \+B_{\Mfsd,2} \+\mu_\star
    \Bigr)
   \\
   &= \beta_r \+q_{\+b_\mathrm{err},1} + \+q_{\+b_\mathrm{err},0}
    \,, \numberthis
\intertext{where}
    \+q_{\+b_\mathrm{err},1} &= \frac{1}{\tau^2} \left(
        \+B_{\mathrm{err},\Mfsa,1} \+X_{[\bcdot,\Mfsb-\Mfsa]} \+\beta_{\text{rate},\Mfsb-\Mfsa}
        - \+B_{\mathrm{err},\Mfsb,1} \+X_{[\bcdot,\Mfsa-\Mfsb]} \+\beta_{\text{rate},\Mfsa-\Mfsb}
    \right)
\numberthis
\intertext{and}
    \+q_{\+b_\mathrm{err},0} &= \frac{1}{\tau^2} \Big(
        \+B_{\mathrm{err},\Mfsa,1} \+X_{[\bcdot,\Mfsb-\Mfsa]} \+\beta_{\text{base},\Mfsb-\Mfsa}
        - \+B_{\mathrm{err},\Mfsb,1} \+X_{[\bcdot,\Mfsa-\Mfsb]} \+\beta_{\text{base},\Mfsa-\Mfsb}
        \\ &\quad\qquad
        + \left(  \+B_{\mathrm{err},\Mfsa,1} - \+B_{\mathrm{err},\Mfsb,1} \right) \hat{\+y}_{-\Mfsa-\Mfsb}
        - \+B_{\Mfsd,2} \+\mu_\star
    \Big)
\,. \numberthis
\end{align*}
Scalar $c_\mathrm{err}$ defined in Equation~\eqref{app_eq_error_param_c_err} can be expressed as
\begin{align*}
c_\mathrm{err}
    &= \frac{1}{\tau^2} \Bigg(
        \hat{\+y}_{-\Mfsa}^\transp \+C_{\mathrm{err},\Mfsa,1} \hat{\+y}_{-\Mfsa}
        - \hat{\+y}_{-\Mfsb}^\transp \+C_{\mathrm{err},\Mfsb,1} \hat{\+y}_{-\Mfsb}
        \\&\qquad\quad
        -  \hat{\+y}_{-\Mfsa}^\transp \+C_{\Mfsa,2} \+\mu_{\star}
        +  \hat{\+y}_{-\Mfsb}^\transp \+C_{\Mfsb,2} \+\mu_{\star}
        \\&\qquad\quad
        - \+\mu_{\star}^\transp \+C_{\Mfsd,3} \+\mu_{\star}
        - \+\sigma_\star^\transp \+C_{\Mfsd,3} \+\sigma_\star
    \Bigg)
     + c_{\mathrm{err},4}
   \\
   &= \beta_r^2 q_{c_\mathrm{err},2} + \beta_r q_{c_\mathrm{err},1} + C_2
    \,, \numberthis
\intertext{where}
    q_{c_\mathrm{err},2} &=
        \frac{1}{\tau^2} \Big(
            \+\beta_{\text{rate},\Mfsb-\Mfsa}^\transp \+X_{[\bcdot,\Mfsb-\Mfsa]}^\transp \+C_{\mathrm{err},\Mfsa,1} \+X_{[\bcdot,\Mfsb-\Mfsa]} \+\beta_{\text{rate},\Mfsb-\Mfsa}
        \\ &\quad\qquad
            - \+\beta_{\text{rate},\Mfsa-\Mfsb}^\transp \+X_{[\bcdot,\Mfsa-\Mfsb]}^\transp \+C_{\mathrm{err},\Mfsb,1} \+X_{[\bcdot,\Mfsa-\Mfsb]} \+\beta_{\text{rate},\Mfsa-\Mfsb}
        \Big)
\,, \numberthis \\
    q_{c_\mathrm{err},1} &= \frac{1}{\tau^2} \bigg(
        \Big(
            2 \+\beta_{\text{base},\Mfsb-\Mfsa}^\transp \+X_{[\bcdot,\Mfsb-\Mfsa]}^\transp \+C_{\mathrm{err},\Mfsa,1}
            + 2 \hat{\+y}_{-\Mfsa-\Mfsb}^\transp \+C_{\mathrm{err},\Mfsa,1}
            - \+\mu_{\star}^\transp \+C_{\Mfsa,2}
        \Big) \+X_{[\bcdot,\Mfsb-\Mfsa]} \+\beta_{\text{rate},\Mfsb-\Mfsa}
        \\ &\quad\qquad
        - \Big(
            2 \+\beta_{\text{base},\Mfsa-\Mfsb}^\transp \+X_{[\bcdot,\Mfsa-\Mfsb]}^\transp \+C_{\mathrm{err},\Mfsb,1}
            + 2 \hat{\+y}_{-\Mfsb-\Mfsa}^\transp \+C_{\mathrm{err},\Mfsb,1}
            - \+\mu_{\star}^\transp \+C_{\Mfsb,2}
        \Big) \+X_{[\bcdot,\Mfsa-\Mfsb]} \+\beta_{\text{rate},\Mfsa-\Mfsb}
    \bigg)
\,, \numberthis \\
    q_{c_\mathrm{err},0} &= \frac{1}{\tau^2} \bigg(
        \left( \+X_{[\bcdot,\Mfsb-\Mfsa]} \+\beta_{\text{base},\Mfsb-\Mfsa} + \hat{\+y}_{-\Mfsa-\Mfsb} \right)^\transp \+C_{\mathrm{err},\Mfsa,1} \left( \+X_{[\bcdot,\Mfsb-\Mfsa]} \+\beta_{\text{base},\Mfsb-\Mfsa}  + \hat{\+y}_{-\Mfsa-\Mfsb} \right)
        \\ & \quad\qquad
        - \left( \+X_{[\bcdot,\Mfsa-\Mfsb]} \+\beta_{\text{base},\Mfsa-\Mfsb} + \hat{\+y}_{-\Mfsa-\Mfsb} \right)^\transp \+C_{\mathrm{err},\Mfsb,1} \left( \+X_{[\bcdot,\Mfsa-\Mfsb]} \+\beta_{\text{base},\Mfsa-\Mfsb} + \hat{\+y}_{-\Mfsa-\Mfsb} \right)
        \\ & \quad\qquad
        - \+\mu_{\star}^\transp\left(
            \+C_{\Mfsa,2} \+X_{[\bcdot,\Mfsb-\Mfsa]} \+\beta_{\text{base},\Mfsb-\Mfsa} - \+C_{\Mfsb,2} \+X_{[\bcdot,\Mfsa-\Mfsb]} \+\beta_{\text{base},\Mfsa-\Mfsb}
        \right)
        \\ & \quad\qquad
        - \+\mu_{\star}^\transp \+C_{\Mfsd,3} \+\mu_{\star}
        - \+\sigma_\star^\transp \+C_{\Mfsd,3} \+\sigma_\star
    \bigg) + c_{\mathrm{err},4}
\,. \numberthis
\end{align*}
From this it follows, that $m_1$, $\overline{m}_2$, and $\overline{m}_3$ presented in equations~\eqref{app_eq_analytic_1_moment}--\eqref{app_eq_analytic_3_moment} respectively are all of second degree as a function of $\beta_r$. Thus, the skewness
\begin{align}
\lim_{\beta_r \rightarrow \pm \infty} \widetilde{m}_3 = \lim_{\beta_r \rightarrow \pm \infty} \frac{\overline{m}_3}{(\overline{m}_2)^{3/2}} = 0 \,.
\end{align}

When $\+\beta_{\text{base}}=0$, there are no outliers in the data, and each covariate is included in either one of the models, we can further draw some conclusions when $|\beta_r|$ gets smaller so that the models gets closer in predictive performance.
In this situation $\+q_{\+b_\mathrm{err},0} = 0$ and the moments $\overline{m}_2$ and $\overline{m}_3$ have the following forms
\begin{align}
    \overline{m}_2 &= C_{2,2} \beta_r^2 + C_{2,0} \\
    \overline{m}_3 &= C_{3,2} \beta_r^2 + C_{3,0} \,,
\intertext{where}
    C_{2,2} &= \+q_{\+b_\mathrm{err},1}^\transp \+\Sigma_\star \+q_{\+b_\mathrm{err},1}
    \,, \\
    C_{2,0} &= 2 \operatorname{tr}\left(\left(\+\Sigma_\star^{1/2}\+A\+\Sigma_\star^{1/2}\right)^2\right)
    \,, \\
    C_{3,2} &= 6 \+q_{\+b_\mathrm{err},1}^\transp \+\Sigma_\star \+A \+\Sigma_\star \+q_{\+b_\mathrm{err},1}
    \,, \\
    C_{3,0} &= 8 \operatorname{tr}\left(\left(\+\Sigma_\star^{1/2}\+A\+\Sigma_\star^{1/2}\right)^3\right)
    \,.
\end{align}
Because $\Sigma_\star$ is positive definite $C_{2,2} > 0$.
Because trace corresponds to the sum of eigenvalues and eigenvalues of the second power of a matrix equal to the squared eigenvalues of the original, trace of a matrix to the second power is non-negative and here $C_{2,0} > 0$.
The skewness $\widetilde{m}_3$ continuous and symmetric with regards to $\beta_r$ and
\begin{gather*}
    \frac{\diff}{\diff \beta_r} \widetilde{m}_3 = \frac{\diff}{\diff \beta_r} \frac{C_{2,2} \beta_r^2 + C_{2,0}}{(C_{2,2} \beta_r^2 + C_{2,0})^{3/2}}
    = \frac{\beta_r \left(-C_{2,2}C_{3,2}\beta_r^2 +2C_{3,2}C_{2,0} -3 C_{2,2} C_{3,0} \right)}{(C_{2,2} \beta_r^2 + C_{2,0})^{5/2}}
\,. \numberthis
\end{gather*}
Solving for zero yields
\begin{align*}
    \beta_r &= 0
    \label{app_eq__skewness_beta_r_to_zero_root0}
\numberthis \\
\intertext{and if $2\frac{C_{2,0}}{C_{2,2}} - 3 \frac{C_{3,0}}{C_{3,2}} > 0$}
    \beta_r &= \pm \sqrt{2\frac{C_{2,0}}{C_{2,2}} - 3 \frac{C_{3,0}}{C_{3,2}}}
    \label{app_eq__skewness_beta_r_to_zero_rootpm}
\numberthis \,.
\end{align*}
From this it follows that the absolute skewness $|\widetilde{m}_3|$ has a maximum either at~\eqref{app_eq__skewness_beta_r_to_zero_root0} or at~\eqref{app_eq__skewness_beta_r_to_zero_rootpm} or in all of them.


\subsubsection{Effect of Outliers}
\label{app_sec_analytic_effect_of_outliers}

We consider the effect of outliers through parameter $\+\mu_\star$ to the moments defined in Appendix~\ref{app_sec_analytic_moments} for the error $\elpdHatErrC{\Mfd}{\y}$.
The effect of $\+\mu_\star$ depends on the explanatory variable $\+X$ and the covariate effect vector $\+\beta$.
Let us restate the moments $m_1$, $\overline{m}_2$, and $\overline{m}_3$ as a quadratic form on $\+\mu_\star$:
\begin{align}
m_1 &= \+\mu_\star^\transp \+Q_{m_1} \+\mu_\star + \+q_{m_1}^\transp \+\mu_\star + C_1
\,, \\
\overline{m}_2 &= \+\mu_\star^\transp \+Q_{\overline{m}_2} \+\mu_\star + \+q_{\overline{m}_2}^\transp \+\mu_\star + C_2
\,, \\
\overline{m}_3 &= \+\mu_\star^\transp \+Q_{\overline{m}_3} \+\mu_\star + \+q_{\overline{m}_3}^\transp \+\mu_\star + C_3
\,,
\end{align}
where
\begin{align}
\+Q_{m_1} &= \frac{1}{\tau^2} \left( \+A_{\mathrm{err},1} - \+B_{\Mfsd,2} - \+C_{\Mfsd,3} \right)
\,,\\
\+q_{m_1} &= \frac{1}{\tau^2} \left(
       \left( \+B_{\mathrm{err},\Mfsa,1} - \+C_{\Mfsa,2} \right)  \hat{\+y}_{-\Mfsa}
        - \left( \+B_{\mathrm{err},\Mfsb,1} - \+C_{\Mfsb,2} \right) \hat{\+y}_{-\Mfsb}
    \right)
\,,\\
\+Q_{\overline{m}_2} &= \frac{1}{\tau^4} \left( 2 \+A_{\mathrm{err},1} - \+B_{\Mfsd,2} \right)^\transp \+\Sigma_\star \left( 2 \+A_{\mathrm{err},1} - \+B_{\Mfsd,2} \right)
\,,\\
\+q_{\overline{m}_2} &= \frac{2}{\tau^4}
       \left( 2 \+A_{\mathrm{err},1} - \+B_{\Mfsd,2} \right)^\transp \+\Sigma_\star
       \left( \+B_{\mathrm{err},\Mfsa,1} \hat{\+y}_{-\Mfsa} - \+B_{\mathrm{err},\Mfsb,1} \hat{\+y}_{-\Mfsb} \right)
\,,\\
\+Q_{\overline{m}_3} &= \frac{6}{\tau^6} \left( 2 \+A_{\mathrm{err},1} - \+B_{\Mfsd,2} \right)^\transp \+\Sigma_\star \+A_{\mathrm{err},1} \+\Sigma_\star \left( 2 \+A_{\mathrm{err},1} - \+B_{\Mfsd,2} \right)
\,,\\
\+q_{\overline{m}_3} &= \frac{12}{\tau^6}
       \left( 2 \+A_{\mathrm{err},1} - \+B_{\Mfsd,2} \right)^\transp \+\Sigma_\star \+A_{\mathrm{err},1} \+\Sigma_\star
       \left( \+B_{\mathrm{err},\Mfsa,1} \hat{\+y}_{-\Mfsa} - \+B_{\mathrm{err},\Mfsb,1} \hat{\+y}_{-\Mfsb} \right)
\,,
\end{align}
and $C_1$, $C_2$, and $C_3$ are some constants.
Consider the moments as a function of a scalar scaling factor $\mu_{\star,r}$, where $\+\mu_\star = \mu_{\star,r} \+\mu_{\star,\text{rate}} + \+\mu_{\star,\text{base}}$, where $\+\mu_{\star,\text{rate}} \neq 0$, and $\+\mu_{\star,\text{base}}$ are some growing rate vector and base vector respectively.
Depending on $\+X$, $\+\beta$, $\+\mu_{\star,\text{rate}}$, and $\+\mu_{\star,\text{base}}$, the first moment $m_1$ can be of first or second degree or constant.
Because $\+x^\transp\+Q_{\overline{m}_3}\+x = 0 \Leftrightarrow \+x^\transp\+q_{\overline{m}_3} = 0 \Leftrightarrow \+x^\transp\+Q_{\overline{m}_2}\+x = 0 \Leftrightarrow \+x^\transp\+q_{\overline{m}_2} = 0, \forall \+x \in \mathbb{R}^n$, moments $\overline{m}_2$ and $\overline{m}_3$ are both either constants or of second degree.
Thus, if not constant, the skewness
\begin{align}
\lim_{\mu_{\star,r} \rightarrow \pm \infty} \widetilde{m}_3 = \lim_{\mu_{\star,r} \rightarrow \pm \infty} \frac{\overline{m}_3}{(\overline{m}_2)^{3/2}} = 0 \,.
\end{align}


\subsubsection{Effect of Residual Variance}
\label{app_sec_analytic_effect_of_data_variance}

Next we analyse the moments defined in Appendix~\ref{app_sec_analytic_moments} for the error $\elpdHatErrC{\Mfd}{\y}$ with respect to the data residual variance $\+\Sigma_\star$ by formulating it as $\+\Sigma_\star = \sigma_\star^2 \eye_{n}$.
Now
\begin{align}
m_1 &= \operatorname{tr}\left(\+A_\mathrm{err}\right) \sigma_\star^4 + C_1
\\
\overline{m}_2 &= 2 \operatorname{tr}\left(\+A_\mathrm{err}^2\right) \sigma_\star^4 + C_2 \sigma_\star^2
\label{app_eq_m_2_as_func_of_sigma_star}
\\
\overline{m}_3 &= 8 \operatorname{tr}\left(\+A_\mathrm{err}^3\right) \sigma_\star^6 + C_3 \sigma_\star^4
\label{app_eq_m_3_as_func_of_sigma_star}
\,,
\end{align}
where each $C_i$ denotes a different constant.
Combining  equations~\eqref{app_eq_m_2_as_func_of_sigma_star} and~\eqref{app_eq_m_3_as_func_of_sigma_star}, we get
\begin{align*}
    \lim_{\sigma_\star \rightarrow \infty} \widetilde{m}_3 &=
    \frac{
        \lim_{\sigma_\star \rightarrow \infty} \sigma_\star^{-6} \overline{m}_3
    }{
        \left(
            \lim_{\sigma_\star \rightarrow \infty} \sigma_\star^{-4} \overline{m}_2
        \right)^{3/2}
    }
    = \frac{
        8 \operatorname{tr}\left(\+A_\mathrm{err}^3\right)
    }{
        \left(
            2 \operatorname{tr}\left(\+A_\mathrm{err}^2\right)
        \right)^{3/2}
    }
    = 2^{3/2} \frac{
        \operatorname{tr}\left(\+A_\mathrm{err}^3\right)
    }{
        \operatorname{tr}\left(\+A_\mathrm{err}^2\right)^{3/2}
    }
    \,, \numberthis
\end{align*}
that is, the skewness converges into a constant determined by the explanatory variable matrix $\+X$ when the data variance grows.


\subsubsection{Graphical Illustration of the Moments for an Example Case}
\label{app_sec_analytic_graph}

The behaviour of the moments of the estimator $\elpdHatC{\Mfd}{\y}$, the estimand, $\elpdC{\Mfd}{\y}$, and the error $\elpdHatErrC{\Mfd}{\y}$ for an example problem setting are illustrated in Figure~\ref{fig_analytic_zscore_skew_n_b}. Figure~\ref{fig_analytic_zscore_skew_n_b_tot} illustrates the same problem unconditional on the design matrix $\+X$, so that the design matrix is also random in the data generating mechanism.
The total mean, variance, and skewness are estimated from the simulated $\+X$s, and the resulting uncertainty is estimated using Bayesian bootstrap.
The example case has an intercept and two covariates.
Model \Mb{} ignores one covariate with true effect $\beta_\Delta$ while model \Mb{} considers them all. Here $\+\mu_\star = 0$ so that no outliers are present in the data. The data residual variance is fixed at $\Sigma_\star = \eye{n}$. The model variance is also fixed at $\tau=1$. The illustrated moments of interest are the mean relative to the standard deviation, $m_1 \big/ \sqrt{\overline{m}_2}$, and the skewness $\widetilde{m}_3 = \overline{m}_3 \Big/ (\overline{m}_2)^{3/2}$.
When compared to the analysis with conditional to $\+X$ in Section~\ref{sec_experiments}, the most notable difference can be observed in the behaviour of $\elpdC{\Mfd}{\y}$; with conditionalised design matrix $\+X$, the skewness is high with all effects $\beta_\Delta$, whereas with unconditionalised $\+X$, the skewness decreases when $\beta_\Delta$ grows.

\begin{figure}[t]
\centering
\includegraphics[width=0.8\figurecontrolwidth]{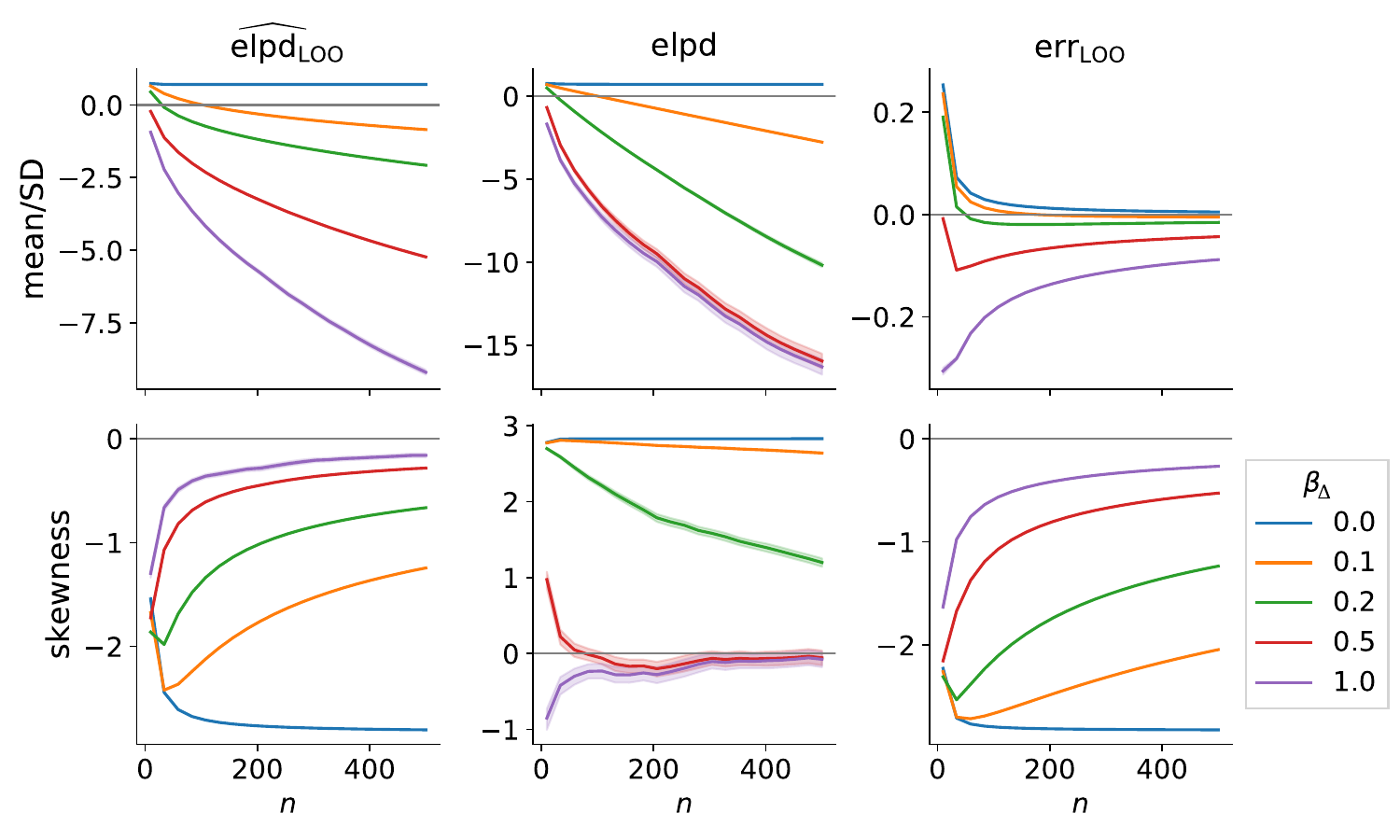}
\caption{%
Illustration of the mean relative to the standard deviation and skewness for $\elpdHatC{\Mfd}{\y}$, $\elpdC{\Mfd}{\y}$, and for the error $\elpdHatErrC{\Mfd}{\y}$ as a function of the data size $n$.
The data consists of an intercept and two covariates following standard normal distribution. One of the covariates with true effect $\beta_\Delta$ is considered only in model $\Mb$.
The solid lines correspond to the median over a Bayesian bootstrap sample of size 2000 from 2000 simulated $\+X$s.
Although wide enough to be visible only in some lines in the middle column, a shaded area around the lines illustrates the 95 \% confidence interval.
}
\label{fig_analytic_zscore_skew_n_b_tot}
\end{figure}


\subsection{One Covariate Case}
\label{app_sec_one_cov}

Let us inspect the behaviour of the moments $m_1$, $\overline{m}_{2}$, and $\tilde{m}_{3}$ of the LOO-CV error formulated in Appendix~\ref{app_sec_analytic_error} in a nested example case, where a null model is compared to a model with one covariate.
Consider that $n$ is even, $n \geq 4$, and $d=2$ so that $\+X$ is two-dimensional.
One column in $\+X$ corresponds to the intercept, being full of $1$s, and the other column corresponds to the covariate, consisting of half $1$s and $-1$s in any order.
Model \Mfa{} only considers the intercept column, and model \Mfb{} considers both the intercept and the sole covariate column.

In addition, we set the data generating mechanism parameters $\+\Sigma_\star$ and $\mu_{\star,\,i}$ to the following form, in which the observations are independent. There is one outlier observation with some index $i_\text{out}$ for which $x_{i_\text{out}} = 1$:
\begin{align}
    \+\Sigma_\star &= s_\star^2 \, \eye_{n},
    \\
    \mu_{\star,\,i} &= \begin{cases} m_\star & \text{when } i = i_\text{out}\,, \\ 0 & \text{otherwise}. \end{cases}
\end{align}
Let $\ones_{n}$ and $0_{n}$ denote a vector of ones and zeroes of length $n$, respectively.
Let vector $\+x \in \mathbb{R}^n$ denote the covariate column in $\+X$. Considering the half $1$s half $-1$s structure of $\+x$ yields
\begin{align}
    x_i^2 &= 1,
    \\
    \+x^\transp\+x &= n,
    \\
    \ones_{n}^\transp\+x &= 0,
    \\
    \+x_{-i}^\transp\+x_{-i} &= n-1,
    \\
    \ones_{n-1}^\transp\+x_{-i} &= - x_i,
    \\
    (x_i x_j + 1)^2 &= 2(x_i x_j + 1),
    \\
    \operatorname{diag}(\+x \+x^\transp) &= \ones_{n},
\end{align}
for all $i,j=1,2,\dots,n$.
Let $\beta_1$ denote the true covariate effect in vector $\+\beta$. As can be seen from the equations of the parameters of the LOO-CV error, the effects in $\+\beta$, which both models consider, do not affect the outcome. In this problem setting, the intercept coefficient is one such an effect.
The parameters $\hat{\+y}_{-\Mfsa}$ and $\hat{\+y}_{-\Mfsb}$ defined in equation~\eqref{app_eq_yhat_def}, which are involved in the formulation of the LOO-CV error, simplifies to
\begin{align}
    \hat{\+y}_{-\Mfsa} &= \+X_{[\bcdot,-\Mfsa]} \+\beta_{-\Mfsa} = \beta_1 \+x
    \\
    \hat{\+y}_{-\Mfsb} &= \+X_{[\bcdot,-\Mfsb]} \+\beta_{-\Mfsb} = 0_{n}
\,.
\end{align}


\subsubsection{Elpd}

In this section, we derive a simplified analytic form for $\elpdC{\Mfd}{\y}$ presented in Appendix~\ref{app_sec_analytic_elpd_d} and for some moments of interest in the one covariate case defined in Appendix~\ref{app_sec_one_cov}. First, we derive the parameters $\+A_\Mfsd$, $\+b_\Mfsd$, and $c_\Mfsd$ defined in Appendix~\ref{app_sec_analytic_elpd_d} and then we use them to derive the respective moments of interest defined in Appendix~\ref{app_sec_analytic_moments}.

\paragraph{Parameters}
\label{app_sec_one_cov_elpd_params}

Following the notation in Appendix~\ref{app_sec_analytic_elpd}, in the one covariate case defined in Appendix~\ref{app_sec_one_cov}, let us find simplified form for the matrices $\+P_\Msk$, $\+D_\Msk$, and for the required products for the LOO-CV error parameters
\begin{align*}
&\+P_\Msk \+D_\Msk, \\
&\+P_\Msk \+D_\Msk \+P_\Msk, \\
&\+P_\Msk \+D_\Msk \left( \+P_\Msk - \eye \right), \\
&\left( \+P_\Msk - \eye \right) \+D_\Msk \left( \+P_\Msk - \eye \right), \\
&\left( \+P_\Msk - \eye \right) \+D_\Msk, \\
&\+D_\Mfsa - \+D_\Mfsb
\numberthis
\end{align*}
for $\Mk \in \{ \Mfsa, \Mfsb\}$, presented in Appendix~\ref{app_sec_analytic_error}.
For the model \Mfa{} we have
\begin{align*}
\+P_\Mfsa
    &= \+X_{[\bcdot,\Mfsa]}
        \left(\+X_{[\bcdot,\Mfsa]}^\transp\+X_{[\bcdot,\Mfsa]}\right)^{-1}
        \+X_{[\bcdot,\Mfsa]}^\transp
\\
    &= \ones_{n} \left(\ones_{n}^\transp \ones_{n}\right)^{-1} \ones_{n}^\transp
\\
    &= \frac{1}{n} \ones_{n} \ones_{n}^\transp
    \numberthis
\end{align*}
and
\begin{align*}
\+D_\Mfsa
    &= \left(\left(\+P_{\Mfsa} \odot \eye_{n}\right) + \eye_{n}\right)^{-1}
\\
    &= \frac{n}{n+1}\eye_{n}
\,.
\numberthis
\end{align*}
Now we get
\begin{align*}
\+P_\Mfsa \+D_\Mfsa
    &= \frac{n}{n+1} \frac{1}{n} \ones_{n} \ones_{n}^\transp \eye_{n}
    \\
    &= \frac{1}{n+1} \ones_{n} \ones_{n}^\transp
\,,
\numberthis
\\
\+P_\Mfsa \+D_\Mfsa \+P_\Mfsa
    &=  \frac{n}{n+1} \underbrace{\+P_\Mfsa \eye_{n} \+P_\Mfsa}_{=\+P_\Mfsa}
    \\
    &= \frac{1}{n+1} \ones_{n} \ones_{n}^\transp
\,,
\numberthis
\\
\+P_\Mfsa \+D_\Mfsa \left( \+P_\Mfsa - \eye \right)
    &= \+P_\Mfsa \+D_\Mfsa \+P_\Mfsa - \+P_\Mfsa \+D_\Mfsa
    \\
    &= 0
\,, \numberthis \\
\left( \+P_\Mfsa - \eye \right) \+D_\Mfsa \left( \+P_\Mfsa - \eye \right)
    &= \+P_\Mfsa \+D_\Mfsa \+P_\Mfsa - \+P_\Mfsa \+D_\Mfsa - \+D_\Mfsa \+P_\Mfsa + \+D_\Mfsa
    \\
    &= \frac{n}{n+1}\eye_{n} - \frac{1}{n+1} \ones_{n} \ones_{n}^\transp
\,, \numberthis \\
\left( \+P_\Mfsa - \eye \right) \+D_\Mfsa
    &= \+D_\Mfsa \+P_\Mfsa - \+D_\Mfsa
    \\
    &= - \frac{n}{n+1}\eye_{n} + \frac{1}{n+1} \ones_{n} \ones_{n}^\transp
\,.
\numberthis
\end{align*}
For model \Mfb{} we have
\begin{align*}
\+P_\Mfsb
&= \+X_{[\bcdot,\Mfsb]}
    \left((\+X_{[\bcdot,\Mfsb]}^\transp\+X_{[\bcdot,\Mfsb]}\right)^{-1}
    \+X_{[\bcdot,\Mfsb]}^\transp,
\\
&=  \begin{bmatrix} \ones_n & \+x \end{bmatrix}
    \left(
        \begin{bmatrix} \ones_{n} & \+x \end{bmatrix}^\transp
        \begin{bmatrix} \ones_{n} & \+x \end{bmatrix}
    \right)^{-1}
    \begin{bmatrix} \ones_n & \+x \end{bmatrix}^\transp
\\
&=  \begin{bmatrix} \ones_n & \+x \end{bmatrix}
    \begin{bmatrix}
        \ones_{n}^\transp\ones_{n} & \ones_{n}^\transp\+x \\
        \ones_{n}^\transp\+x & \+x^\transp\+x
    \end{bmatrix}^{-1}
    \begin{bmatrix} \ones_n & \+x \end{bmatrix}^\transp
\\
&=  \begin{bmatrix} \ones_n & \+x \end{bmatrix}
    \begin{bmatrix}
        n & 0 \\
        0 & n
    \end{bmatrix}^{-1}
    \begin{bmatrix} \ones_n & \+x \end{bmatrix}^\transp
\\
&=  \frac{1}{n^2}\begin{bmatrix} \ones_n & \+x \end{bmatrix}
    \begin{bmatrix}
        n & 0 \\
        0 & n
    \end{bmatrix}
    \begin{bmatrix} \ones_n & \+x \end{bmatrix}^\transp
\\
&= \frac{1}{n} \left(\ones_{n} \ones_{n}^\transp + \+x \+x^\transp \right)
\numberthis
\end{align*}
and
\begin{align*}
\+D_\Mfsb
    &= \left(\left(\+P_{\Mfsb} \odot \eye_{n}\right) + \eye_{n}\right)^{-1}
\\
    &= \left(\frac{1}{n} \left(\eye_{n} + \eye_{n} \right) + \eye_{n}\right)^{-1}
\\
    &= \frac{n}{n+2}\eye_{n}
\,.
\numberthis
\end{align*}
Now we get
\begin{align*}
\+P_\Mfsb \+D_\Mfsb
    &= \frac{1}{n} \frac{n}{n+2} \left(\ones_{n} \ones_{n}^\transp + \+x \+x^\transp \right) \eye_{n}
    \\
    &= \frac{1}{n+2} \left(\ones_{n} \ones_{n}^\transp + \+x \+x^\transp \right)
\,,
\numberthis
\\
\+P_\Mfsb \+D_\Mfsb \+P_\Mfsb
    &=  \frac{n}{n+2} \underbrace{\+P_\Mfsb \eye_{n} \+P_\Mfsb}_{=\+P_\Mfsb}
    \\
    &= \frac{1}{n+2} \left(\ones_{n} \ones_{n}^\transp + \+x \+x^\transp \right)
\,,
\numberthis
\\
\+P_\Mfsb \+D_\Mfsb \left( \+P_\Mfsb - \eye \right)
    &= \+P_\Mfsb \+D_\Mfsb \+P_\Mfsb - \+P_\Mfsb \+D_\Mfsb
    \\
    &= 0
\,, \numberthis \\
\left( \+P_\Mfsb - \eye \right) \+D_\Mfsb \left( \+P_\Mfsb - \eye \right)
    &= \+P_\Mfsb \+D_\Mfsb \+P_\Mfsb - \+P_\Mfsb \+D_\Mfsb - \+D_\Mfsb \+P_\Mfsb + \+D_\Mfsb
    \\
    &= \frac{n}{n+2}\eye_{n} - \frac{1}{n+2} \left(\ones_{n} \ones_{n}^\transp + \+x \+x^\transp \right)
\,, \numberthis \\
\left( \+P_\Mfsb - \eye \right) \+D_\Mfsb
    &= \+D_\Mfsb \+P_\Mfsb - \+D_\Mfsb
    \\
    &= - \frac{n}{n+2}\eye_{n} + \frac{1}{n+2} \left(\ones_{n} \ones_{n}^\transp + \+x \+x^\transp \right)
\,.
\numberthis
\end{align*}
Furthermore, we get
\begin{align}
    \+D_\Mfsa - \+D_\Mfsb &= \frac{n}{n+1}\eye_{n} - \frac{n}{n+2}\eye_{n} = \frac{n}{(n+1)(n+2)} \eye_{n}
    \,.
\end{align}
Moreover, we get
\begin{align*}
\+B_{\Mfsa,1} &= - \+P_{\Mfsa} \+D_{\Mfsa} \left(\+P_{\Mfsa} - \eye \right)
\\
    &= 0
    \numberthis
\,,
\\
\+B_{\Mfsb,1} &= - \+P_{\Mfsb} \+D_{\Mfsb} \left(\+P_{\Mfsb} - \eye \right)
\\
    &= 0
    \numberthis
\,,
\\
\+C_{\Mfsa,1} &= - \frac{1}{2} \left(\+P_{\Mfsa} - \eye \right) \+D_{\Mfsa} \left(\+P_{\Mfsa} - \eye \right)
\\
    &= -\frac{n}{2(n+1)}\eye_{n} + \frac{1}{2(n+1)} \ones_{n} \ones_{n}^\transp
\numberthis
\,,
\\
\+C_{\Mfsb,1} &= - \frac{1}{2} \left(\+P_{\Mfsb} - \eye \right) \+D_{\Mfsb} \left(\+P_{\Mfsb} - \eye \right)
\\
    &= -\frac{n}{2(n+2)} \eye_{n} + \frac{1}{2(n+2)} \left(\ones_{n} \ones_{n}^\transp + \+x \+x^\transp \right)
\numberthis
\,,
\\
\+C_{\Mfsa,2} &= \left(\+P_{\Mfsa} - \eye \right) \+D_{\Mfsa}
\\
    &= - \frac{n}{n+1}\eye_{n} + \frac{1}{n+1} \ones_{n} \ones_{n}^\transp
\numberthis
\\
\+C_{\Mfsb,2} &= \left(\+P_{\Mfsb} - \eye \right) \+D_{\Mfsb}
\\
    &= - \frac{n}{n+2}\eye_{n} + \frac{1}{n+2} \left(\ones_{n} \ones_{n}^\transp + \+x \+x^\transp \right)
\numberthis
\,,
\end{align*}
and

\begin{align*}
\+A_{\Mfsd,1} 
    &= - \frac{1}{2} \left( \+P_{\Mfsa} \+D_{\Mfsa} \+P_{\Mfsa} - \+P_{\Mfsb} \+D_{\Mfsb} \+P_{\Mfsb} \right)
\\
    &= - \frac{1}{2} \left( \frac{1}{n+1} \ones_{n} \ones_{n}^\transp - \frac{1}{n+2} \left(\ones_{n} \ones_{n}^\transp + \+x \+x^\transp \right) \right)
\\
    &= - \frac{1}{2(n+1)(n+2)} \ones_{n} \ones_{n}^\transp + \frac{1}{2(n+2)} \+x \+x^\transp
    \numberthis
\,,
\\
\+B_{\Mfsd,2} &= \+P_{\Mfsa} \+D_{\Mfsa} - \+P_{\Mfsb} \+D_{\Mfsb}
    \\
    &= \frac{1}{n+1} \ones_{n} \ones_{n}^\transp
        - \frac{1}{n+2} \left(\ones_{n} \ones_{n}^\transp + \+x \+x^\transp \right)
    \\
    &= \frac{1}{(n+2)(n+1)} \ones_{n} \ones_{n}^\transp - \frac{1}{n+2} \+x \+x^\transp
\,,
\numberthis
\\
\+C_{\Mfsd,3} &= - \frac{1}{2} \left( \+D_{\Mfsa} - \+D_{\Mfsb}\right)
    \\
    &= - \frac{n}{2(n+1)(n+2)} \eye_{n}
\numberthis
\,,
\\
c_{\Mfsd,4} 
    &= \frac{1}{2} \log \left( \prod_{i=1}^n \frac{\+D_{\Mfsa,[i,i]}}{\+D_{\Mfsb,[i,i]}} \right)
\\
    &= \frac{1}{2} \log \left( \prod_{i=1}^n \frac{\frac{n}{n+1}}{\frac{n}{n+2}} \right)
\\
    &= \frac{n}{2} \log \frac{n+2}{n+1}
\numberthis
\,.
\end{align*}
Now we get
\begin{align*}
\+A_\Mfsd
&= \frac{1}{\tau^2} \+A_{\Mfsd,1}
\\
    &= \frac{1}{\tau^2} \left( - \frac{1}{2(n+1)(n+2)} \ones_{n} \ones_{n}^\transp + \frac{1}{2(n+2)} \+x \+x^\transp \right)
\,,
\numberthis
\\
\+b_\Mfsd
&= \frac{1}{\tau^2} \left(
    \+B_{\Mfsa,1} \hat{\+y}_{-\Mfsa}
    - \+B_{\Mfsb,1} \hat{\+y}_{-\Mfsb}
    + \+B_{\Mfsd,2} \+\mu_{\star}
\right)
\\
    &= \frac{1}{\tau^2} m_\star \left(
         \frac{1}{(n+2)(n+1)} \ones_{n} - \frac{1}{n+2} \+x
    \right)
\,,
\numberthis
\\
c_{\Mfsd} &= \frac{1}{\tau^2} \Bigg(
     \hat{\+y}_{-\Mfsa}^\transp \+C_{\Mfsa,1}  \hat{\+y}_{-\Mfsa}
    -  \hat{\+y}_{-\Mfsb}^\transp \+C_{\Mfsb,1}  \hat{\+y}_{-\Mfsb}
    \\&\qquad\quad
    +  \hat{\+y}_{-\Mfsa}^\transp \+C_{\Mfsa,2} \+\mu_{\star}
    -  \hat{\+y}_{-\Mfsb}^\transp \+C_{\Mfsb,2} \+\mu_{\star}
    \\&\qquad\quad
    + \+\mu_{\star}^\transp \+C_{\Mfsd,3} \+\mu_{\star}
    + \+\sigma_\star^\transp \+C_{\Mfsd,3} \+\sigma_\star
    \Bigg)
    + c_{\Mfsd,4}
\\
    &= \frac{1}{\tau^2} \Bigg(
     \beta_1^2\+x^\transp \left(-\frac{n}{2(n+1)}\eye_{n} + \frac{1}{2(n+1)} \ones_{n} \ones_{n}^\transp \right)  \+x
    \\&\qquad\quad
    + \beta_1 \+x^\transp \left( - \frac{n}{n+1}\eye_{n} + \frac{1}{n+1} \ones_{n} \ones_{n}^\transp \right) \+\mu_{\star}
    \\&\qquad\quad
    - \frac{n}{2(n+1)(n+2)} \left(
        \+\mu_{\star}^\transp  \eye_{n} \+\mu_{\star}
        + \+\sigma_\star^\transp \eye_{n} \+\sigma_\star
    \right)
    \Bigg)
    \\&\quad
    + \frac{n}{2} \log \frac{n+2}{n+1}
\\
    &= \frac{1}{\tau^2} \Bigg(
     -\beta_1^2 \frac{n^2}{2(n+1)}
    - \beta_1 m_{\star} \frac{n}{n+1}
    - \frac{n}{2(n+1)(n+2)} \left(
        m_{\star}^2
        + n s_{\star}^2
    \right)
    \Bigg)
    \\&\quad
    + \frac{n}{2} \log \frac{n+2}{n+1}
\,.
\numberthis
\end{align*}

\paragraph{First Moment}
\label{app_sec_analytic_one_cov_elpd_1_moment}

In this section, we formulate the first raw moment $m_1$ in Equation~\eqref{app_eq_analytic_1_moment} for $\elpdC{\Mfd}{\y}$ in the one covariate case defined in Appendix~\ref{app_sec_one_cov}.
The trace of $\+\Sigma_\star^{1/2}\+A_{\Mfsd}\+\Sigma_\star^{1/2} = s_\star^2\+A_{\Mfsd}$ simplifies to
\begin{align*}
\operatorname{tr}\left(\+\Sigma_\star^{1/2}\+A_{\Mfsd}\+\Sigma_\star^{1/2}\right)
    &= \frac{1}{\tau^2} s_\star^2 n \left(
        - \frac{1}{2(n+1)(n+2)} + \frac{1}{2(n+2)}
    \right)
    \\
    &= \frac{1}{\tau^2} s_\star^2 \frac{n^2}{2(n+1)(n+2)}
    \,.
    \numberthis
\end{align*}
Furthermore
\begin{align*}
\+b_{\Mfsd}^\transp \+\mu_\star
    &= \frac{1}{\tau^2} m_\star \left(
         \frac{1}{(n+2)(n+1)} m_\star - \frac{1}{n+2} m_\star
    \right)
    \\
    &= -\frac{1}{\tau^2} m_\star^2 \frac{n}{(n+2)(n+1)}
    \numberthis
\end{align*}
and
\begin{align*}
\+\mu_\star^\transp \+A_{\Mfsd} \+\mu_\star
    &= \frac{1}{\tau^2} \left( - \frac{1}{2(n+1)(n+2)} m_\star^2 + \frac{1}{2(n+2)} m_\star^2 \right)
    \\
    &= \frac{1}{\tau^2} m_\star^2 \frac{n}{2(n+2)(n+1)}
    \,.
    \numberthis
\end{align*}
Now Equation~\eqref{app_eq_analytic_1_moment} simplifies to
\begin{align*}
m_1
    &= \operatorname{tr}\left(\+\Sigma_\star^{1/2}\+A_{\Mfsd}\+\Sigma_\star^{1/2}\right)
        + c_{\Mfsd}
        + \+b_{\Mfsd}^\transp \+\mu_\star
        + \+\mu_\star^\transp \+A_{\Mfsd} \+\mu_\star
    \\
    & = \frac{1}{\tau^2} \left(
        P_{1,1}(n) \beta_1^2
        + Q_{1,0}(n) \beta_1 m_\star
        + R_{1,-1}(n) m_\star^2
    \right)
    + F_{1}(n)
    \,,
    \numberthis
\intertext{where}
    P_{1,1}(n) &= - \frac{n^2}{2(n+1)}
    \numberthis\\
    Q_{1,0}(n) &= - \frac{n}{n+1}
    \numberthis\\
    R_{1,-1}(n) &= - \frac{n}{(n+2)(n+1)}
    \numberthis\\
    F_{1}(n) &= \frac{n}{2} \log \frac{n+2}{n+1}
    \,,
    \numberthis
\end{align*}
where the first subscript indicates the corresponding order of the moment, and for the rational functions $P_{1,1}$, $Q_{1,0}$, and $R_{1,-1}$, the second subscript indicates the degree of the rational as a difference between the degrees of the numerator and the denominator.
It can be seen that $m_1$ does not depend on $s_\star$.

\paragraph{Second Moment}
\label{app_sec_analytic_one_cov_elpd_2_moment}

In this section, we formulate the second moment $\overline{m}_2$ about the mean in Equation~\eqref{app_eq_analytic_2_moment} for $\elpdC{\Mfd}{\y}$ in the one covariate case defined in Appendix~\ref{app_sec_one_cov}.
The second power of $\+A_{\Mfsd}$ is
\begin{align*}
\+A_{\Mfsd}^2
    &= \frac{1}{\tau^4} \Bigg(
        \frac{n}{4(n+1)^2(n+2)^2}\ones_{n}\ones_{n}^\transp
        + \frac{n}{4(n+2)^2}\+x\+x^\transp
    \Bigg)
    \,.
    \numberthis
\end{align*}
The trace in Equation~\eqref{app_eq_analytic_2_moment} simplifies to
\begin{align*}
\operatorname{tr}\left(\left(\+\Sigma_\star^{1/2}\+A_{\Mfsd}\+\Sigma_\star^{1/2}\right)^2\right)
    &= \frac{1}{\tau^4} s_\star^4 n \bigg(
        \frac{n}{4(n+1)^2(n+2)^2}
        + \frac{n}{4(n+2)^2}
    \bigg)
    \\
    &= \frac{1}{\tau^4} s_\star^4 \frac{n^2 (n^2 + 2 n + 2)}{4 (n + 1)^2 (n + 2)^2}
    \,.
    \numberthis
\end{align*}
Furthermore
\begin{align*}
\+b_{\Mfsd}^\transp \+b_{\Mfsd}
    &= \frac{1}{\tau^4} m_\star^2 \frac{n (n^2 + 2 n + 2)}{(n + 1)^2 (n + 2)^2}
\,,
\numberthis
\\
\+b_{\Mfsd}^\transp \+A_{\Mfsd} \+\mu_\star
    &= - \frac{1}{\tau^4} m_\star^2 \frac{n (n^2 + 2 n + 2)}{2(n + 1)^2 (n + 2)^2}
\,,
\numberthis
\intertext{and}
\+\mu_\star^\transp \+A_{\Mfsd}^2 \+\mu_\star
    &= \frac{1}{\tau^4} m_\star^2 \frac{n (n^2 + 2 n + 2)}{4 (n + 1)^2 (n + 2)^2}
\,.
\numberthis
\end{align*}
Now Equation~\eqref{app_eq_analytic_2_moment} simplifies to
\begin{align*}
\overline{m}_2
    &= 2 \operatorname{tr}\left(\left(\+\Sigma_\star^{1/2}\+A_{\Mfsd}\+\Sigma_\star^{1/2}\right)^2\right)
    + \+b_{\Mfsd}^\transp \+\Sigma_\star \+b_{\Mfsd}
    \\ &\quad
    + 4 \+b_{\Mfsd}^\transp \+\Sigma_\star \+A_{\Mfsd} \+\mu_\star
    + 4 \+\mu_\star^\transp \+A_{\Mfsd} \+\Sigma_\star \+A_{\Mfsd} \+\mu_\star
    \\
    & = \frac{1}{\tau^4} S_{2,0}(n) s_\star^4
    \,,
    \numberthis
\intertext{%
where
}
    S_{2,0}(n) &= \frac{n^2 (n^2 + 2 n + 2)}{2 (n + 1)^2 (n + 2)^2}
    \,,
    \numberthis
\end{align*}
and the first subscript in the rational function $S_{2,0}$ indicates the corresponding order of the moment, and the second subscript indicates the degree of the rational as a difference between the degrees of the numerator and the denominator.
It can be seen that $\overline{m}_2$ does not depend on $\beta_1$ and $m_\star$.

\paragraph{Mean Relative to the Standard Deviation}
\label{app_sec_analytic_one_Rel_SD}

In this section, we formulate the ratio of mean and standard deviation $m_1 \big/ \sqrt{\overline{m}_2}$ for $\elpdC{\Mfd}{\y}$ in the one covariate case defined in Appendix~\ref{app_sec_one_cov}.
Combining results from appendices~\ref{app_sec_analytic_one_cov_elpd_1_moment} and~\ref{app_sec_analytic_one_cov_elpd_2_moment}, we get
\begin{align*}
    \frac{m_1}{\sqrt{\overline{m}_2}}
    & =
    \frac{
        P_{1,1}(n) \beta_1^2
        + Q_{1,0}(n) \beta_1 m_\star
        + R_{1,-1}(n) m_\star^2
        + \tau^2 F_{1}(n)
    }{\sqrt{ S_{2,0}(n) s_\star^4 }}
\,,
\numberthis
\intertext{where}
    P_{1,1}(n) &= - \frac{n^2}{2(n+1)}
    \numberthis\\
    Q_{1,0}(n) &= - \frac{n}{n+1}
    \numberthis\\
    R_{1,-1}(n) &= - \frac{n}{(n+2)(n+1)}
    \numberthis\\
    F_{1}(n) &= \frac{n}{2} \log \frac{n+2}{n+1}
    \numberthis\\
    S_{2,0}(n) &= \frac{n^2 (n^2 + 2 n + 2)}{2 (n + 1)^2 (n + 2)^2}
\,,
\numberthis
\end{align*}
where the first subscript in the rational functions $P_{1,1}$, $Q_{1,0}$, $R_{1,-1}$, and $S_{2,0}$ indicates the corresponding order of the associated moment. The second subscript indicates the degree of the rational as a difference between the degrees of the numerator and the denominator.

Let us inspect the behaviour of $m_1 \big/ \sqrt{\overline{m}_2}$ when $n \rightarrow \infty$.
We have
\begin{align}
    \lim_{n\rightarrow\infty} P_{1,1}(n) &= - \infty
    \,,
    \\
    \lim_{n\rightarrow\infty} S_{2,0}(n) &= \frac{1}{2}
\intertext{and}
    \lim_{n\rightarrow\infty} F_{1}(n) &= \frac{1}{2}
\,.
\end{align}
Thus we get
\begin{align*}
    \lim_{n\rightarrow\infty} \frac{m_1}{\sqrt{\overline{m}_2}}
    & =
    \frac{
        \lim_{n\rightarrow\infty} \Big(
        P_{1,1}(n) \beta_1^2
        + Q_{1,0}(n) \beta_1 m_\star
        + R_{1,-1}(n) m_\star^2
        + \tau^2 F_{1}(n)
        \Big)
    }{\sqrt{ \lim_{n\rightarrow\infty} S_{2,0}(n) s_\star^4 }}
\\
    &= \begin{dcases}
        \frac{\tau^2}{\sqrt{2} s_\star^2} & \text{when } \beta_1 = 0 \,, \\
        - \infty & \text{otherwise} \,.
    \end{dcases}
    \numberthis
    \label{app_eq_onecov_ninf_elpd}
\end{align*}


\subsubsection{LOO-CV}

In this section, we derive a simplified analytic form for $\elpdHatC{\Mfd}{\y}$ presented in Appendix~\ref{app_sec_analytic_loocv_d} and for some moments of interest in the one covariate case defined in Appendix~\ref{app_sec_one_cov}. First we derive the parameters $\widetilde{\+A}_\Mfsd$, $\widetilde{\+b}_\Mfsd$, and $\widetilde{c}_\Mfsd$ defined in Appendix~\ref{app_sec_analytic_loocv_d} and then we use them to derive the respective moments of interest defined in Appendix~\ref{app_sec_analytic_moments}.

\paragraph{Parameters}
\label{app_sec_one_cov_loo_params}

Following the notation in Appendix~\ref{app_sec_analytic_loocv}, in the one covariate case defined in Appendix~\ref{app_sec_one_cov}, let us find simplified form for matrix $\widetilde{\+D}_{\Msk}$ and $\widetilde{\+P}_\Msk^\transp \widetilde{\+D}_{\Msk} \widetilde{\+P}_\Msk$ for $\Mk \in \{ \Mfsa, \Mfsb\}$.
For the model \Mfa{} we have
\begin{align*}
\+v(\Mfsa,i)
    &= \+X_{[\bcdot,\Mfsa]}
        \left(\+X_{[-i,\Mfsa]}^\transp\+X_{[-i,\Mfsa]}\right)^{-1}
        X_{[i,\Mfsa]}^\transp
\\
    &= \ones_{n} \left(\ones_{n-1}^\transp\ones_{n-1}\right)^{-1}
\\
    &= \frac{1}{n-1} \ones_{n},
    \numberthis
\end{align*}
for $i=1,2,\dots,n$.
From this, we get
\begin{align*}
\widetilde{\+D}_{\Mfsa[i,i]}
    &= \left( v(\Mfsa,i)_i + 1 \right)^{-1} \\
    &= \left( \frac{1}{n-1} + 1 \right)^{-1} \\
    &= \frac{n-1}{n}
    \numberthis
\intertext{and further}
\widetilde{\+D}_{\Mfsa} &= \frac{n-1}{n} \eye_{n}
\numberthis
\,.
\end{align*}
According to Equation~\eqref{app_eq_PtT_Pt_m}, for the diagonal elements of $\widetilde{\+P}_\Msk^\transp \widetilde{\+D}_{\Msk} \widetilde{\+P}_\Msk$ we get
\begin{align*}
\left[\widetilde{\+P}_{\Mfsa}^\transp \widetilde{\+D}_{\Mfsa} \widetilde{\+P}_{\Mfsa}\right]_{[i,i]}
    &= \sum_{p\neq\{i\}} \frac{\left(\frac{1}{n-1}\right)^2}{\frac{1}{n-1} +1}
        + \frac{1}{\frac{1}{n-1} + 1}
    \\
    &= \frac{n-1}{n (n-1)^2} \sum_{p\neq\{i\}} 1 + \frac{n-1}{n}
    \\
    &= \frac{1}{n} + \frac{n-1}{n}
    \\
    &= 1
\,,
\numberthis
\intertext{and for the off-diagonal elements we get}
    \left[\widetilde{\+P}_{\Mfsa}^\transp \widetilde{\+D}_{\Mfsa} \widetilde{\+P}_{\Mfsa}\right]_{[i,j]}
    &= \sum_{p\neq\{i,j\}}
            \frac{\frac{1}{n-1} \frac{1}{n-1}}{\frac{1}{n-1} +1}
        - \frac{\frac{1}{n-1}}{\frac{1}{n-1} +1}
        - \frac{\frac{1}{n-1}}{\frac{1}{n-1} +1}
    \\
    &= \frac{n-1}{n (n-1)^2} \sum_{p\neq\{i,j\}} 1 - 2\frac{1}{n}
    \\
    &= \frac{n-2 - 2(n-1)}{n (n-1)}
    \\
    &= - \frac{1}{n-1}
\,,
\numberthis
\end{align*}
where $i,j=1,2,\dots,n, \; i \neq j$.
For the model \Mfb{} we have
\begin{align*}
\+v(\Mfsb,i)
&= \+X_{[\bcdot,\Mfsb]}
    \left(\+X_{[-i,\Mfsb]}^\transp\+X_{[-i,\Mfsb]}\right)^{-1}
    \+X_{[i,\Mfsb]}^\transp,
\\
&=  \begin{bmatrix} \ones_n & \+x \end{bmatrix}
    \left(
        \begin{bmatrix} \ones_{n-1} & \+x_{-i} \end{bmatrix}^\transp
        \begin{bmatrix} \ones_{n-1} & \+x_{-i} \end{bmatrix}
    \right)^{-1}
    \begin{bmatrix} 1 & x_i \end{bmatrix}^\transp
\\
&=  \begin{bmatrix} \ones_n & \+x \end{bmatrix}
    \begin{bmatrix}
        n-1 & \ones_{n-1}^\transp\+x_{-i} \\
        \ones_{n-1}^\transp\+x_{-i} & \+x_{-i}^\transp\+x_{-i}
    \end{bmatrix}^{-1}
    \begin{bmatrix} 1 & x_i \end{bmatrix}^\transp
\\
&=  \frac{
    \begin{bmatrix} \ones_n & \+x \end{bmatrix}
    \begin{bmatrix}
        \+x_{-i}^\transp\+x_{-i} & -\ones_{n-1}^\transp\+x_{-i} \\
        -\ones_{n-1}^\transp\+x_{-i} & n-1
    \end{bmatrix}
    \begin{bmatrix} 1 & x_i \end{bmatrix}^\transp
}{
    (n-1)\+x_{-i}^\transp\+x_{-i}
    - \left(\ones_{n-1}^\transp\+x_{-i}\right)^2
},
\numberthis
\\
v(\Mfsb,i)_j &= \frac{
        \+x_{-i}^\transp\+x_{-i}
        - (x_i + x_j) \ones_{n-1}^\transp\+x_{-i}
        + (n-1) x_i x_j
    }{
        (n-1)\+x_{-i}^\transp\+x_{-i}
        - \left(\ones_{n-1}^\transp\+x_{-i}\right)^2
    },
\numberthis
\end{align*}
for all $i,j=1,2,\dots,n$.
Now we can write
\begin{equation}
 v(\Mfsb,i)_j
 = \frac{n-1 + x_i (x_i + x_j) + (n-1) x_i x_j}{ (n-1)^2 - x_i^2 }
 = \frac{x_i x_j + 1}{n-2}
\end{equation}
for which $v(\Mfsb,i)_i = \frac{2}{n-2}$ in particular. From this we get
\begin{align*}
\widetilde{\+D}_{\Mfsb[i,i]}
    &= \left( v(\Mfsb,i)_i + 1 \right)^{-1} \\
    &= \left( \frac{2}{n-2} + 1 \right)^{-1} \\
    &= \frac{n-2}{n}
    \numberthis
\intertext{and further}
\widetilde{\+D}_{\Mfsb} &= \frac{n-2}{n} \eye_{n}
\,.
\numberthis
\end{align*}
According to Equation~\eqref{app_eq_PtT_Pt_m}, for the diagonal elements of $\widetilde{\+P}_{\Mfsb}^\transp \widetilde{\+D}_{\Mfsb} \widetilde{\+P}_{\Mfsb}$ we get
\begin{align*}
\left[\widetilde{\+P}_{\Mfsb}^\transp \widetilde{\+D}_{\Mfsb} \widetilde{\+P}_{\Mfsb}\right]_{[i,i]}
    &= \sum_{p\neq\{i\}} \frac{\left(\frac{x_i x_p + 1}{n-2}\right)^2}{\frac{2}{n-2} +1}
        + \frac{1}{\frac{2}{n-2} + 1}
    \\
    &= \frac{n-2}{n} \left( \frac{1}{(n-2)^2} \sum_{p\neq\{i\}} 2(x_i x_p + 1) + 1 \right)
    \\
    &= \frac{n-2}{n} \left( \frac{2}{(n-2)^2} \left( x_i \sum_{p\neq\{i\}} x_p + n - 1\right) + 1 \right)
    \\
    &= \frac{n-2}{n} \left( \frac{2}{(n-2)^2} \left( -x_i^2 + n - 1\right) + 1 \right)
    \\
    & = \frac{n-2}{n} \left( \frac{2}{(n-2)^2} (n - 2) + 1 \right)
    \\
    & = \frac{n-2}{n} \frac{n}{n-2}
    \\
    &= 1
\,,
\numberthis
\end{align*}
and for the off-diagonal elements, we get
\begin{align*}
\left[\widetilde{\+P}_{\Mfsb}^\transp \widetilde{\+D}_{\Mfsb} \widetilde{\+P}_{\Mfsb}\right]_{[i,j]}
    &= \sum_{p\neq\{i,j\}}
            \frac{\frac{x_p x_i + 1}{n-2} \frac{x_p x_j + 1}{n-2}}{\frac{2}{n-2} +1}
        - \frac{\frac{x_i x_j + 1}{n-2}}{\frac{2}{n-2} +1}
        - \frac{\frac{x_i x_j + 1}{n-2}}{\frac{2}{n-2} +1}
    \\
    &= \frac{n-2}{n(n-2)^2} \sum_{p\neq\{i,j\}} \left( x_p^2 x_i x_j + x_p (x_i+x_j) + 1 \right) - 2\frac{n-2}{n(n-2)}(x_i x_j + 1)
    \\
    &= \left( \frac{1}{n(n-2)} \sum_{p\neq\{i,j\}} x_p^2  - \frac{2}{n}\right) x_i x_j
        + \frac{1}{n(n-2)} (x_i + x_j) \sum_{p\neq\{i,j\}} x_p
        \\&\quad
        + \frac{1}{n(n-2)} \sum_{p\neq\{i,j\}} 1 - \frac{2}{n}
    \\
    &= -\frac{1}{n} x_i x_j
        + \frac{1}{n(n-2)} (x_i + x_j) \sum_{p\neq\{i,j\}} x_p
        - \frac{1}{n}
\,,
\numberthis
\end{align*}
where $i,j=1,2,\dots,n, \; i \neq j$. When $x_i = x_j$, we have
$x_i x_j = 1$
and
$(x_i + x_j)\sum_{p\neq\{i,j\}} x_p = (2 x_i)(-2x_i)=-4$
and
\begin{align*}
\left[\widetilde{\+P}_{\Mfsb}^\transp \widetilde{\+D}_{\Mfsb} \widetilde{\+P}_{\Mfsb}\right]_{[i,j]}
    &= - \frac{1}{n}
        + \frac{1}{n(n-2)} (-4)
        - \frac{1}{n}
    \\
    &= -\frac{2}{n-2}
\,,
\numberthis
\intertext{%
and when $x_i \neq x_j$, we have $x_i x_j = -1$ and $(x_i + x_j)\sum_{p\neq\{i,j\}} x_p = 0 \cdot 0 = 0$ and}
\left[\widetilde{\+P}_{\Mfsb}^\transp \widetilde{\+D}_{\Mfsb} \widetilde{\+P}_{\Mfsb}\right]_{[i,j]}
    &= \frac{1}{n}
        + \frac{1}{n(n-2)} 0 \cdot 0
        - \frac{1}{n}
    \\
    &= 0
\,.
\numberthis
\end{align*}
Now we can summarise for both models \Mfa{} and \Mfb{} that
\begin{align}
\left[\widetilde{\+P}_{\Mfsa}^\transp \widetilde{\+D}_{\Mfsa} \widetilde{\+P}_{\Mfsa}\right]_{[i,j]}
    &= \begin{dcases}
        1   & \text{when } i = j \,, \\
        - \frac{1}{n-1} & \text{when } i \neq j \,,
    \end{dcases}
\\
\left[\widetilde{\+P}_{\Mfsb}^\transp \widetilde{\+D}_{\Mfsb} \widetilde{\+P}_{\Mfsb}\right]_{[i,j]}
    &= \begin{dcases}
        1   & \text{when } i = j \,, \\
        -\frac{2}{n-2}  & \text{when } i \neq j, \text{ and } x_i = x_j \,,\\
        0 & \text{when } i \neq j, \text{ and } x_i \neq x_j \,,
    \end{dcases}
\end{align}
and further, simplify
\begin{align}
\widetilde{\+P}_{\Mfsa}^\transp \widetilde{\+D}_{\Mfsa} \widetilde{\+P}_{\Mfsa}
    &= \frac{n}{n-1} \eye_{n} - \frac{1}{n-1} \ones_{n} \ones_{n}^\transp
\,,
\\
\widetilde{\+P}_{\Mfsb}^\transp \widetilde{\+D}_{\Mfsb} \widetilde{\+P}_{\Mfsb}
    &= \frac{n}{n-2}\eye_{n} - \frac{1}{n-2}\left(\ones_{n} \ones_{n}^\transp + \+x \+x^\transp \right)
\,.
\end{align}

Now we get
\begin{align*}
\widetilde{\+B}_{\Mfsa,1} &= - \widetilde{\+P}_\Mfsa^\transp \widetilde{\+D}_{\Mfsa} \widetilde{\+P}_\Mfsa
\\
    &= - \frac{n}{n-1} \eye_{n} + \frac{1}{n-1} \ones_{n} \ones_{n}^\transp
    \numberthis
\,,
\\
\widetilde{\+B}_{\Mfsb,1} &= - \widetilde{\+P}_\Mfsb^\transp \widetilde{\+D}_{\Mfsb} \widetilde{\+P}_\Mfsb
\\
    &= - \frac{n}{n-2}\eye_{n} + \frac{1}{n-2}\left(\ones_{n} \ones_{n}^\transp + \+x \+x^\transp \right)
    \numberthis
\,,
\\
\widetilde{\+C}_{\Mfsa,1} &= -\frac{1}{2} \widetilde{\+P}_\Mfsa^\transp \widetilde{\+D}_{\Mfsa} \widetilde{\+P}_\Mfsa
\\
    &= - \frac{n}{2(n-1)} \eye_{n} + \frac{1}{2(n-1)} \ones_{n} \ones_{n}^\transp
    \numberthis
\,,
\\
\widetilde{\+C}_{\Mfsb,1} &= -\frac{1}{2} \widetilde{\+P}_\Mfsb^\transp \widetilde{\+D}_{\Mfsb} \widetilde{\+P}_\Mfsb
\\
    &= - \frac{n}{2(n-2)}\eye_{n} + \frac{1}{2(n-2)}\left(\ones_{n} \ones_{n}^\transp + \+x \+x^\transp \right)
    \numberthis
\,,
\intertext{and}
\widetilde{\+A}_{\Mfsd,1} 
    &= -\frac{1}{2} \left(
        \widetilde{\+P}_{\Mfsa}^\transp \widetilde{\+D}_{\Mfsa} \widetilde{\+P}_{\Mfsa}
        - \widetilde{\+P}_{\Mfsb}^\transp \widetilde{\+D}_{\Mfsb} \widetilde{\+P}_{\Mfsb}
    \right)
    \,,
\\
&=  \frac{n}{2(n-2)(n-1)} \eye_{n} - \frac{1}{2(n-2)(n-1)} \ones_{n} \ones_{n}^\transp - \frac{1}{2(n-2)} \+x \+x^\transp
    \,, \numberthis
\\
\widetilde{c}_{\Mfsd,4} 
    &= \frac{1}{2} \log\left(\prod_{i=1}^n \frac{\widetilde{\+D}_{\Mfsa[i,i]}}{\widetilde{\+D}_{\Mfsb[i,i]}} \right)
\\
    &= \frac{1}{2} \log\left(\prod_{i=1}^n \frac{\frac{n-1}{n}}{\frac{n-2}{n}} \right)
\\
    &= \frac{n}{2} \log\frac{n-1}{n-2}
\numberthis
\,.
\end{align*}
Finally, we get the desired parameters
\begin{align}
\widetilde{\+A}_{\Mfsd}
    &= \frac{1}{\tau^2} \widetilde{\+A}_{\Mfsd,1}
\\
    &= \frac{1}{\tau^2} \left(
        \frac{n}{2(n-2)(n-1)} \eye_{n} - \frac{1}{2(n-2)(n-1)} \ones_{n} \ones_{n}^\transp - \frac{1}{2(n-2)} \+x \+x^\transp
    \right)
    \,, \numberthis
\\
\widetilde{\+b}_{\Mfsd}
    &= \frac{1}{\tau^2} \left( \widetilde{\+B}_{\Mfsa,1} \hat{\+y}_{-\Mfsa} - \widetilde{\+B}_{\Mfsb,1} \hat{\+y}_{-\Mfsb} \right)
\\
    &= - \frac{1}{\tau^2} \beta_1 \frac{n}{n-1} \+x
    \,, \numberthis
\\
\widetilde{c}_{\Mfsd}
    &= \frac{1}{\tau^2}  \left(
        \hat{\+y}_{-\Mfsa}^\transp \widetilde{\+C}_{\Mfsa,1} \hat{\+y}_{-\Mfsa}
        - \hat{\+y}_{-\Mfsb}^\transp \widetilde{\+C}_{\Mfsb,1} \hat{\+y}_{-\Mfsb}
     \right)
    + \widetilde{c}_{\Mfsd,4}
\\
    &= -\frac{1}{\tau^2} \beta_1^2 \frac{n^2}{2(n-1)} + \frac{n}{2} \log\frac{n-1}{n-2}
\,. \numberthis
\end{align}

\paragraph{First Moment}
\label{app_sec_analytic_one_cov_loocv_1_moment}

In this section, we formulate the first raw moment $m_1$, defined in a general setting in Equation~\eqref{app_eq_analytic_1_moment}, for $\elpdHatC{\Mfd}{\y}$ in the one covariate case defined in Appendix~\ref{app_sec_one_cov}.
The trace of $\+\Sigma_\star^{1/2}\widetilde{\+A}_{\Mfsd}\+\Sigma_\star^{1/2} = s_\star^2\widetilde{\+A}_{\Mfsd}$ simplifies to
\begin{align*}
\operatorname{tr}\left(\+\Sigma_\star^{1/2}\widetilde{\+A}_{\Mfsd}\+\Sigma_\star^{1/2}\right)
    &= \frac{1}{\tau^2} s_\star^2 n \left(
        \frac{n}{2(n-2)(n-1)} - \frac{1}{2(n-2)(n-1)} - \frac{1}{2(n-2)}
    \right)
    \\
    &= 0
    \numberthis
\end{align*}
as was also shown to hold in a general case in Appendix~\ref{app_sec_analytic_loocv_additional_properties}.
Furthermore,
\begin{align*}
\widetilde{\+b}_{\Mfsd}^\transp \+\mu_\star
    &= - \frac{1}{\tau^2} \beta_1 m_\star \frac{n}{n-1}
    \numberthis
\end{align*}
and
\begin{align*}
\+\mu_\star^\transp \widetilde{\+A}_{\Mfsd} \+\mu_\star
    &= \frac{1}{\tau^2} \left(
        \frac{n}{2(n-2)(n-1)} m_\star^2 - \frac{1}{2(n-2)(n-1)} m_\star^2 - \frac{1}{2(n-2)} m_\star^2
    \right)
    \\
    &= 0
    \,.
    \numberthis
\end{align*}
Now Equation~\eqref{app_eq_analytic_1_moment} simplifies to
\begin{align*}
m_1
    &= \operatorname{tr}\left(\+\Sigma_\star^{1/2}\widetilde{\+A}_{\Mfsd}\+\Sigma_\star^{1/2}\right)
        + \widetilde{c}_{\Mfsd}
        + \widetilde{\+b}^\transp \+\mu_\star
        + \+\mu_\star^\transp \widetilde{\+A}_{\Mfsd} \+\mu_\star
    \\
    & = \frac{1}{\tau^2} \left(
        P_{1,1}(n) \beta_1^2
        + Q_{1,0}(n) \beta_1 m_\star
    \right)
    + F_{1}(n)
    \,,
    \numberthis
\intertext{where}
    P_{1,1}(n) &= - \frac{n^2}{2(n-1)}
    \numberthis\\
    Q_{1,0}(n) &= - \frac{n}{n-1}
    \numberthis\\
    F_{1}(n) &= \frac{n}{2} \log \frac{n-1}{n-2}
    \,,
    \numberthis
\end{align*}
where the first subscript indicates the corresponding order of the moment, and for the rational functions $P_{1,1}$ and $Q_{1,0}$, the second subscript indicates the degree of the rational as a difference between the degrees of the numerator and the denominator.
It can be seen that $m_1$ does not depend on $s_\star$.

\paragraph{Second Moment}
\label{app_sec_analytic_one_cov_loocv_2_moment}

In this section, we formulate the second moment $\overline{m}_2$ about the mean in Equation~\eqref{app_eq_analytic_2_moment} for $\elpdHatC{\Mfd}{\y}$ in the one covariate case defined in Appendix~\ref{app_sec_one_cov}.
The second power of $\widetilde{\+A}_{\Mfsd}$ is
\begin{align*}
\widetilde{\+A}_{\Mfsd}^2
    &= \frac{1}{\tau^4} \Bigg(
        \frac{n^2}{4(n-2)^2(n-1)^2} \eye_n
        - \frac{n}{4(n-2)^2(n-1)^2} \ones_{n}\ones_{n}^\transp
        + \frac{n(n-3)}{4(n-2)^2(n-1)}\+x\+x^\transp
    \Bigg)
    \,.
    \numberthis
\end{align*}
The trace in Equation~\eqref{app_eq_analytic_2_moment} simplifies to
\begin{align*}
&\operatorname{tr}\left(\left(\+\Sigma_\star^{1/2}\widetilde{\+A}_{\Mfsd}\+\Sigma_\star^{1/2}\right)^2\right)
\\
    &\qquad= \frac{1}{\tau^4} s_\star^4 n \bigg(
        \frac{n^2}{4(n-2)^2(n-1)^2}
        - \frac{n}{4(n-2)^2(n-1)^2}
        + \frac{n(n-3)}{4(n-2)^2(n-1)}
    \bigg)
    \\
    &\qquad= \frac{1}{\tau^4} s_\star^4 \frac{n^2}{4 (n-2) (n-1)}
    \,.
    \numberthis
\end{align*}
Furthermore
\begin{align*}
\widetilde{\+b}_{\Mfsd}^\transp \widetilde{\+b}_{\Mfsd}
    &= \frac{1}{\tau^4} \beta_1^2 \frac{n^3}{(n-1)^2}
\,,
\numberthis
\\
\widetilde{\+b}_{\Mfsd}^\transp \widetilde{\+A}_{\Mfsd} \+\mu_\star
    &= \frac{1}{\tau^4} \beta_1 m_\star \frac{n^2}{2(n-1)^2}
\,,
\numberthis
\intertext{and}
\+\mu_\star^\transp \widetilde{\+A}_{\Mfsd}^2 \+\mu_\star
    &= \frac{1}{\tau^4} m_\star^2 \frac{n}{4 (n-2) (n-1)}
\,.
\numberthis
\end{align*}
Now Equation~\eqref{app_eq_analytic_2_moment} simplifies to
\begin{align*}
\overline{m}_2
    &= 2 \operatorname{tr}\left(\left(\+\Sigma_\star^{1/2}\widetilde{\+A}_{\Mfsd}\+\Sigma_\star^{1/2}\right)^2\right)
    + \widetilde{\+b}_{\Mfsd}^\transp \+\Sigma_\star \widetilde{\+b}_{\Mfsd}
    \\ &\quad
    + 4 \widetilde{\+b}_{\Mfsd}^\transp \+\Sigma_\star \widetilde{\+A}_{\Mfsd} \+\mu_\star
    + 4 \+\mu_\star^\transp \widetilde{\+A}_{\Mfsd} \+\Sigma_\star \widetilde{\+A}_{\Mfsd} \+\mu_\star
    \\
    & = \frac{1}{\tau^4} \left(
        P_{2,1}(n) \beta_1^2 s_\star^2
        + Q_{2,0}(n) \beta_1 m_\star s_\star^2
        + R_{2,-1}(n) m_\star^2 s_\star^2
        + S_{2,0}(n) s_\star^4
    \right)
    \,,
    \numberthis
\intertext{where}
    P_{2,1}(n) &= \frac{n^3}{(n-1)^2}
    \numberthis\\
    Q_{2,0}(n) &= \frac{2n^2}{(n-1)^2}
    \numberthis\\
    R_{2,-1}(n) &= \frac{n}{(n - 2) (n - 1)}
    \numberthis\\
    S_{2,0}(n) &= \frac{n^2}{2 (n - 2) (n - 1)}
    \,,
    \numberthis
\end{align*}
where the first subscript in the rational functions $P_{2,1}$, $Q_{2,0}$, $R_{2,-1}$, and $S_{2,0}$ indicates the corresponding order of the moment. The second subscript indicates the degree of the rational as a difference between the degrees of the numerator and the denominator.
\paragraph{Mean Relative to the Standard Deviation}
\label{app_sec_analytic_two_Rel_SD}
In this section, we formulate the ratio of mean and standard deviation $m_1 \big/ \sqrt{\overline{m}_2}$ for $\elpdHatC{\Mfd}{\y}$ in the one covariate case defined in Appendix~\ref{app_sec_one_cov}.
Combining results from appendices~\ref{app_sec_analytic_one_cov_loocv_1_moment} and~\ref{app_sec_analytic_one_cov_loocv_2_moment}, we get
\begin{align*}
    \frac{m_1}{\sqrt{\overline{m}_2}}
    & =
    \frac{
        P_{1,1}(n) \beta_1^2
        + Q_{1,0}(n) \beta_1 m_\star
        + \tau^2 F_{1}(n)
    }{\sqrt{
        P_{2,1}(n) \beta_1^2 s_\star^2
        + Q_{2,0}(n) \beta_1 m_\star s_\star^2
        + R_{2,-1}(n) m_\star^2 s_\star^2
        + S_{2,0}(n) s_\star^4
    }}
\,,
\numberthis
\intertext{where}
    P_{1,1}(n) &= - \frac{n^2}{2(n-1)}
    \numberthis\\
    Q_{1,0}(n) &= - \frac{n}{n-1}
    \numberthis\\
    F_{1}(n) &= \frac{n}{2} \log \frac{n-1}{n-2}
    \numberthis\\
    P_{2,1}(n) &= \frac{n^3}{(n-1)^2}
    \numberthis\\
    Q_{2,0}(n) &= \frac{2n^2}{(n-1)^2}
    \numberthis\\
    R_{2,-1}(n) &= \frac{n}{(n - 2) (n - 1)}
    \numberthis\\
    S_{2,0}(n) &= \frac{n^2}{2 (n - 2) (n - 1)}
    \,,
    \numberthis
\end{align*}
where the first subscript in the rational functions $P_{1,1}$, $Q_{1,0}$, $R_{1,-1}$, and $S_{2,0}$ indicates the corresponding order of the associated moment. The second subscript indicates the degree of the rational as a difference between the degrees of the numerator and the denominator.

Let us inspect the behaviour of $m_1 \big/ \sqrt{\overline{m}_2}$ when $n \rightarrow \infty$.
When $\beta_1 \neq 0$ we get
\begin{align*}
    \lim_{n\rightarrow\infty} \frac{m_1}{\sqrt{\overline{m}_2}}
    & =
    \frac{ \lim_{n\rightarrow\infty} n^{-1/2} \Big(
        P_{1,1}(n) \beta_1^2
        + Q_{1,0}(n) \beta_1 m_\star
        + \tau^2 F_{1}(n)
    \Big)
    }{\sqrt{ \lim_{n\rightarrow\infty} n^{-1} \Big(
        P_{2,1}(n) \beta_1^2 s_\star^2
        + Q_{2,0}(n) \beta_1 m_\star s_\star^2
        + R_{2,-1}(n) m_\star^2 s_\star^2
        + S_{2,0}(n) s_\star^4
    \Big)
    }}
\\
    &= \frac{
        \lim_{n\rightarrow\infty} n^{-1/2} P_{2,1}(n) \beta_1^2 s_\star^2
    }{\sqrt{\beta_1^2 s_\star^2}}
\\
    &= - \infty
\,.
\numberthis
\end{align*}
Otherwise, when $\beta_1 = 0$, we get
\begin{align*}
    \lim_{n\rightarrow\infty} \frac{m_1}{\sqrt{\overline{m}_2}}
    & =
    \frac{ \lim_{n\rightarrow\infty}
        \tau^2 F_{1}(n)
    }{\sqrt{ \lim_{n\rightarrow\infty} \Big(
        R_{2,-1}(n) m_\star^2 s_\star^2
        + S_{2,0}(n) s_\star^4
    \Big)
    }}
\\
    &= \frac{
        \tau^2 \lim_{n\rightarrow\infty} F_{1}(n)
    }{\sqrt{ s_\star^4 \lim_{n\rightarrow\infty} S_{2,0}(n)}}
\\
    &= \frac{
        \tau^2 \frac{1}{2}
    }{\sqrt{ s_\star^4 \frac{1}{2}}}
\\
    &= \frac{\tau^2}{\sqrt{2}s_\star^2}
\,.
\numberthis
\end{align*}
Now we can summarise
\begin{align*}
    \lim_{n\rightarrow\infty} \frac{m_1}{\sqrt{\overline{m}_2}}
    & = \begin{dcases}
        \frac{\tau^2}{\sqrt{2}s_\star^2} \,, &\text{when } \beta_1 = 0 \\
        - \infty & \text{otherwise.}
    \end{dcases}
\numberthis
\label{app_eq_onecov_ninf_loocv}
\end{align*}
This limit matches with the limit of the ratio of the mean and standard deviation of $\elpdC{\Mfd}{\y}$ in Equation~\eqref{app_eq_onecov_ninf_elpd}.


\subsubsection{LOO-CV Error}

In this section, we derive a simplified analytic form for the LOO-CV error presented in Appendix~\ref{app_sec_analytic_error} and for some moments of interest in the one covariate case defined in Appendix~\ref{app_sec_one_cov}. First, we derive the parameters $\+A_\mathrm{err}$, $\+b_\mathrm{err}$, and $c_\mathrm{err}$ defined in Appendix~\ref{app_sec_analytic_error} and then we use them to derive the respective moments of interest defined in Appendix~\ref{app_sec_analytic_moments}.

\paragraph{Parameters}

Using the results from Appendix~\ref{app_sec_one_cov_elpd_params} and~\ref{app_sec_one_cov_loo_params}, we can derive simplified forms for the parameters for the LOO-CV error presented in Appendix~\ref{app_sec_analytic_error} in the one covariate case defined in Appendix~\ref{app_sec_one_cov}:
\begingroup
\allowdisplaybreaks
\begin{align*}
    \+A_{\mathrm{err},1}
    &= \frac{1}{2} \left(
        \+P_{\Mfsa} \+D_{\Mfsa} \+P_{\Mfsa}
        - \widetilde{\+P}_{\Mfsa}^\transp \widetilde{\+D}_{\Mfsa} \widetilde{\+P}_{\Mfsa}
        - \+P_{\Mfsb} \+D_{\Mfsb} \+P_{\Mfsb}
        + \widetilde{\+P}_{\Mfsb}^\transp \widetilde{\+D}_{\Mfsb} \widetilde{\+P}_{\Mfsb}
    \right)
    \\
    &= \frac{1}{2} \bigg(
        \frac{1}{n+1} \ones_{n} \ones_{n}^\transp
        - \frac{n}{n-1} \eye_{n} + \frac{1}{n-1} \ones_{n} \ones_{n}^\transp
        \\ &\qquad\quad
        - \frac{1}{n+2} \left(\ones_{n} \ones_{n}^\transp + \+x \+x^\transp \right)
        + \frac{n}{n-2}\eye_{n} - \frac{1}{n-2}\left(\ones_{n} \ones_{n}^\transp + \+x \+x^\transp \right)
    \bigg)
    \\
    &= \frac{n}{2(n-1)(n-2)} \eye_{n}
        \\ &\quad
        - \frac{3 n}{(n+2)(n+1)(n-1)(n-2)} \ones_{n} \ones_{n}^\transp
        \\ &\quad
        - \frac{n}{(n+2)(n-2)} \+x \+x^\transp
\,,
\numberthis
\\
    \+B_{\mathrm{err},\Mfsa,1}
    &= \+P_{\Mfsa} \+D_{\Mfsa} \left(\+P_{\Mfsa} - \eye \right)
        - \widetilde{\+P}_\Mfsa^\transp \widetilde{\+D}_{\Mfsa} \widetilde{\+P}_\Mfsa
    \\
    &= - \frac{n}{n-1} \eye_{n} + \frac{1}{n-1} \ones_{n} \ones_{n}^\transp
\,,
\numberthis
\\
    \+B_{\mathrm{err},\Mfsb,1}
    &= \+P_{\Mfsb} \+D_{\Mfsb} \left(\+P_{\Mfsb} - \eye \right)
        - \widetilde{\+P}_\Mfsb^\transp \widetilde{\+D}_{\Mfsb} \widetilde{\+P}_\Mfsb
    \\
    &= - \frac{n}{n-2}\eye_{n} +\frac{1}{n-2}\left(\ones_{n} \ones_{n}^\transp + \+x \+x^\transp \right)
\,,
\numberthis
\\
    \+C_{\mathrm{err},\Mfsa,1}
    &= \frac{1}{2}
        \left(
            \left(\+P_{\Mfsa} - \eye \right) \+D_{\Mfsa} \left(\+P_{\Mfsa} - \eye \right)
            - \widetilde{\+P}_\Mfsa^\transp \widetilde{\+D}_{\Mfsa} \widetilde{\+P}_\Mfsa
        \right)
    \\
    &= \frac{1}{2}
        \left(
            \frac{n}{n+1}\eye_{n} - \frac{1}{n+1} \ones_{n} \ones_{n}^\transp
            - \frac{n}{n-1} \eye_{n} + \frac{1}{n-1} \ones_{n} \ones_{n}^\transp
        \right)
    \\
    &= -\frac{n}{(n+1)(n-1)} \eye_{n} + \frac{1}{(n+1)(n-1)} \ones_{n} \ones_{n}^\transp
\,,
\numberthis
\\
    \+C_{\mathrm{err},\Mfsb,1}
    &= \frac{1}{2}
        \left(
            \left(\+P_{\Mfsb} - \eye \right) \+D_{\Mfsb} \left(\+P_{\Mfsb} - \eye \right)
            - \widetilde{\+P}_\Mfsb^\transp \widetilde{\+D}_{\Mfsb} \widetilde{\+P}_\Mfsb
        \right)
    \\
    &= \frac{1}{2}
        \left(
            \frac{n}{n+2}\eye_{n} - \frac{1}{n+2} \left(\ones_{n} \ones_{n}^\transp + \+x \+x^\transp \right)
            - \frac{n}{n-2}\eye_{n} +\frac{1}{n-2}\left(\ones_{n} \ones_{n}^\transp + \+x \+x^\transp \right)
        \right)
    \\
    &= -\frac{2n}{(n+2)(n-2)} \eye_{n} +\frac{2}{(n+2)(n-2)}\left(\ones_{n} \ones_{n}^\transp + \+x \+x^\transp \right)
\,,
\numberthis
\\
    c_{\mathrm{err},4}
    &= \frac{1}{2} \log \left( \prod_{i=1}^n
        \frac{
            \+D_{\Mfsb,[i,i]} \widetilde{\+D}_{\Mfsa[i,i]}
        }{
            \+D_{\Mfsa,[i,i]} \widetilde{\+D}_{\Mfsb[i,i]}
        }
    \right)
    \\
    &= \frac{1}{2} \log \left( \prod_{i=1}^n
        \frac{
            \frac{n}{n+2} \frac{n-1}{n}
        }{
            \frac{n}{n+1} \frac{n-2}{n}
        }
    \right)
    \\
    &= \frac{n}{2} \log \frac{ (n+1)(n-1) }{  (n+2)(n-2) }
\,,
\numberthis
\\
\+C_{\Mfsa,2} &= \left(\+P_{\Mfsa} - \eye \right) \+D_{\Mfsa}
    \\
    &= - \frac{n}{n+1}\eye_{n} + \frac{1}{n+1} \ones_{n} \ones_{n}^\transp
\,,
\numberthis
\\
\+C_{\Mfsb,2} &= \left(\+P_{\Mfsb} - \eye \right) \+D_{\Mfsb}
    \\
    &= - \frac{n}{n+2}\eye_{n} + \frac{1}{n+2} \left(\ones_{n} \ones_{n}^\transp + \+x \+x^\transp \right)
\,,
\numberthis
\\
\+B_{\Mfsd,2} &= \+P_{\Mfsa} \+D_{\Mfsa} - \+P_{\Mfsb} \+D_{\Mfsb}
    \\
    &= \frac{1}{n+1} \ones_{n} \ones_{n}^\transp
        - \frac{1}{n+2} \left(\ones_{n} \ones_{n}^\transp + \+x \+x^\transp \right)
    \\
    &= \frac{1}{(n+2)(n+1)} \ones_{n} \ones_{n}^\transp - \frac{1}{n+2} \+x \+x^\transp
\,,
\numberthis
\\
\+C_{\Mfsd,3} &= - \frac{1}{2} \left( \+D_{\Mfsa} - \+D_{\Mfsb}\right)
    \\
    &= - \frac{n}{2(n+1)(n+2)} \eye_{n}
\,.
\numberthis
\end{align*}
\endgroup
Furthermore, we get
\begingroup
\allowdisplaybreaks
\begin{align*}
     \+A_\mathrm{err} &= \frac{1}{\tau^2} \+A_{\mathrm{err},1}
     \\
     &= \frac{1}{\tau^2} \Bigg(
        + \frac{n}{2(n-1)(n-2)} \eye_{n}
        \\ &\qquad\quad
        - \frac{3 n}{(n+2)(n+1)(n-1)(n-2)} \ones_{n} \ones_{n}^\transp
        \\ &\qquad\quad
        - \frac{n}{(n+2)(n-2)} \+x \+x^\transp
    \Bigg)
\,,
\numberthis
\\
    \+b_\mathrm{err} &= \frac{1}{\tau^2} \Big(
        \+B_{\mathrm{err},\Mfsa,1} \hat{\+y}_{-\Mfsa}
        - \+B_{\mathrm{err},\Mfsb,1} \hat{\+y}_{-\Mfsb}
        - \+B_{\Mfsd,2} \+\mu_\star
   \Big)
   \\
   &= \frac{1}{\tau^2} \left(
        \beta_1 \left( - \frac{n}{n-1} \+x + \frac{1}{n-1} \ones_{n} \ones_{n}^\transp \+x \right)
        - \frac{1}{(n+2)(n+1)} \ones_{n} \ones_{n}^\transp \+\mu_\star + \frac{1}{n+2} \+x \+x^\transp \+\mu_\star
   \right)
   \\
   &= \frac{1}{\tau^2} \left(
       - \beta_1 \frac{n}{n-1} \+x
       - \frac{1}{(n+2)(n+1)} \ones_{n} \ones_{n}^\transp \+\mu_\star +\frac{1}{n+2} \+x \+x^\transp \+\mu_\star
   \right)
\,,
\numberthis
\\
    c_\mathrm{err} &=
    \frac{1}{\tau^2} \Bigg(
        \hat{\+y}_{-\Mfsa}^\transp \+C_{\mathrm{err},\Mfsa,1} \hat{\+y}_{-\Mfsa}
        - \hat{\+y}_{-\Mfsb}^\transp \+C_{\mathrm{err},\Mfsb,1} \hat{\+y}_{-\Mfsb}
        \\&\qquad\quad
        -  \hat{\+y}_{-\Mfsa}^\transp \+C_{\Mfsa,2} \+\mu_{\star}
        +  \hat{\+y}_{-\Mfsb}^\transp \+C_{\Mfsb,2} \+\mu_{\star}
        \\&\qquad\quad
        - \+\mu_{\star}^\transp \+C_{\Mfsd,3} \+\mu_{\star}
        - \+\sigma_\star^\transp \+C_{\Mfsd,3} \+\sigma_\star
    \Bigg)
        + c_{\mathrm{err},4}
    \\
    &= \frac{1}{\tau^2} \Bigg(
        \beta_1^2 \+x^\transp \left(-\frac{n}{(n+1)(n-1)} \eye_{n} + \frac{1}{(n+1)(n-1)} \ones_{n} \ones_{n}^\transp \right) \+x
        \\&\qquad\quad
        + \beta_1 \+x^\transp \left( \frac{n}{n+1}\eye_{n} - \frac{1}{n+1} \ones_{n} \ones_{n}^\transp \right) \+\mu_{\star}
        \\&\qquad\quad
        + \frac{n}{2(n+1)(n+2)} \left(
            \+\mu_{\star}^\transp \+\mu_{\star}
            + \+\sigma_\star^\transp \+\sigma_\star
        \right)
    \Bigg)
    \\&\quad
        + \frac{n}{2} \log \frac{ (n+1)(n-1) }{ (n+2)(n-2) }
    \\
    &= \frac{1}{\tau^2} \left(
        - \beta_1^2 \frac{n^2}{(n+1)(n-1)}
        + \beta_1 \frac{n}{n+1} \+x^\transp \+\mu_{\star}
        + \frac{n}{2(n+1)(n+2)} \left(
            \+\mu_{\star}^\transp \+\mu_{\star}
            + \+\sigma_\star^\transp \+\sigma_\star
        \right)
    \right)
    \\&\quad
        + \frac{n}{2} \log \frac{ (n+1)(n-1) }{ (n+2)(n-2) }
\,.
\numberthis
\end{align*}
\endgroup
Considering the applied setting for the data generation mechanism parameters, in which
\begin{align}
    \+\Sigma_\star &= s_\star^2 \, \eye_{n},
    \\
    x_{i_\text{out}} &= 1,
    \\
    \mu_{\star,\,i} &= \begin{cases} m_\star & \text{when } i = i_\text{out}\,, \\ 0 & \text{otherwise}, \end{cases}
\end{align}
the LOO-CV error parameters $\+b_\mathrm{err}$ and $c_\mathrm{err}$ simplify into
\begin{align}
    \+b_\mathrm{err}
   &= \frac{1}{\tau^2} \left(
       \left(-\beta_1 \frac{n}{n-1} + m_\star \frac{1}{n+2} \right) \+x
       - m_\star \frac{1}{(n+2)(n+1)} \ones_{n}
   \right)
\,,
\\
    c_\mathrm{err}
    &= \frac{1}{\tau^2} \left(
        - \beta_1^2 \frac{n^2}{(n+1)(n-1)}
        + \beta_1 m_\star \frac{n}{n+1}
        + \frac{n}{2(n+1)(n+2)} \left(
            m_\star^2
            + n s_\star^2
        \right)
    \right)
    \nonumber\\&\quad
    + \frac{n}{2} \log \frac{ (n+1)(n-1) }{ (n+2)(n-2) }
\,.
\end{align}

\paragraph{First Moment}
\label{app_sec_analytic_one_cov_1_moment}

In this section, we formulate the first raw moment $m_1$ in Equation~\eqref{app_eq_analytic_1_moment} for the error $\elpdHatErrC{\Mfd}{\y}$ in the one covariate case defined in Appendix~\ref{app_sec_one_cov}.
The trace of $\+\Sigma_\star^{1/2}\+A_\mathrm{err}\+\Sigma_\star^{1/2} = s_\star^2\+A_\mathrm{err}$ simplifies to
\begin{align*}
\operatorname{tr}\left(\+\Sigma_\star^{1/2}\+A_\mathrm{err}\+\Sigma_\star^{1/2}\right)
    &= \frac{s_\star^2}{\tau^2} n \bigg(
        \frac{n}{2(n-1)(n-2)}
        - \frac{3 n}{(n+2)(n+1)(n-1)(n-2)}
        \\&\qquad\quad
        - \frac{n}{(n+2)(n-2)}
    \bigg)
    \\
    &= - \frac{s_\star^2}{\tau^2} \frac{n^2}{2(n+2)(n+1)}
    \,.
    \numberthis
\end{align*}
Furthermore
\begin{align*}
\+b_\mathrm{err}^\transp \+\mu_\star
    &= \frac{1}{\tau^2} \left(
       \left(-\beta_1 \frac{n}{n-1} + m_\star \frac{1}{n+2} \right) m_\star
       - m_\star \frac{1}{(n+2)(n+1)} m_\star
   \right)
   \\
   & = \frac{1}{\tau^2} \left(
       -\beta_1 m_\star \frac{n}{n-1} + m_\star^2 \frac{n}{(n+2)(n+1)}
   \right)
    \,,
    \numberthis
\end{align*}
and
\begin{align*}
\+\mu_\star^\transp \+A_\mathrm{err} \+\mu_\star
    &= \frac{1}{\tau^2} \Bigg(
        \frac{n}{2(n-1)(n-2)} \+\mu_\star^\transp \+\mu_\star
        \\ &\qquad\quad
        - \frac{3 n}{(n+2)(n+1)(n-1)(n-2)} \+\mu_\star^\transp\ones_{n} \ones_{n}^\transp\+\mu_\star
        \\ &\qquad\quad
        - \frac{n}{(n+2)(n-2)} \+\mu_\star^\transp\+x \+x^\transp\+\mu_\star
    \Bigg)
   \\
   & = -\frac{m_\star^2}{\tau^2} \frac{n}{2(n+2)(n+1)}
    \,.
    \numberthis
\end{align*}
Now Equation~\eqref{app_eq_analytic_1_moment} simplifies to
\begin{align*}
m_1
    &= \operatorname{tr}\left(\+\Sigma_\star^{1/2}\+A_\mathrm{err}\+\Sigma_\star^{1/2}\right)
        + c_\mathrm{err}
        + \+b_\mathrm{err}^\transp \+\mu_\star
        + \+\mu_\star^\transp \+A_\mathrm{err} \+\mu_\star
    \\
    \\
    & = \frac{1}{\tau^2} \left(
        P_{1,0}(n) \beta_1^2
        + Q_{1,-1}(n) \beta_1 m_\star
        + R_{1,-1}(n) m_\star^2
    \right)
    + F_{1}(n)
    \,,
    \numberthis
\intertext{where}
    P_{1,0}(n) &= -\frac{n^2}{(n+1)(n-1)}
    \numberthis\\
    Q_{1,-1}(n) &= -\frac{2n}{(n+1)(n-1)}
    \numberthis\\
    R_{1,-1}(n) &= \frac{n}{(n+2)(n+1)}
    \numberthis\\
    F_{1}(n) &= \frac{n}{2} \log \frac{ (n+1)(n-1) }{ (n+2)(n-2) }
    \,,
    \numberthis
\end{align*}
where the first subscript indicates the corresponding order of the moment, and for the rational functions $P_{1,0}$, $Q_{1,-1}$, and $R_{1,-1}$, the second subscript indicates the degree of the rational as a difference between the degrees of the numerator and the denominator. It can be seen that $m_1$ does not depend on $s_\star$.

\paragraph{Second Moment}
\label{app_sec_analytic_one_cov_2_moment}

In this section, we formulate the second moment $\overline{m}_2$ about the mean in Equation~\eqref{app_eq_analytic_2_moment} for the error $\elpdHatErrC{\Mfd}{\y}$ in the one covariate case defined in Appendix~\ref{app_sec_one_cov}.
The second power of $\+A_\mathrm{err}$ is
\begin{align*}
\+A_\mathrm{err}^2
    &= \frac{1}{\tau^4} \Bigg(
        \frac{n^2}{4(n-1)^2(n-2)^2}\eye_{n}
        \\&\qquad\quad
        - \frac{ 3 n^2 (n^2 + 2)}{(n + 2)^2 (n + 1)^2 (n - 1)^2 (n - 2)^2} \ones_{n} \ones_{n}^\transp
        \\&\qquad\quad
        + \frac{n^2 (n^2 - 2 n - 2)}{(n + 2)^2 (n - 1)  (n - 2)^2} \+x\+x^\transp
    \Bigg)
    \,.
    \numberthis
\end{align*}
The trace in Equation~\eqref{app_eq_analytic_2_moment} simplifies to
\begin{align*}
\operatorname{tr}\left(\left(\+\Sigma_\star^{1/2}\+A_\mathrm{err}\+\Sigma_\star^{1/2}\right)^2\right)
    &= \frac{s_\star^4}{\tau^4} n \bigg(
        \frac{n^2}{4(n-1)^2(n-2)^2}
        - \frac{ 3 n^2 (n^2 + 2)}{(n + 2)^2 (n + 1)^2 (n - 1)^2 (n - 2)^2}
        \\&\qquad\qquad
        + \frac{n^2 (n^2 - 2 n - 2)}{(n + 2)^2 (n - 1)  (n - 2)^2}
    \bigg)
    \\
    &= \frac{s_\star^4}{\tau^4} \frac{n^3 (4 n^3 + 9 n^2 + 5 n - 6)}{4 (n + 2)^2 (n + 1)^2 (n - 1) (n - 2)}
    \,.
    \numberthis
\end{align*}
Furthermore
\begin{align*}
\+b_\mathrm{err}^\transp \+b_\mathrm{err}
    &= \frac{1}{\tau^4} \Bigg(
       \left(- \beta_1 \frac{n}{n-1} + m_\star \frac{1}{n+2} \right)^2 \+x^\transp\+x
       \\&\qquad\quad
       + m_\star^2 \frac{1}{(n+2)^2(n+1)^2} \ones_{n}^\transp\ones_{n}
       \\&\qquad\quad
       -2 \left(-\beta_1 \frac{n}{n-1} + m_\star \frac{1}{n+2} \right) m_\star \frac{1}{(n+2)(n+1)} \+x^\transp\ones_{n}
   \Bigg)
   \\
   & = \frac{1}{\tau^4} \left(
        \beta_1^2 \frac{n^3}{(n-1)^2}
        - \beta_1 m_\star \frac{2n^2}{(n+2)(n-1)}
        + m_\star^2 \frac{n (n^2 + 2 n + 2)}{(n + 2)^2 (n + 1)^2}
   \right)
\,,
\numberthis
\end{align*}
and
\begin{align*}
\+b_\mathrm{err}^\transp \+A_\mathrm{err} \+\mu_\star
    &= \frac{1}{\tau^4} \Bigg(
        \left(-\beta_1 \frac{n}{n-1} + m_\star \frac{1}{n+2} \right) \frac{n}{2(n-1)(n-2)} \+x^\transp \eye_{n} \+\mu_\star
        \\&\qquad\quad
        - \left(-\beta_1 \frac{n}{n-1} + m_\star \frac{1}{n+2} \right) \frac{n}{(n+2)(n-2)} \+x^\transp \+x \+x^\transp \+\mu_\star
        \\&\qquad\quad
        - m_\star \frac{1}{(n+2)(n+1)} \frac{n}{2(n-1)(n-2)} \ones_{n}^\transp \eye_{n} \+\mu_\star
        \\&\qquad\quad
        + m_\star \frac{1}{(n+2)(n+1)} \frac{3 n}{(n+2)(n+1)(n-1)(n-2)} \ones_{n}^\transp \ones_{n} \ones_{n}^\transp \+\mu_\star
    \Bigg)
    \\
    &= \frac{1}{\tau^4} \Bigg(
        \beta_1 m_\star \frac{n^2 (2 n + 1)}{2 (n + 2) (n - 1)^2}
        - m_\star^2 \frac{n^2 (2 n^2 + 5 n + 5)}{2 (n + 2)^2 (n + 1)^2 (n - 1)}
    \Bigg)
\,,
\numberthis
\end{align*}
and
\begin{align*}
\+\mu_\star^\transp \+A_\mathrm{err}^2 \+\mu_\star
    &=
    \frac{1}{\tau^4} \Bigg(
        \frac{n^2}{4(n-1)^2(n-2)^2} \+\mu_\star^\transp \eye_{n} \+\mu_\star
        \\&\qquad\quad
        - \frac{ 3 n^2 (n^2 + 2)}{(n + 2)^2 (n + 1)^2 (n - 1)^2 (n - 2)^2} \+\mu_\star^\transp \ones_{n} \ones_{n}^\transp \+\mu_\star
        \\&\qquad\quad
        + \frac{n^2 (n^2 - 2 n - 2)}{(n + 2)^2 (n - 1)  (n - 2)^2} \+\mu_\star^\transp \+x\+x^\transp \+\mu_\star
    \Bigg)
    \\
    &= \frac{1}{\tau^4} m_\star^2 \frac{n^2 (4 n^3 + 9 n^2 + 5 n - 6)}{4 (n + 2)^2 (n + 1)^2 (n - 1) (n - 2)}
\,.
\numberthis
\end{align*}
Now Equation~\eqref{app_eq_analytic_2_moment} simplifies to
\begin{align*}
\overline{m}_2
    &= 2 \operatorname{tr}\left(\left(\+\Sigma_\star^{1/2}\+A_\mathrm{err}\+\Sigma_\star^{1/2}\right)^2\right)
    + \+b_\mathrm{err}^\transp \+\Sigma_\star \+b_\mathrm{err}
    + 4 \+b_\mathrm{err}^\transp \+\Sigma_\star \+A_\mathrm{err} \+\mu_\star
    + 4 \+\mu_\star^\transp \+A_\mathrm{err} \+\Sigma_\star \+A_\mathrm{err} \+\mu_\star
    \\
    & = \frac{1}{\tau^4} \left(
        P_{2,1}(n) \beta_1^2 s_\star^2
        + Q_{2,0}(n) \beta_1 m_\star s_\star^2
        + R_{2,-1}(n) m_\star^2 s_\star^2
        + S_{2,0}(n) s_\star^4
    \right)
    \,,
    \numberthis
\intertext{where}
    P_{2,1}(n) &= \frac{n^3}{(n-1)^2}
    \numberthis\\
    Q_{2,0}(n) &= \frac{2 n^2}{(n-1)^2}
    \numberthis\\
    R_{2,-1}(n) &= \frac{n}{(n - 1) (n - 2)}
    \numberthis\\
    S_{2,0}(n) &= \frac{n^3 (4 n^3 + 9 n^2 + 5 n - 6)}{2 (n + 2)^2 (n + 1)^2 (n - 1) (n - 2)}
    \,,
    \numberthis
\end{align*}
where the first subscript in the rational functions $P_{2,1}$, $Q_{2,0}$, $R_{2,-1}$, and $S_{2,0}$ indicates the corresponding order of the moment. The second subscript indicates the degree of the rational as a difference between the degrees of the numerator and the denominator.

\paragraph{Third Moment}
\label{app_sec_analytic_one_cov_3_moment}

In this section, we formulate the third moment $\overline{m}_3$ about the mean in Equation~\eqref{app_eq_analytic_2_moment} for the error $\elpdHatErrC{\Mfd}{\y}$ in the one covariate case defined in Appendix~\ref{app_sec_one_cov}.
The third power of $\+A_\mathrm{err}$ is
\begin{align*}
\+A_\mathrm{err}^3
    &= \frac{1}{\tau^6} \Bigg(
        \frac{n^3}{8(n-1)^3(n-2)^3} \eye_{n}
        \\&\qquad\quad
        - \frac{9 n^3 (n^4 + 7 n^2 + 4)}{4 (n + 2)^3 (n + 1)^3 (n - 1)^3 (n - 2)^3} \ones_{n} \ones_{n}^\transp
        \\&\qquad\quad
        - \frac{n^3 (4 n^4 - 14 n^3 + n^2 + 24 n + 12)}{4 (n + 2)^3 (n - 1)^2 (n - 2)^3} \+x\+x^\transp
    \Bigg)
    \,.
    \numberthis
\end{align*}
The trace in Equation~\eqref{app_eq_analytic_3_moment} simplifies to
\begin{align*}
\operatorname{tr}\left(\left(\+\Sigma_\star^{1/2}\+A_\mathrm{err}\+\Sigma_\star^{1/2}\right)^3\right)
    &= \frac{s_\star^6}{\tau^6} n \bigg(
        \frac{n^3}{8(n-1)^3(n-2)^3}
        - \frac{9 n^3 (n^4 + 7 n^2 + 4)}{4 (n + 2)^3 (n + 1)^3 (n - 1)^3 (n - 2)^3}
        \\&\qquad\qquad
        - \frac{n^3 (4 n^4 - 14 n^3 + n^2 + 24 n + 12)}{4 (n + 2)^3 (n - 1)^2 (n - 2)^3}
    \bigg)
    \\
    &= -\frac{s_\star^6}{\tau^6} \frac{n^4 (8 n^6 + 12 n^5 - 35 n^4 - 102 n^3 - 83 n^2 - 36 n + 20)}{8 (n + 2)^3 (n + 1)^3 (n - 1)^2 (n - 2)^2}
    \,.
    \numberthis
\end{align*}
Furthermore
\begingroup
\allowdisplaybreaks
\begin{align*}
\+b_\mathrm{err}^\transp \+A_\mathrm{err} \+b_\mathrm{err}
   &= \frac{1}{\tau^6} \Bigg(
        \left(-\beta_1 \frac{n}{n-1} + m_\star \frac{1}{n+2} \right)^2
        \\ &\qquad\quad
        \left(
                + \frac{n}{2(n-1)(n-2)} \+x^\transp \eye_{n} \+x
                - \frac{n}{(n+2)(n-2)} \+x^\transp \+x \+x^\transp \+x
        \right)
        \\ &\qquad\quad
        + m_\star^2 \frac{1}{(n+2)^2(n+1)^2}
        \\ &\qquad\quad
        \left(
            + \frac{n}{2(n-1)(n-2)} \ones_{n}^\transp \eye_{n} \ones_{n}
            - \frac{3 n}{(n+2)(n+1)(n-1)(n-2)} \ones_{n}^\transp \ones_{n} \ones_{n}^\transp \ones_{n}
        \right)
    \Bigg)
    \\
    &= \frac{1}{\tau^6} \Bigg(
        - \beta_1^2  \frac{n^4 (2 n + 1)}{2 (n + 2) (n - 1)^3}
        + \beta_1 m_\star \frac{n^3 (2 n + 1)}{(n + 2)^2 (n - 1)^2}
        \\ &\qquad\quad
        - m_\star^2 \frac{n^2 (2 n^4 + 7 n^3 + 9 n^2 + 4 n + 2)}{2 (n + 2)^3 (n + 1)^3 (n - 1)}
    \Bigg)
\,,
\numberthis
\end{align*}
\endgroup
and
\begin{align*}
\+b_\mathrm{err}^\transp \+A_\mathrm{err}^2 \+\mu_\star
    &= \frac{1}{\tau^6} \Bigg(
        \left(-\beta_1 \frac{n}{n-1} + m_\star \frac{1}{n+2} \right)
        \\&\qquad\quad
        \left(
            \frac{n^2}{4(n-1)^2(n-2)^2} \+x^\transp \eye_{n} \+\mu_\star
            + \frac{n^2 (n^2 - 2 n - 2)}{(n + 2)^2 (n - 1)  (n - 2)^2}  \+x^\transp\+x\+x^\transp\+\mu_\star
        \right)
        \\&\qquad\quad
        - m_\star \frac{1}{(n+2)(n+1)}
        \\&\qquad\quad
        \left(
            \frac{n^2}{4(n-1)^2(n-2)^2} \ones_{n}^\transp \eye_{n} \+\mu_\star
            - \frac{ 3 n^2 (n^2 + 2)}{(n + 2)^2 (n + 1)^2 (n - 1)^2 (n - 2)^2} \ones_{n}^\transp \ones_{n} \ones_{n}^\transp\+\mu_\star
        \right)
    \Bigg)
    \\
    &= \frac{1}{\tau^6} \Bigg(
        - \beta_1 m_\star \frac{n^3 (2 n + 1)^2}{4 (n + 2)^2 (n - 1)^3}
        + m_\star^2 \frac{n^3 (4 n^4 + 16 n^3 + 25 n^2 + 18 n + 9)}{4 (n + 2)^3 (n + 1)^3 (n - 1)^2}
    \Bigg)
\,,
\numberthis
\end{align*}
and
\begin{align*}
\+\mu_\star^\transp \+A_\mathrm{err}^3 \+\mu_\star
    &= \frac{1}{\tau^6} \Bigg(
        \frac{n^3}{8(n-1)^3(n-2)^3} \+\mu_\star^\transp \eye_{n} \+\mu_\star
        \\&\qquad\quad
        - \frac{9 n^3 (n^4 + 7 n^2 + 4)}{4 (n + 2)^3 (n + 1)^3 (n - 1)^3 (n - 2)^3} \+\mu_\star^\transp \ones_{n} \ones_{n}^\transp \+\mu_\star
        \\&\qquad\quad
        - \frac{n^3 (4 n^4 - 14 n^3 + n^2 + 24 n + 12)}{4 (n + 2)^3 (n - 1)^2 (n - 2)^3} \+\mu_\star^\transp \+x\+x^\transp \+\mu_\star
    \Bigg)
    \\
    &= -\frac{1}{\tau^6} m_\star^2
        \frac{n^3 (8 n^6 + 12 n^5 - 35 n^4 - 102 n^3 - 83 n^2 - 36 n + 20)}{8 (n + 1)^3 (n + 2)^3 (n - 1)^2 (n - 2)^2}
\,.
\numberthis
\end{align*}
Now Equation~\eqref{app_eq_analytic_3_moment} simplifies to
\begin{align*}
\overline{m}_3
    &= 8 \operatorname{tr}\left(\left(\+\Sigma_\star^{1/2}\+A_\mathrm{err}\+\Sigma_\star^{1/2}\right)^3\right)
    + 6 \+b_\mathrm{err}^\transp \+\Sigma_\star \+A_\mathrm{err} \+\Sigma_\star \+b_\mathrm{err}
    \\ &\quad
    + 24 \+b_\mathrm{err}^\transp \+\Sigma_\star \+A_\mathrm{err} \+\Sigma_\star \+A_\mathrm{err} \+\mu_\star
    + 24 \+\mu_\star^\transp \+A_\mathrm{err} \+\Sigma_\star \+A_\mathrm{err} \+\Sigma_\star \+A_\mathrm{err} \+\mu_\star
    \\
    & = \frac{1}{\tau^6} \left(
        P_{3,1}(n) \beta_1^2 s_\star^4
        + Q_{3,0}(n) \beta_1 m_\star s_\star^4
        + R_{3,-1}(n) m_\star^2 s_\star^4
        + S_{3,0}(n) s_\star^6
    \right)
    \,,
    \numberthis
\intertext{where}
    P_{3,1}(n) &= -\frac{3 n^4 (2 n + 1)}{(n + 2) (n - 1)^3}
    \numberthis\\
    Q_{3,0}(n) &= -\frac{6 n^3 (2 n + 1)}{(n + 2) (n - 1)^3}
    \numberthis\\
    R_{3,-1}(n) &= -\frac{3 n^2 (2 n^2 - 5 n - 2)}{(n - 2)^2 (n - 1)^2 (n + 2)}
    \numberthis\\
    S_{3,0}(n) &= -\frac{n^4 (8 n^6 + 12 n^5 - 35 n^4 - 102 n^3 - 83 n^2 - 36 n + 20)}{(n + 2)^3 (n + 1)^3 (n - 1)^2 (n - 2)^2}
    \,,
    \numberthis
\end{align*}
where the first subscript in the rational functions $P_{3,1}$, $Q_{3,0}$, $R_{3,-1}$, and $S_{3,0}$ indicates the corresponding order of the moment. The second subscript indicates the degree of the rational as a difference between the degrees of the numerator and the denominator.

\paragraph{Mean Relative to the Standard Deviation}
\label{app_sec_analytic_three_Rel_SD}

In this section, we formulate the ratio of mean and standard deviation $m_1 \big/ \sqrt{\overline{m}_2}$ for the error $\elpdHatErrC{\Mfd}{\y}$ in the one covariate case defined in Appendix~\ref{app_sec_one_cov}.
Combining results from appendices~\ref{app_sec_analytic_one_cov_1_moment} and~\ref{app_sec_analytic_one_cov_2_moment}, we get
\begingroup
\allowdisplaybreaks
\begin{align*}
    \frac{m_1}{\sqrt{\overline{m}_2}}
    & =
    \frac{
        P_{1,0}(n) \beta_1^2
        + Q_{1,-1}(n) \beta_1 m_\star
        + R_{1,-1}(n) m_\star^2
        + \tau^2 F_{1}(n)
    }{\sqrt{
        P_{2,1}(n) \beta_1^2 s_\star^2
        + Q_{2,0}(n) \beta_1 m_\star s_\star^2
        + R_{2,-1}(n) m_\star^2 s_\star^2
        + S_{2,0}(n) s_\star^4
    }}
\,,
\numberthis
\intertext{where}
    P_{1,0}(n) &= -\frac{n^2}{(n+1)(n-1)}
    \numberthis\\
    Q_{1,-1}(n) &= -\frac{2n}{(n+1)(n-1)}
    \numberthis\\
    R_{1,-1}(n) &= \frac{n}{(n+2)(n+1)}
    \numberthis\\
    F_{1}(n) &= \frac{n}{2} \log \frac{ (n+1)(n-1) }{ (n+2)(n-2) }
    \numberthis\\
    P_{2,1}(n) &= \frac{n^3}{(n-1)^2}
    \numberthis\\
    Q_{2,0}(n) &= \frac{2 n^2}{(n-1)^2}
    \numberthis\\
    R_{2,-1}(n) &= \frac{n}{(n - 1) (n - 2)}
    \numberthis\\
    S_{2,0}(n) &= \frac{n^3 (4 n^3 + 9 n^2 + 5 n - 6)}{2 (n + 2)^2 (n + 1)^2 (n - 1) (n - 2)}
\,,
\numberthis
\end{align*}
\endgroup
where the first subscript in the rational functions $P$, $Q$, $R$, and $S$ indicates the corresponding order of the moment, and the second subscript indicates the degree of the rational as a difference between the degrees of the numerator and the denominator.

Let us inspect the behaviour of $m_1 \big/ \sqrt{\overline{m}_2}$ when $n \rightarrow \infty$. When $\beta_1 \neq 0$, by multiplying numerator and denominator in $m_1 \big/ \sqrt{\overline{m}_2}$ by $n^{-1/2}$, we get
\begin{align*}
    \lim_{n\rightarrow\infty} \frac{m_1}{\sqrt{\overline{m}_2}}
    & =
    \frac{
        \lim_{n\rightarrow\infty} n^{-1/2} \Big(
        P_{1,0}(n) \beta_1^2
        + Q_{1,-1}(n) \beta_1 m_\star
        + R_{1,-1}(n) m_\star^2
        + F_{1}(n) \tau^2
        \Big)
    }{\sqrt{
        \lim_{n\rightarrow\infty} n^{-1} \Big(
        P_{2,1}(n) \beta_1^2 s_\star^2
        + Q_{2,0}(n) \beta_1 m_\star s_\star^2
        + R_{2,-1}(n) m_\star^2 s_\star^2
        + S_{2,0}(n) s_\star^4
        \Big)
    }}
\\
    &= \frac{
        \lim_{n\rightarrow\infty} n^{-1/2} F_{1}(n) \tau^2
    }{\sqrt{
        \lim_{n\rightarrow\infty} n^{-1} P_{2,1}(n) \beta_1^2 s_\star^2
    }}
\\
    &= \frac{
        0 \tau^2
    }{\sqrt{ \beta_1^2 s_\star^2 }}
\\
    &= 0.
    \numberthis
\end{align*}
Similarly, when $\beta_1 = 0$, we get
\begin{align*}
    \lim_{n\rightarrow\infty} \frac{m_1}{\sqrt{\overline{m}_2}}
    & =
    \frac{
        \lim_{n\rightarrow\infty} \Big(
        R_{1,-1}(n) m_\star^2
        + F_{1}(n) \tau^2
        \Big)
    }{\sqrt{
        \lim_{n\rightarrow\infty} \Big(
        R_{2,-1}(n) m_\star^2 s_\star^2
        + S_{2,0}(n) s_\star^4
        \Big)
    }}
\\
    &= \frac{
        0 m_\star^2 + 0 \tau^2
    }{\sqrt{
        0 m_\star^2 s_\star^2 + 2 s_\star^4
    }}
    \\
    &= 0
    \numberthis
\,.
\end{align*}
Now we can summarise
\begin{align}
    \lim_{n\rightarrow\infty} \frac{m_1}{\sqrt{\overline{m}_2}} &= 0.
    \label{app_eq_zscore_lim_n}
\end{align}
\paragraph{Skewness}
\label{app_sec_analytic_three_skew}
In this section, we formulate the skewness $\widetilde{m}_3 = \overline{m}_3 \big/ (\overline{m}_2)^{3/2}$ in Equation~\eqref{app_eq_analytic_3_moment_skew} for the error $\elpdHatErrC{\Mfd}{\y}$ in the one covariate case defined in Appendix~\ref{app_sec_one_cov}. Combining results from appendices~\ref{app_sec_analytic_one_cov_2_moment} and~\ref{app_sec_analytic_one_cov_3_moment}, we get
\begingroup
\allowdisplaybreaks
\begin{align*}
    \widetilde{m}_3 &= \overline{m}_3 \big/ (\overline{m}_2)^{3/2}
\\
    &= \frac{
        P_{3,1}(n) \beta_1^2 s_\star^4
        + Q_{3,0}(n) \beta_1 m_\star s_\star^4
        + R_{3,-1}(n) m_\star^2 s_\star^4
        + S_{3,0}(n) s_\star^6
    }{\Big(
        P_{2,1}(n) \beta_1^2 s_\star^2
        + Q_{2,0}(n) \beta_1 m_\star s_\star^2
        + R_{2,-1}(n) m_\star^2 s_\star^2
        + S_{2,0}(n) s_\star^4
    \Big)^{3/2}}
\,,
\numberthis
\intertext{where}
    P_{2,1}(n) &= \frac{n^3}{(n-1)^2}
    \numberthis\\
    Q_{2,0}(n) &= \frac{2 n^2}{(n-1)^2}
    \numberthis\\
    R_{2,-1}(n) &= \frac{n}{(n - 1) (n - 2)}
    \numberthis\\
    S_{2,0}(n) &= \frac{n^3 (4 n^3 + 9 n^2 + 5 n - 6)}{2 (n + 2)^2 (n + 1)^2 (n - 1) (n - 2)}
    \numberthis\\
    P_{3,1}(n) &= -\frac{3 n^4 (2 n + 1)}{(n + 2) (n - 1)^3}
    \numberthis\\
    Q_{3,0}(n) &= -\frac{6 n^3 (2 n + 1)}{(n + 2) (n - 1)^3}
    \numberthis\\
    R_{3,-1}(n) &= -\frac{3 n^2 (2 n^2 - 5 n - 2)}{(n - 2)^2 (n - 1)^2 (n + 2)}
    \numberthis\\
    S_{3,0}(n) &= -\frac{n^4 (8 n^6 + 12 n^5 - 35 n^4 - 102 n^3 - 83 n^2 - 36 n + 20)}{(n + 2)^3 (n + 1)^3 (n - 1)^2 (n - 2)^2}
\,,
\numberthis
\end{align*}
\endgroup
where the first subscript in the rational functions $P$, $Q$, $R$, and $S$ indicates the corresponding order of the moment, and the second subscript indicates the degree of the rational as a difference between the degrees of the numerator and the denominator. It can be seen that $\tau$ does not affect the skewness.

Let us inspect the behaviour of $\widetilde{m}_3$ when $n \rightarrow \infty$. When $\beta_1 \neq 0$, by multiplying numerator and denominator in $\widetilde{m}_3$ by $n^{-3/2}$, we get
\begin{align*}
    \lim_{n\rightarrow\infty} \widetilde{m}_3 &= \frac{
        \lim_{n\rightarrow\infty} n^{-3/2}\Big(
        P_{3,1}(n) \beta_1^2 s_\star^4
        + Q_{3,0}(n) \beta_1 m_\star s_\star^4
        + R_{3,-1}(n) m_\star^2 s_\star^4
        + S_{3,0}(n) s_\star^6
        \Big)
    }{\Big(
        \lim_{n\rightarrow\infty} n^{-1}\Big(
        P_{2,1}(n) \beta_1^2 s_\star^2
        + Q_{2,0}(n) \beta_1 m_\star s_\star^2
        + R_{2,-1}(n) m_\star^2 s_\star^2
        + S_{2,0}(n) s_\star^4
        \Big)
    \Big)^{3/2}}
    \\
    &= \frac{
        0
    }{\Big(
        \lim_{n\rightarrow\infty} n^{-1} P_{2,1}(n) \beta_1^2 s_\star^2
    \Big)^{3/2}}
    \\
    &= \frac{
        0
    }{\left( \beta_1^2 s_\star^2 \right)^{3/2}}
    \\
    &= 0
    \numberthis
    \label{app_eq_skewness_lim_n_bn0}
\,.
\end{align*}
When $\beta_1 = 0$, we get
\begin{align*}
    \lim_{n\rightarrow\infty} \widetilde{m}_3 &= \frac{
        \lim_{n\rightarrow\infty} \Big(
        R_{3,-1}(n) m_\star^2 s_\star^4
        + S_{3,0}(n) s_\star^6
        \Big)
    }{\Big(
        \lim_{n\rightarrow\infty} \Big(
        R_{2,-1}(n) m_\star^2 s_\star^2
        + S_{2,0}(n) s_\star^4
        \Big)
    \Big)^{3/2}}
    \\
    &= \frac{
        0 m_\star^2 s_\star^4 - 8 s_\star^6
    }{\big(
        0 m_\star^2 s_\star^2 + 2 s_\star^4
    \big)^{3/2}}
    \\
    &= - 2^{3/2}
    \numberthis
    \label{app_eq_skewness_lim_n_b0}
\,.
\end{align*}
Now we can summarise
\begin{align}
    \lim_{n\rightarrow\infty} \widetilde{m}_3 &=
    \begin{dcases}
        - 2^{3/2}, & \text{when }  \beta_1 = 0 \\
        0, & \text{otherwise.}
    \end{dcases}
    \label{app_eq_skewness_lim_n}
\end{align}
It can be seen that the limit does not depend on $m_\star$ or $s_\star$.

Next, similar to the analyses conducted in appendices~\ref{app_sec_analytic_effect_of_model_difference}, \ref{app_sec_analytic_effect_of_outliers}, and~\ref{app_sec_analytic_effect_of_data_variance}, we analyse the behaviour of the skewness as a function of $\beta_1$, $m_\star$, and $s_\star$. Analogous to Equation~\eqref{app_eq_skewness_lim_n_bn0}, inspecting the behaviour of the skewness $\widetilde{m}_3$ as a function of $\beta_1$ gives
\begin{align*}
    \lim_{\beta_1\rightarrow \pm \infty} \widetilde{m}_3 &= \frac{
        \lim_{\beta_1\rightarrow \pm \infty} \beta_1^{-3}\Big(
        P_{3,1}(n) \beta_1^2 s_\star^4
        + Q_{3,0}(n) \beta_1 m_\star s_\star^4
        + R_{3,-1}(n) m_\star^2 s_\star^4
        + S_{3,0}(n) s_\star^6
        \Big)
    }{\Big(
        \lim_{\beta_1\rightarrow \pm \infty} \beta_1^{-2}\Big(
        P_{2,1}(n) \beta_1^2 s_\star^2
        + Q_{2,0}(n) \beta_1 m_\star s_\star^2
        + R_{2,-1}(n) m_\star^2 s_\star^2
        + S_{2,0}(n) s_\star^4
        \Big)
    \Big)^{3/2}}
    \\
    &= \frac{
        0
    }{\left( P_{2,1}(n) s_\star^2 \right)^{3/2}}
    \\
    &= 0
    \numberthis
    \label{app_eq_skewness_lim_beta}
\,.
\end{align*}
Similarly, as a function of $m_\star$, it can be seen that
\begin{align}
    \lim_{m_\star \rightarrow \pm \infty} \widetilde{m}_3 = \frac{
        0
    }{\left( R_{2,-1}(n) s_\star^2 \right)^{3/2}}
    = 0
    \label{app_eq_skewness_lim_mstar}
\,.
\end{align}
As a function of $s_\star$, we get
\begin{align*}
    \lim_{s_\star\rightarrow \infty} \widetilde{m}_3 &= \frac{
        \lim_{s_\star\rightarrow \infty} s_\star^{-6}\Big(
        P_{3,1}(n) \beta_1^2 s_\star^4
        + Q_{3,0}(n) \beta_1 m_\star s_\star^4
        + R_{3,-1}(n) m_\star^2 s_\star^4
        + S_{3,0}(n) s_\star^6
        \Big)
    }{\Big(
        \lim_{s_\star\rightarrow \infty} s_\star^{-4}\Big(
        P_{2,1}(n) \beta_1^2 s_\star^2
        + Q_{2,0}(n) \beta_1 m_\star s_\star^2
        + R_{2,-1}(n) m_\star^2 s_\star^2
        + S_{2,0}(n) s_\star^4
        \Big)
    \Big)^{3/2}}
    \\
    &= \frac{
        S_{3,0}(n)
    }{S_{2,0}(n)^{3/2}}
    \\
    &= - 2^{3/2} \frac{
         8 n^6 + 12 n^5 - 35 n^4 - 102 n^3 - 83 n^2 - 36 n + 20
    }{
        \sqrt{n\left(n^2 - 3 n + 2\right)} \left(4 n^3 + 9 n^2 + 5 n - 6\right)^{3/2}
    }
    \numberthis
    \label{app_eq_skewness_lim_sstar}
\,,
\end{align*}
which approaches the same limit $-2^{3/2}$ from below, when $n \rightarrow \infty$. These limits match with the results obtained in appendices~\ref{app_sec_analytic_effect_of_model_difference}, \ref{app_sec_analytic_effect_of_outliers}, and~\ref{app_sec_analytic_effect_of_data_variance}.
%
%
\section{Additional Results for the Simulated Experiment}
\label{app_sec_additional_experiment_results}
In this appendix, we present some additional results for the simulated linear regression model comparison experiment discussed in Section~\ref{sec_experiments}. Among others, these results illustrate the effect of an outlier in more detail. The outlier observation has a deviated mean of 20 times the standard deviation of $\y_i$ in all experiments.
\begin{figure}[tb!] 
  \centering
  \includegraphics[width=0.85\figurecontrolwidth]{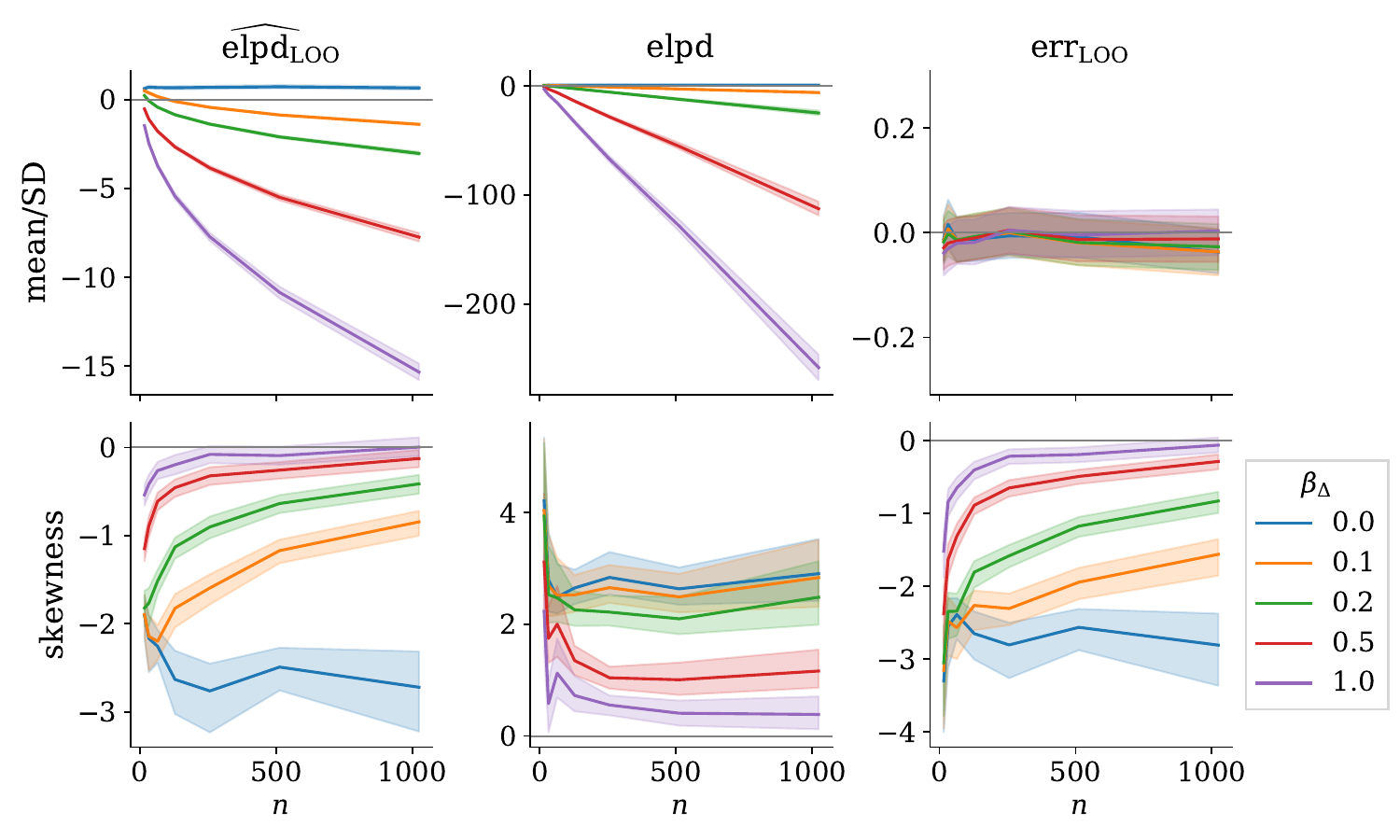}
  \caption{%
    Illustration of the estimated mean relative to the standard deviation and skewness for $\elpdHatC{\Md}{\y}$, $\elpdC{\Md}{\y}$, and for the error $\elpdHatErrC{\Md}{\y}$ as a function of the data size $n$ for various non-shared covariate effects $\beta_\Delta$. The solid lines correspond to the median and the shaded area to the 95 \% confidence interval from the Bayesian bootstrap (BB) sample of size 2000 using the weighted moment estimators presented by \citet{Rimoldini_2013_weighted_moments}. As the effect of $\beta_\Delta$ is symmetric, the problem is simulated only with positive $\beta_\Delta$. Similar behaviour can be observed in Figure~\ref{fig_analytic_zscore_skew_n_b} for analogous experiment conditional for the design matrix $X$ and model variance $\tau^2$. In this case, however, while not greatly affecting the skewness of the error $\elpdHatErrC{\Md}{\y}$, the skewness of the $\elpdC{\Md}{\y}$ decreases when $\beta_\Delta$ grows.
  }\label{fig_moments_n_b}
\end{figure}

Figure~\ref{fig_moments_n_b} illustrates the relative mean and skewness for the sampling distribution $\elpdHatC{\Md}{\y}$, for the distribution of the estimand $\elpdC{\Md}{\y}$, and for the error distribution $\elpdHatErrC{\Md}{\y}$ estimated from the simulated experiments as a function of the data size $n$ for different non-shared covariates' effects $\beta_\Delta$. These results indicate that the moments behave quite similarly as in the analysis conditional on the design matrix $\+X$ and model variance $\tau$ in Section~\ref{sec_analytic_case}. 
Similar to the situation with conditionalised design matrix $\+X$ and model variance $\tau$, it can be seen from the figure that when the non-shared covariate effect $\beta_\Delta$ grows, the difference in the predictive performance grows and the LOO-CV method becomes more likely to pick the correct model. Similar behaviour can be observed when the data size $n$ grows and $|\beta_\Delta|>0$. However, when $\beta_\Delta=0$, the difference in the predictive performance stays zero, and the LOO-CV method is slightly more likely to pick the simpler model regardless of $n$. The relative mean of the error confirms that the bias of the LOO-CV estimator is relatively small with all applied $n$ and $\beta_\Delta$.

By analysing the estimated skewness in Figure~\ref{fig_moments_n_b}, it can be seen that the absolute skewness of $\elpdHatC{\Md}{\y}$ and $\elpdHatErrC{\Md}{\y}$ is bigger when $\beta_\Delta$ is closer to zero. The models are more similar in predictive performance. While in the case of conditionalised design matrix $\+X$ and model variance $\tau$ in Section~\ref{sec_analytic_case}, the skewness of $\elpdC{\Md}{\y}$ is similar with all $\beta_\Delta$, in the simulated experiment this skewness decreases when $\beta_\Delta$ grows. When $|\beta_\Delta|>0$, the absolute skewness of $\elpdHatC{\Md}{\y}$ and $\elpdHatErrC{\Md}{\y}$ decreases towards zero when $n$ grows. Otherwise, when $\beta_\Delta=0$, similar to the problem setting in the analytic case study in Section~\ref{sec_analytic_case}, the skewness does not fade off when $n$ grows. These results show that problematic skewness can occur when the models are close in predictive performance and with smaller sample sizes. 
\begin{figure}[tb!]
  \centering
  \includegraphics[width=0.85\figurecontrolwidth]{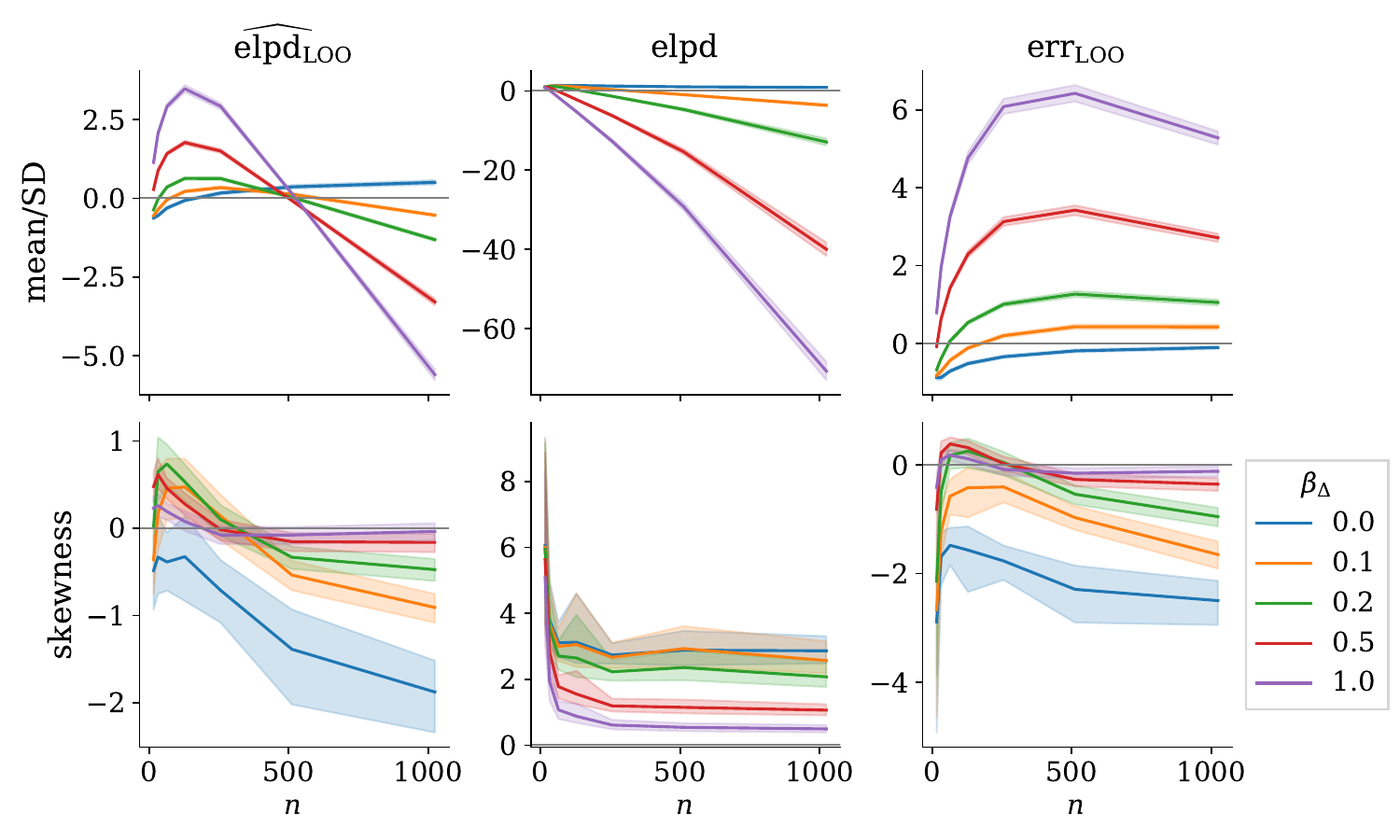}
  \caption{%
   Illustration of the estimated mean relative to the standard deviation and skewness for $\elpdHatC{\Md}{\y}$, $\elpdC{\Md}{\y}$, and for the error $\elpdHatErrC{\Md}{\y}$ as a function of the data size $n$ for various non-shared covariate effects $\beta_\Delta$, when there is an outlier observation in the data. The solid lines correspond to the median and the shaded area to the 95 \% confidence interval from the Bayesian bootstrap (BB) sample of size 2000 using the weighted moment estimators presented by \citet{Rimoldini_2013_weighted_moments}. As the effect of $\beta_\Delta$ is symmetric, the problem is simulated only with positive $\beta_\Delta$.
  }\label{fig_moments_n_b_out}
\end{figure}

Figure~\ref{fig_moments_n_b_out} illustrates the relative mean and skewness for the sampling distribution $\elpdHatC{\Md}{\y}$, for the distribution of the estimand $\elpdC{\Md}{\y}$, and for the error distribution $\elpdHatErrC{\Md}{\y}$ estimated from the simulated experiments as a function of the data size $n$ for different non-shared covariates' effects $\beta_\Delta$ when there is an outlier observation in the data. Compared to the analogous plot without the outlier in Figure~\ref{fig_moments_n_b}, introducing the outlier affects the distribution of $\elpdHatC{\Md}{\y}$ more than of the distribution of $\elpdC{\Md}{\y}$. This is plausible considering the leave-one-out technique used in the estimator. Due to the difference in the distribution of $\elpdHatC{\Md}{\y}$, the error $\elpdHatErrC{\Md}{\y}$ is also affected. The effect is greater when the non-shared covariates effect $\beta_\Delta$ is bigger.
\begin{figure}[tb!]
  \centering
  \includegraphics[width=0.75\figurecontrolwidth]{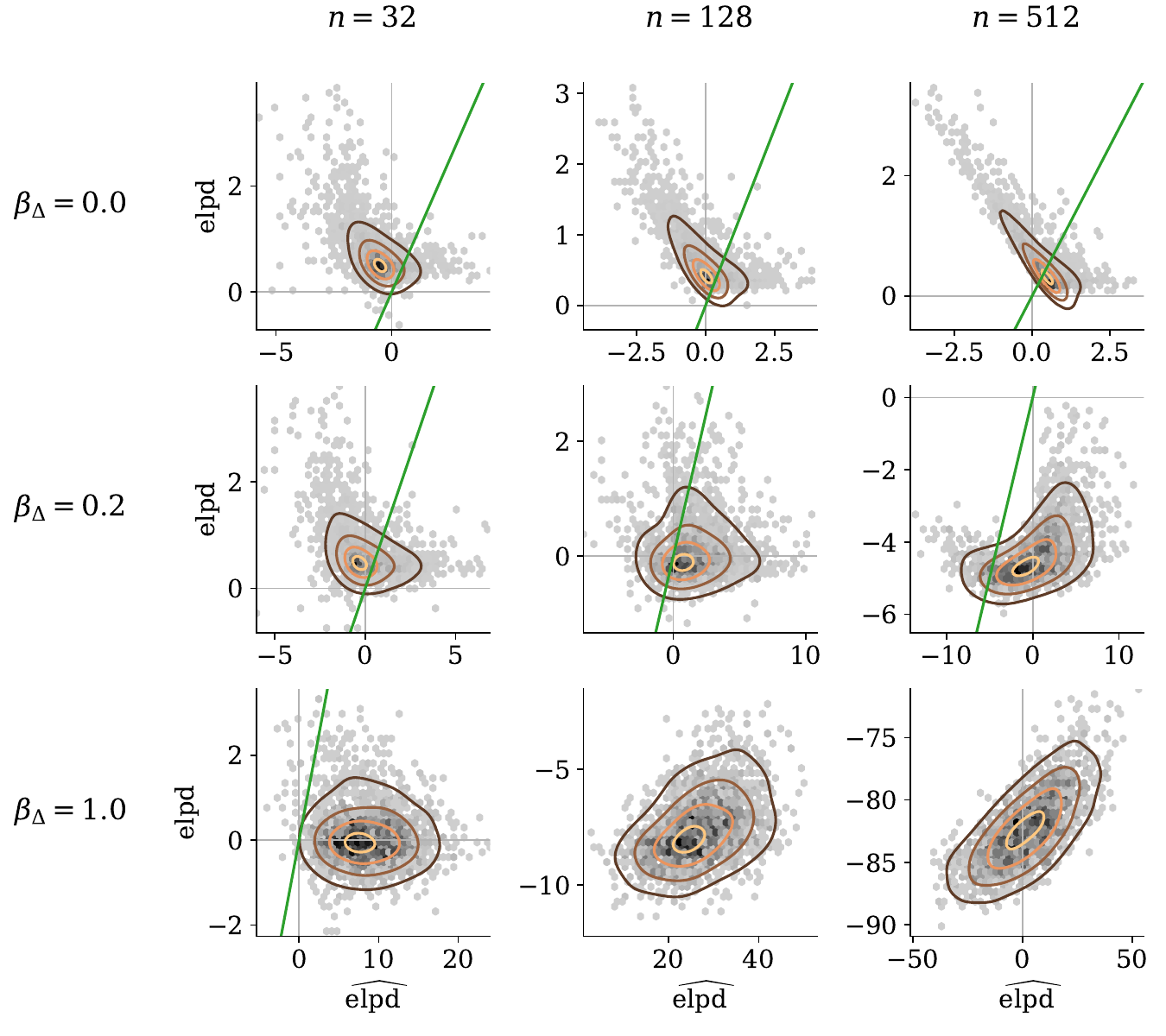}
  \caption{%
    Illustration of the joint distribution of $\elpdHatC{\Md}{\y}$ and $\elpdC{\Md}{\y}$ for various data sizes $n$, non-shared covariate effects $\beta_\Delta$, and an outlier in the data. The outlier scaling coefficient is set to $\mu_{\star \mathrm{r}} = 20$. Green diagonal line indicates where $\elpdHat{\Md}{\yobs} = \elpd{\Md}{\yobs}$.
    }\label{fig_joint_out}
\end{figure}

Figure~\ref{fig_joint_out} illustrates the joint distribution of the estimator $\elpdHatC{\Md}{\y}$ and the estimand $\elpdC{\Md}{\y}$ when there is an outlier observation present.
Similar to the case without an outlier illustrated in Figure~\ref{fig_joint}, although to a slightly lesser degree, the estimator and the estimand get negatively correlated when the models' predictive performances get more similar. In the outlier case, however, the estimator is biased and using the LOO-CV method is problematic. For example, in the case where $n=128$ and $\beta_\Delta=1.0$, the distributions of $\elpdHatC{\Md}{\y}$ and $\elpdC{\Md}{\y}$ lie in the opposite sides of sign and LOO-CV method will almost surely pick the wrong model.

Figure~\ref{fig_err_n_b_both} illustrates the behaviour of the error relative to the standard deviation $\elpdHatErr{\Md}{\y} \big/ \allowbreak \operatorname{SD}\bigl(\elpd{\Md}{\y}\bigr)$ for various non-shared covariate effects $\beta_\Delta$ and data sizes $n$ with and without an outlier observation. It can be seen from the figure that without outliers, the mean of the relative error is near zero in all settings, so the bias in the LOO-CV estimator is small. When an outlier is present in the data (Scenario~2), the relative error's mean usually deviates from zero, and the estimator is biased. Whether LOO-CV estimates the difference in the predictive performance to be further away or closer to zero or of a different sign depends on the situation.
Figure~\ref{fig_errdirection_n_b_both} illustrates the behaviour of
\begin{align*}
  \operatorname{sign}\Bigl(\elpdC{\Md}{\y}\Bigr) \frac{\elpdHatErrC{\Md}{\y}}{\operatorname{SD}\Bigl(\elpdC{\Md}{\y}\Bigr)} \,,
\end{align*}
the relative error directed towards $\elpdC{\Md}{\y} = 0$, for various non-shared covariate effects $\beta_\Delta$ and data sizes $n$ with and without an outlier observation. It can be seen from the figure that with an outlier observation,  LOO-CV often estimates the difference in the predictive performance to be smaller or of the opposite sign than the estimand $\elpdC{\Md}{\y}$.
\begin{figure}[tb!]
 \centering
 \includegraphics[width=0.80\figurecontrolwidth]{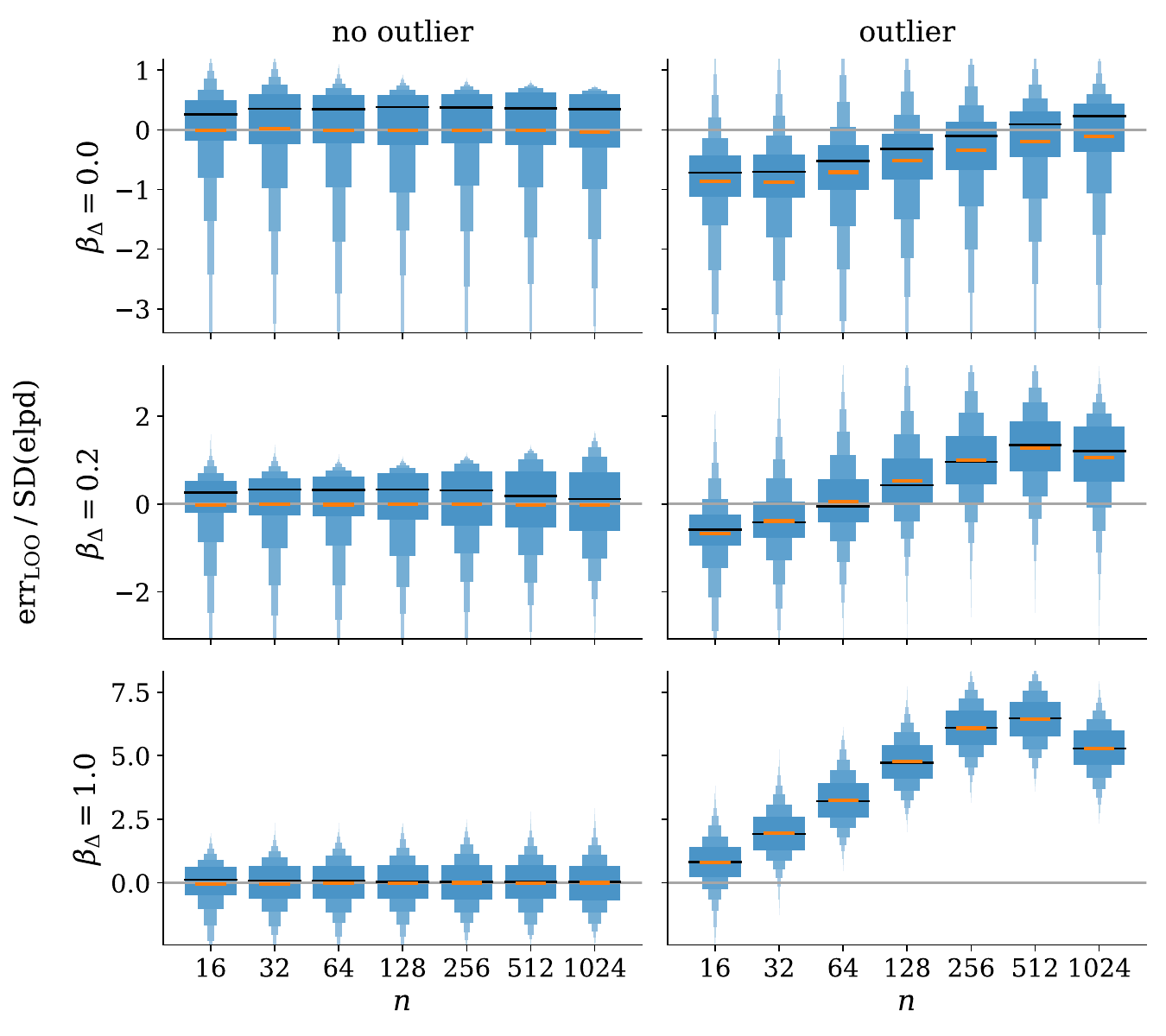}
 \caption{Distribution of the relative error $\elpdHatErr{\Md}{\y} \big/ \operatorname{SD}\left(\elpd{\Md}{\y}\right)$ for different data sizes $n$ and non-shared covariate effects $\beta_\Delta$. In the left column, there are no outliers in the data, and in the right column, there is one extreme outlier with a deviated mean of 20 times the standard deviation of $\y_i$. The distributions are visualised using letter-value plots or boxenplots~\citep{letter-value-plot}. The black lines correspond to the distribution's median, and the yellow lines indicate the mean. The bias can be considerable with an extreme outlier in the data (Scenario~2). Whether LOO-CV estimates the difference in the predictive performance to be further away or closer to zero or of different sign depends on the situation.} \label{fig_err_n_b_both}
\end{figure}
\begin{figure}[tb!]
  \centering
  \includegraphics[width=0.82\figurecontrolwidth]{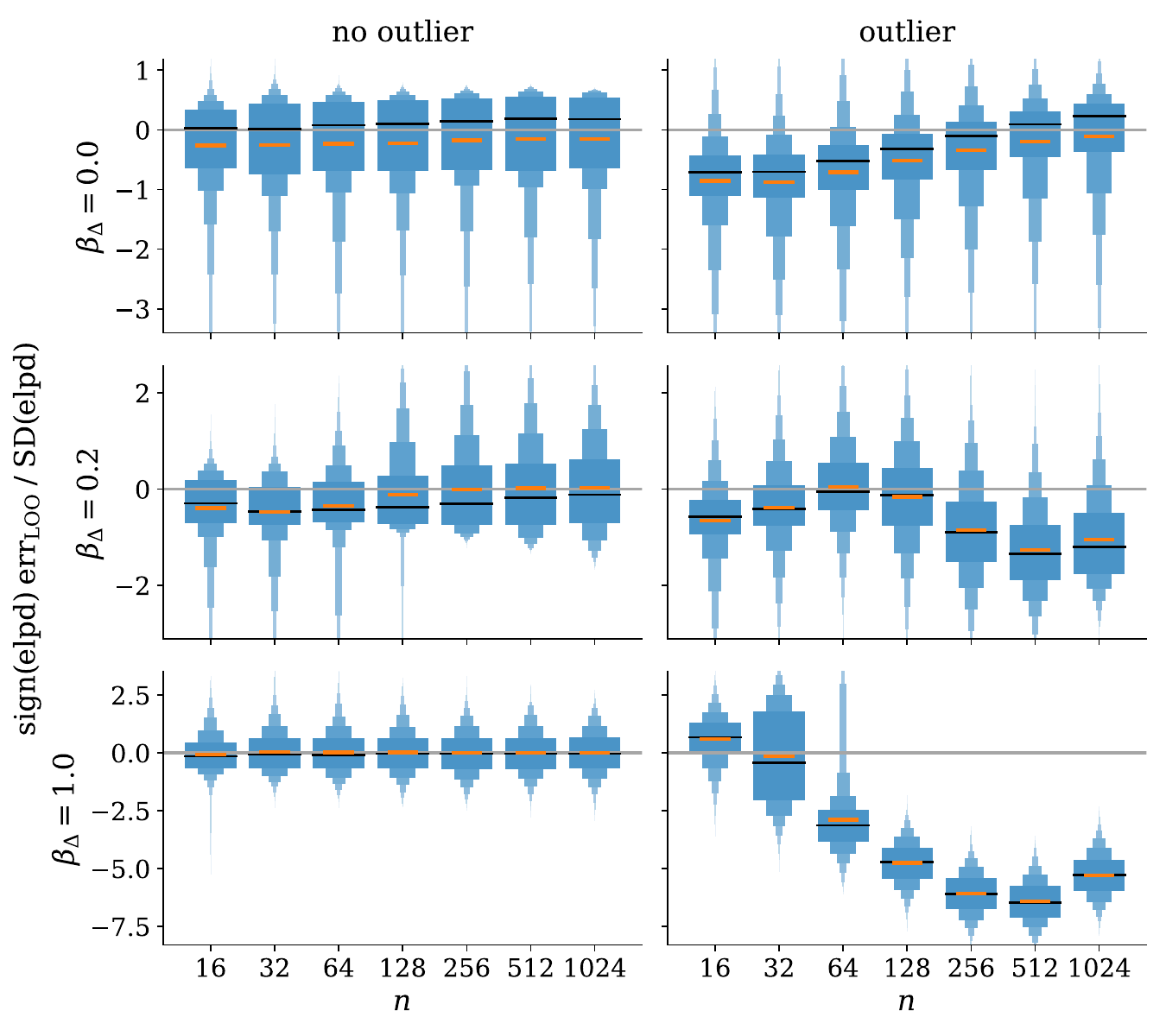}
  \caption{%
    Distribution of $\operatorname{sign}\left({}^\mathrm{sv}\elpdPlain\right) {}^\mathrm{sv}\elpdHatErrPlain \big/ \operatorname{SD}\left({}^\mathrm{sv}\elpdPlain\right)$, the relative error directed towards ${}^\mathrm{sv}\elpdPlain = 0$, in a model comparison setting (omitting arguments $(\Md \mid \y)$ for clarity) for different data sizes $n$ and non-shared covariate effects $\beta_\Delta$. Negative values indicate that LOO-CV estimates the difference in the predictive performance to be smaller or of the opposite sign and positive values indicate the difference is larger. In the left column, there are no outliers in the data, and in the right column, there is one outlier with deviated mean of 20 times the standard deviation of $\y_i$. The distributions are visualised using letter-value plots or boxenplots~\citep{letter-value-plot}. The black lines correspond to the median of the distribution, and the yellow lines indicate the mean. With an outlier observation, the directed relative error is typically negative.
  }\label{fig_errdirection_n_b_both}
\end{figure}

%
\begin{figure}[tb!]
  \centering
  \includegraphics[width=0.92\figurecontrolwidth]{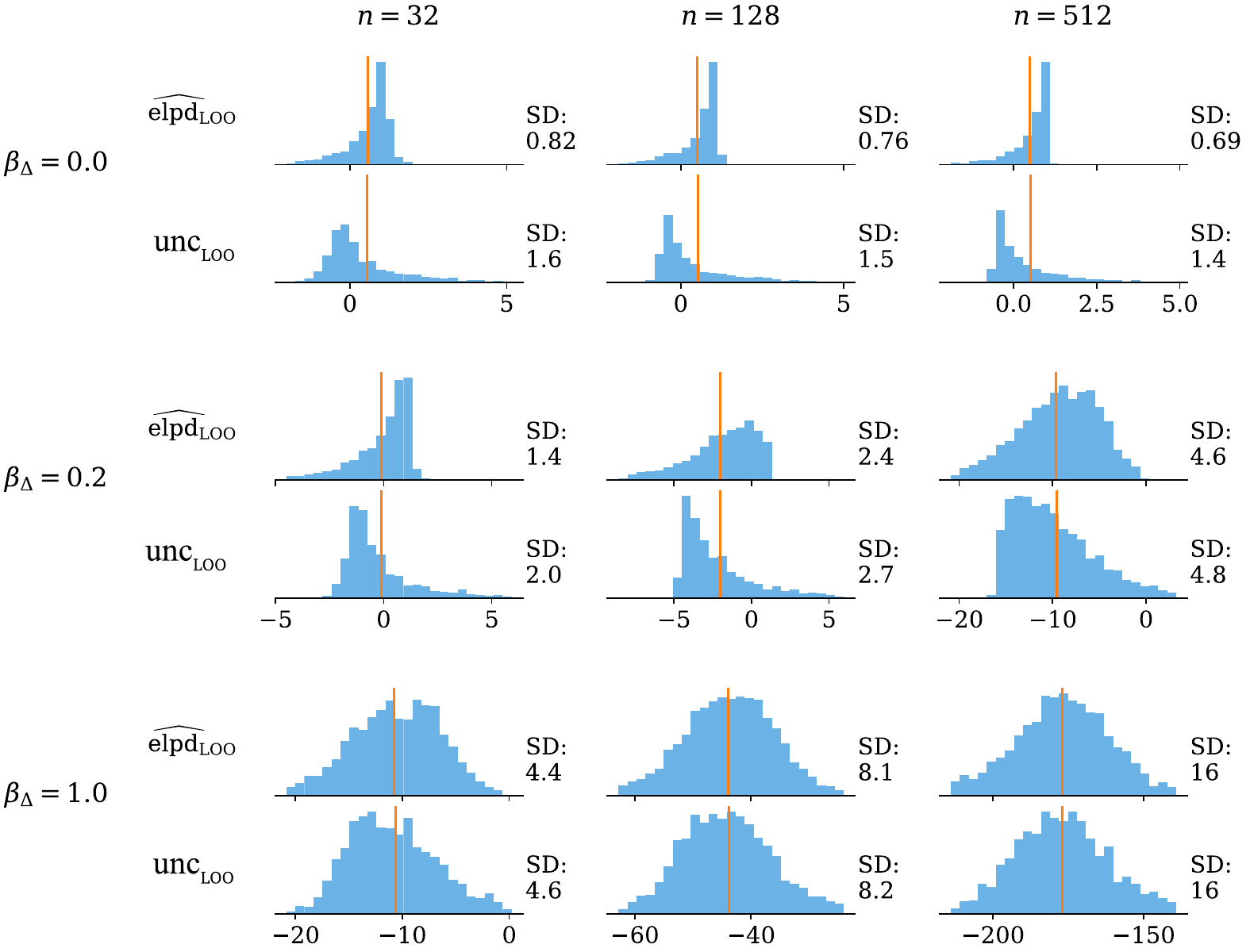}
  \caption{%
   Illustration of the distributions of $\elpdHatC{\Md}{\y}$ and $\elpdHatUnkC{\Md}{\yobs}$, where $\yobs$ is such that $\elpdHat{\Md}{\yobs} = \E\left[\elpdHatC{\Md}{\y}\right]$, for various data sizes $n$ and non-shared covariate effects $\beta_\Delta$. The yellow lines show the means of the distributions, and the corresponding sample standard deviation is displayed next to each histogram. In the problematic cases with small $n$ and $\beta_\Delta$, there is a weak connection in the skewness of the sampling and the error distributions. Thus, even with a better estimator for the sampling distribution, the uncertainty estimation is badly calibrated. For brevity, model labels are omitted in the notation in the figure.
  }\label{fig_err_vs_sampdist}
\end{figure}

Figure~\ref{fig_err_vs_sampdist} illustrates the difference between the sampling distribution $\elpdHatC{\Md}{\y}$ and the uncertainty distribution
\begin{equation}
 \elpdHatUnkC{\Md}{\yobs} = \elpdHat{\Md}{\yobs} - \elpdHatErrC{\Md}{\y} \,.
\end{equation}
Here $\yobs$ is selected such that $\elpdHat{\Md}{\yobs} = \E\left[\elpdHatC{\Md}{\y}\right]$ so that, in addition to the shape, the location of the former distribution can be directly compared to the location of the latter one. It can be seen from the figure that the distributions match when one model is clearly better than the other. When the models are more similar in predictive performance, however, the distribution of $\elpdHatC{\Md}{\y}$ has smaller variability than in the distribution of the uncertainty $\elpdHatUnkC{\Md}{\yobs}$ and the distribution is skewed to the wrong direction. Nevertheless, as the bias of the approximation is small, the means of the distributions are close in all problem settings.
%
\begin{figure}[tb!]
  \centering
  \includegraphics[width=0.92\figurecontrolwidth]{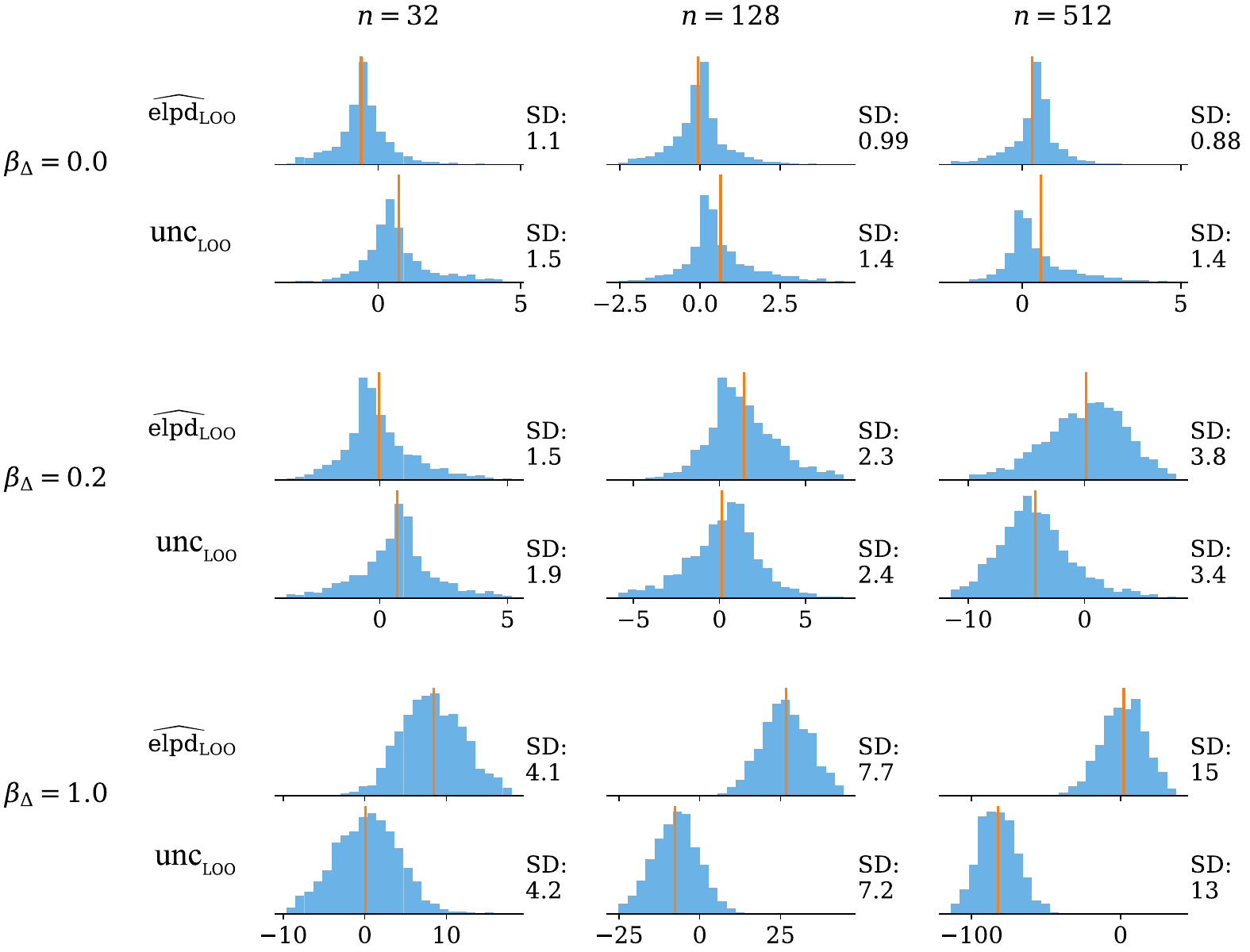}
  \caption{%
   Illustration of the distributions of $\elpdHatC{\Md}{\y}$ and $\elpdHatUnkC{\Md}{\yobs}$, where $\yobs$ is such that $\elpdHat{\Md}{\yobs} = \E\left[\elpdHatC{\Md}{\y}\right]$, for various data sizes $n$, non-shared covariate effects $\beta_\Delta$, and an outlier in the data. The yellow lines show the means of the distributions, and the corresponding sample standard deviation is displayed next to each histogram. For brevity, model labels are omitted in the notation in the figure.
  }\label{fig_err_vs_sampdist_out}
\end{figure}
Figure~\ref{fig_err_vs_sampdist_out} illustrates the difference between the sampling distribution $\elpdHatC{\Md}{\y}$ and the uncertainty distribution $\elpdHatUnkC{\Md}{\yobs}$ when there is an outlier observation present. Compared to the non-outlier case shown in Figure~\ref{fig_err_vs_sampdist}, in this model misspecification setting, the distributions are not notably skewed to the opposite directions anymore, but as the approximations are significantly biased, the means are clearly different.

\clearpage

Figure~\ref{fig_var_ratioi_n_b} illustrates the problem of underestimation of the variance with small data sizes $n$ (Scenario~3) and models with more similar predictive performances (Scenario~1). 
\begin{figure}[tb!]
 \centering
 \includegraphics[width=0.62\figurecontrolwidth]{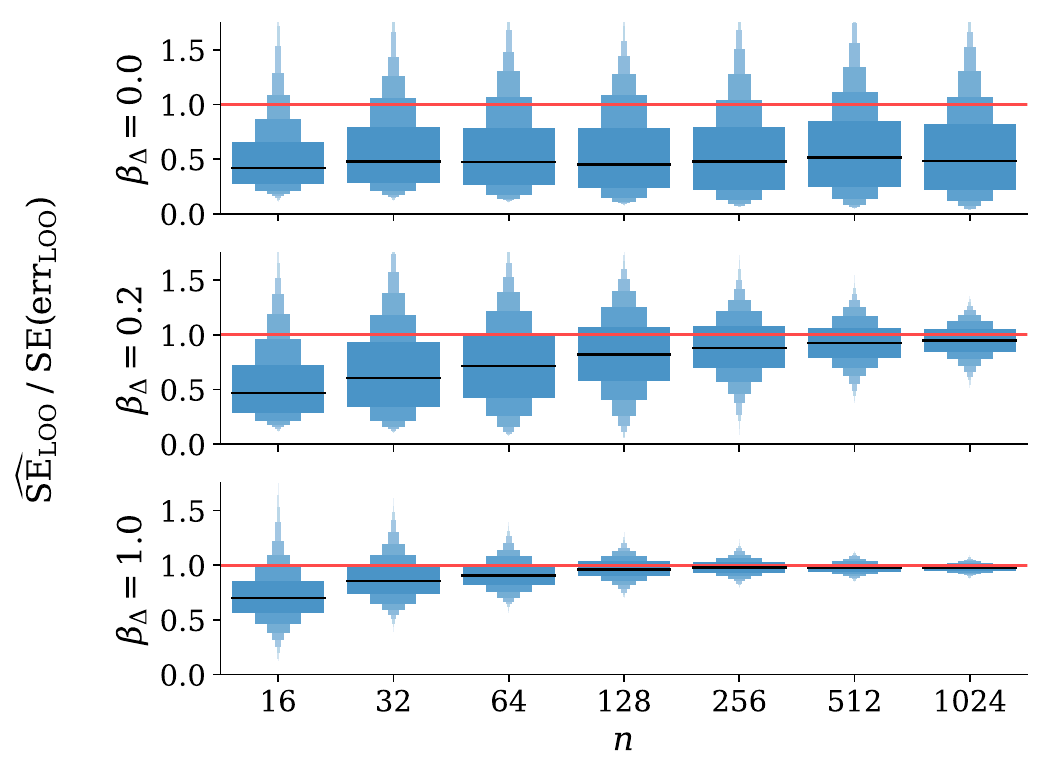}
 \caption{Distribution of the ratio $\seHat{\Md}{\y} \,\Big/\, \operatorname{SE}\Bigl(\elpdHatErr{\Md}{\y}\Bigr)$ for different data sizes $n$ and non-shared covariate effects $\beta_\Delta$. The red line highlights the target ratio of 1. The distributions are visualised using letter-value plots or boxenplots~\citep{letter-value-plot}. The black lines correspond to the median of the distribution. The variability is predominantly underestimated, with small $\beta_\Delta$ (Scenario~1) and small $n$ (Scenario~3).}
 \label{fig_var_ratioi_n_b}
\end{figure}
\begin{figure}[tb!]
  \centering
  \includegraphics[width=0.6\figurecontrolwidth]{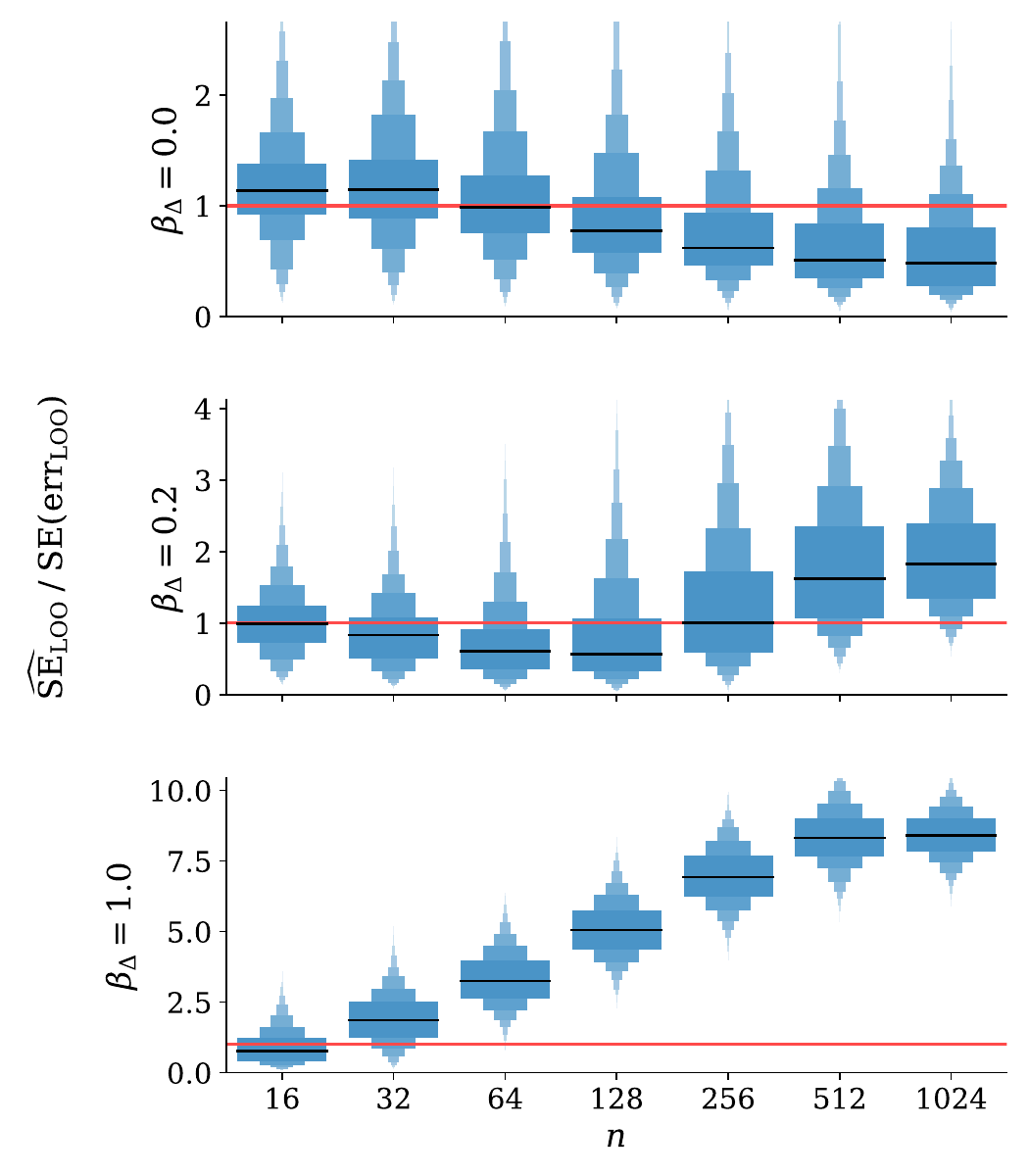}
  \caption{%
   Distribution of the ratio $\seHatC{\Md}{\y} \,\Big/\, \operatorname{SE}\Bigl(\elpdHatErrC{\Md}{\y}\Bigr)$ for different data sizes $n$ and non-shared covariate effects $\beta_\Delta$, when there is an outlier observation in the data. The red line highlights the target ratio of 1. The distributions are visualised using letter-value plots or boxenplots~\citep{letter-value-plot}. The black lines correspond to the median of the distribution.
  }\label{fig_var_ratioi_n_b_out}
\end{figure}

Figure~\ref{fig_var_ratioi_n_b_out} illustrates the problem of underestimation of the variance with small data sizes $n$ and models with more similar predictive performances when there is an outlier observation in the data. Compared to the non-outlier case shown in Figure~\ref{fig_var_ratioi_n_b}, in this model misspecification setting, the ratio is situationally also significantly larger than one so that the uncertainty is overestimated. In these situations, as demonstrated in Figure~\ref{fig_joint_out} the estimator is biased so that the overestimation is understandable and acceptable.

\clearpage
\begin{figure}[tbp!]
  \centering
  \includegraphics[width=0.75\figurecontrolwidth]{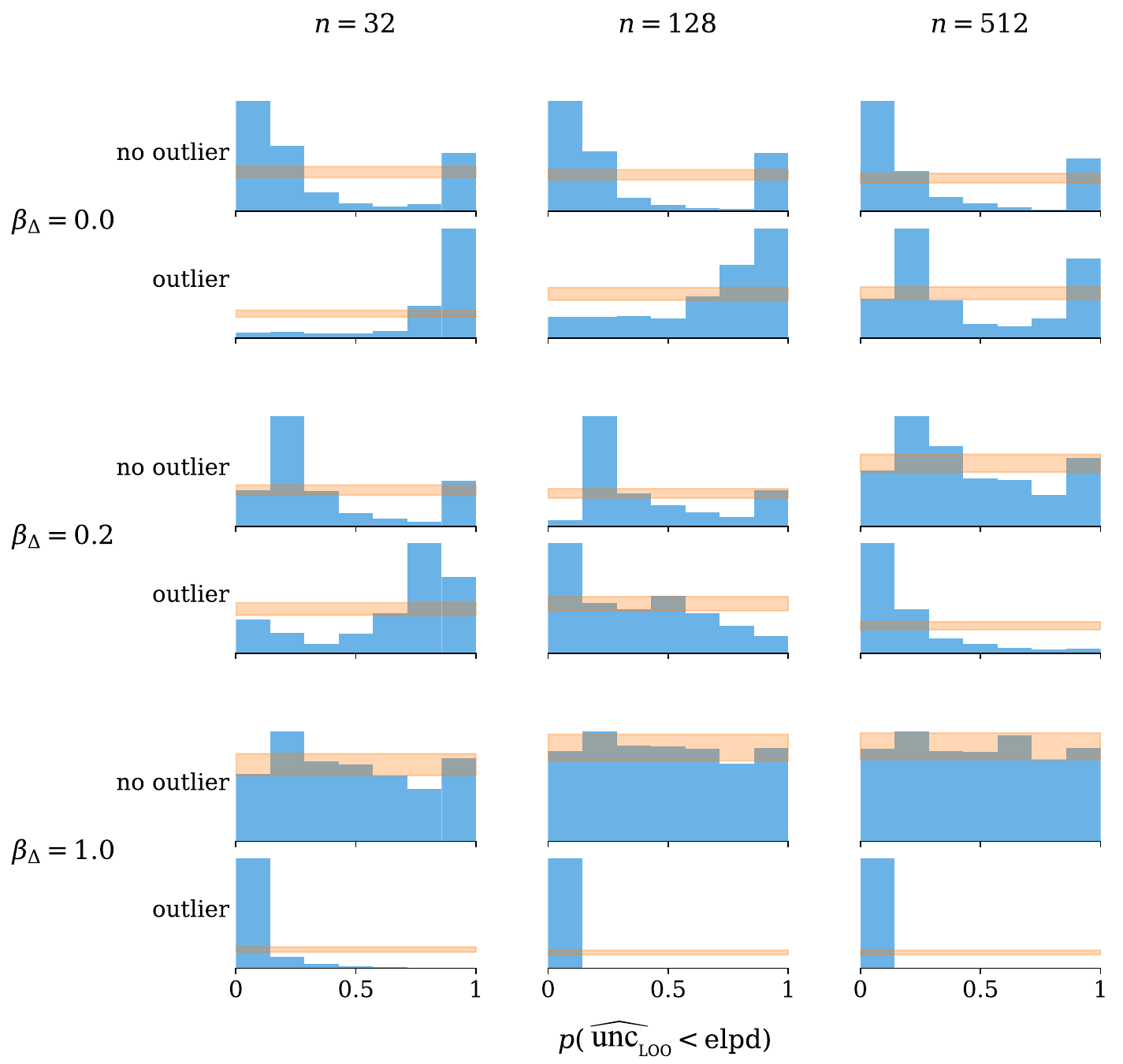}
  \caption{
   Calibration of the theoretical approximation based on $\elpdHatC{\Md}{\y}$ centred around $\elpdHat{\Md}{\yobs}$ for various data sizes $n$ and non-shared covariate effects $\beta_\Delta$, when there is an outlier observation in the data. The histograms show the distribution of $\p\Bigl(\elpdHatUnkHatC{\Md}{\y} < \elpdC{\Md}{\y}\Bigr)$, which would be uniform in a case of optimal calibration. 
  }\label{fig_calib_n_b_sampdist}
\end{figure}

Figure~\ref{fig_calib_n_b_sampdist} illustrates the calibration of the theoretical estimate based on the true distribution of $\elpdHatC{\Md}{\y}$ centred around $\elpdHatC{\Md}{\yobs}$ in various problem settings when there is an outlier observation in the data. It can be seen that the sampling distribution provides a good calibration only in the case of no outlier and large $\beta_\Delta$ or, to some degree, large $n$.

Figures~\ref{fig_boxen_adexp52_n512_b05} and~\ref{fig_unc_adexp53_n128_b05} provide additional information related to the experiments discussed in Section~\ref{subsec_additional_experiments}.
Figure~\ref{fig_boxen_adexp52_n512_b05} shows the relative error $\elpdHatErr{\Md}{\y} \big/  \operatorname{SD}\left(\elpd{\Md}{\y}\right)$ for different data sizes, the non-shared coefficient is equal to 0.5, and without outlier observations. The relative errors are symmetrical, and the mean and median are close to zero, confirming that also, in the extended examples, the bias goes asymptotically to zero (Section 3 of this paper;\citealp[][Section 5.1]{arlot_celisse_2010_cv_survey}; \citealp{Watanabe:2010d}). 
\begin{figure}[!ht]
 \centering
 \includegraphics[width=0.88\figurecontrolwidth]{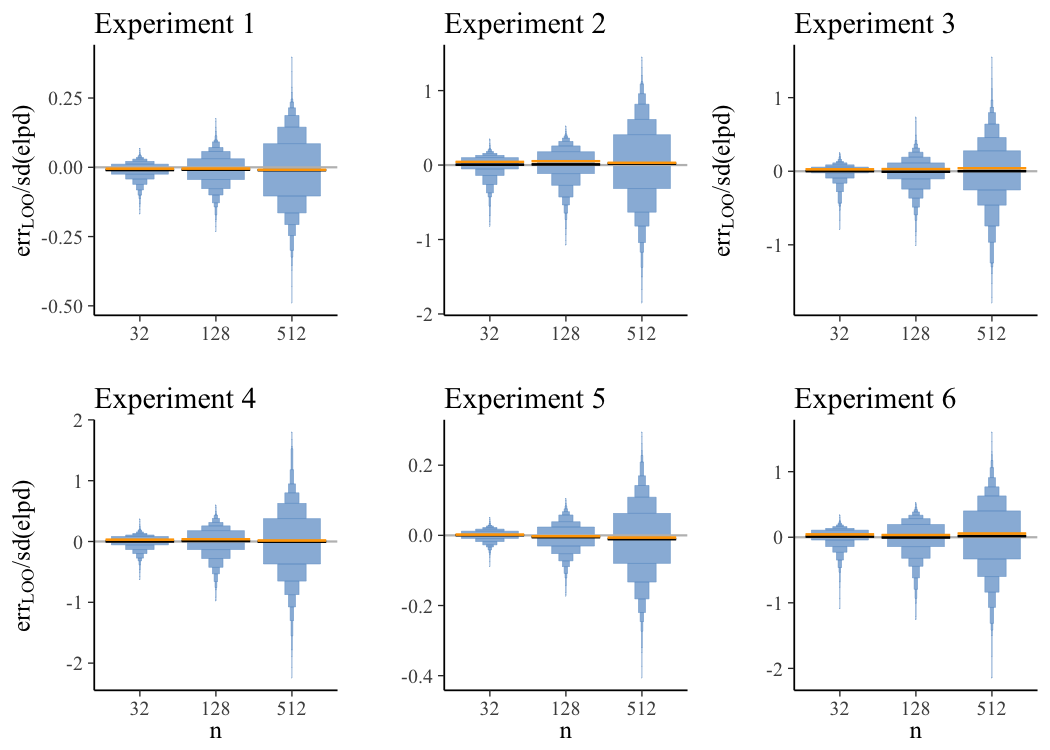}
 \caption{Distribution of the relative error for different data sizes $n = 32,128,512$ and for the non-shared covariate effect 0.5. The distributions are visualised using letter-value plots or boxplots. The black lines correspond to the median of the distribution, yellow lines indicate the mean, and the x-axis indicates the different data sizes n}
 \label{fig_boxen_adexp52_n512_b05}
\end{figure}
Figure~\ref{fig_unc_adexp53_n128_b05} compares the normal uncertainty approximation for data size $n = 128$, with a non-shared covariate effect  $\beta_\Delta = 0.5$. The results show that when models differ in their predictive performance slightly, the normal approximation provides a good fit for the LOO-CV uncertainty even in problematic scenarios where the number of observations is relatively small (Scenario 3). 
\begin{figure}[!ht]
 \centering
 \includegraphics[width=0.92\figurecontrolwidth]{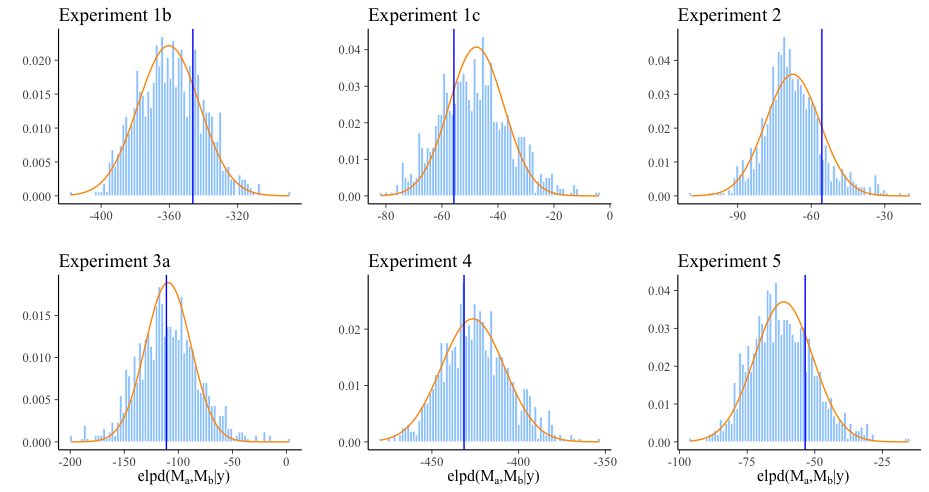}
 \caption{Approximated uncertainty using a normal distribution. The histogram represents the calculated uncertainty defined in equation~\eqref{eq_elpdhat_err} shifted by its mean, the orange line represents the normal approximation defined in Section~\ref{sec_intro_uncertainty}, and the vertical line corresponds to the $\elpd{\Md}{\yobs}$}
 \label{fig_unc_adexp53_n128_b05}
\end{figure}

\section{Other model variants}
\label{subsec_additional_experiments}

In this section, we present empirical results for six model variants, illustrating that the theoretical results generalise beyond the simplest case. We study models with 2) more covariates, 3) non-Gaussianity, 4) hierarchy, and 5) splines. We also demonstrate the behaviour with 1) fixed covariate values and 6) $K$-fold-CV. All the additional experiments have two nested regression models with data-generating mechanisms similar to~\eqref{eq_analytic_data_gen_process}, where $d=3$, $\+\beta = [0, 1, \beta_\Delta]$, and $\Sigma_\star = \eye$. The model \Ma{} is a (generalised) linear model with intercept and one covariate, following the structure as in~\eqref{eq_analytic_case_models}. The model \Mb{} follows the data-generating process by including the additional covariate. For simplicity, we only present the data-generating processes, as the model \Mb{} follows the same structure. 
\begin{enumerate}[topsep=0pt,partopsep=2pt,itemsep=2pt,parsep=2pt]
  \item \textbf{A linear model with fixed (non-random) covariate values.} The models are the same as in Equation~\eqref{eq_analytic_case_models}, but covariate $X_2$ is defined as a fixed uniform sequence $X_2 = -1 + 2k/n$, for $k =1,2,\ldots, n$. 
  \item \textbf{Linear model with more common covariates.}
  $$Y = 1 + \sum_{k=1}^5 Z_k + \beta_\Delta X_2 + \varepsilon,$$ 
  where $Z_k,X_2 \sim \N(0,1)$, $\varepsilon \sim \N(0,\tau^2)$, and $\tau$ unknown.
  
  \item \textbf{A linear hierarchical model with $k = 4$ groups.}
  $$Y = 1 + X_1 + \beta_\Delta \alpha_{j} +\varepsilon,$$
  $$\alpha_j \sim \N(\alpha_0,\sigma^2),$$
  where $\varepsilon \sim \N(0,\tau^2)$, $\tau$ unknown, $\alpha_0 \sim \N(0,1)$, and $j = 1,2,3,k$. 
  
  \item \textbf{A Poisson generalised linear model.} $Y \sim \Poisson(\mu)$, where $\mu = \exp(1 + X_1 + \beta_\Delta X_2)$, and $X_1,X_2 \sim \N(0,1)$. 
  
  \item \textbf{A spline model.} The data-generating process includes a non-linear dependency
  $$Y = X_1 + \beta_\Delta X_2 \cos(X_2) + \varepsilon,$$
  where, $X_1,X_2 \sim \N(0,1)$, $\varepsilon \sim \N(0,\tau^2)$, and $\tau$ unknown. The spline model
  is based on a linear combination of non-linear basis functions, which do not match the data-generating process, i.e. 
  $$M_b: Y = \beta_0 + \beta_1X_1 + \beta_2 s(X_2) + \varepsilon,$$ 
  where $s(X_2)$ represent the penalised B-spline matrix obtained for the covariate $X_2$. 

  \begin{figure}[!htp]
   \centering
   \includegraphics[width=0.84\figurecontrolwidth]{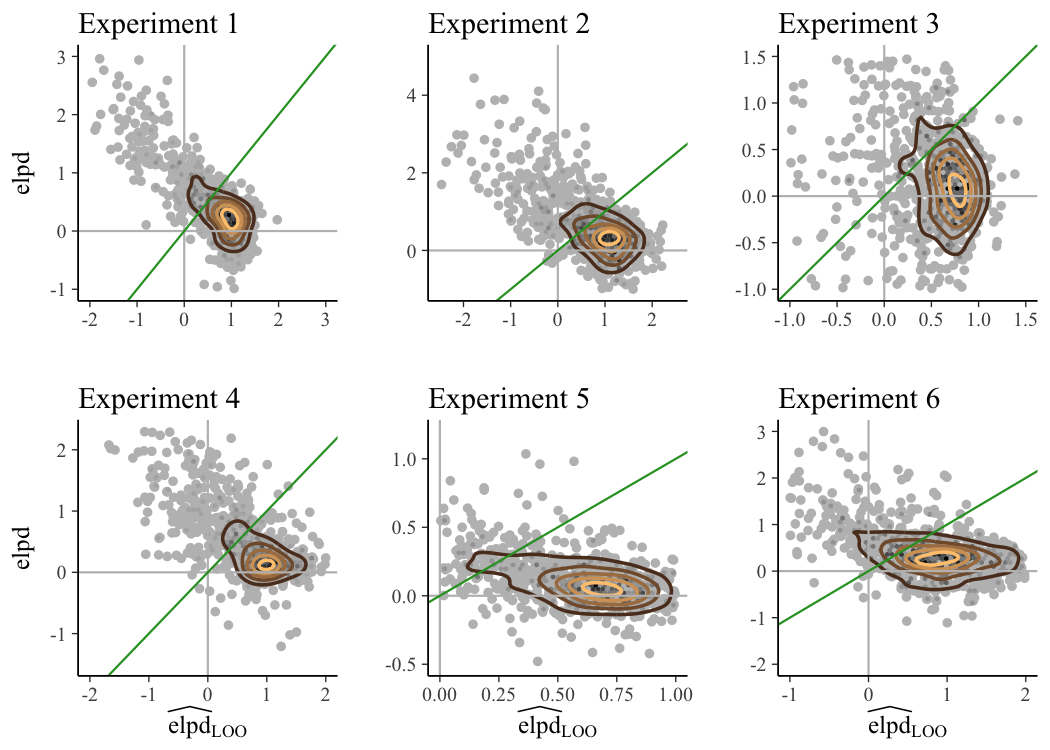}
   \vspace{-3pt}
   \caption{Illustration of the joint distribution for the LOO-CV estimator and $\elpd{\Mfd}{\yobs}$ for sample size of $n = 32$, and non-shared covariate effect $\beta_\Delta = 0.0$. The green diagonal line indicates where the variables match. }\label{fig_joint_adexp51_n32_b00}
   \end{figure}
\begin{figure}[!htp]
 \centering
 \includegraphics[width=0.84\figurecontrolwidth]{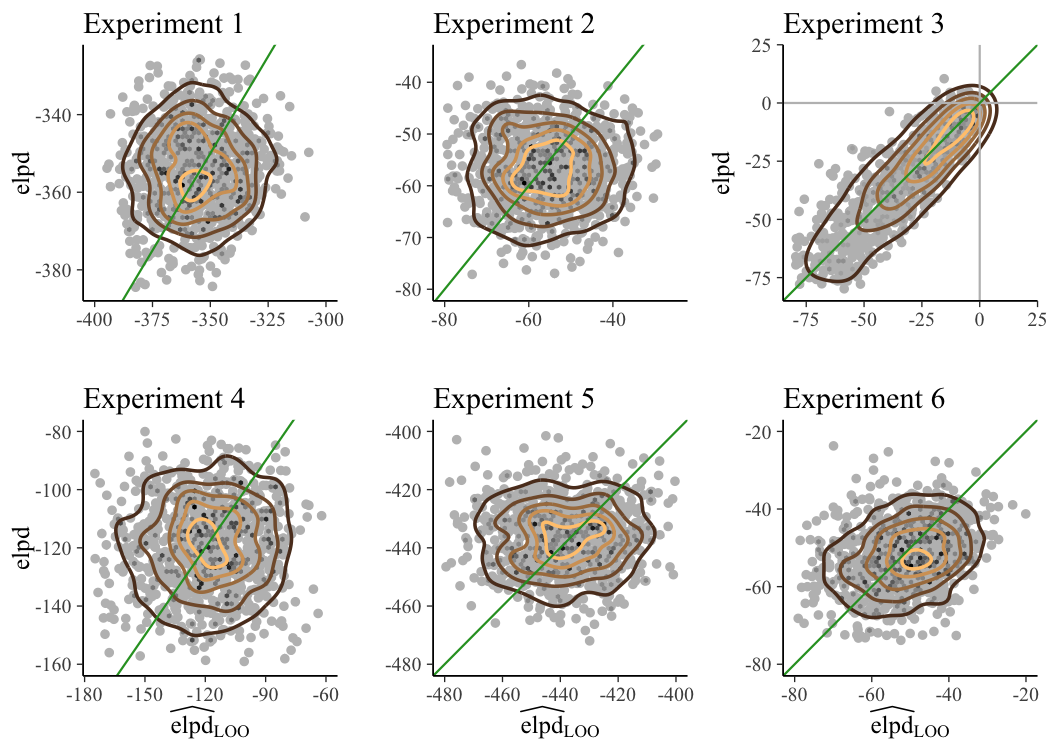}
   \vspace{-3pt}
 \caption{Illustration of the joint distribution for the LOO-CV estimator and $\elpd{\Mfd}{\yobs}$ for sample size of $n = 512$, and non-shared covariate effect $\beta_\Delta = 0.5$. The green diagonal line indicates where the variables match.}
\label{fig_joint_adexp51_n512_b05}
\end{figure}
  
  \item \textbf{10-fold-CV.} The model and data are the same as in the normal linear regression case, but 10-fold-CV with a random complete block design is used. 
  As the observations left out in each fold are likely not neighbours, we get a reasonable approximation of LOO-CV. Globally, as only $n-n/K$ (rounded to an integer) observations are used for the posterior, the predictive performance will likely be slightly worse than when using $n-1$ observations. We could correct this bias, but this is rarely done, as the bias is often small, and the bias correction increases the variance  \citep{Vehtari+Lampinen:2002}. We assume that a small bias doesn't change the general behaviour. If K-fold-CV is used to perform leave-one-group-out cross-validation, the behaviour is much different from LOO-CV, and we leave that for future research.
\end{enumerate}

In every experiment, we generate $1000$ data sets, and for each trial, we obtain pointwise LOO-CV (or 10-fold-CV) estimates $\elpdHat{\Mfd}{\yobs}$ and $\seHat{\Mfd}{\yobs}$. The respective target values $\elpd{\Mfd}{\yobs}$ are obtained using a separate test set of 4000 data sets of the same size simulated from the same data-generating process.

Figures~\ref{fig_joint_adexp51_n32_b00} and~\ref{fig_joint_adexp51_n512_b05} illustrate the joint distribution of the LOO-CV estimator and $\elpd{\Md}{\yobs}$ for different data sizes $n$ and non-shared covariate effects $\beta_\Delta$. Figure~\ref{fig_joint_adexp51_n32_b00} shows the results with small $n$ and models with similar predictions ($n = 32$ and  $\beta_\Delta = 0$). Figure~\ref{fig_joint_adexp51_n512_b05} shows the results with large $n$ and models with different predictions ($n = 512$ and  $\beta_\Delta = 0.5$). The results match the theoretical and previous experimental results. 
In the case of the hierarchical example (Experiment~3), there is a clear positive correlation, as the random realisations of data have variations in how strongly the groups differ, and thus, both the estimate and true value have more variation, but the error distribution doesn't get wider. 
Additional results are shown in Figures~\ref{fig_boxen_adexp52_n512_b05} and~\ref{fig_unc_adexp53_n128_b05} in Appendix~\ref{app_sec_additional_experiment_results}.

\end{document}